\newtheorem{claim}[thm]{Claim}
\newcommand{\textbi}[1]{\textbf{\textit{#1}}}
\newcommand{\hide}[1]{}
\newcommand{\cA}{\mathcal{A}}
\renewcommand{\cD}{\mathcal{D}}
\newcommand{\cE}{\mathcal{E}}
\newcommand{\cF}{\Sigma} 
\newcommand{\cI}{\mathcal{I}}
\renewcommand{\cL}{\mathcal{L}}
\renewcommand{\cR}{\mathcal{R}}
\newcommand{\cS}{\mathcal{S}}
\newcommand{\cV}{\mathcal{V}}
\newcommand{\cX}{\Xi} 
\newcommand{\Nat}{\mathbb{N}}
\newcommand{\Var}[1]{\cV\!\textit{ar}\/({#1})}
\newcommand{\Vertex}[1]{\cV({#1})}
\newcommand{\Lab}[1]{\cL({#1})}  
\newcommand{\Dom}[1]{\cD\textit{om}({#1})}
\newcommand{\Ima}[1]{\cI\!\textit{ma}\/({#1})}
\newcommand{\bare}[1]{\dot{#1}}
\newcommand{\la}{\leftarrow}
\newcommand{\ra}{\rightarrow}
\newcommand{\da}{\downarrow}
\newcommand{\ua}{\uparrow}
\newcommand{\lrps}[2]{\mathop{\smash{{\longrightarrow}^{#1}_{#2}}}}
\newcommand{\rlps}[1]{\mathop{\smash{{_{#1}\!\!\longleftarrow}}}}
\newcommand{\rlpstr}[1]{\mathop{\smash{{_{#1}\!\!\la\!\!\!\!\longleftarrow}}}}
\newcommand{\drlpstr}[1]{\displaystyle{\rlpstr{#1}}}
\newcommand{\dlrps}[2]{\displaystyle{\lrps{#1}{#2}}}
\newcommand{\drlps}[1]{\displaystyle{\rlps{#1}}}
\newcommand{\lrpstr}[1]{\mathop{\smash{\longrightarrow\!\!\!\!\!\!\!\!\longrightarrow}_{#1}}}
\newcommand{\dlrpstr}[1]{\displaystyle{\lrpstr{#1}}}
\newcommand{\nf}[1]{#1{\downarrow}}
\newcommand{\she}{\cong_{\textit{sh}}}
\newcommand{\Acc}[1]{\cA\textit{cc}(#1)}
\newcommand{\roots}[1]{\cR({#1})} 
\newcommand{\spr}[1]{\cS(#1)}
\newcommand{\rest}[2]{#1\mathop{|}_{#2}}
\newcommand{\subd}[2]{#1\mathop{\downharpoonright}_{{#2}}} 
\newcommand{\cont}[2]{#1\mathop{\upharpoonright}_{{#2}}}
\renewcommand{\red}[1]{\textcolor{red}{#1}}
\renewcommand{\blue}[1]{\textcolor{blue}{#1}}
\definecolor{cadmiumgreen}{rgb}{0.0, 0.42, 0.24}
\renewcommand{\green}[1]{\textcolor{cadmiumgreen}{#1}}
\newcommand{\wir}[2]{{#1\rightsquigarrow #2}}
\newcommand{\gwiring}[2]{#1_{#2}}
\newcommand{\wiring}[2]{{#1}_{#2}}
\newcommand{\targetw}[3]{#3(#1)}
\newcommand{\drag}[1]{\llbracket #1\rrbracket}
\newcommand{\targets}[3]{#3_0(#1)}
\def\edge(#1,#2){#1\longrightarrow#2}
\def\edgen(#1,#2,#3){#1\longrightarrow^{#2} #3}
\newcommand{\npred}{\textit{in}} 
\newcommand{\In}{\textit{in}} 
\newcommand{\X}[1]{X({#1})}  
\newcommand{\pred}{\textit{pred}}
\newcommand{\dpo}{\textsc{DPO}\xspace}
\newcommand{\omicron}{o}
\newcommand{\gtnat}{>_{\Nat}}
\newcommand{\target}[1]{\xi^{\raisebox{-1pt}{\scriptsize\rm !}}(#1)}
\def\ext(#1,#2){\langle{#1},{#2}\rangle}   
\def\int(#1,#2){({#1},{#2})}
\renewcommand{\st}{\mathrel:}
\newcommand{\IVertex}[1]{\cI nt{(#1)}} 
\title{Drag Rewriting}
\author[N. Dershowitz]{Nachum Dershowitz}
\address{School of Computer Science,   Tel Aviv University, Ramat Aviv, Israel}
\author[J.-P. Jouannaud]{Jean-Pierre Jouannaud}
\address{\'Ecole Normale Sup\'erieure de Paris-Saclay, France}
\author[F. Orejas]{Fernando Orejas}
\address{  Universitat Politècnica de Catalunya, Barcelona, Spain}
\begin{document}
\date{Draft of \today}

\begin{abstract}
We present a new and powerful algebraic framework for graph
rewriting, based on \emph{drags},
a class of graphs
enjoying a novel composition operator.
Graphs are embellished with roots and sprouts, which can be wired together to form edges.
Drags enjoy a rich algebraic structure with sums and products.
Drag rewriting naturally extends graph rewriting, dag rewriting, and term rewriting models.

\noindent\paragraph*{\bf Keywords:} Graph rewriting, drags, composition
\end{abstract}

\maketitle
\begin{center}
\begin{tabular}{lr}
\raisebox{-5cm}{\includegraphics[angle=0,width=0.45\textwidth]{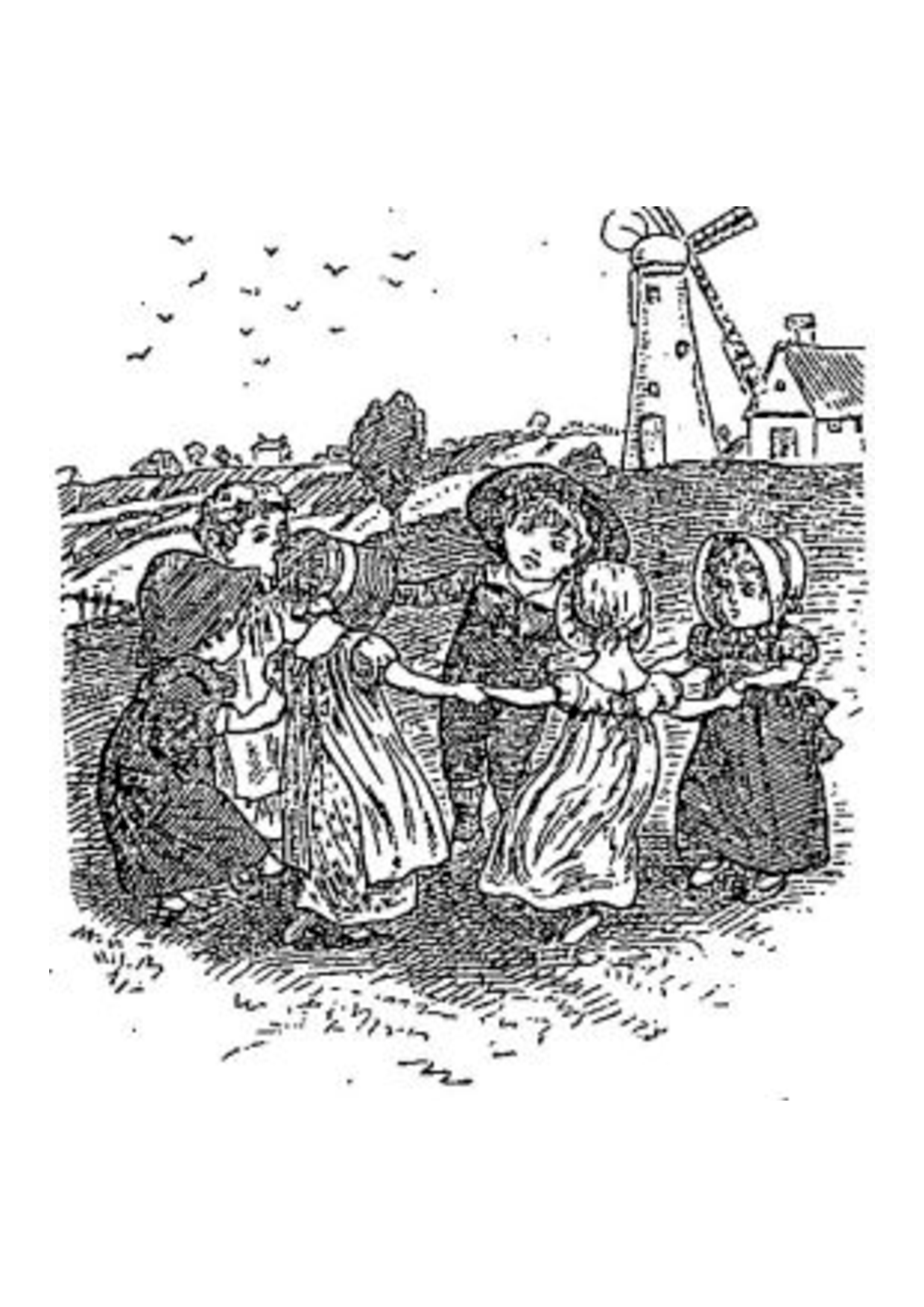}}
&
\begin{minipage}{0.5\textwidth}
\begin{verse}\it\large
Ring-a-ring o' roses,\\
A pocket full of posies,\\
A-tishoo! A-tishoo!\\
We all fall down.
\end{verse}
\bigskip
\raggedleft
---From: Kate Greenaway,\\
 \emph{Mother Goose or \\ The Old Nursery Rhymes} (1881);\\ illustration from\\ \textit{Harper's Young People} (1881)
\end{minipage}
\end{tabular}
\end{center}
\newpage
\tableofcontents
\newpage

\section{Introduction}
Rewriting with graphs has a long history in computer science, graphs
being used to represent data structures, as well as program structures
and even concurrent and distributed computational models. They  therefore play a key r\^ole in
program evaluation, transformation, and optimization, and more
generally in program analysis; see, for
example,~\cite{DBLP:journals/corr/abs-1802-05862}.

As rewriting graphs is very similar to rewriting algebraic terms, the same questions arise for both: What rewriting relation do we need?  
Is there an efficient pattern-matching algorithm? 
How can one determine if a
particular rewriting system is confluent and/or terminating? 
Does graph rewriting encompass term rewriting?

These questions have---for these reasons---been addressed by the rewriting community since the mid-seventies with two different research trends: 
(1) in research initiated by Hartmut Ehrig and his collaborators, either in the context of general graphs equipped with pushouts~\cite{DBLP:conf/focs/EhrigPS73} or for particular classes of graphs---specifically those that do not contain cycles~\cite{DBLP:conf/tagt/PlumpH94}; 
and (2) in research initiated by Henk Barendregt and his collaborators~\cite{Barendregtetal} in the context of term graphs, a generic terminology capturing various generalizations of terms.
Over the years, some effort was devoted to uniting these two trends (e.g.\@ \cite{Plump,CorradiniGadducci99}), 
aiming at answering the question to what extent term rewriting is covered. 
As part of these efforts, termination and confluence techniques have been elaborated for various generalizations of trees, such as rational trees, directed acyclic graphs, jungles, term-graphs, lambda-terms, and lambda-graphs, as well as for graphs in general.
See~\cite{DBLP:books/el/leeuwen90/Courcelle90,DBLP:conf/ac/Courcelle93}
for detailed accounts of these
techniques and~\cite{DBLP:conf/birthday/2018ehrig} for a survey of
implementations of various forms of graph rewriting and of available
analysis tools.

One important practical objective of many of these works is to make modeling of sharing within graph structures possible so as to enable formal proofs of sharing techniques used in programming languages, whether by means of terms or graphs. 
In this context, it becomes crucial to faithfully encode term rewriting as a form of graph rewriting. 
This question was first addressed in \cite{Barendregtetal}, and solved in the particular case of term graphs (directed graphs whose all vertices are accessible from one of them, and equipped with a function symbol whose arity dictates the number of their successors) and left-linear rewrite rules whose all critical pairs are trivial. 
This question has returned again and again in the literature on term-graphs, by extending the term-graph framework to more general graphs and rewrite rules \cite{CorradiniGadducci99}, by restricting the general graph framework to acyclic graphs, or by accepting cyclic graphs while restricting one to acyclic left-hand sides of rules to better capture term rewrites \cite{PlumpJungle}. 
One reason, among several, for the multiplicity of efforts is the difficulty of integrating non-linear rewrite rules within graph frameworks, a problem pointed out already in \cite{Barendregtetal}, but---as they argue---one that cannot be solved positively without substantial modifications to their framework. 
Nor has it been solved in the categorical framework, where its importance has also been stressed \cite{DBLP:conf/gg/Parisi-PresicceEM86}.

Drags, the alternative model considered here, are arbitrary directed graphs equipped with roots and sprouts that facilitate composition. Internal vertices, as in term graphs, are labeled by function symbols having a fixed arity. Sprouts are non-internal vertices without successors, labeled by variables.
Our framework is able to faithfully encode term rewriting without restriction, hence finally solving this old question positively. 
An earlier version of drags paved the way for the present work \cite{DBLP:journals/tcs/DershowitzJ19}. The current, substantially revised version has the potential to provide an interesting alternative to the double pushout (\dpo) framework for graph rewriting. 
In addition to its conceptual simplicity, one advantage of drags over DPO is the absence of dangling edges caused by rewriting.

Intuitively, drags can be viewed as networks of processing units that accept a given (finite) number of data as inputs and deliver data as outputs  that can be sent over an arbitrary (finite) number of one-way channels.
Channels can of course be duplicated,
allowing thereby for arbitrary sharing. There is an order among
the input channels of a processing unit so as to appropriately
discriminate among the inputs. 
Inputs enter a drag at its sprouts, which are vertices without successors, labeled by variables.
Outputs exit the drag at its roots, which are incoming edges with no source. Duplication is modeled by multiplicity of a given root vertex and of a given variable labeling several sprouts.
Thus, drags differ from ordinary directed labeled graphs in their distinguishing of roots and sprouts.

The generality of this graph model requires that these one-way channels connect the processing units in an arbitrary way via their respective roots and sprouts. Connecting two drags defines a composition operator so that matching a left-hand side of rule $L$ in a drag $D$ amounts to writing $D$ as the composition of a context graph $C$ with $L$, and rewriting $D$ with the rule $L\ra R$ amounts to replacing $L$ with $R$ in that composition. In the case of drags, composition plays therefore the r\^ole of both context grafting and substitution in the case of trees. 
The same applies to equations, allowing us to define the notion of a congruence generated by a set of drag equations in the usual way,
with rewriting providing, as for terms, a way to decide whether two drags are congruent.
In sharp contrast with term rewriting, on the other hand, drag rewriting takes advantage of the underlying graph model: subdrags shared by $L$ and $R$ need not be removed before  being (re-) generated. This holds for sprouts as well. 
This is similar to the \dpo approach to rewriting, which specifies which vertices are to be removed, which are to be (re-) generated, and for which variables are absent.
The \dpo formalism applies to many different categories of graphs. 
In our model, such specifications follow the determination of the user that left-hand and right-hand sides of a rule do or do not share specific subgraphs.

\section{The Drag Model} 
\label{s:drags}

Drags were introduced in~\cite{DBLP:journals/tcs/DershowitzJ19,DJlpar} and developed further in~\cite{JO22a}.
Retaining the moniker, we introduce here a completely new version, which is much better behaved and allows for easier generalization, while at the same time is more simply defined.

Drags are finite \textbi{d}irected \textbi{r}ooted l\textbi{a}beled ordered
multi-\textbi{g}raphs.  
Some vertices with no outgoing edges are designated \emph{sprouts}.
Other vertices are \emph{internal}.  
We presuppose the following: a set of function symbols $\cF$, whose elements---equipped with a fixed arity---serve as labels for internal vertices; and a set of nullary variable symbols $\cX$, disjoint from $\cF$, used to label sprouts. 

We make considerable use of multisets.
A \emph{finite multiset} is a function from a finite base set $E$ to the set $\Nat$ of natural numbers. Finite multisets are often enumerated as in $\{a,a,b,a,b,c\}$. 
\emph{Finite sets} are finite multisets all of whose elements occur precisely once as in $\{a,b,c\}$.
A \emph{sub-multiset} $N$ of a multiset $M$, written $N\subseteq M$, gives a number of elements $N(e)\leq M(e)$ for each element $e$ of the base $E$.
Given two finite multisets $M$ over $E$ and $N$ over $F$, a (total) \emph{multi-map} $f:M\to N$ is some (dependent) function $f^+:(a,m)\mapsto (b,n)$, $0<m\leq M(a)$, $0<n\leq N(b)$, so that $f(e)$ will denote the multiset of elements $\{f^+(e,1),\ldots, f^+(e,M(a))\}$.  
(The second arguments $m,n$ of the function description $f^+$ serve as indices of the multiple $M(a),N(b)$ occurrences of the first arguments $a,b$).
The multiset $M$ is the \emph{domain} of $f$, also denoted $\Dom{f}$, while the multiset $N$ is its \emph{codomain} or \emph{image}, denoted $\Ima{f}$. 
A multi-map $f:M\to N$ is \emph{partial} if the associated map $f^+$ is partial, and \emph{multi-injective} (\emph{multi-equijective}) if the associated map $f^+$ is injective (bijective, respectively). 
Note that a partial multi-injective multi-map becomes multi-equijective as a multi-map from its domain to its image.
If $M,N$ are finite sets, then $f$ is a classical injective (bijective) map from $M$ to $N$. Finally, a map $f:E\to F$ extends to a finite multiset $\{a_i\}_i$ over $E$ as the multi-map returning the finite multiset $\{f(a_i)\}_i$ over $F$.

To ameliorate notational burden, we  use vertical bars $|\_|$
to denote various quantities, such as length of lists, size of expressions, of sets or multisets, and even the arity of function symbols. We use
$\varnothing$ for an empty list, set, or multiset, $\cup$ for both set and multiset union (which takes the maximum of multiplicities for multisets), $\cap$ for set and multiset intersection (minimum for multisets),
$\setminus$ for set and multiset difference  (natural subtraction for multisets), $\uplus$ for disjoint union (which adds multiplicities), and $\in$ for membership ($a\in M$ iff $M(a)>0$ for multiset $M$).
We will identify a singleton set or multiset with its single element to avoid unnecessary clutter.
So, for example, $a_0\cup \{a_i\}_{i=1}^{i=n}=\{a_i\}_{i=0}^{n}$. 

\begin{defi}[Drag]
\label{d:drag}
A \emph{drag} $D$ is a tuple $\langle V, R, L, X, S\rangle$,
where
\begin{enumerate}
\item
$V$ is a finite set of \emph{vertices} (vertices have a \emph{name});
\item
$R:V\to \Nat$ is a finite, possibly empty, multiset of vertices, called \emph{roots}; when $R(v)>0$, vertex $v$ is   \emph{rooted}; when $R(v)=0$ it's \emph{rootless}; sprouts can be rooted;
\item
$S\subseteq V$ is a set of \emph{sprouts}, leaving $I=V\setminus S$ to be the \emph{internal} vertices;
\item
$L: V \to \cF\cup\cX$ is the \emph{labeling} function, mapping
  internal vertices $I$ to labels from  vocabulary $\cF$ and sprouts $S$ to labels from  vocabulary $\cX$ (vertices have a \emph{label});
\item $X: V\ra V^*$ is the \emph{successor function}, mapping each vertex $v\in V$ to a list of vertices in $V$ whose length equals the arity of its label, that is, $|X(v)|=|L(v)|$. Sprouts have no successors.
\end{enumerate}
  The pair $\int(R,S)$ of roots and sprouts is  the \emph{interface} of drag $D$.
\end{defi}
\noindent
Drags are accordingly based on ordered multigraphs with roots. 

\begin{defi}[Linear; bare; ground; empty; disjoint]
A drag $D$ is \emph{linear} if no two sprouts in $\spr{D}$ have the same label, \emph{closed} if it has no sprout ($\spr{D}=\varnothing$), \emph{bare} if it has no roots, \emph{ground} if it has neither root ($\roots{D}=\varnothing$) nor sprout, and \emph{empty} (denoted by $\varnothing$) if it has no vertices at all ($\Vertex{D}=\varnothing$). We denote by $\bare{D}$ the bare drag obtained from $D$ by removing all its roots.
Two drags are \emph{disjoint} if they share neither vertex nor variable.
\end{defi}

Here is an example of a linear drag with three internal vertices (blue and yellow), one (red) root and two (green) sprouts: 
\[
\psset{unit=3mm}
\begin{pspicture}(1,-1.4)(9,9.8)
\cnodeput[linecolor=white,fillcolor=white]{0}(-1,0){Ac}{}
\cnodeput[fillstyle=solid,fillcolor=green]{0}(4.5,8.7){Bc}{$x$}
\cnodeput[fillstyle=solid,fillcolor=green]{0}(10,0){Cc}{$y$}
\cnodeput[fillstyle=solid,fillcolor=cyan]{0}(1.5,1.5){Ao}{$a$}
\cnodeput[fillstyle=solid,fillcolor=yellow]{0}(4.5,6.0){Bo}{$o$}
\cnodeput[fillstyle=solid,fillcolor=yellow]{0}(7.5,1.5){Co}{$o$}
\nccurve[linecolor=black,linestyle=solid,angleA=90,angleB=180]{->}{Ao}{Bo}
\nccurve[linecolor=black,linestyle=solid,angleA=0,angleB=90]{->}{Bo}{Co}
\nccurve[linecolor=black,linestyle=solid,angleA=270,angleB=270]{->}{Co}{Ao}
\ncline[linecolor=red,linestyle=solid]{|->}{Ac}{Ao}
\ncline[linecolor=black,linestyle=solid]{<-}{Bc}{Bo}
\ncline[linecolor=black,linestyle=solid]{<-}{Cc}{Co}
\end{pspicture}
\]


\begin{defi}[Accessibility]
Drags are directed:
If $b$ is the $k$th vertex in the list $X(a)$ of successors of vertex $a$ of drag $D=\langle V, R, L, X, S\rangle$, then $\edgen(a,k,b)$ is a directed \emph{edge} with \emph{tail} $a$ and \emph{head} $b$, $k$ being usually omitted.
The reflexive-transitive closure $X^*$ of the successor relation $X$ is called \emph{accessibility}. 
We also write $a X b$, $a\rightarrow b \in X$, or just $\edge(a,b)$, as well as $a X^* b$, $a\rightarrow b\in X^*$, or $a\longrightarrow^* b$.
\begin{enumerate}
\item A vertex $v$ is said to be \emph{accessible  from} vertex $u$, and likewise that $u$ \emph{accesses} $v$, if $u X^* v$. 
\item Two vertices are \emph{unrelated} if neither is accessible from the other.
\item
Accessibility extends to sets as expected, denoting  the set of vertices of $D$ that are accessible from any vertex in $W\subseteq V$ by $X^*(W)$. 
\item A vertex $v$ is \emph{accessible} (without qualification) if it is accessible from some root, that is if $v\in X^*(r)$ for some rooted vertex $r$.
\item A \emph{path} of \emph{length} $n$ is a sequence $u_0,\ldots, u_n$ of vertices such that $\forall i\in[0..n-1]\st \edge(u_i,u_{i+1})\in X$. 
The path is \emph{trivial} if $n=0$. 
\item A \emph{cycle} is a (non-trivial) path such that $s_n=s_0$. 
A \emph{loop} is a cycle of length one.
\end{enumerate}
\end{defi}

\begin{defi}[Connected component]
Given a drag $D$ with successor relation $X$, a \emph{connected component} is a subdrag of $D$ generated by a set $W$ of vertices closed under predecessor, ancestor, and equal labeling of sprouts:
$\forall u\forall v \st  uXv$ we have $ u\in W $ iff $ v\in W$, and
$\forall s:x \forall t:x$ we have $s\in W $ iff $ t\in W$.
\end{defi}

\begin{defi}[Predecessor; indegree]
We denote by $\pred(v,D)$, or simply $\pred(v)$, the number of incoming edges to $v$ in drag $D$, and by $\npred(v,D)$, or simply
$\npred(v)$, called the \emph{indegree} of $v$,
the number of its incoming edges plus the number of times $v$ is a root of $D$: $\npred(v,D)= \pred(v,D) + R(v)$.  

A vertex is a \emph{source} of a drag if it has no predecessor, is a \emph{sink} if it has no successor, and is \emph{isolated} if it has neither predecessor nor successor, hence is both a source and a sink.

A \emph{tree} is a ground drag  all vertices of which have one predecessor, except for a lone vertex with none.
A \emph{forest} is a ground drag with no cycle, all vertices of which have zero or one predecessor. A \emph{dag} is a ground drag sans cycles.
\end{defi}

Here is a ground drag that is not a dag:
\[
\psset{unit=5mm}
\begin{pspicture}(2.,1.5)(8,5.5)
\cnodeput[fillstyle=solid,fillcolor=cyan]{0}(3,2.4){Aa}{$a$}
\cnodeput[fillstyle=solid,fillcolor=cyan]{0}(4.5,4.5){Ba}{$a$}
\cnodeput[fillstyle=solid,fillcolor=cyan]{0}(6,2.4){Ca}{$a$}
\ncline[linecolor=red,linestyle=solid,linewidth=1pt]{->}{Aa}{Ba}
\ncline[linecolor=red,linestyle=solid,linewidth=1pt]{->}{Ba}{Ca}
\ncline[linecolor=red,linestyle=solid,linewidth=1pt]{->}{Ca}{Aa}
\end{pspicture}
\]

\begin{rem}\label{rem:roots}
Two other natural data structures are possible for the roots of drags: lists with repetitions and sets. 
Sets imply that arbitrarily many output channels can be given access to a given root.
Multisets put a precise bound on that number and have  slightly better algebraic behavior compared to lists with repetitions, which were used in~\cite{DBLP:journals/tcs/DershowitzJ19}. 
Completing the set of natural numbers with an infinite value allows one to easily encode set-like behavior by a multiset, another reason for our choice here.
\end{rem}

\begin{rem}
Another important difference vis-\`a-vis~\cite{DBLP:journals/tcs/DershowitzJ19} is that we now also consider drags with inaccessible vertices, such as ground drags \emph{all} of whose vertices are inaccessible.
\end{rem}

\paragraph*{\bf Notations:}

\begin{itemize}
\item
When convenient, a drag $D=\langle V, R, L, X, S\rangle$ will be denoted $$\langle \Vertex{D}, \roots{D},\spr{D}, \Lab{D}, \X{D} \rangle$$ with $\IVertex{D}$ being its internal vertices, $\Acc{D}$ its set of accessible vertices, $\Var{D}$ the set of variables labeling its sprouts, and $\Dom{R}$ the domain of the partial function $R$.
\item
We write $r^{[n]}\in R$ to indicate that there are $n$ copies of the rooted vertex $r$ in the multiset $R$, that is, $R(r)=n$, but also $r\in R$, considering $R$ as the set of rooted vertices. A root may  also be seen as an edge without designated tail; so we will sometimes, in particular in figures, use the notation $\edge(,r)$ 
to indicate that vertex $r$ is rooted.

\item
We write $u:f$ when the vertex $u$ (possibly a sprout) has label $f$ (possibly a variable). The labeling function extends to lists, sets, and multisets of vertices as expected. 

\item
In examples, we will often name vertices by their label, when the intention is clear. In case of ambiguity, we will index the label by a positive number, so that $f(x,x)$ has vertices $f$, $x_1$, and $x_2$. Combining these notations, 
$f^{[2]}(x,x^{[1]})$ has now two roots at vertex $f$ and one root at vertex $x_2$. In that case, we will alternatively write $f^{[2]}(x_1,x_2^{[1]})$.

\item
In pictures, roots will be illustrated by incoming arrows, possibly indexed by a natural number indicating their multiplicity.
\end{itemize}

\section{Contexts and Subdrags}

\begin{defi}[Subdrag; context]
\label{d:subd}
Let $D$ be the drag $\langle V,R,L,X,S\rangle$, and let $W\subseteq V$.
We define the following notions:
\begin{enumerate}
    \item The \emph{restriction} $\rest{D}{W}$ of drag $D$ to vertices $W$ is the drag 
    \[D'=\langle W\cup S',R',L',X',(W\cap S)\cup S'\rangle\]
    where 
    \begin{enumerate}
         \item  
        $S'=\{s_v \st v\in V\setminus W,\, \exists w\in W \st w X v\}$ are new sprouts, 
       the $s_v$ being  new vertices;
               \item  $R'(w)=R(w) + \Sigma_{\,\edgen(v,k,w)\in X, v\in V\setminus W} k$, for each $w\in W$
   (hence $\In(w,D')= \In(w,D)$);
        \item $L'$ coincides with  $L$ on $W$,
         while $L'(s_v)= x_v$ for each $s_v\in S'$, where
        $x_v$ is a fresh variable;
    \item $X'$ coincides with $X$ on $W$, and  for each $s_v\in S'$,
        $ \edgen(u,k,s_v)\in X'$ iff $\edgen(u,k,v)\in X$.
            \end{enumerate}
    \item The \emph{subdrag} $\subd{D}{W}$
   of $D$ \emph{generated} by $W$ is the restriction of $D$ to the set of all vertices accessible from $W$. That is, $\subd{D}{W}=\rest{D}{X^*(W)}$. The subdrag is \emph{void} when $W=\varnothing$, \emph{trivial} when $X^*(W)=V$
   (as for $\subd{D}{R}$ because all vertices of $D$ are accessible from $R$), and \emph{strict} when not trivial. 
    \item The \emph{context} $\cont{D}{W}$ of $W$ in $D$ 
is the restriction of $D$ to the set of vertices that are inaccessible from $W$, viz.\@ $\rest{D}{V\setminus X^*(W)}$.
\end{enumerate}
\end{defi}

Let $D$ a drag reduced to a single internal vertex $v$ and edge $\edge(v,v)$. Then, the restriction of $D$ to $v$ is $D$ itself.
Examples of drags, subdrags, and context drags are shown in Figure~\ref{f:subd}.

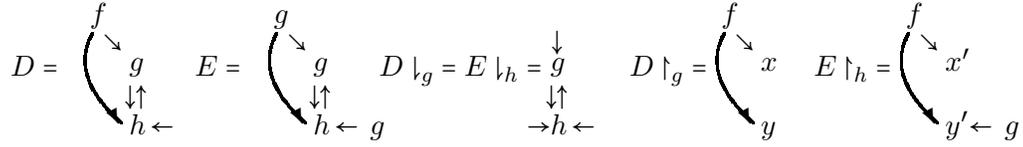
\begin{figure}[!t]
\setlength{\unitlength}{0.7cm} 
\hspace*{5mm}
\begin{picture}(17,4)(0,0)   
\thicklines
\put(-1,2.5){
\begin{picture}(0,3.25)(0,0)
\put(-0.5,-0.5){$D=$}
\put(1,0.5){\textcolor{black}{$f$}}
\put(1.2,0){\textcolor{black}{$\searrow$}}
\put(1.75,-0.4){\textcolor{black}{$g$}}
\put(1.65,-1.05){\textcolor{black}{$\da$}}
\put(1.85,-1.05){\textcolor{black}{$\ua$}}
\put(1.75,-1.6){\textcolor{black}{$h$}}
\put(2.15,-1.6){$\la$}
\qbezier(1.05,0.3)(0.6,-0.5)(1.6,-1.4)
\put(1.55,-1.43){\vector(1,-2){0}}
\end{picture}
}

\put(2.5,2.5){
\begin{picture}(0,3.25)(0,0)
\put(-0.5,-0.5){$E=$}
\put(1,0.5){\textcolor{black}{$g$}}
\put(1.2,0){\textcolor{black}{$\searrow$}}
\put(1.75,-0.4){\textcolor{black}{$g$}}
\put(1.65,-1.05){\textcolor{black}{$\da$}}
\put(1.85,-1.05){\textcolor{black}{$\ua$}}
\put(1.75,-1.6){\textcolor{black}{$h$}}
\put(2.15,-1.6){$\la\,g$}
\qbezier(1.05,0.3)(0.6,-0.5)(1.6,-1.4)
\put(1.55,-1.43){\vector(1,-2){0}}
\end{picture}
}

\put(7,2.5){
\begin{picture}(0,3.25)(0,0)
\put(-1.5,-0.5){$\subd{D}{g}=\subd{E}{h}=$}
\put(1.75,0){\textcolor{black}{$\da$}}
\put(1.75,-0.4){\textcolor{black}{$g$}}
\put(1.65,-1.05){\textcolor{black}{$\da$}}
\put(1.85,-1.05){\textcolor{black}{$\ua$}}
\put(1.75,-1.6){\textcolor{black}{$h$}}
\put(1.3,-1.6){$\ra$}
\put(2.15,-1.6){$\la$}
\end{picture}
}

\put(11,2.5){
\begin{picture}(0,3.25)(0,0)
\put(-0.75,-0.5){$\cont{D}{g}=$}
\put(1,0.5){\textcolor{black}{$f$}}
\put(1.2,0){\textcolor{black}{$\searrow$}}
\put(1.75,-0.4){\textcolor{black}{$x$}}
\put(1.75,-1.6){\textcolor{black}{$y$}}
\qbezier(1.05,0.3)(0.6,-0.5)(1.6,-1.4)
\put(1.55,-1.43){\vector(1,-2){0}}
\end{picture}
}

\put(14.5,2.5){
\begin{picture}(0,3.25)(0,0)
\put(-0.75,-0.5){$\cont{E}{h}=$}
\put(1,0.5){\textcolor{black}{$f$}}
\put(1.2,0){\textcolor{black}{$\searrow$}}
\put(1.75,-0.4){\textcolor{black}{$x'$}}
\put(1.75,-1.6){\textcolor{black}{$y'$}}
\put(2.2,-1.6){$\la\,g$}
\qbezier(1.05,0.3)(0.6,-0.5)(1.6,-1.4)
\put(1.55,-1.43){\vector(1,-2){0}}
\end{picture}
}
\end{picture}

\caption{Two drags with the same subdrag but different context drags.}\label{f:subd}
\end{figure}

Subdrags need no new sprouts since their vertices are closed under succession, but context drags do. On the other hand, a (nontrivial, non-void) subdrag always has  new roots. 
In particular, a nontrivial subdrag of a term at some position in the term has a root at its head, while the subterm has none. These new roots play an important r\^ole in the reconstruction of $D$ from $\rest{D}{W}$ and $\cont{D}{W}$.

\begin{lem}
The strict subdrag relation is a well-founded order.
\end{lem}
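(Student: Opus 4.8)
The plan is to exhibit a single numerical measure that settles almost every requirement and then handle transitivity separately. Let $\gtsubt$ denote the strict subdrag relation, with $D \gtsubt E$ meaning that $E = \subd{D}{W}$ for some $W$ with $X^*(W) \subsetneq \Vertex{D}$. I would take as measure the vertex count $|\Vertex{\cdot}|$ and show that $D \gtsubt E$ implies $|\Vertex{E}| < |\Vertex{D}|$. From such a strictly decreasing measure into $(\Nat,>)$ one reads off, with no further work, well-foundedness (a descending $\gtsubt$-chain would map to an infinite descending chain of naturals), irreflexivity (no $D \gtsubt D$), and asymmetry (never both $D \gtsubt E$ and $E \gtsubt D$). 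The only order axiom not delivered by the measure is transitivity, which I treat last.

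For the measure, the crucial point --- anticipated in the remark following Definition~\ref{d:subd} --- is that forming a subdrag never adds sprouts. Indeed $\subd{D}{W} = \rest{D}{X^*(W)}$ and the generating set $X^*(W)$ is closed under succession, so the set of new sprouts, namely $\{s_v \st v\in\Vertex{D}\setminus X^*(W),\ \exists w\in X^*(W)\st wXv\}$, is empty: any successor of a vertex in $X^*(W)$ already lies in $X^*(W)$. Hence $\Vertex{\subd{D}{W}} = X^*(W)$ exactly, with no inflation from fresh sprouts, and for a \emph{strict} subdrag $X^*(W)\subsetneq\Vertex{D}$ forces $|\Vertex{\subd{D}{W}}| < |\Vertex{D}|$, as required. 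Strictness is used precisely here; the trivial subdrag $\subd{D}{R}=D$ has $X^*(R)=\Vertex{D}$ and does not shrink.

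Transitivity amounts to showing that a subdrag of a subdrag is again a subdrag. Writing $E=\subd{D}{W_1}$ with $U=X^*(W_1)$ and $F=\subd{E}{W_2}$ with $W_2\subseteq U$, I would first observe that, $U$ being closed under succession and carrying no new sprouts, the successor function of $E$ is that of $D$ restricted to $U$ (clause (d) of the restriction); consequently accessibility in $E$ agrees with accessibility in $D$ on $U$, so $X_E^*(W_2)=X_D^*(W_2)=:U'$, and the labels and successors of $F$ and of $\subd{D}{W_2}$ coincide on $U'$. The hard part is the interface, specifically the roots: I must check that applying the root-adjustment $R'(w)=R(w)+\sum_{\edgen(v,k,w)\in X,\ v\notin W}k$ twice --- cutting edges into $U$ from $\Vertex{D}\setminus U$, then edges into $U'$ from $U\setminus U'$ --- produces the same multiset as applying it once, cutting edges into $U'$ from $\Vertex{D}\setminus U'$. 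This holds because $\Vertex{D}\setminus U$ and $U\setminus U'$ are disjoint and union to $\Vertex{D}\setminus U'$ (since $U'\subseteq U\subseteq\Vertex{D}$), so the two batches of cut edges partition the single batch and their multiplicities add correctly at each retained vertex; thus $F=\subd{D}{W_2}$. Strictness propagates since $U'\subseteq U\subsetneq\Vertex{D}$, completing the argument.
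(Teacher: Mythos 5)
Your proof is correct and takes essentially the same approach as the paper, whose entire proof is the one-line observation that strict subdrags have strictly fewer vertices---valid precisely for the reason you verify, namely that $X^*(W)$ is successor-closed, so no fresh sprouts inflate the vertex count of $\subd{D}{W}$. Your additional check of transitivity (that the two root adjustments compose because $\Vertex{D}\setminus U$ and $U\setminus U'$ partition $\Vertex{D}\setminus U'$) is a detail the paper leaves implicit rather than a different route, and it is carried out correctly.
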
 

\begin{proof}\label{l:subd}
Because strict subdrags have fewer vertices.
\end{proof}

\section{An Example}\label{s:example}

To motivate the development of drag rewriting, consider an example.
The goal is to take a ring of blue vertices (a ground drag) like this:
\setlength{\tabcolsep}{5mm}
\[
\psset{unit=5mm}
\begin{pspicture}(2,1.5)(8,5)
\cnodeput[fillstyle=solid,fillcolor=cyan]{0}(3,2.4){Aa}{$a$}
\cnodeput[fillstyle=solid,fillcolor=cyan]{0}(4.5,4.5){Ba}{$a$}
\cnodeput[fillstyle=solid,fillcolor=cyan]{0}(6,2.4){Ca}{$a$}
\ncline[linecolor=red,linestyle=solid,linewidth=1pt]{->}{Aa}{Ba}
\ncline[linecolor=red,linestyle=solid,linewidth=1pt]{->}{Ba}{Ca}
\ncline[linecolor=red,linestyle=solid,linewidth=1pt]{->}{Ca}{Aa}
\end{pspicture}
\]
and create instead a ring consisting of red vertices, in the same quantity as the blue but going in the opposite direction, like this:
\[
\psset{unit=2mm}
\begin{pspicture}(2,-3.2)(9,10)
\cnodeput[fillstyle=solid,fillcolor=red]{0}(0,0){Ac}{$c$}
\cnodeput[fillstyle=solid,fillcolor=red]{0}(4.5,8){Bc}{$c$}
\cnodeput[fillstyle=solid,fillcolor=red]{0}(9,0){Cc}{$c$}
\nccurve[linecolor=blue,linestyle=solid,linewidth=1pt,angleA=90,angleB=0]{->}{Cc}{Bc}
\nccurve[linecolor=blue,linestyle=solid,linewidth=1pt,angleA=180,angleB=90]{->}{Bc}{Ac}
\nccurve[linecolor=blue,linestyle=solid,linewidth=1pt,angleA=290,angleB=250]{->}{Ac}{Cc}
\end{pspicture}
\]
To make what's happening clearer, we will be using red for original edges, blue for new, and green for temporary ones.

We seek a rewriting algorithm that can apply at the same time---in parallel---to many vertices along the ring.


The left-hand and right-hand sides are drags; see~\cite{DBLP:journals/tcs/DershowitzJ19}.
Left roots map to right roots, and the two share variables.

There are two rules.  The first creates the red vertex and introduces a gray vertex to keep track of connections.
\[
\psset{unit=4mm}
\begin{tabular}{c@{\raisebox{15mm}{\Huge\qquad$\Longrightarrow$\qquad}}c}
\begin{pspicture}(2,-1)(5,6)
\cnodeput[fillstyle=solid,fillcolor=cyan]{0}(3,2.4){Aa}{$a$}
\cnodeput[]{0}(4.5,4.5){BFx}{$x$}
\pnode(6,2.4){CF_}{}
\ncline[linecolor=black,linestyle=solid]{|->}{CF_}{Aa}
\ncline[linecolor=red,linestyle=solid]{->}{Aa}{BFx}
\end{pspicture}
&
\begin{pspicture}(1,-1)(5,6)
\cnodeput[fillstyle=solid,fillcolor=red]{0}(0,0){Ac}{$c$}
\cnodeput[fillstyle=solid,fillcolor=lightgray]{0}(1.5,1.5){Ao}{$o$}
\cnodeput[]{0}(4.5,4.5){BFx}{$x$}
\pnode(6,2.){CF_}{}
\ncline[linecolor=black,linestyle=solid]{|->}{CF_}{Ao}
\nccurve[linecolor=cadmiumgreen,linestyle=solid,angleA=90,angleB=180]{->}{Ao}{BFx}
\ncline[linecolor=cadmiumgreen,linestyle=solid]{<->}{Ac}{Ao}
\end{pspicture}
\end{tabular}
\]
The two new vertices, red $c$ and gray $o$, are added with edges connecting them in both directions.
The edge from $a$ to whatever $x$ may be is replaced by an edge from $o$.
Instead of the root $a$ on the left, it is $o$ that becomes the new root.
Vertex $a$ becomes detached as its incident edges are removed by the rule.

Applying this rule thrice to
\[
\psset{unit=5mm}
\begin{pspicture}(2,1.5)(8,5,5)
\cnodeput[fillstyle=solid,fillcolor=cyan]{0}(3,2.4){Aa}{$a$}
\cnodeput[fillstyle=solid,fillcolor=cyan]{0}(4.5,4.5){Ba}{$a$}
\cnodeput[fillstyle=solid,fillcolor=cyan]{0}(6,2.4){Ca}{$a$}
\ncline[linecolor=red,linestyle=solid,linewidth=1pt]{->}{Aa}{Ba}
\ncline[linecolor=red,linestyle=solid,linewidth=1pt]{->}{Ba}{Ca}
\ncline[linecolor=red,linestyle=solid,linewidth=1pt]{->}{Ca}{Aa}
\end{pspicture}
\]
starting at the lower left and proceeding counterclockwise,
we get the following sequence of drags:
\[
\begin{tabular}{ccc}
\psset{unit=5mm}
\psset{unit=4mm}
\begin{pspicture}(0,-2)(9,5)
\cnodeput[fillstyle=solid,fillcolor=red,linestyle=dashed,dash=6pt 2pt]{0}(0,0){Ac}{$c$}
\cnodeput[fillstyle=solid,fillcolor=lightgray,linestyle=dashed,dash=6pt 2pt]{0}(1.5,1.5){Ao}{$o$}
\cnodeput[fillstyle=solid,fillcolor=cyan]{0}(4.5,4.5){Ba}{$a$}
\cnodeput[fillstyle=solid,fillcolor=cyan]{0}(6,2.4){Ca}{$a$}
\cnodeput[linecolor=black,hatchwidth=0.5pt,hatchsep=3pt,fillstyle=crosshatch*,fillcolor=white,hatchcolor=cyan]{0}(3,2.4){Aa}{$a$}
\ncline[linecolor=red,linestyle=solid]{->}{Ba}{Ca}
\nccurve[linecolor=red,linestyle=solid,angleA=220,angleB=0]{->}{Ca}{Ao}
\nccurve[linecolor=cadmiumgreen,linestyle=solid,angleA=90,angleB=180]{->}{Ao}{Ba}
\ncline[linecolor=cadmiumgreen,linestyle=solid]{<->}{Ac}{Ao}
\end{pspicture}
&
\psset{unit=4mm}
\begin{pspicture}(2,-2)(9,9)
\cnodeput[fillstyle=solid,fillcolor=red]{0}(0,0){Ac}{$c$}
\cnodeput[fillstyle=solid,fillcolor=red]{0}(9,0){Cc}{$c$}
\cnodeput[fillstyle=solid,fillcolor=lightgray]{0}(1.5,1.5){Ao}{$o$}
\cnodeput[fillstyle=solid,fillcolor=lightgray]{0}(7.5,1.5){Co}{$o$}
\cnodeput[fillstyle=solid,fillcolor=cyan]{0}(4.5,4.5){Ba}{$a$}
\cnodeput[linecolor=black,hatchwidth=0.5pt,hatchsep=3pt,fillstyle=crosshatch*,fillcolor=white,hatchcolor=cyan]{0}(3,2.4){Aa}{$a$}
\cnodeput[linecolor=black,hatchwidth=0.5pt,hatchsep=3pt,fillstyle=crosshatch*,fillcolor=white,hatchcolor=cyan]{0}(6,2.4){Ca}{$a$}
\nccurve[linecolor=red,linestyle=solid,angleA=0,angleB=110]{->}{Ba}{Co}
\nccurve[linecolor=cadmiumgreen,linestyle=solid,angleA=90,angleB=180]{->}{Ao}{Ba}
\nccurve[linecolor=cadmiumgreen,linestyle=solid,angleA=270,angleB=270]{->}{Co}{Ao}
\ncline[linecolor=cadmiumgreen,linestyle=solid]{<->}{Ac}{Ao}
\ncline[linecolor=cadmiumgreen,linestyle=solid]{<->}{Cc}{Co}
\end{pspicture}
&
\psset{unit=4mm}
\begin{pspicture}(0,-2)(9,9)
\cnodeput[fillstyle=solid,fillcolor=red]{0}(0,0){Ac}{$c$}
\cnodeput[fillstyle=solid,fillcolor=red]{0}(4.5,8){Bc}{$c$}
\cnodeput[fillstyle=solid,fillcolor=red]{0}(9,0){Cc}{$c$}
\cnodeput[fillstyle=solid,linestyle=dotted,fillcolor=lightgray]{0}(1.5,1.5){Ao}{$o$}
\cnodeput[fillstyle=solid,fillcolor=lightgray]{0}(4.5,5.9){Bo}{$o$}
\cnodeput[fillstyle=solid,fillcolor=lightgray]{0}(7.5,1.5){Co}{$o$}
\cnodeput[linecolor=black,hatchwidth=0.5pt,hatchsep=3pt,fillstyle=crosshatch*,fillcolor=white,hatchcolor=cyan]{0}(3,2.4){Aa}{$a$}
\cnodeput[linecolor=black,hatchwidth=0.5pt,hatchsep=3pt,fillstyle=crosshatch*,fillcolor=white,hatchcolor=cyan]{0}(4.5,4.3){Ba}{$a$}
\cnodeput[linecolor=black,hatchwidth=0.5pt,hatchsep=3pt,fillstyle=crosshatch*,fillcolor=white,hatchcolor=cyan]{0}(6,2.4){Ca}{$a$}
\nccurve[linecolor=cadmiumgreen,linestyle=solid,angleA=90,angleB=180]{->}{Ao}{Bo}
\nccurve[linecolor=cadmiumgreen,linestyle=solid,angleA=0,angleB=90]{->}{Bo}{Co}
\nccurve[linecolor=cadmiumgreen,linestyle=solid,angleA=270,angleB=270]{->}{Co}{Ao}
\ncline[linecolor=cadmiumgreen,linestyle=solid]{<->}{Ac}{Ao}
\ncline[linecolor=cadmiumgreen,linestyle=solid]{<->}{Bc}{Bo}
\ncline[linecolor=cadmiumgreen,linestyle=solid]{<->}{Cc}{Co}
\end{pspicture}
\end{tabular}
\]
The orphaned $a$ vertices are left shaded.

The next rule connects the added red vertices in the opposite direction:
\[
\psset{unit=4mm,linecolor=black}
\begin{tabular}{c@{\raisebox{20mm}{\Huge\qquad$\Longrightarrow$\qquad}}c}
\begin{pspicture}(-3,-2)(5,10)
\pnode(2.6,3.6){Ac}{}
\cnodeput{0}(0,0){Acy}{$y$}
\pnode(7,7.7){Bc_}{}
\cnodeput[linestyle=dotted,fillstyle=solid,fillcolor=red]{0}(4.5,8){Bc}{$c$}
\cnodeput[fillstyle=solid,fillcolor=lightgray,linestyle=dotted]{0}(1.5,1.5){Ao}{$o$}
\cnodeput{0}(4.5,6.0){Bo}{$x$}
\cnodeput*[]{0}(3.5,0.5){Co}{}
\pnode(6,2.4){Ca}{}
\nccurve[linecolor=black,linestyle=solid,angleA=220,angleB=270]{|->}{Co}{Ao}
\ncline[linecolor=black,linestyle=solid]{|->}{Ac}{Ao}
\ncline[linecolor=black,linestyle=solid]{|->}{Bc_}{Bc}
\ncline[linecolor=cadmiumgreen,linestyle=solid]{<-}{Acy}{Ao}
\nccurve[linecolor=cadmiumgreen,linestyle=solid,angleA=90,angleB=180]{->}{Ao}{Bo}
\ncline[linecolor=cadmiumgreen,linestyle=solid]{->}{Bc}{Bo}
\end{pspicture}
&
\begin{pspicture}(-0,-2)(9,10)
\pnode(2.6,3.6){Ac}{}
\psset{linecolor=black}
\cnodeput{0}(0,0){Acy}{$y$}
\cnodeput{0}(4.5,6.0){Bo}{$x$}
\pnode(3.5,5.0){B1}{}
\pnode(5.5,5.0){B2}{}
\pnode(7,7.7){Bc_}{}
\cnodeput[fillstyle=solid,fillcolor=red,linecolor=black,linestyle=dashed,dash=6pt 2pt]{0}(4.5,8){Bc}{$c$}
\cnodeput[fillstyle=solid,fillcolor=yellow,linestyle=dashed]{0}(1.5,1.5){Ao}{$e$}
\cnodeput*[]{0}(3.5,0.5){Co}{}
\pnode(6,2.4){Ca}{}
\nccurve[linecolor=black,linestyle=solid,angleA=220,angleB=270]{|->}{Co}{Ao}
\ncline[linecolor=black,linestyle=solid]{|->}{Bc_}{Bc}
\ncline[linecolor=black,linestyle=solid]{|->}{B1}{Bo}
\ncline[linecolor=black,linestyle=solid]{|->}{B2}{Bo}
\ncline[linecolor=black,linestyle=solid]{|->}{Ac}{Ao}
\nccurve[linecolor=blue,linestyle=solid,linewidth=1pt,angleA=180,angleB=90]{->}{Bc}{Acy}
\end{pspicture}
\end{tabular}
\]
The red $c$ vertex on the left is dotted and the one on the right is dashed  to indicate that they are actually distinct vertices.
The edge from it is redirected (in blue) to the other vertex pointed to by the gray $o$ that points to the original head of the edge from $c$.
A new, yellow double-rooted vertex $e$ is created to replace the deleted, double-rooted $o$, and serves to preserve incoming edges.
The two edges to the vertex signified by variable $x$ have been cut, so are now just roots.

Applying this rule to
\[
\psset{unit=4mm}
\begin{pspicture}(0,-1)(9,9)
\cnodeput[fillstyle=solid,fillcolor=red]{0}(0,0){Ac}{$c$}
\cnodeput[fillstyle=solid,fillcolor=red]{0}(4.5,8){Bc}{$c$}
\cnodeput[fillstyle=solid,fillcolor=red]{0}(9,0){Cc}{$c$}
\cnodeput[fillstyle=solid,fillcolor=lightgray]{0}(1.5,1.5){Ao}{$o$}
\cnodeput[fillstyle=solid,fillcolor=lightgray]{0}(4.5,5.9){Bo}{$o$}
\cnodeput[fillstyle=solid,fillcolor=lightgray]{0}(7.5,1.5){Co}{$o$}
\nccurve[linecolor=cadmiumgreen,linestyle=solid,angleA=90,angleB=180]{->}{Ao}{Bo}
\nccurve[linecolor=cadmiumgreen,linestyle=solid,angleA=0,angleB=90]{->}{Bo}{Co}
\nccurve[linecolor=cadmiumgreen,linestyle=solid,angleA=270,angleB=270]{->}{Co}{Ao}
\ncline[linecolor=cadmiumgreen,linestyle=solid]{<->}{Ac}{Ao}
\ncline[linecolor=cadmiumgreen,linestyle=solid]{<->}{Bc}{Bo}
\ncline[linecolor=cadmiumgreen,linestyle=solid]{<->}{Cc}{Co}
\end{pspicture}
\]
three times, we get
\[
\psset{unit=4mm}
\begin{tabular}{ccc}
\begin{pspicture}(0,-3)(9,9)
\cnodeput[fillstyle=solid,fillcolor=red]{0}(0,0){Ac}{$c$}
\cnodeput[fillstyle=solid,fillcolor=red]{0}(4.5,8){Bc}{$c$}
\cnodeput[fillstyle=solid,fillcolor=red]{0}(9,0){Cc}{$c$}
\pnode(3.5,5.0){B1}{}
\pnode(5.5,5.0){B2}{}
\cnodeput[fillstyle=solid,fillcolor=yellow]{0}(1.5,1.5){Ao}{$e$}
\cnodeput[fillstyle=solid,fillcolor=lightgray]{0}(4.5,6.0){Bo}{$o$}
\cnodeput[fillstyle=solid,fillcolor=lightgray]{0}(7.5,1.5){Co}{$o$}
\nccurve[linecolor=cadmiumgreen,linestyle=solid,angleA=0,angleB=90]{->}{Bo}{Co}
\nccurve[linecolor=cadmiumgreen,linestyle=solid,angleA=270,angleB=270]{->}{Co}{Ao}
\ncline[linecolor=cadmiumgreen,linestyle=solid]{<-}{Bc}{Bo}
\ncline[linecolor=cadmiumgreen,linestyle=solid]{<->}{Cc}{Co}
\ncline[linecolor=cadmiumgreen,linestyle=solid]{->}{Ac}{Ao}
\ncline[linecolor=black,linestyle=solid]{|->}{B1}{Bo}
\ncline[linecolor=black,linestyle=solid]{|->}{B2}{Bo}
\nccurve[linecolor=blue,linestyle=solid,linewidth=1pt,angleA=180,angleB=90]{->}{Bc}{Ac}
\end{pspicture}
&
\begin{pspicture}(0,-3)(9,9)
\cnodeput[fillstyle=solid,fillcolor=red]{0}(0,0){Ac}{$c$}
\cnodeput[fillstyle=solid,fillcolor=red]{0}(4.5,8){Bc}{$c$}
\cnodeput[fillstyle=solid,fillcolor=red]{0}(9,0){Cc}{$c$}
\pnode(2.5,0.5){A1}{}
\pnode(2.5,2.5){A2}{}
\pnode(3.5,5.0){B1}{}
\pnode(5.5,5.0){B2}{}
\cnodeput[fillstyle=solid,fillcolor=yellow]{0}(1.5,1.5){Ao}{$e$}
\cnodeput[fillstyle=solid,fillcolor=lightgray]{0}(4.5,6.0){Bo}{$o$}
\cnodeput[fillstyle=solid,fillcolor=yellow]{0}(7.5,1.5){Co}{$e$}
\nccurve[linecolor=cadmiumgreen,linestyle=solid,angleA=0,angleB=90]{->}{Bo}{Co}
\ncline[linecolor=cadmiumgreen,linestyle=solid]{<-}{Bc}{Bo}
\ncline[linecolor=cadmiumgreen,linestyle=solid]{->}{Cc}{Co}
\ncline[linecolor=black,linestyle=solid]{|->}{A1}{Ao}
\ncline[linecolor=black,linestyle=solid]{|->}{A2}{Ao}
\ncline[linecolor=black,linestyle=solid]{|->}{B1}{Bo}
\ncline[linecolor=black,linestyle=solid]{|->}{B2}{Bo}
\nccurve[linecolor=blue,linestyle=solid,linewidth=1pt,angleA=180,angleB=90]{->}{Bc}{Ac}
\nccurve[linecolor=blue,linestyle=solid,linewidth=1pt,angleA=290,angleB=250]{->}{Ac}{Cc}
\end{pspicture}
&
\begin{pspicture}(0,-3)(9,9)
\cnodeput[fillstyle=solid,fillcolor=red]{0}(0,0){Ac}{$c$}
\cnodeput[fillstyle=solid,fillcolor=red]{0}(4.5,8){Bc}{$c$}
\cnodeput[fillstyle=solid,fillcolor=red]{0}(9,0){Cc}{$c$}
\pnode(2.5,0.5){A1}{}
\pnode(2.5,2.5){A2}{}
\pnode(3.5,5.0){B1}{}
\pnode(5.5,5.0){B2}{}
\pnode(6.5,0.5){C1}{}
\pnode(6.5,2.5){C2}{}
\cnodeput[fillstyle=solid,fillcolor=yellow]{0}(1.5,1.5){Ao}{$e$}
\cnodeput[fillstyle=solid,fillcolor=yellow]{0}(4.5,6.0){Bo}{$e$}
\cnodeput[fillstyle=solid,fillcolor=yellow]{0}(7.5,1.5){Co}{$e$}
\ncline[linecolor=black,linestyle=solid]{|->}{A1}{Ao}
\ncline[linecolor=black,linestyle=solid]{|->}{A2}{Ao}
\ncline[linecolor=black,linestyle=solid]{|->}{B1}{Bo}
\ncline[linecolor=black,linestyle=solid]{|->}{B2}{Bo}
\ncline[linecolor=black,linestyle=solid]{|->}{C1}{Co}
\ncline[linecolor=black,linestyle=solid]{|->}{C2}{Co}
\nccurve[linecolor=blue,linestyle=solid,linewidth=1pt,angleA=90,angleB=0]{->}{Cc}{Bc}
\nccurve[linecolor=blue,linestyle=solid,linewidth=1pt,angleA=180,angleB=90]{->}{Bc}{Ac}
\nccurve[linecolor=blue,linestyle=solid,linewidth=1pt,angleA=290,angleB=250]{->}{Ac}{Cc}
\end{pspicture}
\end{tabular}
\]
This rule cannot be applied before a red vertex is created by the previous rule.
Later on, we will consider garbage collection for vertices like $e$ that have served out their purpose.
Actually, the old $c$ vertex also sticks around and can be recycled.

An alternative is to allow the left-hand and right-hand drags to share internal vertices---not just sprouts,
but to have separate sets of edges for the two sides.
The two rules are the same as before, except that now the red vertex $c$ in the second rule
is shared by both sides:
\[
\psset{unit=4mm}
\begin{tabular}{c@{\raisebox{20mm}{\Huge\qquad$\Longrightarrow$\qquad}}c}
\begin{pspicture}(-2,-2)(5,10)
\pnode(2.6,3.6){Ac}{}
\cnodeput{0}(0,0){Acy}{$y$}
\pnode(7,7.7){Bc_}{}
\cnodeput[fillstyle=solid,fillcolor=red]{0}(4.5,8){Bc}{$c$}
\cnodeput[fillstyle=solid,fillcolor=lightgray,linestyle=dotted]{0}(1.5,1.5){Ao}{$o$}
\cnodeput{0}(4.5,6.0){Bo}{$x$}
\cnodeput*[]{0}(3.5,0.5){Co}{}
\pnode(6,2.4){Ca}{}
\ncline[linecolor=black,linestyle=solid]{|->}{Bc_}{Bc}
\nccurve[linecolor=black,linestyle=solid,angleA=220,angleB=270]{|->}{Co}{Ao}
\ncline[linecolor=black,linestyle=solid]{|->}{Ac}{Ao}
\nccurve[linecolor=cadmiumgreen,linestyle=solid,angleA=90,angleB=180]{->}{Ao}{Bo}
\ncline[linecolor=cadmiumgreen,linestyle=solid]{<-}{Acy}{Ao}
\ncline[linecolor=cadmiumgreen,linestyle=solid]{->}{Bc}{Bo}
\end{pspicture}
&
\begin{pspicture}(-0,-2)(9,10)
\pnode(2.6,3.6){Ac}{}
\cnodeput{0}(0,0){Acy}{$y$}
\cnodeput{0}(4.5,6.0){Bo}{$x$}
\pnode(3.5,5.0){B1}{}
\pnode(5.5,5.0){B2}{}
\pnode(7,7.7){Bc_}{}
\cnodeput[fillstyle=solid,fillcolor=red]{0}(4.5,8){Bc}{$c$}
\cnodeput[fillstyle=solid,fillcolor=yellow,linestyle=dashed,dash=6pt 2pt]{0}(1.5,1.5){Ao}{$e$}
\cnodeput*[]{0}(3.5,0.5){Co}{}
\pnode(6,2.4){Ca}{}
\nccurve[linecolor=black,linestyle=solid,angleA=220,angleB=270]{|->}{Co}{Ao}
\ncline[linecolor=black,linestyle=solid]{|->}{Bc_}{Bc}
\ncline[linecolor=black,linestyle=solid]{|->}{B1}{Bo}
\ncline[linecolor=black,linestyle=solid]{|->}{B2}{Bo}
\ncline[linecolor=black,linestyle=solid]{|->}{Ac}{Ao}
\nccurve[linecolor=blue,linestyle=solid,linewidth=1pt,angleA=180,angleB=90]{->}{Bc}{Acy}
\end{pspicture}
\end{tabular}
\]
The red $c$ vertex on the two sides has a solid border to make it clear that it is the same vertex on the two sides.
The edge from $c$ to $x$ is redirected to $y$, as before.
As in the previous version, the gray $o$ vertex is only in the left drag, while the yellow $e$ is only in the right one.
The application of this rule would look the same as the above, except that the red vertices are preserved by the rule. 
Later, in Example~\ref{ex:share},
we will recast this example by using a more general rewrite rule format, with sharing of subdrags between left- and right-hand sides.

\section{Drag Morphisms}

A vertex of a drag has both a name (an element in $V$) and a label, taken from $\Sigma$ or $\Xi$. In particular, the sprouts of a drag are vertices labeled by variables from $\Xi$. The vertices of ordinary terms, on the other hand, are usually left nameless, with their positions (in a Dewey-decimal-like notation) standing in for names. 
In the term tradition, sprouts \emph{are} variables: the difference between a sprout and its label does not matter because the term framework does not distinguish between two terms that correspond to distinct isomorphic  graphs. 
Drags being graphs, two drags may be identical or they may be isomorphic as graphs. 
This distinction becomes crucial when it comes to sharing, which is why it is not relevant for terms, where there is no sharing.
Of course, terms can be seen as a particular kind of drag, but, as just stressed, it is important to understand that term equality for trees corresponds to isomorphism of drags, not identity.

To define precisely the kinds of equalities on drags in which we are interested, notions of drag morphism, drag monomorphism, and drag isomorphism are required. 
These must of course reduce to the corresponding notions on graphs for ground drags. 
The possibility that drags share vertices will require special care.
Another important matter is categoricity. 

\subsection{Morphisms, monomorphisms, isomorphisms and equimorphisms}
As usual, morphisms will be maps from the set of edges of the input drag to the set of edges of the target drag that are the identity on shared vertices. 
What differs from the usual graph morphisms is that we need to take care of variables, regarding which we encounter three problems. 
First, internal vertices need be mapped to internal vertices, as is already the case with terms.
Second, the drag may be nonlinear, two or more sprouts being labeled the same; such sprouts cannot be mapped independently of each other. 
Third, two vertices of the input drag may be mapped to the same vertex of the target drag in case the mapping is not injective. 
This is a standard situation for morphisms, but this has to happen for monomorphisms as well, if  only for two sprouts sharing the same label. 
But monomorphisms may also have to map a sprout and an internal vertex of the input drag to the same vertex of the target drag: 
Monomorphisms will be injective on internal vertices only. 
We address these questions in turn in the sequel.

\begin{defi}[Equimorphism]
\label{d:emo}
Given two drags $D=\langle V, R, L, X, S\rangle$ and $D'=\langle V', R', L', X', S'\rangle$, an \emph{equimorphism} is a bijective map $\omicron: V \ra V'$ that
preserves labels \green{and root quantity} at all vertices [$\forall u\in V \st L'(\omicron(s))=L(u)$ \green{and $R'(\omicron(u))=R(u)$}]
as well as the successor function [$\edgen(u,i,v)\in X \mbox{ iff } \edgen(\omicron(u),i,\omicron(v))\in X'$].
\end{defi}

Equimorphic drags are just copies of one another, possibly sharing some vertices, and therefore, the inverse of an equimorphism is itself an equimorphism.

\begin{defi}[Morphism]
\label{d:pmo}
Given two drags $D=\langle V, R, L, X, S\rangle$ and $D'=\langle V', R', L', X', S'\rangle$, $I$ and $I'$ their respective sets of internal vertices, a \emph{morphism} from $D$ to $D'$ is a map $\omicron: V\ra V'$, called a \emph{vertex map}, which 
\begin{enumerate}
\item
restricts to mappings from internal vertices $I$ to $I'$ and preserves their labels;
\item
restricts to mappings from isolated sprouts of $D$ to isolated sprouts of $D'$;
\item preserves indegrees at all vertices of $\omicron(I)$ [$\forall v \in \omicron(I) \st \Sigma_{u\in I\st \omicron(u)=v}\npred(u,D)=\npred(v, D')$];
\item maps edges of $D$ to corresponding edges of $D'$ 
[$\forall \edgen(u,i,v)\in X \st \edgen(\omicron(u),i,\omicron(v))\in X'$];
\item 
preserves equimorphic subdrags, that is,  maps equimorphic subdrags of the source drag to equimorphic subdrags of the target drag.
\end{enumerate}
We denote by $\omicron_I$ the restriction  of $\omicron$ to the internal vertices of $D$, and define the \emph{edge map} of the morphism $\omicron$ as $\omicron_X(\edgen(u,i,v))=\edgen(\omicron(u),i,\omicron(v))$.
\end{defi}

If $D$ and $D'$ are ground drags, morphisms are just ordinary graph morphisms. 

An important aspect of our definition is that morphisms may map a non-isolated sprout to any vertex of the same indegree, while isolated sprout must be mapped to an isolated sprout of the same indegree. In the literature of term graphs, isolated sprouts are rarely considered when defining morphisms. Likewise, condition (5) is either omitted in the absence of non-linear sprouts, or strengthened by mapping non-linear sprouts to the same vertex.

\begin{defi}[Monomorphism, isomorphism]
\label{d:mmo}
A \emph{monomorphism} is a morphism whose vertex map restricts injectively to internal vertices and isolated sprouts. An \emph{injection} is a monomorphism whose vertex map is the identity on internal vertices and isolated sprouts.
A \emph{isomorphism} is a morphism whose vertex map is bijective.
\end{defi}

We will see later that finding a monomorphism mapping a drag $D$ to a drag $D'$ is nothing but matching $D$ against $D'$, an operation that will rely on indegree preservation. 
Note that indegree preservation becomes indeed a true preservation property for monomorphisms.

Since an edge \mbox{$\edgen(\omicron(u),i,\omicron(v))$} is characterized by the pair $(\omicron(u),i)$, which is unique when $\omicron$ is injective, the edge map $\omicron_X$ of a monomorphism $\omicron$ is injective on all edges. That's why we do not need to state it, even though $\omicron$ is not injective on all vertices. This would not be the case were an edge simply a pair $\edge(u,v)$ instead of a triple $\edgen(u,i,v)$.

Because isomorphisms are bijective and preserve equimorphic subdrags, they must map two sprouts of the same label $x$ to two sprouts of the same label $y$ possibly different from $x$, implying that the inverse of an isomorphism is an isomorphism. It therefore appears that equimorphisms are those isomorphisms that preserve labels at all sprouts.

Given a morphism $\omicron: D\ra D'$, edges of $D'$ may be mapped from edges between internal vertices of $D$, they will appear in black on the figures. Some others may be mapped for an edge of $D'$ whose target is a sprout, they will appear in red. Other edges of $D'$ are \emph{context edges}. They appear in green if their target is not a vertex in the image of $\omicron$, otherwise in blue. This categorization of edges will be made formal later and play an important r\^ole. For all examples, checking the requirements for being morphisms is left to the reader.

\begin{exa}[Morphism]
\label{ex:mo}
Consider drags $D=f^{[1]}(f^{[2]}(x,x),x)$, with vertices $f_1$ (above) and $f_2$ (below) labeled by the binary symbol $f$ and three other vertices $x_1$, $x_2$,  and $x_3$ (in depth-first order), and $D'= f(\textsc{self}, \textsc{self})$, a drag with a single vertex $f_3$ labeled $f$, two edges $\edgen(f_3,1,f_3)$ and $\edgen(f_3,2,f_3)$, and no root.
Observe that the mapping $\omicron:D\to D'$ such that 
$\omicron(f_1)=\omicron(f_2)=\omicron(x_1)=\omicron(x_2)=\omicron(x_3)=f_3$ is a morphism. In particular, the indegree of $f_3$ ($=4$) is the sum of the indegrees of $f_1$ ($=3$) and $f_2$ ($=1$).
This example is represented on the right of Figure~\ref{f:monos}.
\qed\end{exa}

\begin{exa}[Monomorphism]
\label{ex:mono}
The same Figure~\ref{f:monos} shows on its left a monomorphism that maps the internal vertex $f_1$ to vertex $f_2$ of the same indegree 3, and both sprouts $x_1$ and $x_2$ to the internal vertex $f_2$. See how the two (red) edges of $D$ become edges of $D'$ between internal vertices.
\end{exa}

\begin{exa}
\label{e:monos}
Figure~\ref{f:monos1} (left) is an example of a monomorphism [$\omicron(f_1)=f_2$, $\omicron(a_1)=a_2$, $\omicron(x)=a_2$] with various kinds of edges. Note that the left vertex labeled $a$ has lost its roots. One has actually been "used" by the edges $\edgen(f_1,1,x)$ to create the edge $\edgen(f_2,1,a_2)$; The other has been "absorbed" by the edge $\edgen(h,1,a_2)$ of the target drag.
\end{exa}

\begin{figure}[t]
\resizebox*{1.\linewidth}{!}{
\begin{tikzpicture}[
Droundnode/.style={circle, draw=blue!50, fill=blue!5, very thick, minimum size=4mm},
namenode/.style={},
Dproundnode/.style={circle, fill=red!5, very thick, minimum size=4mm},
namenode/.style={}
]
%
\node[Dproundnode]    (f1)    []  {$\,f_1^{[3]}$};
\node (flf) [left=3mm of f1] {};
\node (frf) [right=3mm of f1] {};
\node (x1) [below=19mm of flf, very thick] {$x_1$};
\node (x2) [below=19mm of frf, very thick] {$x_2$};
\draw[->, very thick, red] (f1) to node [above] {1} (x1);
\draw[->, very thick, red] (f1) to node [above] {2} (x2);

\node (femb) [right=15mm of f1] {};
\node (emb) [below=8mm of femb] {${\huge \hookrightarrow}$};

\node[Dproundnode]    (g)    [right=30mm of f1, very thick]    {$g$}; 
\node[Dproundnode]    (h)    [right=10mm of g, very thick]    {$h$}; 

\node[Dproundnode]    (f2)    [below=15mm of g, very thick] {$f_2$};
\draw[->, very thick, red] (f2) .. controls + (-1,-1) and + (-0.5,1) .. (f2);
\draw[->, very thick, red] (f2) .. controls + (1,-1) and + (0.5,1) .. (f2);

\draw[->, very thick, blue] (h) to node [above] {1} (g);
\draw[->, very thick, blue] (g) to node [left] {1} (f2);
\node    (n1)    [left=0mm of f2, thick, red] {$1\;$};
\node    (n2)    [right=0mm of f2, thick, red] {$\;2$};
\node   (nn2)   [below=5mm of x1] {};
\node (n12) [right=-5mm of nn2, thick] {$\omicron(x_1)=\omicron(x_2)=\omicron(f_1)= f_2$};

\node[Dproundnode]    (nf1)    [right=7cm of f1]  {$\,f_1^{[1]}$};
\node (rnf1) [left=0mm of nf1, very thick] {$\ra$};
\node (nflf) [left=3mm of nf1] {};
\node (nfrf) [right=3mm of nf1] {};
\node (nf2) [below=15mm of nflf, very thick] {$\;f_2^{[2]}$};
\node (nflf2) [left=3mm of nf2] {};
\node (nfrf2) [right=3mm of nf2] {};
\node (nx1) [below=15mm of nflf2, very thick] {$x_1$};
\node (nx2) [below=15mm of nfrf2, very thick] {$x_2$};
\node (nx3) [below=15mm of nfrf, very thick] {$x_3$};
\draw[->, very thick, black] (nf1) to node [above] {1} (nf2);
\draw[->, very thick, black] (nf1) to node [above] {2} (nx3);
\draw[->, very thick, red] (nf2) to node [above] {1} (nx1);
\draw[->, very thick, red] (nf2) to node [above] {2} (nx2);

\node (femb2) [right=15mm of nf1] {};
\node (nemb2) [below=8mm of femb2] {${\huge \ra}$};

\node[Dproundnode]    (nf3)    [right=10mm of nemb2, very thick] {$f_3$};
\draw[->, very thick, red] (nf3) .. controls + (-1,-1) and + (-1,1) .. (nf3);
\node    (faf3)    [above=1mm of nf3, thick] {};
\node    (fbf3)    [below=1mm of nf3, thick, red] {};
\node    (n3)    [left=0mm of faf3, thick] {$1\quad$};
\node    (n4)    [left=0mm of fbf3, thick, red] {$1\quad$};
\draw[->, very thick] (nf3) .. controls + (-0.9,-0.85) and + (-0.85,0.9) .. (nf3);
\node    (n5)    [right=0mm of faf3, thick] {$\quad2$};
\node    (n6)    [right=0mm of fbf3, thick, red] {$\quad2$};
\draw[->, very thick, red] (nf3) .. controls + (1,-1) and + (1,1) .. (nf3);
\draw[->, very thick] (nf3) .. controls + (0.9,-0.9) and + (0.9,0.9) .. (nf3);

\node    (n10)    [below=.5cm of nf3, thick] {$\omicron(f_1)=\omicron(f_2)=f_3$};
\node (nn11) [below=4mm of n10, thick] {} ;
\node (n11) [right=-3.1cm of nn11, thick] {$\omicron(x_1)=\omicron(x_2)=\omicron(x_3)= f_3$};

\end{tikzpicture}
}
\caption{A monomorphism ($\hookrightarrow$) on the left and a morphism ($\ra$) on the right. Incoming arrows at some vertex annotated with numbers stand for multiple roots.}\label{f:monos}
\end{figure}
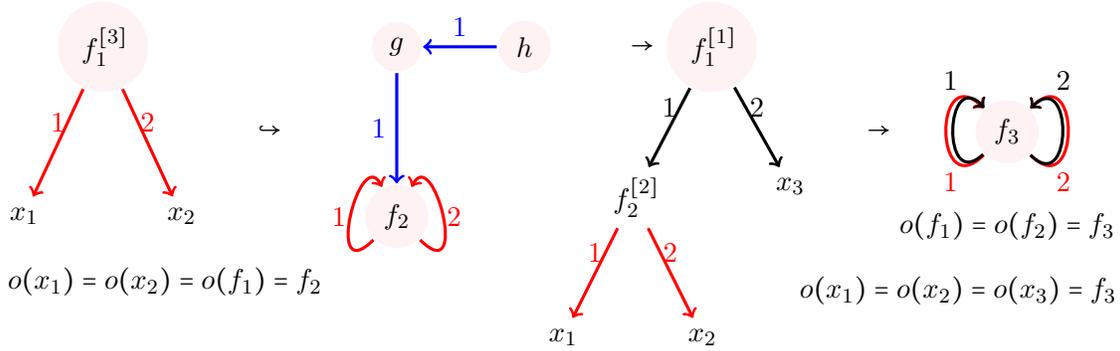

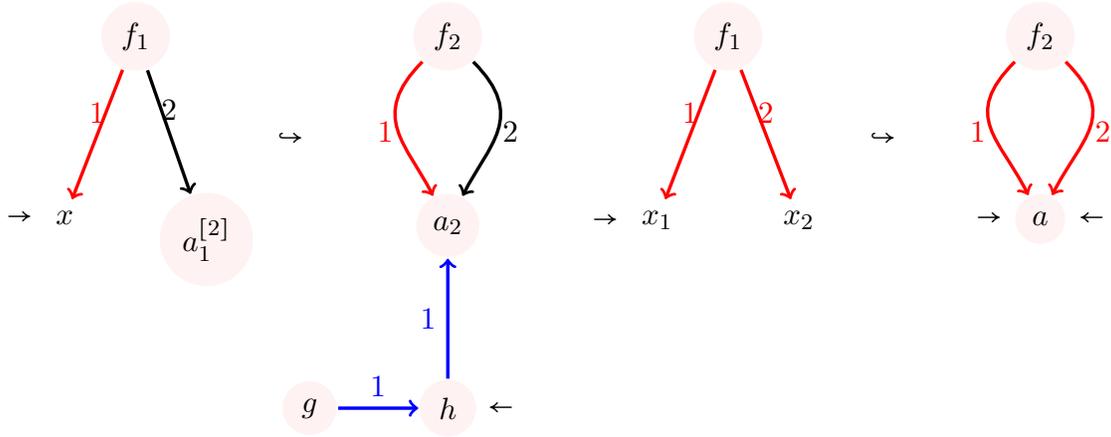
\begin{figure}[t]
\resizebox*{1.\linewidth}{!}{
\begin{tikzpicture}[
Droundnode/.style={circle, draw=blue!50, fill=blue!5, very thick, minimum size=4mm},
namenode/.style={},
Dproundnode/.style={circle, fill=red!5, very thick, minimum size=4mm},
namenode/.style={}
]
%
\node[Dproundnode]    (f1)    []  {$f_1$};
\node (flf) [left=3mm of f1] {};
\node (frf) [right=3mm of f1] {};
\node (fbf) [below=7mm of f1] {};
\node (x) [below=19mm of flf, very thick] {$x$};
\node (rootx) [left=0mm of x, very thick] {$\ra$};
\node[Dproundnode]    (a1)    [below=18mm of frf, very thick] {$a_1^{[2]}$};
\draw[->, very thick, red] (f1) to node [above] {1} (x);
\draw[->, very thick] (f1) to node [above] {2} (a1);

\node (emb) [right=15mm of fbf] {${\huge \hookrightarrow}$};

\node[Dproundnode]    (f2)    [right=30mm of f1, very thick]    {$f_2$};       
\node[Dproundnode]    (a2)    [below=15mm of f2, very thick] {$a_2$};
\draw[->, very thick, red] (f2) .. controls + (-1,-1) and + (-0.5,1) .. (a2);
\draw[->, very thick] (f2) .. controls + (1,-1) and + (0.5,1) .. (a2);
\node    (fn7)    [below=6mm of f2, thick, red] {};
\node    (n7)    [left=4mm of fn7, thick, red] {$1$};
\node    (n8)    [right=4mm of fn7, thick] {$2$};

\node[Dproundnode]    (h)    [below=15mm of a2, thick] {$h$};
\node (rooth) [right=0mm of h] {$\la$};
\draw[->, very thick, blue] (h) to node [left] {1} (a2);
\node[Dproundnode]   (g)   [left=10mm of h] {$g$};
\draw[->, very thick, blue] (g) to node [above] {1} (h);

\node[Dproundnode]    (nf1)    [right=6.5cm of f1]  {$f_1$};
\node (nflf) [left=3mm of nf1] {};
\node (nfrf) [right=3mm of nf1] {};
\node (nfbf) [below=7mm of nf1] {};
\node (nx1) [below=19mm of nflf, very thick] {$x_1$};
\node (rootnx) [left=0mm of nx1, very thick] {$\ra$};
\node (nx2)    [below=19mm of nfrf, very thick] {$x_2$};
\draw[->, very thick, red] (nf1) to node [above] {1} (nx1);
\draw[->, very thick, red] (nf1) to node [above] {2} (nx2);

\node (nemb) [right=15mm of nfbf] {${\huge \hookrightarrow}$};

\node[Dproundnode]    (nf2)    [right=30mm of nf1, very thick]    {$f_2$};       
\node[Dproundnode]    (na2)    [below=15mm of nf2, very thick] {$a$};
\node (root1na2) [left=0mm of na2, very thick] {$\ra$};
\node (root2na2) [right=0mm of na2, very thick] {$\la$};
\draw[->, very thick, red] (nf2) .. controls + (-1,-1) and + (-0.5,1) .. (na2);
\draw[->, very thick, red] (nf2) .. controls + (1,-1) and + (0.5,1) .. (na2);
\node    (fn9)    [below=6mm of nf2, thick, red] {};
\node    (n9)    [left=4mm of fn9, thick, red] {$1$};
\node    (n10)    [right=4mm of fn9, thick, red] {$2$};

\end{tikzpicture}
}
\caption{Additional examples of monomorphisms.\label{f:monos1}}
\end{figure}

\begin{exa}[Isomorphism]
\label{ex:iso}
We now show that the two drags $D=f(x,y^{[1]})$ and $D'=f(x,z^{[1]})$, sharing the sprout labeled $x$, where $y$ and $z$ are different, are isomorphic. Using the map $\omicron:D\to D'$ such that $\omicron(f_1)=f_2$, $\omicron(x)=x$ and $\omicron(y)=z$, $\omicron$ being the identity for the shared vertex $x$, implying that
$\omicron_X(\edgen(f_1,1,x))=\edgen(f_2,1,x)$ and $\omicron_X(\edgen(f_1,2,x))=\edgen(f_2,2,z)$, we get $\omicron(D)=D'$. The map $\omicron$ is a monomorphism that is bijective, hence is an isomorphism. Were the vertex $z$ in $v$ replaced by a vertex $w$ labeled $y$, both drags would be equimorphic, regardless of whether $w$ is shared with the vertex $y$ in $u$.
\qed\end{exa}

\subsection{The Category of Drags}
We have already pointed out the strong preservation condition of equimorphic drags by morphisms: 
The weaker condition that identically labeled sprouts be related by isomorphism, as suggested in~\cite{DBLP:journals/tcs/DershowitzJ19}, would not ensure the following key properties:

\begin{lem}
\label{l:closureprop}
Morphisms, monomorphisms, injections, isomorphisms, and equimorphisms are closed under composition.
\end{lem}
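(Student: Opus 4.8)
The plan is to fix two morphisms $\omicron : D \to D'$ and $\omicron' : D' \to D''$ and verify that the set-theoretic composite $\omicron' \circ \omicron : V \to V''$ satisfies the five clauses of Definition~\ref{d:pmo}; the four special cases then follow by carrying one extra invariant through that same composite. Clauses~(1), (2), and (4) are immediate by transporting twice: $\omicron$ sends internal vertices to internal vertices preserving labels and $\omicron'$ does likewise, so the composite does too; isolated sprouts of $D$ go to isolated sprouts of $D'$ and thence to isolated sprouts of $D''$; and the edge maps compose, $(\omicron'\circ\omicron)_X = \omicron'_X \circ \omicron_X$, so every $\edgen(u,i,v)\in X$ is carried to $\edgen(\omicron'(\omicron(u)),i,\omicron'(\omicron(v)))\in X''$. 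Clause~(5) propagates the same way one level up: if two subdrags of $D$ are equimorphic, then $\omicron$ maps them to equimorphic subdrags of $D'$, which $\omicron'$ in turn maps to equimorphic subdrags of $D''$, so $\omicron'\circ\omicron$ preserves equimorphic subdrags.

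The substantive clause is~(3), indegree preservation, and this is where I expect the real work. I would argue by partitioning fibers. Fix $v'' \in (\omicron'\circ\omicron)(I)$ and consider its internal preimages $\{u\in I : \omicron'(\omicron(u)) = v''\}$. Since $\omicron$ keeps internal vertices internal, each such $u$ factors through $w' = \omicron(u) \in I'$ with $\omicron'(w')=v''$; grouping by the value $w'$ and applying clause~(3) for $\omicron$ to the inner sum gives
\[
\sum_{u\in I\,:\,\omicron'(\omicron(u))=v''}\npred(u,D)
\;=\;\sum_{\substack{w'\in\omicron(I)\\\omicron'(w')=v''}}\;\sum_{u\in I\,:\,\omicron(u)=w'}\npred(u,D)
\;=\;\sum_{\substack{w'\in\omicron(I)\\\omicron'(w')=v''}}\npred(w',D').
\]
To finish I would invoke clause~(3) for $\omicron'$ at $v''$, which expresses $\npred(v'',D'')$ as the sum of $\npred(w',D')$ over \emph{all} internal $w'\in I'$ with $\omicron'(w')=v''$. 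The main obstacle is precisely the reconciliation of these two sums: the left telescoping ranges only over $w'\in\omicron(I)$, whereas the balance for $\omicron'$ ranges over all of $I'$, so the argument must account for internal vertices of the intermediate drag $D'$ lying outside the image $\omicron(I)$. Showing that such vertices contribute nothing to the balance is the crux, and I would expect to need a careful analysis of how the roots and context edges of $D'$ are distributed, exactly of the kind foreshadowed by the \textbf{used}/\textbf{absorbed} root bookkeeping of Example~\ref{e:monos}.

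Finally, the four refinements follow by propagating a single extra property once clauses~(1)--(5) are established. For monomorphisms I note that $\omicron$ and $\omicron'$ each restrict injectively to internal vertices and isolated sprouts, and because $\omicron$ keeps internal vertices internal and isolated sprouts isolated, the composite of these injective restrictions is again injective; injections then follow since a composite of identity restrictions is the identity. For isomorphisms I use that the vertex maps are bijective and bijections compose, together with the morphism clauses already checked. For equimorphisms (Definition~\ref{d:emo}) the composite is bijective, and label preservation, root-quantity preservation, and the biconditional edge condition $\edgen(u,i,v)\in X \Leftrightarrow \edgen(\omicron(u),i,\omicron(v))\in X'$ each carry through both maps in both directions, so $\omicron'\circ\omicron$ is again an equimorphism.
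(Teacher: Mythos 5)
Your treatment of clauses (1), (2), (4), (5) and of the four refinements tracks the paper's proof exactly---the paper compresses each into a single sentence ("restrictions to internal vertices and isolated sprouts compose", "edge maps compose because vertex maps do", "preservation of equimorphic subdrags is satisfied because equimorphisms compose", and injectivity/bijectivity/identity survive composition)---so up to clause (3) you are on the paper's route. The gap is that you never discharge clause (3), and the step you defer ("showing that such vertices contribute nothing to the balance is the crux") is not a loose end that careful root bookkeeping could have tightened: under the literal Definition~\ref{d:pmo}(3) it fails for general morphisms. Take $D$ with one rootless internal vertex $a$ of arity $0$, $D'$ with two internal vertices $a_1$ (rootless) and $a_2$ (one root), and $D''$ with one internal vertex $a_3$ (one root), all carrying the same nullary label and no edges. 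Then $\omicron(a)=a_1$ satisfies (3) since $0=\npred(a_1,D')$, and $\omicron'(a_1)=\omicron'(a_2)=a_3$ satisfies (3) since $0+1=\npred(a_3,D'')$; the remaining clauses hold trivially (in particular $a_1$ and $a_2$ generate non-equimorphic subdrags, their root quantities differing, so clause (5) imposes nothing, and merging internal vertices is explicitly allowed for morphisms, cf.\ Example~\ref{ex:mo}). But the composite sends $a$, of indegree $0$, to $a_3$, of indegree $1$, so $\sum_{u\in I\,:\,\omicron'(\omicron(u))=a_3}\npred(u,D)=0\neq 1=\npred(a_3,D'')$. The residual fiber $\{w'\in I'\setminus\omicron(I)\,:\,\omicron'(w')=v''\}$ that you isolated is exactly what produces the failure, and no analysis of "used"/"absorbed" roots in the style of Example~\ref{e:monos} removes it without an extra hypothesis.

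For comparison, the paper's own proof disposes of clause (3) with the single phrase "preservation of indegrees does because addition is associative", which is precisely your telescoping over the fibers inside $\omicron(I)$ and is silent about the residual fiber; your write-up is the more honest of the two in that it at least names the missing step. Note that your plan does close for the four refined classes: if $\omicron'$ is injective on internal vertices, then the unique internal $\omicron'$-preimage of $v''=\omicron'(\omicron(u))$ is $\omicron(u)\in\omicron(I)$, the two sums you needed to reconcile coincide term by term, and (3) composes; monomorphisms, injections, isomorphisms, and equimorphisms are all injective on internal vertices (equimorphisms moreover preserve root quantity outright). So the lemma, and your argument, are sound for those cases, but for bare morphisms closure under composition needs either a repaired clause (3) or an added assumption (e.g., that no internal vertex of $D'$ outside $\omicron(I)$ of nonzero indegree is identified by $\omicron'$ with a vertex of $\omicron(I)$), which neither your proposal nor the paper supplies.
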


\begin{proof}
Consider $\omicron : D\to D'$ and $\omicron' : D'\to D"$, and let 
$\omicron"=\omicron'\circ\omicron : D\to D"$ be their composition.

Equimorphisms compose since bijections do, and preservation of variable labels and edges is transitive.

Morphisms compose because restrictions to internal vertices and isolated sprouts compose, preservation of indegrees does because addition is associative, 
edge maps compose because vertex maps do, and preservation of equimorphic subdrags is satisfied because equimorphisms compose.

Monomorphisms compose because injectivity on internal vertices and isolated sprouts is preserved by composition.

Being monomorphisms whose vertex map is the identity, injections compose.

Being monomorphisms whose vertex map is bijective, isomorphisms compose.
\end{proof}

Classically, graphs and their morphisms form a category but have no roots at their internal vertices nor variables at their leaves that can be redirected to the roots. 
Their presence in drags and the associated preservation properties that morphisms must satisfy make nontrivial two properties that are usually obtained naturally.

Drag morphisms are indeed defined on equivalence classes of drags modulo isomorphisms because, given any input drag, an isomorphism preserves its roots at all vertices and maps its equimorphic subdrags to equimorphic subdrags of the output drag. 

Monomorphisms on graphs are just injective morphisms between their sets of vertices and between their sets of edges. Here, monomorphisms are more complex maps, being injective on internal vertices only, as are substitutions on terms, which actually implies that $\omicron_X$ is injective on all edges thanks to the natural number component of an edge. These properties imply that our monomorphisms are indeed monomorphisms in the categorical sense, that is, are the left-cancellative morphisms:

\begin{lem}[Categoricity]
Let $D',D"$ be drags. A morphism $\kappa$ from $D'$ to $D"$ is a monomorphism iff for it is left cancellative, that is,
for all drags $D$ and morphisms $\omicron, \omicron':D\ra D'$ such that 
\renewcommand{\theequation}{*}
\begin{equation}
\kappa\circ \omicron=\kappa\circ\omicron'
\end{equation}
it is the case that $\omicron=\omicron'$.

\end{lem}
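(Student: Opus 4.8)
The plan is to prove both implications, relying throughout on the fact that a morphism is completely determined by its vertex map, so that two morphisms with the same source and target are equal exactly when they agree on every vertex. Thus each direction reduces to a vertex-by-vertex comparison.

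For the forward implication, I would assume $\kappa$ is a monomorphism and take morphisms $\omicron,\omicron':D\ra D'$ with $\kappa\circ\omicron=\kappa\circ\omicron'$, aiming to show $\omicron(u)=\omicron'(u)$ for every vertex $u$ of $D$ by a case split. If $u$ is internal, then $\omicron(u)$ and $\omicron'(u)$ are internal vertices of $D'$ (condition 1), and since $\kappa$ is injective on internal vertices, $\kappa(\omicron(u))=\kappa(\omicron'(u))$ forces $\omicron(u)=\omicron'(u)$. If $u$ is an isolated sprout, then both images are isolated sprouts (condition 2) and injectivity of $\kappa$ on isolated sprouts again gives equality. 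The remaining case, a non-isolated sprout $u$, is where the argument is least immediate: because sprouts have no successors, every incoming edge of $u$ issues from an internal vertex $w$, say $\edgen(w,i,u)\in X$. By the internal case we already have $\omicron(w)=\omicron'(w)$, and edge preservation (condition 4) places both $\edgen(\omicron(w),i,\omicron(u))$ and $\edgen(\omicron(w),i,\omicron'(u))$ among the edges of $D'$; since the $i$th successor of a vertex is unique (an edge being characterised by its tail and index), $\omicron(u)=\omicron'(u)$. Hence $\omicron=\omicron'$ and $\kappa$ is left cancellative.

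For the converse I would argue by contraposition: assuming $\kappa$ is not a monomorphism, I would exhibit a drag $D$ and morphisms $\omicron\neq\omicron'$ with $\kappa\circ\omicron=\kappa\circ\omicron'$. As morphisms keep internal vertices internal and isolated sprouts isolated, failure of injectivity means that $\kappa$ identifies either two distinct isolated sprouts $p\neq q$ or two distinct internal vertices $a\neq b$ of $D'$. The isolated-sprout case is easy: take $D$ to be a single isolated sprout carrying $\npred(p)=\npred(q)$ roots (the two indegrees agree, both equalling the indegree of the common image $\kappa(p)$, using that isolated sprouts map to isolated sprouts of equal indegree), and let $\omicron$ and $\omicron'$ send it to $p$ and to $q$ respectively; both are morphisms, they differ, and their composites with $\kappa$ coincide.

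The internal-vertex case is where I expect the real difficulty, and the obstruction is precisely the indegree-preservation condition (3): a test vertex mapped to $a$ must carry indegree $\npred(a)$, whereas one mapped to $b$ must carry $\npred(b)$, and these need not be equal, so no single one-vertex drag can target both. The plan is to exploit the lock-step structure forced by $\kappa$: following successors, $\kappa(a)=\kappa(b)$ propagates to $\kappa(a_i)=\kappa(b_i)$ on the respective $i$th successors, so $\kappa$ traverses $\subd{D'}{a}$ and $\subd{D'}{b}$ in parallel. I would build $D$ from this correspondence as a ``parallel copy'' whose vertices track pairs of $\kappa$-equivalent vertices, with root multiplicities tuned vertex-by-vertex so that the two projection-like maps $\omicron,\omicron'$ into $D'$ (one selecting the $a$-side, the other the $b$-side) each satisfy condition (3); they differ on the vertex lying over the pair $(a,b)$ yet agree after composing with $\kappa$. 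Arranging this construction to respect simultaneously edge preservation, the \emph{summed} indegree equality, and preservation of equimorphic subdrags (condition 5) is the technical heart of the argument, and the root bookkeeping needed to meet the indegree constraints on both sides at once is the step I expect to be most delicate.
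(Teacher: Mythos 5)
Your forward direction is correct and is essentially the paper's argument: injectivity of $\kappa$ on internal vertices and isolated sprouts gives $\omicron=\omicron'$ there, and agreement on non-isolated sprouts then follows because an edge $\edgen(w,i,u)$ has a unique $i$th successor, so the already-determined edge map forces the sprout images to coincide (the paper phrases this as $\omicron_X=\omicron'_X$ implying $\omicron_V=\omicron'_V$ on non-isolated sprouts). One small slip there: the morphism definition does \emph{not} require isolated sprouts to map to isolated sprouts of equal indegree---condition (3) only constrains vertices in $\omicron(I)$ for $I$ the internal vertices---but this only makes your isolated-sprout witness easier to build, so it is harmless.

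The genuine gap is the converse for internal vertices, which you explicitly leave as a plan, and the plan as described would fail. Since condition (3) sums indegrees over the whole internal preimage fiber, a test vertex $u$ with $\omicron(u)=a$ and $\omicron'(u)=b$ forces $\npred(u,D)=\npred(a,D')$ and $\npred(u,D)=\npred(b,D')$ whenever $u$ is alone in its fiber, and no amount of ``tuning root multiplicities'' on a parallel copy gives one vertex two different indegrees; one must instead repartition fibers, e.g.\ with padding vertices whose indegree contribution can migrate between the $a$-fiber and the $b$-fiber. But that repartitioning collides with condition (5): take $D'$ with nullary $a^{[1]}$ and $b^{[2]}$, both labeled $f$, and $\kappa$ sending both to $c^{[3]}$ (a legitimate morphism, since $1+2=3$); the natural witness with three unit-indegree copies of $f$ satisfies (3) for two distinct fiber partitions, yet its one-vertex subdrags are pairwise equimorphic while $\subd{D'}{a}$ and $\subd{D'}{b}$ are not (root quantities differ), so equimorphic-subdrag preservation is violated. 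A working witness here needs different bookkeeping (say, vertices of indegrees $1$, $2$, $0$, with the indegree-$0$ vertex switching fibers), and for positive arities the switching vertex's successors propagate the disagreement further---so the ``technical heart'' you flag is real and remains unproven in your proposal. The paper sidesteps this construction entirely: its converse is abstract, observing that (*) restricts to $\kappa_{I'}\circ\omicron_I=\kappa_{I'}\circ\omicron'_I$ and that left-cancellativity for set maps is injectivity, hence left-cancellative $\kappa$ has injective $\kappa_{I'}$ and is a monomorphism. (That argument quietly presupposes that set-level disagreements can be realized by drag morphisms, which is exactly where you got stuck; but measured against the paper's proof, your route diverges at that point and does not reach a complete argument.)
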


\begin{proof}
Claim: Let $I$ and $I'$ be the sets of internal vertices and isolated sprouts of $D$ and $D'$, respectively. Property (*) implies $\kappa_{I'}\circ \omicron_I=\kappa_{I'}\circ\omicron'_I$. Since categoricity reduces to injectivity for functions on sets, it follows that $\omicron_I=\omicron'_I$ iff $\kappa_{I'}$ is injective. 

Assume now that $\kappa$ is a monomorphism, hence that $\kappa_{I'}$ is injective and $\omicron_I=\omicron'_I$ by the above Claim. Therefore, $\omicron_X=\omicron'_X$ by definition of an edge map, which implies in turn  that $\omicron_V=\omicron'_V$ for non-isolated sprouts. Altogether, we get $\omicron=\omicron'$.

Conversely, assume that $\omicron=\omicron'$ for all morphisms $\kappa$ satisfying (*). By the same token as previously, the restriction $\kappa_{I'}$ of the morphism $\kappa$ is therefore injective, hence $\kappa$ is a monomorphism.
\end{proof}

The proof shows that injectivity of monomorphisms for internal vertices and isolated sprouts ensures categoricity. The definition of morphisms for sprouts is therefore unimportant, provided transitivity is satisfied. The precise definition we gave has therefore a single objective: to ensure that matching behaves as desired. We will show in Section \ref{s:matching} that this is indeed the case.

We can now conclude:

\begin{thm}
\label{t:cat}
Drags equipped with their morphisms, monomorphisms, and isomorphisms form a category, of which the category of terms is a particular case.
\end{thm}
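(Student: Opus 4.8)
The plan is to reduce the categorical axioms to results already in hand and then treat the embedding of terms as a separate, more delicate step.

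First I would assemble the category itself. The objects are drags, taken up to the isomorphism identification already justified in the paragraphs preceding the statement (so that the slogan ``term equality is drag isomorphism, not identity'' is respected), and the arrows are the morphisms of Definition~\ref{d:pmo}. Closure of arrows under composition is precisely Lemma~\ref{l:closureprop}. Associativity comes for free: a morphism is determined by its vertex map, composition of morphisms is composition of the underlying set maps, and composition of set maps is associative; the edge map is then forced by $\omicron_X(\edgen(u,i,v))=\edgen(\omicron(u),i,\omicron(v))$ and composes accordingly. For units I would verify that, for every drag $D$, the identity vertex map $\mathrm{id}_V$ satisfies all five clauses of Definition~\ref{d:pmo}: it restricts to the identity on internal vertices and isolated sprouts (hence preserves labels), preserves indegrees trivially (each fibre is a singleton), sends every edge to itself, and fixes every equimorphic subdrag. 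Thus $\mathrm{id}_V$ is in fact an injection in the sense of Definition~\ref{d:mmo}, and it is a two-sided unit for composition. This already yields a category.

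Second, I would record that the distinguished classes sit inside this category as expected. Monomorphisms and isomorphisms contain all identities and are closed under composition by Lemma~\ref{l:closureprop}, so they form the wide subcategories one wants; moreover, by the Categoricity lemma the drag monomorphisms are exactly the categorical (left-cancellative) arrows, and the isomorphisms are exactly the invertible ones, since an isomorphism has bijective vertex map whose inverse is again a morphism (it preserves labels on internal vertices and, being bijective, preserves indegrees and equimorphic subdrags). This reconciles the combinatorial definitions with their categorical meaning and so legitimizes the phrasing of the theorem.

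Third, for the ``particular case'' claim I would exhibit terms as a full subcategory. A term over $(\cF,\cX)$ is encoded by a drag $D_t$ whose underlying graph is rooted and tree-shaped: a single root at its top vertex, every other vertex with exactly one incoming edge, internal vertices labeled by function symbols whose arity matches their out-degree, and leaves that are either nullary symbols or sprouts labeled by variables. Since term equality is drag isomorphism, the objects of the category of terms are exactly the isomorphism classes of such drags. The heart of the argument is the bijection on arrows: I would show that a drag morphism $\omicron:D_s\to D_t$ between two such term-drags is precisely a term homomorphism (a substitution of the variables of $s$ by subdrags of $t$, composed with the rooting that places the result inside $t$), and conversely that every such term map induces a unique drag morphism. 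Because identities and composition on both sides are computed by the very same vertex-map composition, this correspondence is an isomorphism onto the full subcategory spanned by term-drags.

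The main obstacle I expect is exactly this last correspondence, and within it the treatment of repeated variables. For a nonlinear term such as $f(x,x)$ the two $x$-sprouts cannot be mapped independently, and it is here that clause~(5) of Definition~\ref{d:pmo} (preservation of equimorphic subdrags) together with the indegree-preservation clause do the real work: I must check that these drag-specific conditions specialize, on term-drags, to neither more nor less than what a term substitution demands, so that the correspondence is genuinely bijective rather than injective in only one direction. A secondary, purely bookkeeping obstacle is keeping the object identification straight throughout, since the embedding must send term equality to drag isomorphism and hence be defined on isomorphism classes, as stressed just before the statement.
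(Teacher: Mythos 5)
Your proposal is correct, and for what the paper actually argues it follows the same skeleton: the paper's ``proof'' of Theorem~\ref{t:cat} is essentially just the sentence ``We can now conclude,'' resting on Lemma~\ref{l:closureprop} (closure of morphisms, monos, injections, and isos under composition), the Categoricity lemma (monos are exactly the left-cancellative arrows), and the preceding remark that morphisms are well defined on isomorphism classes of drags; identities and associativity are left tacit, exactly as you dispatch them via the observation that a morphism is determined by its vertex map. Where you genuinely diverge is in scope: the paper treats ``the category of terms is a particular case'' as an observation (terms are singly rooted tree-shaped drags, with term equality being drag isomorphism) and offers no argument, whereas you attempt a full-subcategory embedding with a bijection between drag morphisms of term-drags and term substitutions. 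Your flagged obstacle there is real but resolvable, and it is worth recording why: in a tree-shaped drag every vertex has indegree exactly $1$ (the root counting at the top), so clause~(3) of Definition~\ref{d:pmo} forbids collapsing internal vertices and forces injectivity on them, while clause~(5) sends repeated-variable sprouts to \emph{equimorphic} (not identical) subdrags---which matches term substitution precisely because terms are identified up to isomorphism, as you note. So your argument buys an actual proof of the ``particular case'' clause that the paper merely asserts, at the cost of one step you honestly leave as a plan rather than a verification; nothing in it would fail, but to be complete you would need to carry out the indegree bookkeeping just sketched. One small caution: your parenthetical claim that the inverse of an isomorphism is automatically a morphism reproduces the paper's own assertion (bijectivity plus preservation of equimorphic subdrags), so it is not a gap relative to the paper, but it is the one point where neither you nor the paper checks that a bijective morphism cannot send a sprout onto an internal vertex, and a careful write-up should address it.
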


\subsection{Notations}  We end this section by introducing notations for the various kinds of morphisms and related equalities that are relevant for drags. 

We write $\omicron : D\to D'$ for a morphism $\omicron$; $\omicron: D\hookrightarrow D'$ if $\omicron$ is a monomorphism;
$D\simeq_{\omicron} D'$ when $\omicron$ is an isomorphism; $D\equiv_\omicron D'$ if $\omicron$ is an equimorphism; and $D=D'$ when $\omicron$ is identity. 
The subscript $\omicron$ will often be omitted. 
Monomorphisms may be abbreviated "monos". 

We may need more precise notations $D \simeq_\omicron^\sigma D'$ in case of an isomorphism $\omicron$, where $\sigma$ is a bijection between the variables of corresponding sprouts. $D'$ is sometimes called a \emph{renaming} of $D$. The renaming \emph{preserves variables} if $\sigma$ is the identity, in which case $D$ and $D'$ are equimorphic, implying that $D\simeq_\omicron^{id} D'$ and $D\equiv_\omicron D'$ are indeed the same. The notation $D=D'$ (not to be confused with definitional equality) is reserved for the case where $D$ and $D'$ are the very same drag. 

Two drags $D$ and $D'$ are \emph{disjoint} if they share no vertices nor variables. \emph{Renaming apart} two drags $D$ and $D'$ amounts to renaming bijectively the shared vertices and variables of $D$ so that the result $D"$ is isomorphic to $D$ while $D'$ and $D"$ are disjoint. 


\section{Drag Operations}
\label{ss:ops}
There are two main operations on drags, \emph{sum} and \emph{product}, that are reminiscent of similar operations used in the literature, although in restricted settings compared to here. 
The third operation, \emph{wiring}, is in some sense more fundamental than product, which amounts to a particular case of wiring a sum. Sum and product operate on \emph{compatible} drags.

\subsection{Compatibility}
\label{ss:cd}

Operations on drags that are not disjoint requires an assumption:

\begin{defi}[Compatible drags]
\label{d:compatible}
Two drags $D=\langle V, R, L, X, S\rangle$ and $D'=\langle V', R', L', X', S'\rangle$ are \emph{compatible} if 
\begin{itemize}
\item $V\cap V'$ is closed under $X$ and $X'$;
\item $L$ and $L'$ coincide on $V\cap V'$;
\item $D$ and $D'$ have the same indegree at each shared vertex $v$, at least equal to the total number of shared and non-shared edges heading at $v$.
\end{itemize}
\end{defi}

\noindent
In words, compatible drags that share a vertex must also share the whole subdrag generated by that vertex. 
Also, they must have enough roots at their shared vertices so as to have the same indegree after replacing the roots of each graph at a shared vertex by the incoming non-shared edges of the other graph at that vertex.

\begin{exa}
\label{ex:comp}
The following  two graphs $D,D'$ are compatible:
\begin{align*}
D=&(\{v_1, v_2, v_3\}, (R(v_1)=R(v_3)=0, R(v_2)=1),(L(v_1)=f, L(v_2)=g, L(v_3)=a),\\
&(X(v_1)=v_2, X(v_2)=v_3, X(v_3)=v_2), \varnothing),\\
D'=&(\{v'_1,v_2,v_3\}, (R(v'_1)=R(v_3)=0, R(v'_2)=1),(L(v'_1)=h, L(v_2)=g, L(v_3)=a),\\
&(X(v'_1)=v_2, X(v_2)=v_3, X(v_3)=v_2), \varnothing).
\end{align*}
On the other hand, the  following graphs are not compatible (closure condition violated):
\begin{align*}
D=&(\{v_1, v_2, v_3\}, (R(v_1)=R(v_3)=0, R(v_2)=1),(L(v_1)=f, L(v_2)=g, L(v_3)=a), \\
& (X(v_1)=v_2, X(v_2)=v_3, X(v_3)=v_1), \varnothing),\\
D'=&(\{v'_1,v_2,v_3\}, (R(v'_1)=R(v_3)=0, R(v_2)=1),(L(v'_1)=h, L(v_2)=g, L(v_3)=a),\\
&(X(v'_1)=v_2, X(v_2)=v_3, X(v_3)=v_2), \varnothing).
\end{align*}
The  following two graphs are also incompatible (lack of roots at $v_2$):
\begin{align*}
D=&(\{v_1, v_2, v_3\}, (R(v_1)=R(v_2)=R(v_3)=0),(L(v_1)=f, L(v_2)=g, L(v_3)=a), \\
&(X(v_1)=v_2, X(v_2)=v_3, X(v_3)=v_1), \varnothing),\\
D'=&(\{v'_1,v_2,v_3\}, (R(v'_1)=R(v_2)=R(v_3)=0),(L(v'_1)=h, L(v_2)=g, L(v_3)=a),\\
&(X(v'_1)=v_2, X(v_2)=v_3, X(v_3)=v_2), \varnothing).
\end{align*}
\end{exa}

One difficulty is that compatibility is not preserved by equimorphisms: $D=f(a_1)$ is compatible with itself, and $D$ is equimorphic with $D'=f(a_2)$, but $D$ and $D'$ are incompatible since $f$ is shared in $D$ and $D'$ but its successors $a_1$ in $D$ and $a_2$ in $D'$ are not.

\subsection{Sum}
\label{ss:sum}

Our first operation on compatible drags is both very simple and familiar when both drags are disjoint: 
it consists of placing two drags side by side to form a new drag. 
\begin{defi}[Sum]
\label{d:union}
The \emph{parallel composition} or \emph{sum} $D\oplus D'$ of two
compatible drags $D=\langle V, R, L, X, S\rangle$ and $D'=\langle V', R', L', X', S'\rangle$ is the drag $\langle V\cup V', R", L\cup L', X\cup X', S\cup S' \rangle$, where, 
$\forall x\in V\cup V'\st R"(v)=
 R(v)- |\{\edgen(u,i,v)\in X'\setminus X\}|= R'(v)- |\{\edgen(u,i,v)\in X\setminus X'\}|$.
\end{defi}

Note that the equality statement when defining the number of roots at all vertices in the union of two drags follows from the definition of compatibility. 
Note also that $R"(v)=R(v)$ if $v\in V\setminus V'$ and $R"(v)=R'(v)$ if $v\in V'\setminus V$, implying that the union of disjoint drags is their juxtaposition.

The following straightforward properties of parallel composition are important:

\begin{lem}
\label{l:ism}
Given two compatible drags $D=\langle V, R, L, X, S\rangle$ and $ D'=\langle V', R', L', X', S'\rangle$,
\begin{enumerate}
    \item $D\oplus D'$ preserves indegrees of $D$ and $D'$ at all vertices;
    \item the injections from $V$ and $V'$ to $V\cup V'$ are monomorphisms from $D$ and $D'$ to $D\oplus D'$.
\end{enumerate}
\end{lem}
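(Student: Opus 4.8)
The plan is to prove~(1) first, because indegree preservation is exactly what morphism condition~(3) of Definition~\ref{d:pmo} requires, and it gets reused when verifying~(2). Both parts reduce to a local, vertex-by-vertex computation; the only global inputs are the three clauses of compatibility (Definition~\ref{d:compatible}) together with the defining equation for the roots $R''$ of $D\oplus D'$.

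For~(1), I would fix a vertex $v\in V\cup V'$ and split the edges of $X\cup X'$ whose head is $v$ into three disjoint groups: the shared ones in $X\cap X'$, the $D$-exclusive ones in $X\setminus X'$, and the $D'$-exclusive ones in $X'\setminus X$. By the definition of the sum, $\pred(v,D\oplus D')$ is the total size of the three groups, whereas $\pred(v,D)$ counts only the first two. Substituting $R''(v)=R(v)-|\{\edgen(u,i,v)\in X'\setminus X\}|$ into $\npred(v,D\oplus D')=\pred(v,D\oplus D')+R''(v)$, the $D'$-exclusive contribution cancels and one is left with $\pred(v,D)+R(v)=\npred(v,D)$. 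Repeating the computation with the symmetric expression $R''(v)=R'(v)-|\{\edgen(u,i,v)\in X\setminus X'\}|$ gives $\npred(v,D\oplus D')=\npred(v,D')$; the two results coincide precisely because compatibility forces equal indegrees at shared vertices (and for $v$ in only one of $V,V'$ one formula degenerates to $R''(v)=R(v)$ or $R''(v)=R'(v)$, as observed after Definition~\ref{d:union}). The compatibility requirement that this common indegree be at least the total number of edges of $X\cup X'$ heading at $v$ is what ensures $R''(v)\ge 0$, so that $D\oplus D'$ is a bona fide drag.

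For~(2) I would treat the inclusion $\iota\colon V\hookrightarrow V\cup V'$; since $\iota$ is the identity on its domain it is automatically injective, so the injectivity requirement of a monomorphism is free and it remains only to check that $\iota$ is a morphism. Condition~(4) is immediate from $X\subseteq X\cup X'$; condition~(3) is the restriction of part~(1) to internal vertices; and condition~(1) holds because $L$ and $L'$ agree on $V\cap V'$ while $\cF$ and $\cX$ are disjoint, so an internal vertex of $D$ cannot be a sprout of $D'$ and hence stays internal with the same label in $L\cup L'$. Condition~(5) is where part~(1) pays off a second time: for any $W\subseteq V$, closure of $V\cap V'$ under $X,X'$ keeps the vertex set $X^*(W)$ unchanged in the sum, while indegree preservation forces the roots induced on $\subd{D}{W}$ by restriction to be unchanged as well (the induced root count at a vertex equals its indegree minus its incoming edges from within $X^*(W)$, and both quantities are stable); hence $\subd{D}{W}$ is the very same drag whether computed in $D$ or in $D\oplus D'$, so equimorphic subdrags of $D$ are carried to equimorphic subdrags of the sum. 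The inclusion of $V'$ is handled symmetrically.

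The delicate step, and the one on which I would concentrate, is condition~(2): every isolated sprout of $D$ must map to an isolated sprout of $D\oplus D'$. A sprout has no successors, so the only way it could lose isolation in the sum is by acquiring an incoming edge supplied by $D'$ at a shared vertex, and the argument has to track---through the same multiset arithmetic as in~(1)---how the roots of $D$ at such a sprout are consumed by, and balanced against, the edges coming from $D'$. Making this accounting watertight, and checking that it is simultaneously consistent with the $D$-view and the $D'$-view of each shared sprout, is the real content of the proof; the remaining clauses are routine once part~(1) is in hand.
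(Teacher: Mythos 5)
Your part (1) and the routine clauses of part (2) are correct and amount to a careful expansion of the paper's own (two-sentence) proof, which merely asserts that indegree preservation follows from the definitions of compatibility and sum, and that the identity injections are injective morphisms protecting indegrees and equimorphic subdrags. Your three-way split of the edges heading at $v$ together with the two defining expressions for $R''(v)$ is exactly the intended computation, and your treatment of the equimorphic-subdrag clause (the vertex set $X^*(W)$ is unchanged by closure of $V\cap V'$ under both successor functions, and the induced root count on the restriction is stable because it equals indegree minus internal incoming edges, both preserved) is sound.

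The genuine gap is precisely the clause you single out and then defer: that isolated sprouts of $D$ remain isolated in $D\oplus D'$. You propose to close it by ``the same multiset arithmetic as in (1)'', but no accounting can close it, because compatibility (Definition~\ref{d:compatible}) simply does not imply the claim. Take $D$ to be the single rooted sprout $s:x$ with $R(s)=1$ and no edges, and $D'$ to consist of an internal vertex $u:f$ ($f$ unary) together with the shared sprout $s:x$, the edge $\edgen(u,1,s)$, and $R'(s)=0$. These drags are compatible: $\{s\}$ is closed under both successor functions, the labels agree, and the indegree of $s$ is $1$ in both drags, equal to the one edge heading at $s$. Yet in $D\oplus D'$ one computes $R''(s)=R(s)-1=0$ and the edge $\edgen(u,1,s)$ survives, so $s$ has a predecessor and is no longer isolated; the injection of $D$ then violates condition (2) of Definition~\ref{d:pmo} read literally. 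So the step you describe as ``the real content of the proof'' is not bookkeeping that remains to be tidied --- as stated it can actually fail, and the paper's own proof silently skips this clause (its ``being identities\dots'' sentence mentions only indegrees and equimorphic subdrags). A watertight version needs either a side condition ruling out shared rooted isolated sprouts, or a weakening of the isolated-sprout clause of Definition~\ref{d:pmo}; your proposal, like the paper, leaves this unresolved, but unlike the paper it claims the resolution is routine.
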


\begin{proof}
Preservation of indegrees at all vertices follows from the definitions of compatibility and sum. 
Being identities, these injections are  injective morphisms that protect equimorphic subdrags and indegrees, hence are monomorphisms.
\end{proof}


Parallel composition allows to define the intersection of two compatible drags:

\begin{defi}[Intersection]
\label{d:intersection}
The \emph{intersection} of two compatible drags $D$ and $D'$ with vertices $V$ and $V'$ respectively is the subdrag of $D\oplus D'$, denoted $D\cap D'$, generated by $V\cap V'$.
\end{defi}

Once more, we remark that indegrees are preserved at all vertices of the intersection.

\subsection{Wiring}\label{sss:w}

The purpose of wiring a drag $D$ is to add new edges to a drag by \emph{connecting} sprouts to roots. Informally, a set of wires will be a set of pairs made out of a sprout and a root, written as  $\wir{s}{r}$. 
Wiring $D$ will be the action of redirecting all edges $\edge(u,s)$ in $D$, including the roots of $s$, so that they become edges $\edge(u,r)$ or roots of $r$ in a new drag $D'$. Wiring may use a succession of wires like $\wir{s}{t}$ and $\wir{t}{r}$ that generate chains of wirings.

\begin{defi}[Wire, origin, target]
\label{d:wire}
Given a drag $D=\langle V, R, L, X, S\rangle$, a \emph{wire} is a pair $\wir{s}{r}$ of vertices of $D$, whose \emph{origin} $s$ is a sprout and \emph{target} $r$ a vertex different from $s$.
\end{defi}

\begin{defi}[Wiring chain]
Given a set of wires $W$ of a drag $D=\langle V, R, L, X, S\rangle$, we define $s >_W r$ for $s\in S$ and $r\in R$,
 if $\wir{s}{r}\in W$ or if
there is a vertex $t$ such that $\wir{s}{t}\in W$ and $t>_W r$.
\end{defi}
\noindent
In a wiring chain $s_0\leadsto s_1\leadsto\cdots\leadsto s_n\leadsto r$, all but possibly the final element $r$ are sprouts, and the final element $r$ is a root---and possibly also a sprout.
 
\begin{defi}[Well-behaved set of wires]
A finite set $W$ of wires of $D$ is \emph{well-behaved} if:
\begin{enumerate}
    \item functionality:     $\forall \wir{s}{r},\wir{s}{r'}\in W \st r=r'$; 
    \item injectivity:     $\forall r\in R\st\Sigma_{s >_W r}\, \pred(s)\leq R(r)$;
    \item well-foundedness: $W$ does not induce a cycle among the sprouts of $D$, that is, the restriction of $>_W$ to $S\times S$ is acyclic.
\end{enumerate}
The domain $\Dom{W}$ of $W$ is the set of sprouts that are origins of a wire.
\end{defi}

Condition (1) implies that $W$ is a partial function from $S$ to $V$. 
We will therefore be able to consider the \emph{restriction} of that function to a subset of its domain. 
If $V'\subseteq V$, we will say that $W$ \emph{restricts} to $V'$ if  $\forall\wir{s}{r}\in W \st  r\in V' $ if $ s\in V'$.

Condition (2) means that vertex $r$ is a root with a multiplicity  large enough so that rewiring edges from $s_1,\ldots s_n$ to $r$ does not require more roots of $r$ than are available, which is yet another manifestation of multi-injectivity. 

Condition (3) allows us to compare sprouts. 
Well-foundedness of this order aims at defining wiring by induction.

It also implies the following:
\begin{prop}
The relation $\geq_W$ is a partial order  for any well-behaved set of wires $W$.
\end{prop}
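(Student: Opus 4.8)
The plan is to verify the three defining properties of a partial order for $\geq_W$, read as the reflexive closure of the relation $>_W$ from the preceding definition. With that reading, reflexivity holds by construction, so the real work is transitivity and antisymmetry, and it suffices to establish the corresponding facts for $>_W$ itself. For transitivity I would show directly that $>_W$ is transitive and then inherit it for the reflexive closure. The recursive clause defining $>_W$ is exactly a presentation of the transitive closure of the one-step wire relation, so the argument is a routine induction: assuming $s >_W t$ and $t >_W r$, induct on the derivation of $s >_W t$. In the base case $\wir{s}{t}\in W$, and the recursive clause applied to $\wir{s}{t}\in W$ together with $t >_W r$ gives $s >_W r$; in the inductive case there is $t'$ with $\wir{s}{t'}\in W$ and $t' >_W t$, the induction hypothesis yields $t' >_W r$, and one further application of the recursive clause gives $s >_W r$.

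For antisymmetry, suppose $s \geq_W r$ and $r \geq_W s$ with $s\neq r$; then $s >_W r$ and $r >_W s$, and transitivity produces $s >_W s$. I would then contradict well-behavedness. The crucial observation is that, by the definition of a wire, the origin of every wire is a sprout; hence in any chain $s\leadsto\cdots\leadsto s$ witnessing $s >_W s$, each element is the origin of the next wire and is therefore a sprout. Consequently the whole cycle lies in $S$, so the restriction of $>_W$ to $S\times S$ is cyclic, contradicting condition (3) of well-behavedness. This shows $>_W$ is irreflexive (equivalently, since $>_W$ is transitive, acyclic), whence $\geq_W$ is antisymmetric, completing the argument.

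The only delicate point, and the step I expect to be the main obstacle, is precisely this last reduction: condition (3) forbids cycles only within $S\times S$, whereas antisymmetry a priori concerns arbitrary vertices, and the targets of wires (roots) need not be sprouts. The resolution is the remark that the origin of each wire is always a sprout, so any nontrivial $>_W$-cycle automatically consists entirely of sprouts and therefore falls under condition (3). I would be careful to state this observation explicitly, since it is what licenses passing from the restricted acyclicity hypothesis to the full antisymmetry claim; everything else is a straightforward induction on the length of wiring chains.
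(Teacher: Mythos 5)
Your proof is correct, and it fleshes out exactly the argument the paper intends: the paper states this proposition without any explicit proof, presenting it as an immediate consequence of condition (3) (well-foundedness), so the routine details are left to the reader. Your two supplied steps---transitivity by induction on the recursive definition of $>_W$, and antisymmetry via the observation that every element of a $>_W$-cycle is the origin of a wire and hence a sprout, so any cycle would violate acyclicity of the restriction of $>_W$ to $S\times S$---are precisely the right ones, and your explicit flagging of the sprout-origin observation correctly identifies the only nontrivial point in passing from the restricted acyclicity hypothesis to antisymmetry on arbitrary vertices.
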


An empty set of wires is trivially well-behaved.  
A wire $\wir{s}{t}$, considered as a singleton set, is well-behaved iff it satisfies injectivity, that is, if the indegree of $s$ is no larger than the number of roots of $r$.

We now define the drag obtained by adding wires to an existing drag, starting with the case of a single wire $\wir{s}{r}$. The idea is that  sprout $s$ is removed, all edges ending up in $s$ (but not its roots) are moved to $r$ using its roots in the same number:

\begin{defi}[Elementary wiring]
\label{d:gwiring}
Given a drag $D=\langle V, R, L, X, S\rangle$ and a wire $\wir{s}{r}$, we define the drag $\gwiring{D}{\wir{s}{r}}$, after \emph{wiring}, as the drag $\langle V', R', L', X', S'\rangle$ such that:
\begin{enumerate}
\item
$V'=V\setminus s$;
\item
$R'= R \setminus (R(s) \cup R(r)) \cup r^{R(r)-\pred(s)}$;
\item $L'=L\upharpoonright V'$, the restriction of labels $L$ to  vertices in $V'$;
\item
$X'= X \setminus \{\edge(v, s)\st vXs\} \cup \{\edge(v,r): vXs\}$;
\item
$S'= S\setminus s$.
\end{enumerate}
\end{defi}

The condition that the origin of a wire is distinct from its target ensures that the sprout origin of the wire has disappeared from the resulting drag. 
The calculation of the new multiset of roots in item (2) expresses the fact that each edge redirected from the origin to the target of the wire consumes a root of the target.
In contrast with \cite{DBLP:journals/tcs/DershowitzJ19}, sprouts at the origin of a wire disappear with their roots, which are therefore lost if there were any.
In the particular case where $s:x$ and $r:y$ are both rootless isolated sprouts, wiring allows one to rename the sprout labeled $x$ by the sprout labeled $y$.

\begin{lem}
\label{l:pres}
Wiring preserves indegrees at all remaining vertices.
\end{lem}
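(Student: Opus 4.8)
The plan is to prove the lemma by a direct case analysis over the remaining vertices $V'=V\setminus s$ of $D'=\gwiring{D}{\wir{s}{r}}$, decomposing the indegree as $\npred(v,D)=\pred(v,D)+R(v)$ and tracking separately how the elementary wiring of Definition~\ref{d:gwiring} affects the incoming-edge count $\pred$ and the root multiplicity at each such $v$. The key preliminary observation is that, since $s$ is a sprout, it has no outgoing edges (item~(5) of Definition~\ref{d:drag}); hence the only edges altered by item~(4) are the incoming edges of $s$, each redirected to $r$, and the only roots altered by item~(2) are those at $s$ and at $r$. Note also that the roots of $s$ are discarded rather than redirected, so exactly $\pred(s)$ edges move.

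For a vertex $v\in V'$ with $v\neq r$, neither step touches $v$: no edge with head $v$ is changed and no root at $v$ is changed, so $\pred(v,D')=\pred(v,D)$ and $R'(v)=R(v)$, giving $\npred(v,D')=\npred(v,D)$ at once. The only substantive case is $v=r$. Here I would argue that redirection adds exactly $\pred(s)$ incoming edges to $r$: each incoming edge $\edgen(u,i,s)$ becomes $\edgen(u,i,r)$, and because the $i$-th successor of $u$ is unique, no two redirected edges collide and none coincides with a pre-existing edge into $r$ (as $r\neq s$). Thus $\pred(r,D')=\pred(r,D)+\pred(s)$, while item~(2) sets $R'(r)=R(r)-\pred(s)$. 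Summing, the $\pm\pred(s)$ contributions cancel, so $\npred(r,D')=(\pred(r,D)+\pred(s))+(R(r)-\pred(s))=\pred(r,D)+R(r)=\npred(r,D)$.

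The single technical point I expect to be the only real obstacle is checking that $R'$ is a genuine multiset, i.e.\@ that $R(r)-\pred(s)\geq 0$. This is exactly the injectivity clause of well-behavedness, which for a singleton wire $\wir{s}{r}$ requires $\pred(s)\leq R(r)$; I would invoke it to guarantee that the redirection never demands more roots of $r$ than are available. With that in hand, the remaining argument is routine multiset bookkeeping and no further subtlety arises.
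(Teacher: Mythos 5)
Your proof is correct and matches the paper's (implicit) reasoning: the paper states Lemma~\ref{l:pres} without proof, treating it as immediate from Definition~\ref{d:gwiring}, where the root count $R(r)-\pred(s)$ is chosen precisely so that each redirected edge consumes one root of $r$. Your case analysis at $v=r$ versus $v\neq r$, together with the observation that $\pred(s)\leq R(r)$ follows from the injectivity clause for the singleton wire, simply spells out that bookkeeping explicitly.
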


To define wiring for an arbitrary set of well-behaved wires, we write a non-empty well-behaved set of wires $W$ as the union $\wir{s}{r}\cup W'$, where sprout $s$ is maximal in $>_W$. 
Well-foundedness of $>_W$ allows us to wire $\wir{s}{r}$ first, and then recur on $W'$, which can be easily shown well-behaved:

\begin{lem}
\label{l:wrec}
Let $W=\wir{s}{r}\cup W'$ be a well-behaved set of wires of $D$ such that $s$ is maximal in $>_W$. 
Then, $W'$ is a well-behaved set of wires of $D'=\gwiring{D}{\wir{s}{r}}$.
\end{lem}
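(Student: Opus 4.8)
The plan is to verify the three defining conditions of a well-behaved set of wires for $W'$ on $D'$, after first checking that $W'$ is a legitimate set of wires of $D'$ at all. The latter is where maximality of $s$ enters: since $s$ is maximal in $>_W$, no wire of $W$ has $s$ as its target (such a wire would put its origin strictly above $s$), so every target occurring in $W'$ is a vertex $\neq s$ and hence survives in $V'=V\setminus s$; and by functionality the only wire of $W$ with origin $s$ is $\wir{s}{r}$ itself, so every origin of $W'$ is a sprout other than $s$, hence lies in $S'=S\setminus s$. Thus $W'$ is a set of wires of $D'$.

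Functionality of $W'$ is immediate from $W'\subseteq W$. For well-foundedness I will observe that the wiring order is monotone in the wire set, so $>_{W'}\,\subseteq\,>_W$; restricting to $S'\times S'\subseteq S\times S$, acyclicity of $>_W$ descends to $W'$. In fact maximality of $s$ gives more: no $>_W$-chain issuing from a sprout other than $s$ can pass through $s$ (as an intermediate, nor end at $s$ as its final root), so $>_{W'}$ and $>_W$ agree on all chains starting in $S'$. This agreement is exactly what is needed to control the injectivity sums.

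The substance of the proof is injectivity, and this is the step I expect to be the main obstacle. The bookkeeping rests on the effect of the elementary wiring recorded in Definition~\ref{d:gwiring}: writing $P=\pred(s)$, wiring $\wir{s}{r}$ redirects the $P$ edges entering $s$ to $r$, so that $\pred(r)$ increases by $P$ in $D'$ while $R(r)$ drops to $R(r)-P$, all other predecessor counts and root multiplicities being unchanged (Lemma~\ref{l:pres}). Fixing a root $v$ of $D'$, I will compare $\Sigma_{s'>_{W'}v}\pred_{D'}(s')$ with $\Sigma_{s''>_W v}\pred(s'')$ using the chain agreement above: the $W$-sum contains exactly one extra contribution, the term $\pred(s)=P$ coming from $s$, present precisely when $s>_W v$ (equivalently $v=r$ or $r>_W v$); and the $W'$-sum differs from the remaining $W$-terms only through the increased value $\pred_{D'}(r)=\pred(r)+P$, which is felt precisely when $r$ is itself a rooted sprout with $r>_W v$. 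A case split then closes the argument: for $v=r$ the lost $P$ is matched by the drop from $R(r)$ to $R(r)-P$, while the extra term does not occur because $r>_W r$ is barred by acyclicity; for $v\neq r$ with $r$ internal one has $s\not>_W v$ since $r\notin\spr{D}$, so nothing changes; and for $v\neq r$ with $r$ a sprout the missing $P$ from $s$ and the extra $P$ routed through $r$ cancel exactly. In every case $\Sigma_{s'>_{W'}v}\pred_{D'}(s')\leq R'(v)$ follows from injectivity of $W$, which is the required conclusion.

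The delicate point—and the reason I single out injectivity as the obstacle—is the third case, where the target $r$ of the eliminated wire is a rooted sprout that continues to feed downstream roots through $W'$: here one must see that the edges redirected into $r$ are not lost but conserved as flow toward those roots, so that the decrement of the root multiset at $r$ and the increment of its predecessor count balance perfectly. Tracking this conservation through the inductive definition of $>_W$ is the only genuinely non-routine part; the remaining verifications are bookkeeping.
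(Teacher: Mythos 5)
Your proof is correct and follows essentially the same route as the paper's: maximality of $s$ guarantees that $W'$ is a set of wires of $D'$, functionality and well-foundedness are inherited by restriction, and injectivity is settled by the predecessor-count bookkeeping, which the paper compresses into the single identity $\Sigma_{t>_W r}\,\pred(t)=\Sigma_{t>_{W'} r}\,\pred(t) + \pred(s)$ together with the root decrement $R'(r)=R(r)-\pred(s)$. Your three-way case split (on $v=r$, on $r$ internal, and on $r$ a sprout feeding downstream roots) simply spells out in full the conservation argument the paper leaves implicit, and correctly handles the one genuinely delicate case where $\pred_{D'}(r)$ grows by $\pred(s)$ while $r$ still occurs in $W'$-chains.
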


\begin{proof}
By maximality assumption, $s$ does not occur in $W'$ which is therefore a set of wires of $D'$. 
Functionality and membership follow straightforwardly from well-behavedness of $W$, as well as well-foundedness, since it restricts to subsets. 
Injectivity holds at all vertices since $\Sigma_{t>_W r} \pred(t)=\Sigma_{t>_{W'} s} \pred(t) + \pred(s) $.
\end{proof}

It follows that all subsets of a well-behaved set of wires are well-behaved.

We can now define recursively wiring with an arbitrary well-behaved set $W$ of wires. Not only do we define the new drag $D_W$, but also trace the vertices of the original drag along the recursive computation, with 
or without 
their root multiplicity.

\begin{defi}[Wiring, resolution, natural injection]\label{d:wiring}
Given a well-behaved set of wires $W$ of a drag $D$, we define the drag $\wiring{D}{W}$, after wiring, by induction on the size of $W$, where for a non-empty set of wires $W$, we write $\wir{s}{r}\cup W'$ for $W=\{\wir{s}{r}\}\cup W'$,
\begin{align*}
\wiring{D}{\varnothing} &=D \\
\wiring{D}{\wir{s}{r}\cup W'} &=
\wiring{(\gwiring{D}{\wir{s}{r}})}{W'}.
\end{align*}
The \emph{resolution} 
$\targetw{t}{D}{W}$ of sprout $t$ of $D$ in the domain of $W$ is the (unique) root $r$ in $D$ that is the minimal one such that $t\geq_W r$, together with its root multiplicity in $D_W$ \emph{after} wiring.
(There is a unique minimum on account of functionality of well-behaved sets of wires.)
Also, the \emph{natural injection} of $D$ into $\wiring{D}{W}$ is
the map $\targets{t}{D}{W}$ which returns the same vertex as $\targetw{t}{D}{W}$ without its root multiplicity.
\end{defi}

Since $W'$ is a well-behaved set of wires for the drag $\gwiring{D}{W}$ by Lemma~\ref{l:wrec}, the recursive call $\wiring{(\gwiring{D}{\wir{s}{r}})}{W'}$ makes sense. 
The functions, resolution and natural injection, could also be defined by induction in the same way that the post-wiring drag was. 
Note that in case $t$ is an internal vertex, the vertex itself is not changed by wiring, but its multiplicity might be. 

Note also that a cycle is generated in a wired drag in case a sprout $s$ is accessible in the original drag from its resolution, implying that a loop (a cycle of length 1) can only be generated on an internal vertex. 

\begin{exa}
\label{ex:gc}
Figure~\ref{f:gc} displays two examples of wiring, illustrating the recursive calculation of the result. 
We start with a drag union of two drags, and a well-behaved set $W$ of two wires written underneath. 
The number of times a vertex is a root is indicated next to the root arrow's origin; the number 1 is often omitted. 
Labels can serve here as vertex names since no two vertices have the same label. The middle drag is the result of the first step of the calculation, with the remaining set of one wire written again underneath.

For the first calculation, both edges that ended up in sprout $x$, which has disappeared, are now redirected to sprout $y$, which has a single root left, since two roots have been utilized by redirecting the edges $\edge(f,x)$ and $\edge(h,x)$. 
The rightmost drag is the final result, there is no set of wires left. $3$ roots out of 4 in $h$ have been used by redirecting the 3 edges ending up in $y$ to $h$, while the root of $y$ is lost. 

Starting with 2 roots instead of $3$ for $y$ would yield the same result.
Note also that starting with the wire $\wir{y}{h}$ in which $y$ is not maximal would not make sense, since $\wir{x}{y}$ would not be a wire anymore after the first step of the calculation.
The tracing of $\targetw{x}{}{W}$ is indicated in red, $x$ moving to $y$ and then to $h$. 

The second calculation is similar. We note that it yields the same result.
This is no coincidence, as we shall see.
\qed\end{exa}

\begin{figure}[!t]
\setlength{\unitlength}{0.7cm} 

\hspace*{5mm}
\begin{picture}(17,11.5)(0,-8)   
\put(0,2.5){    
\begin{picture}(1.5,3.5)(0,0)
\put(0.05,1){\textcolor{black}{$\da$}}
\put(0.,0.5){\textcolor{black}{$f$}}
\put(0.15,0){\textcolor{black}{$\searrow$}}

\put(1.75,1.4){\textcolor{black}{\scriptsize 4}}
\put(1.75,1){\textcolor{black}{$\da$}}
\put(1.7,0.5){\textcolor{black}{$h$}}
\put(1.35,0){\textcolor{black}{$\swarrow$}}

\put(0.8,-0.3){\textcolor{red}{$x$}}
\put(0.8,0.1){\textcolor{red}{$\da$}}
\put(0.8,0.5){\textcolor{red}{\scriptsize 1}}


\put(3.75,1){\textcolor{black}{$\da$}}
\put(3.7,0.5){\textcolor{black}{$g$}}
\put(3.7,0.){$\da$}
\put(3.7,-0.4){$y$}
\put(3.1,0.25){\textcolor{black}{\scriptsize 3}}
\put(3.1,-0.1){\textcolor{black}{$\searrow$}}

\put(0.75,-1.5){$\{\wir{x}{y},\; \wir{y}{h}\}$}
\end{picture}
}

\put(6.5,3){\textcolor{black}{$=$}}

\put(5.5,2.5){
\begin{picture}(4,3.5)(0,0)
\put(-4,0){
\put(6.8,1){\textcolor{black}{$\da$}}
\put(6.75,0.5){\textcolor{black}{$f$}}
\put(7.1,0){\textcolor{black}{$\searrow$}}

\put(8.7,1.4){\textcolor{black}{\scriptsize 4}}
\put(8.7,1.){\textcolor{black}{$\da$}}
\put(8.7,0.55){\textcolor{black}{$h$}}
\put(8.2,0){\textcolor{black}{$\swarrow$}}

\put(7.75,1){\textcolor{black}{$\da$}}
\put(7.75,0.5){\textcolor{black}{$g$}}
\put(7.75,0){\textcolor{black}{$\da$}}

\put(7.75,-0.4){\textcolor{red}{$y$}}
\put(7.05,-0.4){\textcolor{red}{$\ra$}}
\put(6.75,-0.4){\textcolor{red}{\scriptsize 1}}
}
\put(3,-1.5){$\{\wir{y}{h}\}$}
\end{picture}
}

\put(12.4,3){\textcolor{black}{$=$}}

\put(7.1,2.5){
\begin{picture}(4,3.5)(0,0)
\put(7.1,1){\textcolor{black}{$\da$}}
\put(7.05,0.5){\textcolor{black}{$f$}}
\put(7.2,0){\textcolor{black}{$\searrow$}}

\put(8.5,1.){\textcolor{black}{$\da$}}
\put(8.5,0.55){\textcolor{black}{$g$}}
\put(8.1,0){\textcolor{black}{$\swarrow$}}

\put(7.75,-0.4){\textcolor{black}{$h$}}
\put(7.65,-0.85){\textcolor{black}{$\circlearrowleft$}}
\put(7.65,-0.85){\textcolor{black}{$\circlearrowleft$}}
\put(7.2,-0.4){\textcolor{black}{$\ra$}}
\put(6.9,-0.4){\textcolor{black}{\scriptsize 1}}
\end{picture}
}


\put(0,-2.5){    
\begin{picture}(1.5,3.5)(0,0)
\put(0.05,1){\textcolor{black}{$\da$}}
\put(0.,0.5){\textcolor{black}{$f$}}
\put(0.15,0){\textcolor{black}{$\searrow$}}

\put(1.75,1.4){\textcolor{black}{\scriptsize 4}}
\put(1.75,1){\textcolor{black}{$\da$}}
\put(1.7,0.5){\textcolor{black}{$h$}}
\put(1.35,0){\textcolor{black}{$\swarrow$}}

\put(0.8,-0.3){\textcolor{black}{$x$}}
\put(0.8,0.1){\black{$\da$}}
\put(0.8,0.5){\textcolor{black}{\scriptsize 1}}


\put(3.75,1){\textcolor{black}{$\da$}}
\put(3.7,0.5){\textcolor{black}{$g$}}
\put(3.7,0.){$\da$}
\put(3.7,-0.4){\textcolor{red}{$y$}}
\put(3.1,0.25){\textcolor{red}{\scriptsize 3}}
\put(3.1,-0.1){\textcolor{red}{$\searrow$}}

\put(0.75,-1.5){$\{\wir{y}{h},\; \wir{x}{h}\}$}
\end{picture}
}

\put(6.5,-2){\textcolor{black}{$=$}}

\put(5.5,-2.5){
\begin{picture}(4,3.5)(0,0)
\put(-4,0){
\put(6.8,1){\textcolor{black}{$\da$}}
\put(6.75,0.5){\textcolor{black}{$f$}}
\put(7.1,0){\textcolor{black}{$\searrow$}}

\put(9.6,0.55){\textcolor{black}{\scriptsize 3}}
\put(9.1,0.55){\textcolor{black}{$\la$}}
\put(8.7,0.55){\textcolor{black}{$h$}}
\put(8.2,0){\textcolor{black}{$\swarrow$}}

\put(8.7,1.05){\textcolor{black}{$\da$}}
\put(8.7,1.6){\textcolor{black}{$g$}}
\put(8.7,1.95){\textcolor{black}{$\da$}}

\put(7.75,-0.4){\textcolor{red}{$x$}}
\put(7.05,-0.4){\textcolor{red}{$\ra$}}
\put(6.75,-0.4){\textcolor{red}{\scriptsize 1}}
}
\put(3,-1.5){$\{\wir{x}{h}\}$}
\end{picture}
}

\put(12.4,-2){\textcolor{black}{$=$}}

\put(7.1,-2.5){
\begin{picture}(4,3.5)(0,0)
\put(7.1,1){\textcolor{black}{$\da$}}
\put(7.05,0.5){\textcolor{black}{$f$}}
\put(7.2,0){\textcolor{black}{$\searrow$}}

\put(8.5,1.){\textcolor{black}{$\da$}}
\put(8.5,0.55){\textcolor{black}{$g$}}
\put(8.1,0){\textcolor{black}{$\swarrow$}}

\put(7.75,-0.4){\textcolor{black}{$h$}}
\put(7.65,-0.85){\textcolor{black}{$\circlearrowleft$}}
\put(7.65,-0.85){\textcolor{black}{$\circlearrowleft$}}
\put(7.2,-0.4){\textcolor{black}{$\ra$}}
\put(6.9,-0.4){\textcolor{black}{\scriptsize 1}}
\end{picture}
}
\end{picture}

\vspace{-2.5cm}
\caption{Formation of cycles via composition, two versions.}\label{f:gc}
\end{figure}
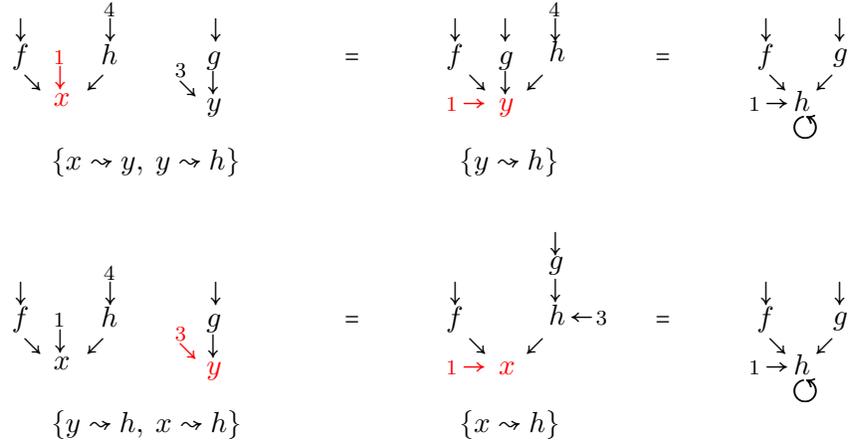

For the recursive calculation of Definition~\ref{d:wiring} to make sense, we need to show that the resulting drag does not depend upon the choice of a maximal wire $\wir{s}{r}$:

\begin{lem}
\label{l:wd}
Given a drag $D=\langle V, R, L, X, S\rangle$ and a well-behaved set of wires $W$, $\wiring{D}{W}$ and $\targetw{v}{D}{W}$, for each vertex $v\in V$, are well-defined.
\end{lem}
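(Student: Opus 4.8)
The plan is to prove both well-definedness claims by induction on $|W|$, the only genuine issue being that the recursion in Definition~\ref{d:wiring} offers a choice of maximal wire to peel off first when $W$ is nonempty. Since each recursive step removes exactly one wire, the recursion always terminates; and by Lemma~\ref{l:wrec} every admissible peeling of a maximal wire $\wir{s}{r}$ leaves a well-behaved set on the new drag, so each branch of the computation is legitimate. What must be shown is that all branches produce the \emph{same} final drag. I would therefore state the induction hypothesis as: for every drag and every well-behaved set of wires of size $<|W|$, the result $\wiring{\cdot}{\cdot}$ is independent of the choices of maximal wires made along the way. The base case $W=\varnothing$ is immediate, and when $W$ has a unique maximal wire the induction hypothesis applies directly to the single recursive call.

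The heart of the inductive step is a local-commutation (Newman-style) argument. Suppose $W$ has two distinct maximal wires $w_1=\wir{s_1}{r_1}$ and $w_2=\wir{s_2}{r_2}$. Functionality forces $s_1\neq s_2$, and the maximality of both wires rules out $s_1>_W s_2$ and $s_2>_W s_1$, so in particular $r_1\neq s_2$ and $r_2\neq s_1$. Consequently $w_2$ is still a wire of $\gwiring{D}{w_1}$ (its origin $s_2\neq s_1$ remains a sprout and its target $r_2\neq s_1$ remains present), and it is maximal in $>_{W\setminus w_1}$ since deleting a wire only removes order relations. Using the induction hypothesis on the set $W\setminus w_1$ of size $|W|-1$, the $w_1$-first branch equals $\wiring{(\gwiring{(\gwiring{D}{w_1})}{w_2})}{W\setminus\{w_1,w_2\}}$, and symmetrically the $w_2$-first branch equals $\wiring{(\gwiring{(\gwiring{D}{w_2})}{w_1})}{W\setminus\{w_1,w_2\}}$. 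Thus it suffices to establish the commutation of the two elementary wirings, $\gwiring{(\gwiring{D}{w_1})}{w_2}=\gwiring{(\gwiring{D}{w_2})}{w_1}$, after which both branches feed the identical intermediate drag into the same tail computation, which is well-defined by the induction hypothesis on $W\setminus\{w_1,w_2\}$.

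The commutation check is the computational core and the main obstacle, though it is routine once the independence of $w_1$ and $w_2$ is in hand. I would verify it component by component against Definition~\ref{d:gwiring}. For vertices and sprouts, both operations simply delete the distinct sprouts $s_1$ and $s_2$, so order is irrelevant. For edges, each edge has a unique head, so the edges redirected by $w_1$ (those heading into $s_1$) are disjoint from those redirected by $w_2$ (those heading into $s_2$); since redirection leaves tails untouched and $r_1\neq s_2$, $r_2\neq s_1$, neither wiring disturbs the edges on which the other acts, and the two edge-rewritings commute. For roots, the only possible interference is $r_1=r_2=r$, in which case both wirings subtract from $R(r)$; here both $s_1>_W r$ and $s_2>_W r$, so the injectivity condition of well-behavedness yields $\pred(s_1)+\pred(s_2)\leq R(r)$, keeping the subtraction non-negative, and the resulting multiplicity $R(r)-\pred(s_1)-\pred(s_2)$ is symmetric in the two orders. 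Note that $\pred(s_2)$ is unaffected by first wiring $w_1$ precisely because $r_1\neq s_2$, so the two subtractions are genuinely independent.

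Finally, well-definedness of the resolution $\targetw{v}{D}{W}$ follows at once: once $\wiring{D}{W}$ has been shown independent of the order of wiring, the minimal root $r$ with $v\geq_W r$ is unique by functionality (as already noted in Definition~\ref{d:wiring}), and its root multiplicity is read off the now-well-defined drag $\wiring{D}{W}$; the natural injection $\targets{v}{D}{W}$ is then obtained simply by discarding that multiplicity. This completes the plan.
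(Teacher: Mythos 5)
Your proposal is correct and follows essentially the same route as the paper: induction on $|W|$, reducing well-definedness to the commutation of two distinct maximal elementary wirings $\gwiring{(\gwiring{D}{w_1})}{w_2}=\gwiring{(\gwiring{D}{w_2})}{w_1}$ (using functionality for $s_1\neq s_2$ and maximality for $r_1\neq s_2$, $r_2\neq s_1$), then concluding by the induction hypothesis on the remaining wires. In fact you supply details the paper elides --- the component-by-component verification including the root case $r_1=r_2$ via the injectivity bound, and the explicit treatment of the resolution $\targetw{v}{D}{W}$, which the paper leaves implicit.
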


\begin{proof}
By induction on the size of $W$. 
We give the proof for the wiring definition.

If $W$ is empty, the result is straightforward. If it
contains a single maximal wire, the induction hypothesis applies.
Otherwise, let
$W=\wir{s}{r}\cup \wir{s'}{r'}\cup W'$ where  $\wir{s}{r}$ and $\wir{s'}{r'}$ are distinct wires, maximal for $>_W$. Let
$D'=\gwiring{(\gwiring{D}{\wir{s}{r}})}{\wir{s'}{r'}}$ and $D"=\gwiring{(\gwiring{D}{\wir{s'}{r'}})}{\wir{s}{r}}$. Note that $W'$ is well-behaved for both $D'$ and $D"$, even if $r=r'$.
By functionality, $s\neq s'$, and by maximality, $s'\neq r$ and $s\neq r'$; hence, redirecting the edges ending up in $s$ to $r$ and those ending up in $s'$ to $r'$ can be done in any order, showing that $D'=D"$.

Using Definition~\ref{d:wiring} twice for each calculation, 
 $\wiring{D}{W}=\wiring{D'}{W'}$ if $\wir{s}{r}$ is chosen first, and
  $\wiring{D}{W}=\wiring{D"}{W'}$ if $\wir{s'}{r'}$ is chosen first. 
  Since $D'=D"$, we conclude by the induction hypothesis that $\wiring{D}{W}$ is well-defined.
\end{proof}

Wiring has three important properties illustrated at Figure~\ref{f:gc}:

\begin{lem}
\label{l:edges}
\label{l:targetinv}
\label{l:idpw}
Let $D$ be a drag and $W$ a well-behaved set of wires of $D$. Then,
\begin{enumerate}
    \item $\wiring{D}{W}$ and $D$ have the same internal vertices and
    total number of edges---not counting roots as edges;
    \item all vertices $u$ of $\wiring{D}{W}$ have the same indegree in $D$ and $\wiring{D}{W}$;
    \item whenever $W'$ is a well-behaved set of wires such that $\forall t\in\Vertex{D} \st \targetw{t}{D}{W}=\targetw{t}{D}{W'}$, then $\wiring{D}{W}=\wiring{D}{W'}$.
\end{enumerate}
\end{lem}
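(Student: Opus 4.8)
The plan is to prove all three items by induction on $|W|$, at each step unfolding $\wiring{D}{W}=\wiring{(\gwiring{D}{\wir{s}{r}})}{W'}$ through one maximal wire $\wir{s}{r}$ (Definition~\ref{d:wiring}), with the first two items routine and the third carrying the content. For~(1) and~(2) the base case $W=\varnothing$ is immediate. In the inductive step the single elementary wiring $\gwiring{D}{\wir{s}{r}}$ deletes only the sprout $s$, so every internal vertex of $D$ is retained, and it merely redirects the head of each edge into $s$ (the $k$th successor of such a tail $v$ is changed from $s$ to $r$), leaving every successor-list length unchanged; hence the set of internal vertices and the total number of edges—not counting roots—are both preserved, and the induction hypothesis applied to $\gwiring{D}{\wir{s}{r}}$ and $W'$ yields~(1). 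For~(2), Lemma~\ref{l:pres} gives indegree preservation at all surviving vertices for the single step $\gwiring{D}{\wir{s}{r}}$, and the induction hypothesis propagates it through $W'$; since every vertex of $\wiring{D}{W}$ survives each step, its indegree equals the one it had in $D$.

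For~(3) the strategy is to show that $\wiring{D}{W}$ is completely determined by the resolution $\targetw{\cdot}{D}{W}$—equivalently, by the natural injection $\targets{\cdot}{D}{W}$ together with the post-wiring root multiplicities it records (extending both functions to all vertices by letting internal vertices and un-wired sprouts resolve to themselves, as in the remark after Definition~\ref{d:wiring}). Concretely, I will establish the following description of $\wiring{D}{W}$ in terms of these data alone: its vertices are exactly the fixed points $\{t\in\Vertex{D}\st \targets{t}{D}{W}=t\}$ of the natural injection, namely the internal vertices together with the un-wired sprouts; its labels and its sprouts are the restrictions to that vertex set of those of $D$; its edges are obtained by relocating heads, $\X{\wiring{D}{W}}=\{\edgen(v,k,{\targets{t}{D}{W}})\st \edgen(v,k,t)\in \X{D}\}$, each original edge keeping its tail $v$ and position $k$ while its head $t$ is sent to $\targets{t}{D}{W}$; and the root multiplicity of each surviving vertex $u$ is the multiplicity component recorded in $\targetw{u}{D}{W}$. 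Granting this, identical resolutions for $W$ and $W'$ force identical vertex sets, labels, sprouts, edges, and root multiplicities, whence $\wiring{D}{W}=\wiring{D}{W'}$.

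The description is itself proved by induction on $|W|$, writing $W=\wir{s}{r}\cup W'$ with $s$ maximal in $>_W$ and $D_1=\gwiring{D}{\wir{s}{r}}$ (the outcome is independent of the chosen maximal wire by Lemma~\ref{l:wd}). Elementary wiring realizes the head-relocation $\phi$ sending $s$ to $r$ and fixing every other vertex, so $\X{D_1}=\{\edgen(v,k,{\phi(t)})\st\edgen(v,k,t)\in\X{D}\}$, while the induction hypothesis supplies the relocation $\psi=\targets{\cdot}{D_1}{W'}$ with $\X{\wiring{D_1}{W'}}=\{\edgen(v,k,{\psi(t')})\st \edgen(v,k,t')\in\X{D_1}\}$; composing gives $\X{\wiring{D}{W}}=\{\edgen(v,k,{\psi(\phi(t))})\st \edgen(v,k,t)\in\X{D}\}$. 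The crux is the chain-composition identity $\psi\circ\phi=\targets{\cdot}{D}{W}$: for $t=s$ one has $\phi(s)=r$ and, since by functionality the wiring chain of $s$ in $W$ is $s\leadsto r\leadsto\cdots\leadsto\rho$ with $\rho$ the minimal root below $s$, its tail $r\leadsto\cdots\leadsto\rho$ is exactly $s$'s chain in $W'$, so $\psi(r)=\rho=\targets{s}{D}{W}$; for any other vertex $\phi$ is the identity and, $s$ being maximal, $s$ never occurs inside another chain, so $\psi$ follows that vertex's unchanged chain and again agrees with $\targets{\cdot}{D}{W}$. The fixed-point, label, and sprout bookkeeping follows at once from $\Vertex{D_1}=\Vertex{D}\setminus\{s\}$, and the root multiplicities need no separate computation, since $\targetw{\cdot}{D}{W}$ records them by definition.

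I expect the \emph{main obstacle} to be precisely this chain-composition step: one must verify that peeling off maximal wires one at a time composes, via $\psi\circ\phi$, to exactly the global minimal-root resolution, correctly merging several chains that share a final root and ensuring the multiplicity carried along by $\targetw{\cdot}{D}{W}$ agrees, step by step, with the root-multiset update of Definition~\ref{d:gwiring}(2). Order-independence (Lemma~\ref{l:wd}) is what legitimizes the induction irrespective of which maximal wire is removed first, while functionality and well-foundedness of $>_W$ guarantee that every chain terminates at a unique minimal root, so that the natural injection is well-defined and the relocation maps genuinely compose.
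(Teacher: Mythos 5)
Your proposal is correct, and for items (1) and (2) it coincides with the paper's argument: both quantities are preserved by a single elementary wiring step (Lemma~\ref{l:pres} for indegrees) and propagate through the recursion of Definition~\ref{d:wiring}. Where you differ is item (3). The paper's proof is a single sentence asserting that (3) ``follows from the other two,'' leaving implicit both how equal resolutions force equal edge sets and how the roots are recovered; you instead prove a stronger, self-contained determination lemma: $\wiring{D}{W}$ is an explicit function of the resolution map alone, with vertices the fixed points of the natural injection, edges the head-relocated edges $\edgen(v,k,{\targets{t}{D}{W}})$ for $\edgen(v,k,t)\in\X{D}$ (tails and positions untouched, since only sprouts---which have no successors---are deleted), and labels and sprouts by restriction. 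The crux you identify, the chain-composition identity $\psi\circ\phi=\targets{\cdot}{D}{W}$, is sound: functionality makes each chain unique, maximality of $s$ in $>_W$ guarantees no other chain passes through $s$, and for $t=s$ the tail $r\leadsto\cdots\leadsto\rho$ of $s$'s chain is exactly $r$'s chain in $W'$ (including the degenerate case where $r$ is internal or un-wired, so $\psi(r)=r$). The one genuine divergence is how root multiplicities are pinned down: the paper's route implicitly recovers them from indegree preservation, $R_W(u)=\In(u,D)-\pred(u,\wiring{D}{W})$, so that (3) really does lean on (2), whereas you read them directly off the multiplicity component that $\targetw{\cdot}{D}{W}$ carries by Definition~\ref{d:wiring}, making your item (3) independent of item (2). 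Both are legitimate; your version buys an explicit characterization of the wired drag that the paper's terse derivation presupposes, at the cost of a second induction (legitimized, as you note, by the order-independence of Lemma~\ref{l:wd}).
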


The fact that wiring preserves indegrees, just like monomorphisms must, will turn out to be a key observation (Lemma~\ref{l:ipmconverse} below).

\begin{proof}
The third property follows from the other two, which hold because
internal vertices, total number of edges, and indegree of vertices are all preserved by an elementary wiring step---thanks to Lemma~\ref{l:pres} for indegree.
\end{proof}

\subsection{Coherence}\label{s:cohere}

We haven't yet required that different wires sharing the same label satisfy an assumption implying that the corresponding sprouts are related. More precisely, given a drag $D$ and a set of wires $W$, we want two sprouts $s:x$ and $t:x$ of $D$, with the same label $x$, to become equivalent after wiring. Ideally, we would like to check the property without computing $\wiring{D}{W}$, that is, read it on $D$ and $W$. In \cite{DBLP:journals/tcs/DershowitzJ19}, we required that $r=r'$ for any two wires $\wir{s:x}{r}$ and $\wir{s':x}{r'}$ belonging to $W$, a model called \emph{forced sharing}, and suggested that a more general equivalence should be drag isomorphism. 

It turns out, however, that drag isomorphism is too weak.
Take for example the drag $D=f(x,x,y,z,a,b)$ and the set of wires $W=\{\wir{x_1}{y},\wir{x_2}{z}, \wir{y}{a}, \wir{z}{b}\}$. The sprouts $x_1$ and $x_2$ are replaced by equivalent drags, but won't remain equivalent in $\wiring{D}{W}$ since $x_1$ will be eventually replaced by $a$ and $x_2$ by $b$. We could then expect that equimorphism, which is weaker than equality but stronger than isomorphism, could do.

And indeed, equimorphism in $D$ of the subdrags generated by $r$ and $r'$ is one possible answer, although not the most general one. 
There is however a difficulty: 
Equimorphism is not preserved along the wiring process, although it is finally restored. 
For an example, consider the drag $D$ made of three copies of $f(x)$, numbered 1,2,3. 
Let $W= \wir{x_3}{f_1}\cup W'$, with $W'=\{\wir{x_1}{f_2},\wir{x_2}{f_3}\}$, be a set of wires that satisfies this equivalence condition. 
Wiring the sole wire $\wir{x_3}{f_1}$ yields the drag $D'$ made of three subdrags, the first two are still the same as before while the last has become $f_3(f_1(x_1))$. 
Wires $W'$ no longer satisfy the property, since $x_1$ maps to $f_2$, which generates the subdrag $f_2(x_2)$, while $x_2$ maps to $f_3$, which generates now the subdrag $f_3(f_2(x_2))$, two non-isomorphic subdrags. 
On the other hand, applying all wires at once yields the drag which has three vertices $f_1, f_2, f_3$ and three edges $\edge(f_1,f_2), \edge(f_2,f_3), \edge(f_3,f_1)$. 
Obviously, the subdrags generated by $f_1, f_2, f_3$ are still equimorphic. 
This is the reason why we did not include coherence into the definition of a well-behaved set of wires: recurring on $W$ would have become impossible.
Wiring, hopefully, makes sense even when $W$ is not coherent, hence our choice. 
The question however remains, whether we need a property weaker that equimorphism. 
To illustrate the need, let us consider the drag $D=f(x,x,y,a)$ and the set of wires $W=\{\wir{x_1}{y},\wir{x_2}{a}, \wir{y}{a}\}$. 
Obviously, $y$ and $a$ do not generate equimorphic drags in $D$, but they do in $\wiring{D}{W}$. 
This tells us that the condition should be checked on the result of wiring $W$ in $D$.
We now give the formal definition of a coherent set of wires:

\begin{defi}[Coherence]
    Let $D$ be a drag and $W$ a well-behaved set of wires of $D$. We say that $W$ is \emph{coherent} (\emph{strongly coherent}, respectively) if it satisfies the two following properties:
    \begin{enumerate}
        \item 
        Fullness: All sprouts with the same label are the origin of a wire as soon as one is.
        \item
         Equimorphism: For any two wires $\wir{s:x}{r}$ and $\wir{s':x}{r'}$ of $W$, $\targets{r}{D}{W}$ and $\targets{r'}{D}{W}$ ($r$ and $r'$, respectively)
           generate equimorphic subdrags of $\wiring{D}{W}$ (of $D$, respectively).
          
    \end{enumerate}
\end{defi}

Note that forced sharing is a simple, particular case of strong coherence.

Like for the sum operation, wiring relates to the existence of certain morphisms between the wired drag and the original drag that will play a key r\^ole when it comes to rewriting drags. 

\begin{lem}
\label{l:injw}
\label{l:iwm}
Let $D$ be a drag having no isolated sprout and $W$ a well-behaved, coherent set of wires of $D$. 
Then the \emph{natural injection} from $D$ to $\wiring{D}{W}$ defines a monomorphism. 
\end{lem}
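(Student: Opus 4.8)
The plan is to verify that the natural injection $\eta$ of $D$ into $\wiring{D}{W}$, sending each vertex $t$ to $\targets{t}{D}{W}$, meets every clause of Definition~\ref{d:pmo} and is injective on internal vertices, so that it is a monomorphism in the sense of Definition~\ref{d:mmo}. First I would record how $\eta$ acts: by the remarks following Definition~\ref{d:wiring}, it is the identity on internal vertices and on sprouts outside $\Dom{W}$, whereas a wired sprout $s$ is sent to its resolution, the minimal root $r$ with $s\geq_W r$. As $D$ has no isolated sprout, every sprout is non-isolated, which is exactly the kind of sprout a morphism is allowed to redirect.

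The four routine clauses come first. Clause~(1), that internal vertices go to internal vertices with their labels intact, is immediate from Lemma~\ref{l:edges}(1), wiring preserving the set of internal vertices, together with the fact that Definition~\ref{d:gwiring} never alters a label. Clause~(2), that isolated sprouts map to isolated sprouts, holds vacuously, since by hypothesis $D$ has none; this is the sole place the hypothesis is used. For clause~(4), consider an edge $\edgen(u,i,v)\in X$; its tail $u$ is internal, since sprouts have no successors, so $\eta(u)=u$, and if $v$ is internal or an unwired sprout the edge survives untouched, while if $v$ is a wired sprout then, writing $r=\targets{v}{D}{W}$, wiring redirects the edge to $r=\eta(v)$, so that $\edgen(u,i,r)$ is an edge of $\wiring{D}{W}$. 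Finally clause~(3), indegree preservation at the images of internal vertices, reduces---since $\eta$ is the identity on $\IVertex{D}$, so that a single summand contributes---to the equality $\npred(v,D)=\npred(v,\wiring{D}{W})$ for internal $v$, which is precisely Lemma~\ref{l:targetinv}(2).

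The substance of the lemma is clause~(5), preservation of equimorphic subdrags, and this is exactly where coherence is needed. Given an equimorphism $\phi$ between subdrags $\subd{D}{u}$ and $\subd{D}{u'}$, I would construct an equimorphism between $\subd{\wiring{D}{W}}{\eta(u)}$ and $\subd{\wiring{D}{W}}{\eta(u')}$. Because $\phi$ preserves labels, it matches each sprout $s$ of the first subdrag with a sprout $\phi(s)$ of the second bearing the same label, and fullness then forces $s$ to be wired exactly when $\phi(s)$ is. On the internal vertices, which wiring leaves in place together with the edges between them, $\phi$ itself furnishes the bijection; for a matched pair of wired sprouts $s,\phi(s)$, the equimorphism clause of coherence provides an equimorphism between the subdrags generated in $\wiring{D}{W}$ by $\eta(s)$ and $\eta(\phi(s))$; and a matched pair of unwired sprouts simply stays a pair of equally-labelled sprouts. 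Gluing these pieces along the redirected edges should yield the required equimorphism.

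The main obstacle is precisely this gluing. As the discussion preceding the definition of coherence emphasises, equimorphism is \emph{not} maintained step by step along wiring but only recovered at the end, so one cannot recur on the size of $W$ through Lemma~\ref{l:wrec}; the argument has to be run globally on the fully wired drag. The delicate part is to certify that the equimorphisms handed over by coherence for the individual matched pairs of sprouts fit together into one well-defined bijection---in particular, that when distinct sprouts resolve into overlapping subdrags the local equimorphisms coincide there. Once clause~(5) is secured, the conclusion is immediate: $\eta$ is the identity, hence injective, on internal vertices, and there are no isolated sprouts to consider, so $\eta$ is a monomorphism, as claimed.
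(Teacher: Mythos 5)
Your proposal takes essentially the same route as the paper's own proof: identity on internal vertices with labels intact, indegree preservation via Lemma~\ref{l:idpw}(2), redirection of sprout-headed edges to resolutions, vacuous treatment of isolated sprouts, and an appeal to coherence for preservation of equimorphic subdrags. The gluing subtlety you flag in clause~(5) is not resolved in the paper either---its proof dismisses that step with the single phrase ``which follows from coherence of $W$''---so your sketch is, if anything, more explicit about where the real work lies.
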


\begin{proof}
The natural injection $\iota$ from $D$ to $\wiring{D}{W}$ is a map $\iota$ that is the identity on the internal vertices of $D$, hence preserves their labels, is injective and indegree preserving by Lemma~\ref{l:idpw} (2), and maps every sprout to its resolution;

Since every vertex of $D_W$ is a vertex of $D$, the set of entering edges is empty. Edges of $D$ of the form $\edgen(u,i,s_i)$, where $s_i$ is a sprout such that $v'=\omicron(s_i)$, create the edge $\edgen(\omicron(u),i,v')$ of $D_W$. This edge takes the place of a root at $v$ in $D$ if the inverse image of $\iota$ contains an internal vertex or is a sprout $s=\omicron(s)$ shared between $D$ and $D_W$, a root that has therefore disappeared in $D_W$, hence ensuring root preservation. The case where $v'$ is a sprout whose inverse image is a set of sprouts all different from $s$ is simply impossible here since a resolution vertex must be a vertex of $D$.
We are left with showing that $\iota$ preserves equimorphic subdrags of $D_W$, which follows from coherence of $W$.
\end{proof}

Coherence can actually be checked without requiring the full computation of $\wiring{D}{W}$: any property implying coherence and preserved by wiring would do. This is of course the case of forced sharing, but we can do better.

\begin{lem}
Given a drag $D$, a well-behaved set of wires $W$ of $D$ is coherent iff 
$W=W'\cup W"$ for some coherent $W'$ and strongly coherent $W"$.
\end{lem}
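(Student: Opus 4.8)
The plan is to prove the two implications separately, the forward one being essentially free and all the content residing in the converse.

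For the forward direction, if $W$ is coherent I would take $W'=W$ and $W''=\varnothing$. The empty set of wires is well-behaved and vacuously strongly coherent (both fullness and the equimorphism requirement hold for want of any wire), so $W=W'\cup W''$ is already the desired decomposition, and nothing further is needed.

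For the converse, assume $W=W'\cup W''$ with $W'$ coherent and $W''$ strongly coherent, both with respect to $D$ (recall that every subset of the well-behaved $W$ is again well-behaved for $D$). I would first dispatch fullness: if some sprout labelled $x$ is an origin in $W$, it lies in $W'$ or in $W''$, and fullness of that part forces every $x$-labelled sprout to be an origin there, hence in $W$. The same observation yields the structural simplification that drives the rest of the argument: since $W'$ and $W''$ are subfunctions of the partial function $W$, fullness forces, for each label $x$, all $x$-origin wires of $W$ to lie entirely inside $W'$ or entirely inside $W''$. Thus, when checking the equimorphism condition for a pair $\wir{s:x}{r},\wir{s':x}{r'}\in W$, the two wires may be assumed to belong to the same part.

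It then remains to verify, for each such pair, that $\targets{r}{D}{W}$ and $\targets{r'}{D}{W}$ generate equimorphic subdrags of $\wiring{D}{W}$, and I would split on the two cases. When both wires lie in $W''$, strong coherence gives $\subd{D}{r}\equiv\subd{D}{r'}$ already in $D$; when both lie in $W'$, coherence of $W'$ gives the corresponding equimorphism, but only inside $\wiring{D}{W'}$ and relative to the $W'$-resolution. In either case the target equimorphism must be transported forward to the full wiring. To do this I would thread the equimorphism $\omicron$ through the wiring: $\omicron$ maps each sprout of $\subd{D}{r}$ to an identically labelled sprout of $\subd{D}{r'}$, so by the fullness already established such a sprout is an origin of $W$ iff its image is, and using that an elementary wiring leaves internal vertices and indegrees untouched (Lemmas~\ref{l:pres} and~\ref{l:idpw}) one rebuilds an equimorphism between the two wired subdrags. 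Because equimorphism is not preserved step by step but only restored in the end (as the three-copies-of-$f(x)$ example shows), this transport cannot be carried out one wire at a time in isolation; instead I would run a well-founded induction along the wiring order $>_W$, whose acyclicity is guaranteed by well-behavedness, carrying the equimorphism of resolution targets as the induction hypothesis, with the $W''$-case supplying the robust base equimorphism in $D$ and the $W'$-case supplying it relative to $\wiring{D}{W'}$.

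The main obstacle is exactly this last transport step, namely showing that the equimorphism of the subdrags generated by the wire targets survives passage to the final wired drag $\wiring{D}{W}$. The asymmetry of the statement is essential here: strong coherence of $W''$ locates its equimorphism in $D$, which is stable under any further coherent wiring, whereas the ordinary coherence of $W'$ only supplies equimorphism inside $\wiring{D}{W'}$, so the delicate point is to re-establish the latter after the additional wires have been applied. I expect the cleanest route is to fix a maximal wire of $W$, perform the corresponding elementary wiring, invoke the order-independence of wiring (Lemma~\ref{l:wd}) to reorganize the computation, and appeal to the induction hypothesis on the strictly smaller remaining set of wires.
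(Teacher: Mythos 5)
Your forward direction (take $W''=\varnothing$) agrees with the paper, and your reduction of the equimorphism check to pairs of wires lying in the same part of the decomposition---via fullness of each part plus functionality of the well-behaved $W$---is sound and is implicitly used by the paper as well. The genuine gap is in the transport step, exactly where you sense trouble. Your proposed engine, a well-founded induction along $>_W$ carrying the equimorphism of resolution targets as hypothesis, does not track the recursion that actually has to be closed: $>_W$ only chains wires whose targets are themselves sprouts, and says nothing about sprouts occurring \emph{inside} the subdrag $\subd{D}{r}$ generated by a wire target $r$---yet it is precisely those sprouts whose substituted subdrags must be matched to rebuild the equimorphism. In the paper's own three-copies-of-$f(x)$ example (Section~\ref{s:cohere}), no wire targets a sprout, so $>_W$ restricted to $S\times S$ is empty and every sprout is maximal: your induction has no structure to descend along, while the equimorphism between the subdrags generated by $f_1,f_2,f_3$ is restored only globally in $\wiring{D}{W}$ (a $3$-cycle), never at any intermediate stage. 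Your closing fallback---peel off one maximal wire, perform the elementary wiring, and invoke the induction hypothesis on the remaining wires---collides with the obstruction you yourself state two sentences earlier: after a single elementary wiring, neither coherence of $W'$ nor strong coherence of $W''$ need hold for the new drag, so the induction hypothesis is not applicable to the remaining set, and nothing in the proposal re-establishes it.

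The paper avoids wire-at-a-time processing altogether. It inducts on the size of $W''$ and, at each step, removes from $W''$ the \emph{entire class} $W^x$ of wires whose origins carry a label $x$ maximal in $>_{W''}$, proving (i) that $W'\cup W^x$ is coherent, because maximality keeps the subdrags generated by the targets $r_i$ identical in $\wiring{D}{W'}$ and in $\wiring{(\wiring{D}{W'})}{W^x}$, and (ii) that the remainder $W'''$ is strongly coherent \emph{relative to the partially wired drag} $\wiring{D}{W'\cup W^x}$, since simultaneously substituting equimorphic subdrags for all $x$-labeled sprouts of equimorphic subdrags yields equimorphic subdrags. The simultaneous per-label substitution is what replaces your step-by-step transport, and relativizing strong coherence to the partially wired drag is what keeps the induction hypothesis available at the next round. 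To repair your argument you would need both devices (or else a direct, global construction of the equimorphism on $\wiring{D}{W}$); as written, the induction does not get off the ground.
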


In words, coherence is achieved in the drag resulting from the whole computation as soon as, for every variable $x$, the various wires $\wir{s:x}{r}$ generate equimorphic subdrags in the drag computed so far.
Note that only  nonlinear variables of $D$ need be checked for strong coherence.

\begin{proof}
The only-if direction follows directly from the definition of coherence by taking $W"=\varnothing$. The converse is by induction on the size of $W"$, the base case being obtained for $W"=\varnothing$, which yields coherence of $W=W'$ by assumption.

If $W"$ is non-empty, let $x$ be a variable maximal in $>_{W"}$ and $W"=W^x\cup W"'$, where $W^x=\{\wir{s_i:x}{r_i}\}$ is the set of wires in $W"$ whose origins are labeled $x$. The origins of wires in $W"'$ are therefore labeled by variables other than $x$. We will first show that (i) $W'\cup W^x$ is coherent, then that (ii) $W"'$ is strongly coherent with respect to $D_{W'\cup W^x}$, before  concluding that $W$ is coherent via the induction hypothesis. 

\noindent (i) Since $x$ is maximal, the subdrags generated by the $r_i$'s are identical in $D_{W'}$ and $(D_{W'})_{W^x}$, establishing property (i).

\noindent (ii) Since replacing sprouts labeled $x$ in equimorphic subdrags of $D_{W'}$ by equimorphic subdrags of $D_{W'}$ yields equimorphic subdrags of $(D_{W'})_{W^x}$, property (ii) follows.
\end{proof}

\subsection{Product}
While wiring operates on a single drag, product operates on a pair of drags via a connecting device we call a \emph{switchboard}:

\begin{defi}[Switchboard]
\label{d:switchboard}
Given two disjoint drags $D, D'$, a \emph{switchboard} $\xi$ for $D,D'$ is a pair  of partial functions $\langle \xi_D:\spr{D}\ra\roots{D'},\, \xi_{D'}:\spr{D'}\ra\roots{D}\rangle$, called \emph{switchboard components}, such that  $\xi_D\cup\xi_{D'}$ is a coherent well-behaved set of wires for $D\oplus D'$.
We also say that $\ext(D',\xi)$ is an \emph{extension} of $D$ and $D'$ its \emph{context extension}.
Switchboard $\xi$ is \emph{one-way} if either one of $\xi_D$ and $\xi_{D'}$ has an empty domain.
\end{defi}

Drags $D$ and $D'$ being disjoint, $\xi_D\cup\xi_{D'}$ is well defined. Therefore, $\xi$ can be identified with $\xi_D\cup\xi_{D'}$.
Note further that $\xi_D$ and $\xi_{D'}$ need not be true injective functions as in~\cite{DBLP:journals/tcs/DershowitzJ19}, where roots were lists with repetitions: 
Injectivity has been adapted to multisets of roots in our definition of a well-behaved set of wires.

Composition can now be defined as a wiring operation on $D\oplus D'$:

\begin{defi}[Composition]
\label{d:composition}
Let $D = \langle V, R, L, X, S\rangle$ and $D' = \langle V',
  R', L', X', S'\rangle$ be disjoint drags,
and $\xi$ a switchboard for $D, D'$.
Their \emph{cyclic composition} or \emph{product} is the drag $D \otimes_\xi D'=\wiring{(D\oplus D')}{\xi}$.
\end{defi}

\begin{exa}
\label{ex:gcsuite}
In Example~\ref{ex:gc}, the first drag is the union of two drags that share no vertices, and the set of wires is just their switchboard. 
The result is therefore the composition of these two drags with respect to that switchboard. 
\qed\end{exa}

Lemma~\ref{l:edges}(1) applies to composition: 
The total number of edges of $D\otimes_\xi D'$ is the sum of the number of edges of $D$ and $D'$, a property already noted in~\cite{DBLP:journals/tcs/DershowitzJ19}. 
The indegree is preserved at all vertices of $D\otimes_\xi D'$ since indegrees are preserved by the sum of disjoint drags and by wiring. Finally, the injection of $D$ into $D\otimes_\xi D'$ is a monomorphism by Lemmas \ref{l:ism}(2) and \ref{l:injw}(2).

When restricted to one-way switchboards, cyclic composition is indeed a substitution, and is dubbed "sequential" in~\cite{Gadducci07} and one-way in \cite{DBLP:journals/tcs/DershowitzJ19}. Many other names coexist that target other classes of graphs and particular cases of composition. 

A direct definition of a switchboard and of composition of two disjoint drags connected by a switchboard was given in~\cite{DBLP:journals/tcs/DershowitzJ19}. 
Our definition here assumes 
(a) that roots are multisets, instead of the lists there, 
(b) that resolution vertices have enough roots for transferring the edges of its corresponding origins, instead of edges and roots there, and 
(c) that coherence is ensured via equimorphism instead of sharing. 
These two models therefore have  different behaviors.

Apart from these differences, both definitions are similar:
our goal now is to give a definition of composition via wiring in the style of the direct definition used in~\cite{DBLP:journals/tcs/DershowitzJ19}.
In the context of composition of two drags $D,D'$ with respect to switchboard $\xi$, we define:

\begin{defi}[Resolution]
\label{d:target}
Let $D=\langle V, R, L, X, S\rangle$ and $D'=\langle V', R', L', X', S'\rangle$ be disjoint drags, and $\xi$ a switchboard for $D, D'$. 
The \emph{resolution} of a sprout $s\in S\cup S'$ is the vertex
$\target{s}=\targets{s}{D\oplus D'}{\xi}$, viewing the switchboard $\xi$ as the set of wires $W$ in Definition~\ref{d:wiring}.
\end{defi}

The direct definition of composition has now become a simple property of the wiring definition:

\begin{lem}[Composition]
\label{l:prod}
Let $D = \langle V, R, L, X, S\rangle$ and $D' = \langle V',
  R', L', X', S'\rangle$ be compatible drags, and $\xi$ be a switchboard for $D, D'$. Then  $D \otimes_\xi D' = \langle V", R", L", X", S"\rangle$,  
where
\begin{enumerate}
\item
$V"  =  (V \cup V')\setminus \Dom{\xi}$;
\item
$S"  =  (S \cup S') \setminus \Dom{\xi}$;
\item
$R"(v)= \begin{cases}
R(v) -\Sigma_{\target{w}=v\wedge w\neq v}\pred(w,D)  & \mbox{if }  v \in  R\setminus \Dom{\xi}\\
R'(v) -\Sigma_{\target{w}=v\wedge w\neq v}\pred(w,D') & \mbox{if } v \in  R'\setminus \Dom{\xi} ;
\end{cases}$
\item
$L"(v)  =  \begin{cases}
L(v)  & \mbox{if }  v \in  V\cap V"\\
L'(v) & \mbox{if } v \in  V'\cap V" ;
\end{cases}$
\item
$X"(v)  =  \begin{cases} 
\target{X(v)}  
& \mbox{if }  v \in  V\setminus S\\
\target{X'(v)} & \mbox{if } v \in  V'\setminus S' .
\end{cases}$
\end{enumerate}
\end{lem}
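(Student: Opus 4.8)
The plan is to peel off the two layers of the definition and then induct on the number of wires. By Definition~\ref{d:composition}, $D \otimes_\xi D' = \wiring{(D \oplus D')}{\xi}$, so I would first record the shape of the sum. Since $D$ and $D'$ are disjoint (as required for the switchboard of Definition~\ref{d:switchboard}), the remark following Definition~\ref{d:union} tells us that $D \oplus D'$ is simply their juxtaposition: its vertices are $V \cup V'$, its sprouts $S \cup S'$, its labels $L \cup L'$, its successor function $X \cup X'$, and its roots agree with $R$ on $V$ and with $R'$ on $V'$. Everything then reduces to analysing $\wiring{(D\oplus D')}{\xi}$, and the five identities of the lemma are exactly the closed forms for the components of that wired drag, expressed through the resolution $\target{s}$ of Definition~\ref{d:target}.

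I would carry out the analysis by induction on $|\xi|$, using the recursive Definition~\ref{d:wiring} together with the order-independence guaranteed by Lemma~\ref{l:wd}, so that at each step I may split off a maximal wire $\wir{s}{r}$ and apply the elementary wiring of Definition~\ref{d:gwiring}. The vertex and sprout clauses are the easiest: clauses (1) and (5) of Definition~\ref{d:gwiring} show that an elementary step deletes precisely its origin sprout, so over the whole recursion exactly the origins get removed; by functionality these origins are the distinct elements of $\Dom{\xi}$, whence $V'' = (V \cup V') \setminus \Dom{\xi}$ and $S'' = (S \cup S') \setminus \Dom{\xi}$. The label clause is equally direct, since an elementary step only restricts $L$ (clause (3)), so surviving vertices keep their original labels. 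For the successor clause, I would track a single edge $\edge(v,t)$: if $t$ is a wired sprout it is forwarded along the chain $t \leadsto \cdots \leadsto \target{t}$ by repeated application of clause (4), while if $t$ is not wired it resolves to itself; collecting these over the successor list gives $X''(v) = \target{X(v)}$. Maximality of the chosen $s$ ensures that splitting off $\wir{s}{r}$ leaves the chains of the remaining sprouts undisturbed, so the inductive hypothesis applies to $\xi'$ in the residual drag, with $\target{s}=\target{r}$ matching up the two resolution maps.

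The root clause $R''$ is where the real work lies, and I expect it to be the \emph{main obstacle}. The formula claims that a resolution vertex $v$ loses exactly one root for every edge ultimately redirected to it, the total being $\sum_{\target{w}=v,\, w\neq v} \pred(w,\cdot)$. The subtlety is that wiring consumes roots incrementally: along a chain $s_0 \leadsto s_1 \leadsto \cdots \leadsto r$ each elementary step subtracts $\pred$ of the current origin from the roots of the next target (clause (2) of Definition~\ref{d:gwiring}), while the roots sitting on the intermediate sprouts are simply discarded, as noted after that definition. I would have to verify that this telescoping accounting collapses to a single subtraction at the final resolution vertex, counting each edge exactly once at its ultimate destination and never confusing the lost roots of intermediate sprouts with the consumed roots of $v$; this amounts to showing that the edge-indegree funnelled into $v$ equals $\sum_{\target{w}=v} \pred(w,\cdot)$. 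Here the injectivity clause of well-behavedness is essential, since it guarantees that $v$ carries enough roots for all edges routed to it, so every subtraction is legitimate and $R''(v)$ stays well-defined, and indegree preservation (Lemma~\ref{l:idpw}(2)) confirms that the bookkeeping remains consistent across the recursion. Assembling the five clauses then yields the claimed description of $D \otimes_\xi D'$.
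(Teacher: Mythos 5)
Your proposal is correct and follows essentially the same route as the paper's proof: unfold $D\otimes_\xi D'$ as $\wiring{(D\oplus D')}{\xi}$, induct on the size of $\xi$ by peeling off a maximal wire (justified by Lemma~\ref{l:wd}), and settle the root clause via preservation of indegrees. Your explicit telescoping bookkeeping along wiring chains is just a more detailed rendering of the paper's terser observation that indegree preservation alone determines $R''$, so the two arguments coincide in substance.
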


The calculation of the multiset of roots of the composition is based on the preservation of indegrees by wiring, making it very simple: 
Edges accumulate along the computation until the very end, at which point they must be compensated for by
the roots of the resolution.



\begin{proof}
By Definition~\ref{d:composition}, $D\otimes_\xi D'=\wiring{(D\oplus D')}{\xi}$. We therefore show that $\wiring{(D\oplus D')}{\xi}$ is the drag obtained from $D\oplus D'$ by (i) removing from the set of vertices all sprouts in $\Dom{\xi}$, (ii) redirecting all edges ending up in a removed sprout to its associated resolution, and (iii) keeping the indegree unchanged for all vertices in $(V \cup V')\setminus \Dom{\xi}$,
which determines their number of roots as stated. The proof is by induction on the size of $\xi$ considered as a set of wires. 
There are two cases:

-- In the empty case, $D\otimes_{\varnothing}D'=D\oplus D'$. Properties (i), (ii), and (iii) hold trivially.

-- Otherwise, by Lemma~\ref{l:wd}, we can choose any wire
$\wir{s}{r}$ such that 
$\xi=W\cup\wir{s}{r}$, where $s$ is maximal in $\xi$; hence
$D\otimes_\xi D'= \wiring{((\gwiring{D\oplus D')}{\wir{s}{r}})}{W}$.
We then conclude by the induction hypothesis, since $W$ has one wire fewer than does $\xi$, that $D\otimes_\xi D'$ is obtained from
 $\gwiring{(D\oplus D')}{\wir{s}{r}}$ as claimed. Since $s$ is maximal, it follows from the definitions of resolutions $\targets{x}{D}{\xi}$ and $\target{x}$ that $\wiring{(D\oplus D')}{\xi}$ is obtained from $D\oplus D'$ as claimed.
\end{proof}

Not all vertices of the composition $D\otimes_\xi D'$ may be reached via $\xi$ from the sprouts of one of them. We denote by $\xi(D)$ the subdrag of $D'$ generated by the vertices in $\xi_D(\Dom{\xi_D})$, which contains all vertices of $D'$ reached from sprouts of $D$.

There are important particular cases of switchboards that impose additional, hence stronger, coherence conditions:
\begin{itemize}
\item\emph{Equality switchboard}:
    $\forall s\neq t\in  S\cup S'$ such that $L(s) = L(t)$, we have $\xi(s)=\xi(t)$.
\item\emph{Inequality switchboard}:
$\forall s\neq t\in  S\cup S'$ such that $L(s) = L(t)$, we have $\rest{(D\oplus D')}{\xi(s)}$ and
$\rest{(D\oplus D')}{\xi(t)}$ have no vertex in common.
\end{itemize}

Composition is said to \emph{force sharing} if $\xi$ is a switchboard with equality, and to \emph{force cloning} if $\xi$ is a switchboard with inequality. Term rewriting is based on cloning switchboards while dag rewriting is based on equality switchboards. Our notion of composition can potentially do both, and can also achieve partial sharing by having  $\rest{(D\oplus D')}{\xi(s)}$ and
$\rest{(D\oplus D')}{\xi(t)}$ sharing subdrags, in particular sprouts, when possible. These question will be studied in more detail in Section~\ref{ss:drvtr}.

\paragraph*{\bf Notations:} Given two drags $D$ and $D'$ and a switchboard $\xi= \{\wir{s_i}{r_i}\}_i$ for $(D,D')$, we will sometimes need to restrict the switchboard $\xi$ to some subsets of vertices $V$ of $\Vertex{D}$ and $V'$ of $\Vertex{D'}$. 
We will therefore denote by $\xi_{V\ra V'}$ the restriction of switchboard $\xi$ to the set of wires $\{\wir{s_i}{r_i}\st s_i\in V, r_i\in V'\}_i$. 
We will use $\xi_{V,V'}$ for the switchboard $\xi_{V\ra V'}\cup \xi_{V'\ra V}$, $\xi_{V\ra}$ for the switchboard $\xi_{V\ra \Vertex{D'}}$, and $\xi_{\ra V'}$ for the switchboard $\xi_{\Vertex{D}\ra V'}$.

\section{Decomposition of Drags}
\label{s:decd}

We now investigate to what extent a drag can be expressed in terms of simpler drags by means of sum and product so as to obtain a \emph{drag expression}:

\begin{defi}[Drag expression]
By a \emph{drag expression}, we mean any expression built from a given set of \emph{drag components} $\{D_i\}_i$ such that $D_i$ and $D_j$ share no vertex nor variable if $i\neq j$, by means of sum and product of drags. Drags occurring in a drag  expression are its \emph{drag components}. A drag $D$ is a \emph{trivial} drag expression whose drag $D$ is its only drag component.
\end{defi}

Note that product alone would suffice to build drag expressions, writing a sum $D\oplus D'$ as the product $D\otimes_{\varnothing} D'$.

An initial, straightforward answer is that we can decompose a drag according to some subset of its internal vertices by using the corresponding subdrag and associated context:


\begin{lem}[Reconstruction]
\label{l:rec}
Given a drag $D$ and a subset $W$ of its vertices, then
$D=\rest{D}{W}\otimes_\xi \cont{D}{W}$ for some switchboard $\xi$.
\end{lem}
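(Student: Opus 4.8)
The plan is to exhibit an explicit switchboard $\xi$ and then verify the identity component by component, using the closed-form description of a product supplied by Lemma~\ref{l:prod}. I begin by renaming apart the fresh sprouts introduced by the two restrictions, so that $A:=\rest{D}{W}$ and $B:=\cont{D}{W}$ become disjoint drags; this is harmless because their internal vertices are drawn from the disjoint vertex sets $W$ and $V\setminus X^*(W)$ of $D$, so only the newly created sprouts of $A$ and of $B$ need be chosen fresh. The essential bookkeeping is that each fresh sprout carries the name of the $D$-vertex it replaces, and that the restriction deliberately augments the root multiplicity of every vertex receiving a severed incoming edge, via $R'(w)=R(w)+\Sigma_{\,\edgen(v,k,w)\in X, v\in V\setminus W} k$, so that indegrees are preserved within each piece taken in isolation.

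Next I would define the switchboard $\xi=\langle\xi_A,\xi_B\rangle$ so as to undo the severing: each fresh sprout is wired back to the genuine $D$-vertex whose name it bears, the tail-side edge thereby being reconnected to its original head. Concretely, a fresh sprout of $A$ naming a vertex $v$ is wired to $v$ as a root of the opposite piece, and symmetrically for $B$. I would then confirm that $\xi$ is a well-behaved, coherent set of wires for $A\oplus B$: functionality is immediate because each sprout names a unique vertex; well-foundedness on sprouts is immediate because fresh sprouts are sinks, so $>_\xi$ can form no cycle among sprouts; and the injectivity requirement $\Sigma_{s>_\xi r}\pred(s)\le R(r)$ is exactly met by the augmented roots, the surplus roots manufactured at a vertex being precisely enough to absorb the severed edges redirected back into it. Coherence holds because two fresh sprouts bearing the same name denote one and the same $D$-vertex, hence resolve to identical---a fortiori equimorphic---subdrags.

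With $\xi$ fixed I apply Lemma~\ref{l:prod} and match the five components of $A\otimes_\xi B$ against $D=\langle V,R,L,X,S\rangle$. The vertex set $(\Vertex{A}\cup\Vertex{B})\setminus\Dom{\xi}$ discards exactly the fresh sprouts and returns the vertices of $D$; $L''$ agrees with $L$ since restrictions preserve labels on retained vertices; and the rewired successor function $X''(v)=\target{X(v)}$ sends every edge that had been redirected into a fresh sprout $s_u$ to an edge into $\target{s_u}=u$, restoring precisely the edges of $D$ that cross between the two pieces, while edges lying wholly inside one piece are untouched. For the roots, rather than computing $R''$ directly I would lean on indegree preservation: by Lemmas~\ref{l:ism}(1) and~\ref{l:idpw}(2) every vertex keeps its $D$-indegree through both the sum and the wiring, so once the redirected edges are reinstated the residual root count at each vertex is forced back to $R(v)$.

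The main obstacle, and the only genuinely delicate point, is the root accounting at the interface: I must check that the extra roots introduced purely to keep $A$ and $B$ indegree-faithful in isolation are consumed in exactly the right multiplicity when $\xi$ reconnects the severed edges, so that $R''$ coincides with $R$ and no surplus root or dangling edge survives; this also entails verifying that the switchboard is well defined, i.e.\ that every fresh sprout's named vertex does reappear as a root of the opposite piece, which is where the accessibility structure of the cut between $W$ and its complement has to be analysed carefully. This is precisely the place where compatibility of $A$ and $B$, together with the indegree-preservation guarantees of sum (Lemma~\ref{l:ism}(1)), of an elementary wiring step (Lemma~\ref{l:pres}), and of full wiring (Lemma~\ref{l:idpw}(2)), must be combined, the crux being to track that the incoming edges of each interface vertex are split consistently between the two pieces and then glued back exactly by $\xi$.
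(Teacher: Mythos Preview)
Your approach is essentially the same as the paper's: define $\xi$ by sending each fresh sprout $s_v$ to the vertex $v$ it stands for in the opposite piece, and then argue that the equality holds by indegree preservation through sum and wiring. The paper's proof is a two-sentence sketch of exactly this; your proposal simply spells out the well-behavedness checks and invokes Lemma~\ref{l:prod} explicitly, which is a reasonable elaboration of what the paper leaves as ``follows easily.''
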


\begin{proof}
Using the notations of Definition~\ref{d:subd}, it suffices to define $\xi$, which maps every new sprout $s_v$ of the restriction (of the context, respectively) to the vertex $v$ of the context
(of the restriction, respectively). The equality claim then follows easily using preservation of indegrees by wiring.
\end{proof}

We will indeed use the restriction of $D$ to some carefully-chosen single internal vertex, give a specific definition for that case, and treat some special cases separately. 

First, we need to define what are "atomic" drags, the kind we like to have in a full decomposition, and the non-atomic ones that we want to eliminate:

\begin{defi}[Connected drag]
\begin{enumerate}
\item
A \emph{connected} drag is any drag $\langle V,R,L,X,S\rangle$ whose set of vertices is generated by successor and equal labeling of sprouts, that is, any subset $W$ of $V$ closed under these two operations must be $V$ itself:
$$\forall W\subseteq V (\forall u\forall v \in V (uXv \st u\in W \mbox{ iff } v\in W) \mbox{ and } \forall s:x\forall t:x \in V (s\in W \mbox{ iff } t\in W)) \Rightarrow W=V$$
\item
A \emph{flat} drag is a connected drag with no non-trivial path between internal vertices.
\item
An \emph{atomic vertex} is a drag with only a single internal vertex 
and any number of roots and different sprouts as successors, all with different labels.
\item
A nonempty set of pairwise distinct sprouts is \emph{atomic} if they all share the same variable, each with any number of roots.
\item
An \emph{atomic drag} is an atomic vertex or an atomic set of sprouts.
\end{enumerate}
\end{defi}

For example, the drag with two internal vertices sharing one sprout is flat, while the drag made of a single loop on an internal vertex is not. 
Note also that the drag made of two drags $f(s:x)$ and $g(t:x)$ is connected by our definition, as if $s$ and $t$ where the same shared sprout.

A major property of non-flat connected drags is that they all have non-necessarily distinct internal vertices $u,v$ such that $u$ is a predecessor of $v$.

Atomic drags are of course flat, but there are also flat non-atomic drags. On the other hand, non-connected drags can always be written as a sum of connected drags. We therefore consider the decomposition of non-flat connected drags first, before  addressing the case of flat non-atomic connected drags.

In what follows, we give a sequence of four transformation rules in the form of as many lemmas. 
These transformations take as input a drag expression $E$ containing a drag component $D$ that is not yet atomic, but instead: (i) comprises pairwise distinct connected components; or (ii) is
a non-flat connected component with an edge $\edge(u,v)$ between two distinct internal vertices; or (iii) is a non-flat connected component with a loop $\edge(u,u)$ on some internal vertex $u$; or else (v) is a flat connected component with sprouts $s_i:x$ that are either shared or sharing the variable label $x$, or both.
Clearly, any 
non-atomic drag belongs to one of these four categories. 
Therefore, applying these transformations repeatedly to a not-yet atomic drag component $D$ of $E$ will eventually transform $E$ into a drag expression $E'$  all of whose components are atomic drags. This is so because the drag expressions $E'$ obtained from $E$ are simpler than $E$, in some well-founded order, hence implying that any sequence of transformation is finite. 

Before specifying the transformation rules, we define the order used to compare drag expressions:

\begin{defi}
To a given drag $D$, we associate the triple $\langle \mathit{\#I}, \mathit{In}, M \rangle$, where
\begin{enumerate}
\item $\mathit{\#I}$ is the number of its internal vertices plus the edges between them;
\item $\mathit{In}$ is the multiset of its sprouts' indegrees;
\item $M$ is the multiset counting, for each variable $x\in\Var{D}$, the number of its sprouts that are labeled $x$.
\end{enumerate}
Denoting by $\gtnat$ the usual order on natural numbers, triples---hence drags---are compared in the well-founded order ${\gg} = (\gtnat, \gtnat^{mul}, \gtnat^{mul})^{lex}$, where $^{lex}$ and $^{mul}$ denote lexicographic and multiset extensions of an order, respectively. 
Drag expressions, interpreted as the multiset of interpretations of their drag components, are compared in the well-founded order $\gg^{mul}$.
\end{defi}

\begin{lem}
\label{l:decomp}
Let $D$ be a drag made of pairwise distinct connected drags $D_1, \ldots, D_n$, with $n>1$. Then, $D=D_1\oplus \cdots \oplus D_n$
is a drag expression such that $\forall i \st D \gg D_i$.
\end{lem}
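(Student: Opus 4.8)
The plan is to prove the two assertions of the lemma separately: first that $D_1\oplus\cdots\oplus D_n$ is a legitimate drag expression equal to $D$, and then that each summand is strictly below $D$ in the order $\gg$.

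For the equality, I would start by noting that distinct connected components are pairwise disjoint, sharing neither vertex nor variable. A connected component is closed under predecessor, ancestor, and equal labeling of sprouts; hence two components sharing a vertex must coincide, and if a sprout $s:x$ of one and a sprout $t:x$ of another carried the same variable $x$, closure under equal labeling would again force the two components to coincide. This disjointness makes the $D_i$ a legitimate family of drag components and makes them pairwise compatible, since all compatibility conditions hold vacuously for disjoint drags; their sum is therefore their juxtaposition, with vertex, label, successor, and sprout data given by the corresponding disjoint unions. It remains to check that these reproduce $D$ exactly. Closure under predecessor and successor guarantees that no edge of $D$ crosses between two components, so the collected successor relation is precisely $X$; and since each $D_i=\rest{D}{V_i}$ is a restriction to a set closed under successor and predecessor, its construction introduces no new sprouts and no new roots (new roots in a restriction arise only from edges entering $V_i$ from outside, of which there are none). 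Thus each $D_i$ carries exactly the roots $D$ assigns to its vertices, and the juxtaposition reproduces the root multiset of $D$. Hence $D=D_1\oplus\cdots\oplus D_n$.

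For the ordering claim, the crucial quantitative fact is additivity of the first triple coordinate over disjoint components. Writing the triple of a drag as $\langle \mathit{\#I}, \mathit{In}, M\rangle$, the coordinate $\mathit{\#I}$ counts internal vertices plus edges between them; since the components are disjoint and no edge crosses between them, $\mathit{\#I}(D)=\sum_{j=1}^n \mathit{\#I}(D_j)$, so $\mathit{\#I}(D)\geq \mathit{\#I}(D_i)$ for every $i$. If this inequality is strict, then $D\gg D_i$ already on the lexicographically dominant first coordinate, and there is nothing more to do.

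The remaining case, which I expect to be the only delicate point, is the tie $\mathit{\#I}(D)=\mathit{\#I}(D_i)$. This forces $\mathit{\#I}(D_j)=0$ for all $j\neq i$, so every such $D_j$ is a nonempty component with no internal vertices, i.e.\ a nonempty set of sprouts. Here I would pass to the second coordinate. Because the sum preserves indegrees at all vertices (Lemma~\ref{l:ism}), the indegree of a sprout of $D_j$ inside $D$ equals its indegree inside $D_j$, so $\mathit{In}(D)=\mathit{In}(D_i)\uplus N$, where $N$ collects the sprout indegrees of all components other than $D_i$. Since at least one $D_j$ with $j\neq i$ is a nonempty sprout component, $N\neq\varnothing$; and adjoining a nonempty multiset strictly increases a multiset in the multiset extension (remove $N$ and replace it by nothing), so $\mathit{In}(D)\mathrel{\gtnat^{mul}}\mathit{In}(D_i)$. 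As the first coordinates agree, the lexicographic comparison gives $D\gg D_i$. The rest is routine bookkeeping about disjoint unions; the one genuinely careful step is recognizing that a tie on $\mathit{\#I}$ forces the other components to be pure sprouts and then invoking strict growth of the indegree multiset.
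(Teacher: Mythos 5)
Your proof is correct and follows essentially the same route as the paper's: a case analysis on where the internal vertices lie, with strict decrease of $\mathit{\#I}$ handling the generic case and strict growth of the sprout-indegree multiset $\mathit{In}$ resolving the tie (your unified ``strict vs.\ tie'' organization simply merges the paper's two tie cases---all components pure sprouts, or exactly one with internal vertices---into one). The only difference is presentational: you spell out the disjointness and equality $D=D_1\oplus\cdots\oplus D_n$, which the paper dismisses as the easy part, stating that ``the only difficulty is the ordering statement.''
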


\begin{proof}
The only difficulty is the ordering statement. 
If there are two or more $D_i$'s with internal vertices, the result is clear. 
If all $D_i$'s are made of sprouts, their first component is 0, as for $D$, but the second has fewer 0's than in $D$, hence decreases strictly. 
If, say, $D_1$ has at least one internal vertex but all other $D_i$'s do not, then $D$ has strictly more internal vertices than the $D_i$'s, while $D_1$ has the same number of them as does $D$. 
But its multiset of sprout indegrees must have decreased strictly.
\end{proof}

\begin{lem}[Decomposition]
\label{l:decomp1}
Given a non-flat connected drag $D=\langle V,R,L,X,S\rangle$, let $v$ be an internal vertex of $D$ of indegree $p$, labeled $f$ of arity $n$, having at least one predecessor $u\neq v$, and whose successors in $D$ are the vertices $v_1,\ldots, v_n$.
Then---denoting $v$ by $f$:
$$D=D_f \otimes_\xi D' \mbox{ with } \xi=\{\wir{s}{v},\wir{s_1}{w_1},\dots, \wir{s_n}{w_n}\}$$
is a drag expression such that $D\gg D_f$ and $D\gg D'$, where:
\begin{enumerate}
\item
$D_f=f^{[p]}(s_1:x_1,\ldots,s_n:x_n)$;
\item
$D'=\langle V',R',L',X',S'\rangle$;
\item
$V'= (V\setminus v)\cup s$, where $s$ is a fresh sprout;
\item
$\forall u\in V\setminus (v\cup \{v_i\}) \st R'(u)=R(u)$;  $\forall u\in \{v_i\}_i \st R'(u)=R(u)+1$; $R'(s)= 0$;
\item
$\forall u\in V\setminus v \st L'(u)=L(u)$; $L'(s)=x$, where $x$ is a fresh variable;
\item
$\forall u,w\in V\setminus s \st uX'w $ iff  $uXw$; $\forall u\in V\setminus s \st uX's$ iff $uXv$;
\item
$S'=S\cup s$;
\item
$w_i=s$  if  $v_i=v$, and otherwise it is $v_i$.
\end{enumerate}
\end{lem}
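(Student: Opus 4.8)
The plan is to prove the claimed identity by computing the product $D_f\otimes_\xi D'$ explicitly with the Composition Lemma~\ref{l:prod}, and then to settle the two ordering statements, which are comparatively easy. Write $e_{v,w}=|\{i\st v_i=w\}|$ for the number of successor positions of $v$ pointing at $w$. First I would confirm that $D_f$ and $D'$ are genuine drags sharing neither vertex nor variable: the only internal vertex of $D_f$ is $v$, which does not occur in $D'$ by item~(3), and the sprouts $s,s_1,\dots,s_n$ together with their labels $x,x_1,\dots,x_n$ are fresh, so the two factors are disjoint and $\xi$ is well defined. I would then check that $\xi$ is a well-behaved, coherent switchboard. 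Functionality is immediate since the origins $s,s_1,\dots,s_n$ are pairwise distinct; well-foundedness holds because the only wiring chains are $s\leadsto v$ and, for a self-loop position $v_i=v$, $s_i\leadsto s\leadsto v$, all ending at the internal vertex $v$, so $>_\xi$ is acyclic on sprouts; and coherence is vacuous because every wired sprout carries a distinct fresh label. Only injectivity uses the root counts: the sprouts resolving to $v$ contribute exactly $\pred(v,D)$ incoming edges, bounded by the $p=\In(v,D)\ge\pred(v,D)$ roots of $v$ in $D_f$, while the wires resolving to a successor $v_i\ne v$ number $e_{v,v_i}$, matched by the $e_{v,v_i}$ roots added at $v_i$ in item~(4).

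Next I would read off $D_f\otimes_\xi D'$ from Lemma~\ref{l:prod}. Its vertices and sprouts are $(\Vertex{D_f}\cup\Vertex{D'})\setminus\Dom{\xi}=V$ and $S$, and its labels agree with $L$, all by inspection. For the edges, resolution sends each $s_i$ to $v_i$---directly when $v_i\ne v$ and through the chain $s_i\leadsto s\leadsto v$ when $v_i=v$---and sends $s$ to $v$; hence the $n$ edges $v\to s_i$ of $D_f$ reappear as $v\to v_i$, the edges $u\to s$ of $D'$ (the redirected former edges into $v$) reappear as $u\to v$, and all edges internal to $V\setminus v$ are left untouched. Thus $X''=X$, and in particular $\pred(w,D_f\otimes_\xi D')=\pred(w,D)$ for every $w\in V$.

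The delicate point is the root multiplicities, and here I would avoid unwinding the case split of Lemma~\ref{l:prod} by invoking indegree preservation instead. By Lemma~\ref{l:ism}(1) the indegree of each vertex in the disjoint sum $D_f\oplus D'$ is its indegree in the factor containing it, and by Lemma~\ref{l:targetinv}(2) wiring leaves these indegrees unchanged, so every $w\in V$ has the same indegree in $D_f\otimes_\xi D'$ as in $D_f\oplus D'$. For $v$ this is $R_{D_f}(v)+\pred(v,D_f)=p=\In(v,D)$; for $w\in V\setminus v$ it is $R'(w)+\pred(w,D')=(R(w)+e_{v,w})+(\pred(w,D)-e_{v,w})=\In(w,D)$, the $e_{v,w}$ added roots of item~(4) exactly compensating the $e_{v,w}$ edges $v\to w$ lost when $v$ is deleted. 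Since the product has exactly the edges of $D$, its number of roots at each $w$ equals its indegree minus its in-edges, i.e.\ $\In(w,D)-\pred(w,D)=R(w)$. Therefore $D_f\otimes_\xi D'=D$.

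Finally, both ordering claims follow from the leading component $\mathit{\#I}$ of the measure. The drag $D_f$ has a single internal vertex and no edge between internal vertices, so $\mathit{\#I}(D_f)=1$; since $D$ is non-flat, $v$ has an internal predecessor $u\ne v$, giving $D$ at least two internal vertices and the internal edge $u\to v$, whence $\mathit{\#I}(D)\ge 3>\mathit{\#I}(D_f)$ and $D\gg D_f$. Passing from $D$ to $D'$ deletes the internal vertex $v$ and, with it, at least the internal edge $u\to v$, so $\mathit{\#I}(D')\le\mathit{\#I}(D)-2$ and $D\gg D'$. I expect the genuine obstacle to be the root bookkeeping of the third paragraph, and specifically the self-loop case $v_i=v$: there the root that must compensate for the vanishing loop cannot live in $D'$ (where $v$ is gone) but is instead supplied by routing $s_i\leadsto s\leadsto v$ back onto the $p$ roots of $v$ in $D_f$, and one must verify that reading item~(4) as adding one root per successor \emph{occurrence} (rather than per distinct successor) is exactly what keeps both injectivity and indegree preservation in balance.
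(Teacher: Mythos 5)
Your proposal is correct and takes essentially the same route as the paper's (much terser) proof: the switchboard is trivially well-behaved and coherent since all wired sprouts carry distinct fresh labels, the equality $D = D_f \otimes_\xi D'$ follows from preservation of indegrees by composition, and both ordering claims fall to the first component $\mathit{\#I}$ of the measure (one internal vertex and no internal edge in $D_f$; an internal vertex and at least the internal edge $\edge(u,v)$ removed in $D'$). Your closing observation---that item (4) must be read as adding one root per successor \emph{occurrence} rather than per distinct successor when $v$ has repeated successors, on pain of breaking injectivity and the root bookkeeping---is a correct sharpening that the paper's multiset conventions leave implicit.
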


\begin{proof}
Note that $D$ is a non-flat connected drag, since $\edge(u,v)$.
The switchboard is designed so that it is trivially coherent and well-behaved, and $D$ is reconstructed from two drag components sharing no vertex or variable. 
We do not and need not assert that $D'$ itself is a connected drag; it may not be if the subdrag generated by some $v_i$ has no shared vertex with its associated context drag.

The equality claim follows from preservation of indegrees by composition.

Drag $D'$ is smaller than $D$ since an internal vertex has been removed, and the edges between the remaining internal vertices are those of $D$. 

For the last claim,  notice that the drag $f^{[p]}(s_1:x_1,\ldots,s_n:x_n)$ has a single internal vertex $v$ and no edge $\edge(v,v)$.
\end{proof}

\begin{exa}
Consider a drag $D$, reduced to two internal vertices labeled $g$ and $a$, of arities 1 and 0, respectively, plus a single edge $\edge(g,a)$. 
We get $D=a^{[1]}\otimes_{s\mapsto a} g(s^{[0]})$.
\qed\end{exa}

\begin{lem}[Loop decomposition]
\label{l:decompl}
Let $D$ be a drag with an internal vertex $v$
of indegree $p$, labeled $f$ of arity $n$, whose successors in $D$ are the vertices $v_1,\ldots, v_n$, with $v_i=v$ for some $i$. Then,
$$D= D' \otimes_\xi t^{[1]}:y \mbox{ with } \xi=\{\wir{s}{t},\wir{t}{v}\}$$ 
is a drag expression such that  $D\gg D'$ and $D\gg t^{[1]}$, where:
\begin{enumerate}
\item
$D'=\langle V',R',L',X',S'\rangle$;
\item
$V'= V\cup s$, where $s$ is a fresh sprout;
\item
$\forall u\in V\setminus v \st R'(u)=R(u)$;  $R'(v)=R(v)+1$; $R'(s)= 0$;
\item
$\forall u\in V \st L'(u)=L(u)$; $L'(s)=x$, where $x$ is a fresh variable;
\item
$\forall u,w\in V \st uX'w $ iff $ uXw$ except for the edge $\edgen(v,i,v)$ of $D$ replaced by $\edgen(v,i,s)$ in $D'$;
\item
$S'=S\cup s$.
\end{enumerate}
\end{lem}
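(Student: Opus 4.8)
The plan is to check three things in turn: that $\xi=\{\wir{s}{t},\wir{t}{v}\}$ is a coherent, well-behaved set of wires for $D'\oplus t^{[1]}$, so that the product is defined; that the resulting composition is literally $D$; and that $D\gg D'$ and $D\gg t^{[1]}$. The guiding principle throughout is that wiring preserves the indegree of every surviving vertex (Lemma~\ref{l:pres}), which is what makes the root bookkeeping at $v$ come out exactly right.

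First I would record the wiring chain. Since $\wir{s}{t},\wir{t}{v}\in\xi$ we have $s>_\xi t>_\xi v$, so $s$ is the unique maximal sprout and $v$ is the common resolution, $\target{s}=\target{t}=v$. Functionality is immediate, and well-foundedness holds because $s>_\xi t>_\xi v$ is acyclic on sprouts. The delicate condition is injectivity. Above the root $t$ sits only $s$, and in $D'\oplus t^{[1]}$ we have $\pred(s)=1$ (the lone edge $\edgen(v,i,s)$) while $R(t)=1$; above the root $v$ sit both $s$ and $t$, with $\pred(s)+\pred(t)=1+0=1$, which must not exceed $R'(v)=R(v)+1$. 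This is exactly where the extra root supplied at $v$ in item~(3) is spent: cutting the loop removed one incoming edge at $v$, and the compensating $+1$ both keeps $\npred(v)$ equal to $p$ and furnishes the root that injectivity demands. Coherence is then free, since $x$ and $y$ are fresh and distinct: $s$ and $t$ are the unique sprouts carrying their labels, so fullness is automatic and the equimorphism clause is vacuous (no two wire origins share a variable).

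Next I would read off $D=D'\otimes_\xi t^{[1]}$ from Lemma~\ref{l:prod} using these resolutions. Since $\Dom{\xi}=\{s,t\}$ is deleted from $(V\cup s)\cup\{t\}$, the surviving vertices and sprouts are exactly $V$ and $S$, and $L'$ restricts to $L$ on $V$, so labels agree with $D$. For edges, every successor that is an internal vertex resolves to itself, so all edges of $D$ other than the loop reappear unchanged; the single altered edge $\edgen(v,i,s)$ of $D'$ has head $s$ with $\target{s}=v$, hence becomes $\edgen(v,i,v)$, rebuilding the loop precisely. For roots, $v$ is the only resolution target: the composition subtracts from $R'(v)=R(v)+1$ the predecessor counts $\pred(s)+\pred(t)=1$ of the sprouts resolving to $v$, giving $R''(v)=R(v)$; equivalently, by indegree preservation the restored loop-edge forces the root count back down from $R(v)+1$ to $R(v)$. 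Every other vertex keeps its $D'$-root count, which already equals its $D$-count. Thus all five components of $D'\otimes_\xi t^{[1]}$ coincide with those of $D$.

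Finally, both ordering statements are settled at the first coordinate $\mathit{\#I}$ of the measure. The drag $D'$ has the very same internal vertices as $D$ (only the sprout $s$ was added), but the loop $\edgen(v,i,v)$---an edge between internal vertices---has been replaced by $\edgen(v,i,s)$ whose head is a sprout, so $D'$ carries exactly one fewer internal-to-internal edge; hence $\mathit{\#I}(D')=\mathit{\#I}(D)-1$ and $D\gg D'$. The drag $t^{[1]}$ is a single rooted sprout with no internal vertex, so $\mathit{\#I}(t^{[1]})=0<\mathit{\#I}(D)$ and $D\gg t^{[1]}$. I expect the only genuinely delicate point to be the injectivity-and-root accounting at $v$: one must see that the single added root simultaneously licenses the rewiring and, via indegree preservation, regenerates both the original root multiplicity and the loop.
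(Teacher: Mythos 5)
Your proposal is correct and follows essentially the same route as the paper: the paper's proof simply defers to the argument for Lemma~\ref{l:decomp1} (the switchboard is trivially well-behaved and coherent, and equality follows from indegree preservation by composition) and then gives exactly your ordering argument---$D'$ keeps the same internal vertices but loses one internal-to-internal edge since the loop $\edgen(v,i,v)$ becomes $\edgen(v,i,s)$ with $s$ a sprout, while $t^{[1]}$ has no internal vertices at all. Your explicit verification of injectivity at $v$ (where the single added root is consumed by the chain $s\leadsto t\leadsto v$) and the root recomputation via Lemma~\ref{l:prod} just spell out what the paper leaves implicit.
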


This lemma allows us to eliminate one loop at a time. 
When there are several loops on the internal vertex $v$, they  have to be eliminated one by one. 
We could of course give a more general statement eliminating them all at once, at the price of a slightly more complicated statement and proof.

\begin{proof}
Similar to the proof of Lemma~\ref{l:decomp1}, except for the ordering statements.

Here $D'$ has the same total number of internal vertices but strictly fewer edges between them since some edge $\edge(v,v)$ of $D$ has been replaced by an edge $\edge(v,s)$ in $D'$, which does not count because $s$ is a sprout. As for the other drag, it has no internal vertices, while $D$ has at least one.
\end{proof}

\begin{exa}
Consider a drag $D$ reduced to a single internal vertex $v$ labeled $f$ of arity 1, and a single looping edge $\edge(f,f)$. 
Then, we have $D=f^{[1]}(s:x)\otimes_{s\mapsto t, t\mapsto f} t^{[1]}$.
\qed\end{exa}

We next consider the case of non-atomic connected flat drags. 
They  are made of one or more internal vertices connected via their
successor sprouts, some of them being shared, or sharing the same variable label, or both. 
We will eliminate at once all connections related to a given variable label. 
If there are several variable labels involved in the connections, they will have to be eliminated one by one.

\begin{lem}[Unsharing decomposition]
\label{l:decomp2}
Let $D$ be a connected flat drag whose nonempty set $\{s_i:x\}^{n}_{i=1}$ ($n\geq 1$) of distinct sprouts sharing variable label $x$, having $q_i$ predecessors and indegree $p_i\geq q_i$, respectively, contains at least two sprouts, or one shared sprout, or both.
Then $$D=D'\otimes_\xi \left(s_1^{[p_1]}\ \cdots \ s_n^{[p_n]}\right)$$
is a drag expression such that $D\gg D'$ and $D\gg s_1^{[p_1]} \ldots s_n^{[p_n]}$, where:
\begin{enumerate}
\item
$D'$ is obtained from $D$ by replacing each vertex $s_i$ by fresh rootless sprouts $t_{i,1}:y_{i,1},~ \dots,~ t_{i,{q_i}}:y_{i,{q_i}}$ and every edge $\edge(v,s_i)$, if any, by edges $\edge(v,t_{i,j})$, with $1 \leq j \leq q_i$;
\item
$\xi=\{\wir{t_{i,k}}{s_i}\}_i$.
\end{enumerate}
\end{lem}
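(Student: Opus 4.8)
The plan is to mirror the proofs of Lemmas~\ref{l:decomp1} and~\ref{l:decompl}: display the two components together with the switchboard, check that $\xi$ makes the product well-defined, recover $D$ from the product by preservation of indegrees, and then settle the two ordering claims. Write $R_d=s_1^{[p_1]}\cdots s_n^{[p_n]}$ for the right-hand component. First I would note that $D'$ and $R_d$ are disjoint---their vertices are $(V\setminus\{s_i\}_i)\cup\{t_{i,k}\}_{i,k}$ and $\{s_i\}_i$, and their variables are the variables of $D$ other than $x$ together with the fresh $y_{i,k}$ on the one side and the single variable $x$ on the other---so the right-hand side is a genuine drag expression. (This uses the reading that the $s_i$ are \emph{all} the sprouts of $D$ labeled $x$, so that $x$ disappears from $D'$.) Then I would check that $\xi=\{\wir{t_{i,k}}{s_i}\}$ is a coherent, well-behaved, one-way switchboard for $D'\oplus R_d$: functionality is immediate since the origins $t_{i,k}$ are pairwise distinct; well-foundedness holds because each $t_{i,k}$ is only an origin and each $s_i$ only a target, so $>_\xi$ has no chain of length exceeding one; injectivity at $s_i$ reads $\sum_{t>_\xi s_i}\pred(t)=q_i\leq p_i=R_d(s_i)$, which is exactly the hypothesis $p_i\geq q_i$ (each $t_{i,k}$ carries one redirected edge, so $\pred(t_{i,k})=1$); and coherence holds vacuously, since giving each origin $t_{i,k}$ its own fresh variable $y_{i,k}$ means no two wires have equally-labeled origins.

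For the equality $D=D'\otimes_\xi R_d$ I would read the components of the product off Lemma~\ref{l:prod}. Deleting $\Dom{\xi}=\{t_{i,k}\}$ gives $V''=V$ and $S''=S$, and labels agree because $L'$ coincides with $L$ off the $s_i$ while $R_d$ relabels each $s_i$ by $x=L(s_i)$. For the successor function the resolution sends each $t_{i,k}$ back to its target, $\target{t_{i,k}}=s_i$, and is the identity elsewhere, so $X''$ turns every redirected edge $\edge(v,t_{i,k})$ of $D'$ back into the original edge $\edge(v,s_i)$ of $D$, giving $X''=X$. The roots then follow for free: a product is a wiring of a sum, so by Lemma~\ref{l:idpw} every surviving vertex keeps the indegree it has in $D'\oplus R_d$, namely its indegree in $D$ (the internal vertices and the other sprouts keep all their incoming edges and roots, while $\npred(s_i,D'\oplus R_d)=p_i=\npred(s_i,D)$); together with $X''=X$ this forces $R''=R$, whence $D'\otimes_\xi R_d=D$.

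The ordering claims carry the real work. For $D\gg R_d$ the first coordinate already drops: $R_d$ has no internal vertex, whereas $D$, being connected and not an atomic set of sprouts (else there would be nothing to unshare), has at least one, because a connected drag without internal vertices is an atomic set of sprouts; hence $\mathit{\#I}(D)\geq 1>0=\mathit{\#I}(R_d)$. For $D\gg D'$ the first coordinate is instead \emph{equal}, since passing to $D'$ changes neither the internal vertices nor the edges between them, so I would compare the remaining two coordinates lexicographically while tracking the sole change: the sprouts $\{s_i\}$ (indegrees $p_i$, all labeled $x$) are replaced by the $\sum_i q_i$ sprouts $t_{i,k}$, each of indegree $1$ and each carrying a private label. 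On the indegree coordinate $\mathit{In}$ this replaces $\{p_1,\dots,p_n\}$ by $\{1^{\,\sum_i q_i}\}$, a $\gtnat^{mul}$-decrease unless every $s_i$ has $p_i=q_i=1$, in which case $\mathit{In}$ is unchanged; in that residual case no sprout is shared, so the hypothesis forces $n\geq 2$, and on the coordinate $M$ the single entry $n$ recording the sprouts labeled $x$ is replaced by $n$ entries equal to $1$ (one per fresh $y_{i,k}$), a strict $\gtnat^{mul}$-decrease because $n\geq 2$. Either way $D\gg D'$.

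The main obstacle I anticipate is precisely this last case analysis: pinning down that the hypothesis ``at least two sprouts, or one shared sprout, or both'' is exactly what forces strictness, and that the strict decrease migrates between the $\mathit{In}$ coordinate (when sharing comes from edge multiplicity, some $q_i\geq 2$, so some $p_i\geq 2$) and the $M$ coordinate (when it comes from several equally-labeled sprouts, $n\geq 2$, with all $p_i=q_i=1$). A secondary point to flag is the standing assumption that $D$ is not already atomic---otherwise $D=R_d$ and $D\gg R_d$ would fail---together with the degenerate sub-cases $q_i=0$, in which $s_i$ contributes no wire and simply survives as an isolated rooted component of the product.
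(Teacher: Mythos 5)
Your proof is correct and follows essentially the same route as the paper's, which compresses the equality claim into "a straightforward consequence of indegree preservation" and runs the same case split on the ordering triple: $n>1$ gives a strict drop in $M$ with $\mathit{In}$ non-increasing, and otherwise the shared-sprout hypothesis gives $p_1\geq q_1\geq 2$ and a strict drop in $\mathit{In}$. Your closing caveat is well taken: the paper's own proof even purports to cover the flat, no-internal-vertex case---where $D$ is an atomic set of sprouts, $D$ coincides with $s_1^{[p_1]}\cdots s_n^{[p_n]}$, and the claim $D\gg s_1^{[p_1]}\cdots s_n^{[p_n]}$ actually fails---while dismissing that ordering statement as "straightforward"; the lemma is only ever applied to non-atomic components in the decomposition theorem, which is exactly the standing assumption you identify.
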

Note that the drag $s_1^{[p_1]} \cdots s_n^{[p_n]}$, all of whose  sprouts share the same variable, is an atomic variable drag that cannot be decomposed any further without violating our notion of drag expression, since all these sprouts share label $x$.

\begin{proof}
In case $p_i=0$, the switchboard $\xi$ maps a rootless sprout $t_{i_1}$ to a rootless sprout $s_i$. 
Since all sprouts labeled $x$, whether shared or nor, are renamed with the appropriate number of fresh sprouts, the resultant product is a drag expression. 
The equality statement is again a straightforward consequence of indegree preservation. 

Finally, $D'$ has the same number of internal vertices as $D$. If $n>1$, the number of sprouts labeled $x$ decreases strictly while
the multiset of sprout indegrees does not increase. Or else $p_1>1$ and
the  multiset of sprouts indegrees decreases strictly. We are left with the case of a flat drag with no internal vertices but several sprouts labeled $x$. In that case, $D$ and $D'$ have the same first two components, but the third decreases strictly. The other ordering statement is straightforward.
\end{proof}

Here are two examples of flat non-atomic drags decomposed into atomic drags:
\begin{align*}
f(s^{[0]}:x,t^{[0]}:x) &=f(s'^{[0]}:y_1,t'^{[0]}:y_2)\otimes_{\wir{s'}{s},\wir{t'}{t}} (s^{[1]}:x\quad  t^{[1]}:x) 
\\
f(s^{[0]}:x,s^{[0]}:x)
&=f(s':y_1,t':y_2)\otimes_{\wir{s'}{s},\wir{t'}{s}} s^{[2]} \\& \qquad\qquad\qquad\qquad\mbox{---assuming now that $s^{[0]}$ is shared.}
\end{align*}
Note the difficulty in representing the drag of the second example faithfully by means of the expression $f(s^{[0]}:x,s^{[0]}:x)$. We have to make explicit that $s^{[0]}$ is shared. On the other hand, our product-based representation is faithful; sharing is attained by calculating the drag expression. Note that this representation avoids the kind of unary binding notation usual in programming languages: The annotated product operation is a binder for its arguments.
 
The decomposition properties can be used to decompose a given drag $D$ into atomic drags, so as to obtain a drag expression built from atomic drags by means of sum and product. 
Initially, we are given a drag $D$, considered as a (trivial) drag expression. We get the following:

\begin{thm}
For every drag $D$, there exists a drag expression made of atomic drags whose evaluation yields $D$.
\end{thm}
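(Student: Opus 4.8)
The plan is to prove the theorem by a terminating rewriting argument on drag expressions, using the four decomposition lemmas established above as transformation rules. I start by viewing the given drag $D$ as a trivial drag expression, and I measure drag expressions by the well-founded order $\gg^{mul}$ on the multiset of their drag components' interpretations $\langle \mathit{\#I}, \mathit{In}, M\rangle$. The heart of the argument is that this measure strictly decreases with every transformation step, so no infinite chain of transformations is possible, and the only expressions that cannot be transformed are those all of whose components are atomic.

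First I would argue that any non-atomic drag component falls into exactly one of the four cases handled by Lemmas~\ref{l:decomp}, \ref{l:decomp1}, \ref{l:decompl}, and \ref{l:decomp2}. A component that is not connected is a sum of its pairwise distinct connected components and is decomposed by Lemma~\ref{l:decomp}. A connected component that is non-flat has, by the remark preceding the order definition, an edge $\edge(u,v)$ between internal vertices; if $u\neq v$ this is the case of Lemma~\ref{l:decomp1}, and if the only such edges are loops $\edge(u,u)$ this is the case of Lemma~\ref{l:decompl}. A connected \emph{flat} component that is still not atomic must have either two distinct sprouts sharing a variable label or one shared sprout, which is precisely the hypothesis of Lemma~\ref{l:decomp2}. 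Thus the case analysis is exhaustive, and every non-atomic component can be rewritten.

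Next I would observe that each lemma rewrites a single component $D_i$ of the expression into a sum or product of strictly smaller components (strictly smaller in $\gg$, as asserted in each lemma's ordering statement), while leaving the other components untouched and preserving the overall drag denoted by the expression (this is the equality claim in each lemma, each a consequence of preservation of indegrees by composition). Replacing one element $D_i$ of the component multiset by finitely many elements each strictly below it in $\gg$ yields a multiset strictly smaller in $\gg^{mul}$; since $\gg$ is well-founded, so is $\gg^{mul}$, and hence the rewriting terminates. A terminal expression has no non-atomic component, so by the exhaustiveness of the case analysis all its components are atomic.

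Finally I would tie the denotations together: since every single transformation step preserves the drag denoted by the expression, an easy induction on the length of the transformation sequence shows that the terminal atomic expression evaluates to the same drag $D$ we started with. This gives a drag expression built from atomic drags by sum and product whose evaluation yields $D$, as required. The only delicate point — which I would flag as the main obstacle — is the first step above, namely verifying that the four cases are genuinely exhaustive for arbitrary non-atomic components; in particular one must be careful that a flat connected drag which is neither an atomic vertex nor an atomic set of sprouts always exhibits the shared-or-shared-label structure demanded by Lemma~\ref{l:decomp2}, and that the structural facts (every non-flat connected drag having an internal predecessor pair) are invoked correctly. All the arithmetic of roots, indegrees, and the per-lemma ordering decreases is routine and already discharged in the lemmas, so it need not be redone here.
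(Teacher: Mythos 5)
Your proposal is correct and follows essentially the same route as the paper: exhaustive classification of non-atomic components into the four cases handled by Lemmas~\ref{l:decomp}, \ref{l:decomp1}, \ref{l:decompl}, and \ref{l:decomp2}, termination via the well-founded multiset order $\gg^{mul}$ on drag expressions (the paper phrases this as monotonicity of the order under component replacement), and preservation of the denoted drag by each lemma's equality claim. The exhaustiveness point you flag as delicate is indeed the crux, and you actually spell it out in more detail than the paper does.
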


\begin{proof}
By its definition, the order on drag expressions is monotonic: replacing in $E$ a drag component $D$ by a drag expression $D'$  all of whose components are strictly smaller than $D$ yields a strictly smaller drag expression $E'$.

Since every non-atomic dag is either made of several distinct connected components, or is a non-flat connected component, or is a flat connected component that is not yet atomic, all cases have been taken care of. Therefore, by applying Lemmas \ref{l:decomp}, \ref{l:decomp1}, \ref{l:decompl}, and \ref{l:decomp2} for as long as possible---termination being ensured by the well-founded order on drag expressions, we get a drag decomposition into atomic components.
\end{proof}

A drag can therefore be defined by induction from atomic pieces glued together by appropriate compositions, which gives rise to an (non-unique) algebraic notation for drags. 

\begin{exa}

Let $f$ be of arity 2 and $h$ of arity 1. The connected drag $D$
with vertices $f_1, f_2$ labeled $f$, vertices $h_1, h_2, h_3$ labeled $h$, and edges
$\edgen(f_1, 1, f_2), \edgen(f_1, 2, h_2), \edgen(f_2, 1, h_1), \edgen(f_2,2, h_2), \edgen(h_1,1,h_1), \edgen(h_2, 1, h_3), \edgen(h_3,1,f_2)$ has as its drag expression
\[
f_2^{[2]}(x_1,x_2) \otimes_{\wir{x_1}{h_1}, \wir{x_2}{h_2}, \wir{x_3}{f_2}}
\left(f_1(x_3, h_2^{[1]}(h_3(x_3))) \oplus h_1^{[1]}(\textsc{self})\right)
\]
where $x_3$ denotes a shared sprout,
obtained by applying successively Lemma~\ref{l:decomp1} to the vertex $f_2$ and then Lemma~\ref{l:decomp} to the resultant drag. We continue now with the right-hand side argument of $\otimes$. Applying first
Lemma~\ref{l:decomp1} at vertex $h_2$,  we get:
\[
(h_2^{[2]}(x_4) \otimes_{\wir{x_4}{h_3},\wir{x_5}{h_2}}
  (f_1(x_3,x_5) \quad h_3^{[1]}(x_3))) \oplus h_1^{[1]}(\textsc{self})
\]
Then applying Lemma~\ref{l:decomp2} to the flat sub-expression sharing sprout $s_3$, which is the right-hand side argument of $\otimes$, yields the drag expression:
\[
(h_2^{[2]}(x_4) \otimes_{\wir{x_4}{h_3}\wir{x_5}{h_2}}
  ((f_1(x_6,x_5) \otimes_{\wir{x_6}{x_3},\wir{x_7}{x_3}} h_3^{[1]}(x_7)))) \oplus h_1^{[1]}(\textsc{self}))
\]
And now applying Lemma~\ref{l:decompl} to the rightmost subdrag, being a loop, gives
\[
(h_2^{[2]}(x_4) \otimes_{\wir{x_4}{h_3},\wir{x_5}{h_2}}
  ((f_1(x_6,x_5) \otimes_{\wir{x_6}{x_3},\wir{x_7}{x_3}} h_3^{[1]}(x_7))))
  \oplus (h_1^{[2]}(x_6) 
  \otimes_{\wir{x_6}{x_8}\wir{x_8}{h_1}} x_8^{[1]}
\]
a decomposition with no non-atomic components left. 
Putting the pieces together, we get the following drag decomposition of the starting drag:
\[
\begin{array}{lll}
f_2^{[2]}(x_1,x_2) &\otimes_{\wir{x_1}{h_1}, \wir{x_2}{h_2}, \wir{x_3}{f_2}}\\
&\qquad \big(h_2^{[2]}(x_4) \otimes_{\wir{x_4}{h_3},\wir{x_5}{h_2}}
\quad((f_1(x_6,x_5) \otimes_{\wir{x_6}{x_3},\wir{x_7}{x_3}}\quad h_3^{[1]}(x_7)))\big)\\
&\qquad\oplus \\
&\quad\qquad\; (h_1^{[2]}(x_6) \otimes_{\wir{x_6}{x_8}\wir{x_8}{h_1}} \quad x_8^{[1]})\big)
\end{array}
\]
The following decomposition of the same drag
\[
\begin{array}{llll}
f_1(x_1,x_2) &\otimes_{\wir{x_1}{f_2}, \wir{x_2}{h_2}}\\
&\big(f_2^{[2]}(x_3,x_4) &\otimes_{\wir{x_3}{h_1},\wir{x_4}{h_2},\wir{x_6}{f_2}} \\ 
&\qquad\big(h_2^{[2]}(x_5) \otimes_{\wir{x_5}{h_3}} h_3^{[1]}(x_6) \big) 
&\oplus\ \  \big( h_1^{[1]}(x_7) \otimes_{\wir{x_7}{x_8},\wir{x_8}{h_1}} (x_8)^{[1]}\big) \big)
\end{array}
\]
cannot be obtained by using our lemmas, since we are missing an analog of Lemma~\ref{l:decomp1} applying to a vertex without ancestors of a non-flat drag. This additional decomposition lemma can be surmised by the reader;
it would be needed in case we were interested in all possible decompositions of a given drag.
Note that we end up here with a decomposition using again 8 fresh sprouts. 
This is no surprise: one sprout is needed for each incoming edge in the original drag, plus one for each loop, since a loop decomposition requires two wires.
The reader can verify that all switchboards used in these decompositions are well behaved and that the result is indeed the drag $D$ in both cases.
\qed\end{exa}

\section{Algebra of Drags}
\label{s:alg}
Once equipped with sum and product, drags
enjoy a very rich algebraic structure which is known to be suitable for expressing distributed computations~\cite{DBLP:journals/iandc/MeseguerM90,DBLP:conf/birthday/MeseguerMS97}.

\begin{lem}
\label{l:sumac}
Drag sum is associative, commutative, idempotent, and has the empty drag as an identity element. 
\end{lem}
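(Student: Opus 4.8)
The plan is to reduce all four claims to a single structural description of how $\oplus$ acts on each component of a drag, after which only associativity requires real work. The key observation I would record first is that, for compatible $D=\langle V,R,L,X,S\rangle$ and $D'=\langle V',R',L',X',S'\rangle$, Definition~\ref{d:union} can be rewritten, for every vertex $v$, as
\[
R''(v)=\npred(v)-|\{\edgen(u,i,v)\in X\cup X'\}| ,
\]
where $\npred(v)$ is the common indegree of $v$ (equal in $D$ and $D'$ at shared vertices by compatibility, and equal to the ambient value at non-shared ones). Indeed $|\{\edgen(u,i,v)\in X\cup X'\}|=\pred(v,D)+|\{\edgen(u,i,v)\in X'\setminus X\}|$ and $\npred(v)=\pred(v,D)+R(v)$, so the right-hand side collapses to $R(v)-|\{\edgen(u,i,v)\in X'\setminus X\}|=R''(v)$; moreover, by Lemma~\ref{l:ism}(1), the indegree of $D\oplus D'$ at $v$ is again $\npred(v)$. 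In words: the roots of a sum are exactly the deficit between the preserved indegree and the number of incoming edges in the unified successor relation.

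With this formula in hand, commutativity, idempotency, and the identity are immediate. The vertex set $V\cup V'$, the labeling $L\cup L'$, the successor relation $X\cup X'$, and the sprout set $S\cup S'$ are symmetric set-unions, and the root formula above is manifestly symmetric in $D$ and $D'$ (this is precisely the double equality already asserted in Definition~\ref{d:union}); since compatibility itself is a symmetric relation, $D\oplus D'=D'\oplus D$. For idempotency, $D$ is trivially compatible with itself, and as $X\cup X=X$ no edge is subtracted, whence $R''=R$ and $D\oplus D=D$. For the identity, the empty drag $\varnothing$ contributes no vertex, edge, label, or sprout and subtracts nothing from any $R(v)$, so $D\oplus\varnothing=D$, and $\varnothing\oplus D=D$ follows by commutativity.

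Associativity is the substantive case. Writing $X_i,R_i$ for the components of $D_i$, I would apply the characterization twice: the root multiplicity of a vertex $v$ in $(D_1\oplus D_2)\oplus D_3$ is $\npred(v)-|\{\edgen(u,i,v)\in (X_1\cup X_2)\cup X_3\}|$, while in $D_1\oplus(D_2\oplus D_3)$ it is $\npred(v)-|\{\edgen(u,i,v)\in X_1\cup(X_2\cup X_3)\}|$. The common indegree $\npred(v)$ is independent of the bracketing, and union of successor relations is associative, so the two root multisets coincide; the remaining components $V_1\cup V_2\cup V_3$, $L_1\cup L_2\cup L_3$, $X_1\cup X_2\cup X_3$, and $S_1\cup S_2\cup S_3$ are associative for the same reason, giving equality of the two iterated sums.

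The main obstacle is not this algebra but \emph{definedness}: $\oplus$ is a partial operation, and pairwise compatibility of $D_1,D_2,D_3$ does not by itself make the bracketed sums compatible. A three-fan example makes this concrete: let each $D_i$ consist of one internal vertex with a single edge into a common sink $v$ that carries one root in each drag, so $\npred(v)=2$ throughout; every pairwise union then has only $2$ incoming edges at $v$ ($\le 2$), yet the triple union has $3$, which would force a negative $R''(v)$. Hence I would state associativity under the hypothesis that the drags are \emph{mutually} compatible, i.e.\ $\npred(v)\ge |\{\edgen(u,i,v)\in X_1\cup X_2\cup X_3\}|$ at every shared vertex together with the closure and label conditions, and verify that this hypothesis is symmetric in the bracketing so that the left side is defined exactly when the right side is. Checking that the compatibility premises of the two iterated sums genuinely coincide — rather than verifying the final equation — is the step that demands the most care.
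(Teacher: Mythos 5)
Your proof is correct, and it is considerably more thorough than the paper's, which dispatches the entire lemma in one sentence ("Noting that a drag is compatible with itself, all these properties are straightforward")---the self-compatibility remark there covers exactly your idempotency case and nothing more. Your indegree-deficit reformulation of the roots, $R''(v)=\In(v)-|\{\edgen(u,i,v)\in X\cup X'\}|$, is a clean repackaging of Definition~\ref{d:union} together with the indegree preservation of Lemma~\ref{l:ism}(1); it turns every component of the sum into a plain (symmetric, associative, idempotent) union, so that commutativity, idempotency, the identity, and the equational half of associativity all become one-line computations. What you add that the paper does not even acknowledge is the definedness issue: $\oplus$ is a partial operation, and your three-fan counterexample is valid---each pairwise sum satisfies the indegree condition ($2\ge 2$ at the shared sink), while the triple union has three incoming edges against indegree $2$, so both bracketings are undefined. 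Associativity in the lemma must therefore be read as a Kleene-style equality (one side is defined iff the other is, and then the two coincide), and the obligation you flag as the delicate step does go through: closure and label conditions for the iterated sums restrict to and are implied by the pairwise ones because $(V_1\cup V_2)\cap V_3=(V_1\cap V_3)\cup(V_2\cap V_3)$, and the indegree inequalities transfer in both directions precisely because sum preserves indegrees, so definedness of either bracketing is equivalent to your symmetric mutual-compatibility condition. Writing that equivalence out explicitly would finish the argument; as it stands, you have correctly isolated it as the only nontrivial point---one the paper's proof silently skips.
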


Noting that a drag is compatible with itself, all these properties are straightforward. We proceed with product:

\begin{lem}
\label{l:prodac}
Drag product is associative, commutative, and has an identity element, the empty drag. Other identities are isolated sprouts provided they belong to the domain of the switchboard, and do not belong to its image. 
\end{lem}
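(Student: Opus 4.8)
The plan is to reduce each law to a property of wiring and of the sum, exploiting the defining identity $D\otimes_\xi D'=\wiring{(D\oplus D')}{\xi}$ (Definition~\ref{d:composition}). Commutativity and the empty-drag identity come for free. For commutativity, a switchboard for $(D,D')$ and the one for $(D',D)$ are literally the same wire set $\xi_D\cup\xi_{D'}$, and $D\oplus D'=D'\oplus D$ by Lemma~\ref{l:sumac}; wiring the same wires on the same drag gives $D\otimes_\xi D'=D'\otimes_\xi D$. For the empty drag, the only switchboard between $D$ and $\varnothing$ is empty (there is neither sprout nor root to connect), $D\oplus\varnothing=D$ again by Lemma~\ref{l:sumac}, and $\wiring{D}{\varnothing}=D$ by Definition~\ref{d:wiring}.

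For the isolated-sprout identities I would compute directly from the elementary wiring step (Definition~\ref{d:gwiring}). Let $I$ be an isolated sprout $s$ and $\xi$ a switchboard with $s\in\Dom{\xi}$ and $s\notin\Ima{\xi}$, so that $\xi$ reduces to a single wire $\wir{s}{r}$ with $r\in\roots{D}$ and no wire targeting $s$. As $s$ is isolated and nothing is wired into it, $\pred(s)=0$ in $D\oplus I$; the step $\gwiring{(D\oplus I)}{\wir{s}{r}}$ therefore removes $s$ without redirecting any edge, and the new multiplicity of $r$ is $R(r)-\pred(s)=R(r)$. No other vertex, edge, or root is affected (any roots borne by $s$ simply vanish with it, which does not touch $D$), so $D\otimes_\xi I=D$.

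The substantial case is associativity, which I would obtain by showing both parenthesisations compute the wiring of one combined wire set over the triple sum. Fix pairwise disjoint $D_1,D_2,D_3$ and a coherent, well-behaved wire set $W$ of $D_1\oplus D_2\oplus D_3$ (unambiguous by associativity of $\oplus$, Lemma~\ref{l:sumac}), none of whose wires is internal to a single component, so that every restriction is again a switchboard. Split $W$ into $W_{12}$, the wires with both endpoints in $V_1\cup V_2$, and $W_3=W\setminus W_{12}$; subsets of well-behaved coherent sets remain well-behaved and coherent, so $D_1\otimes_{W_{12}}D_2$ is defined and $W_3$ is a switchboard between it and $D_3$. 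The claim to establish is
\[
(D_1\otimes_{W_{12}}D_2)\otimes_{W_3}D_3=\wiring{(D_1\oplus D_2\oplus D_3)}{W},
\]
through two auxiliary facts. First, wiring commutes with a disjoint summand: if all wires of $W_{12}$ live inside $A=D_1\oplus D_2$ and $B=D_3$ is disjoint from $A$, then $\wiring{(A\oplus B)}{W_{12}}=\wiring{A}{W_{12}}\oplus B$, proved by induction on $|W_{12}|$, since an elementary step leaves the vertices, edges, and roots of $B$ untouched and the sum of disjoint drags is juxtaposition. Second, staged wiring equals combined wiring: the crucial observation is that every origin of a $W_{12}$ wire lies in $\Dom{W_{12}}$ and is hence removed in $D_1\otimes_{W_{12}}D_2$, so it is the root of nothing and can be the target of no $W_3$ wire; consequently a $W_{12}$ wire is $>_W$-maximal exactly when it is $>_{W_{12}}$-maximal, and Lemma~\ref{l:wd} (order-independence of wiring) lets us peel off all of $W_{12}$ before any wire of $W_3$, identifying the staged computation with $\wiring{(\cdot)}{W}$. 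The symmetric split, $W_{23}$ the wires inside $D_2\oplus D_3$ and $W_1=W\setminus W_{23}$, yields $D_1\otimes_{W_1}(D_2\otimes_{W_{23}}D_3)=\wiring{(D_1\oplus D_2\oplus D_3)}{W}$, whence the two sides agree.

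I expect the order-independence bookkeeping to be the main obstacle. One must check that the sub-switchboards $W_{12},W_3$ (and $W_{23},W_1$) inherit well-behavedness (from the remark that subsets of well-behaved sets are well-behaved) and coherence (from the coherence-decomposition lemma), and, above all, verify the ``removed sprouts cannot be targets'' observation carefully at each recursion so that a $W_{12}$-maximal origin stays $>$-maximal in the whole remaining set; only then does Lemma~\ref{l:wd} license replacing the combined wiring by the staged one. Throughout, indegree preservation under wiring (Lemma~\ref{l:idpw}) is what makes the root multiplicities on the two sides coincide.
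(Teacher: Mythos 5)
Your proposal follows the paper's own strategy in all four parts: commutativity from the multiset root structure and the symmetry of the wire set, the empty drag via the forced empty switchboard and $\wiring{D}{\varnothing}=D$, the isolated-sprout identity by exactly the "straightforward calculation" the paper alludes to (with $\pred(s)=0$, the elementary step deletes $s$ and leaves $R(r)$ intact), and associativity by identifying both parenthesizations with a single wiring of the triple sum, with Lemma~\ref{l:wd} doing the order-independence work. The only substantive difference is directional: the paper starts from the staged data $(C\otimes_\xi D)\otimes_\zeta E$, proves that $\xi\cup\zeta$ is well-behaved for $C\oplus D\oplus E$, and then reclassifies wires into $\xi'$ (origin a sprout of $C$, or target a root of $C$) and $\zeta'$ (wires between $D$ and $E$); you instead fix a combined $W$ on $D_1\oplus D_2\oplus D_3$ and split it into $W_{12}$ and $W_3$.

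Your direction hides one gap that the paper's direction avoids by construction. Since chained wires are permitted, an arbitrary combined well-behaved $W$ may contain a wire whose origin is a sprout of $D_3$ and whose target is a sprout in $\Dom{W_{12}}$; your split places it in $W_3$, yet its target no longer exists in $D_1\otimes_{W_{12}}D_2$, so $W_3$ is not a wire set for the staged composition at all. Your justification that an origin of a $W_{12}$ wire "can be the target of no $W_3$ wire" presupposes the staged computation rather than deriving the property from $W$. In the paper's direction this cannot happen, because $\zeta$ is by hypothesis a switchboard for $(C\otimes_\xi D, E)$ and hence only targets surviving roots; the combined sets relevant to associativity are exactly those arising this way. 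The fix is either to add this as an explicit hypothesis on $W$ (no wire targets a sprout of $\Dom{W_{12}}$, resp.\ $\Dom{W_{23}}$) or to argue in the paper's staged-to-combined direction. A parallel caveat applies to your appeal to coherence of restrictions: fullness can fail for a subset when two sprouts with the same label are wired into different components, so "subsets remain coherent" needs the same restriction on $W$ --- though, to be fair, the paper's closing "routine calculations" glosses over comparable bookkeeping.
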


\begin{proof}
Commutativity is straightforward here, because of the root structure as a multiset. The empty drag is again an identity for any drag $D$, since $\xi$ must be empty, and therefore
$D\otimes_\varnothing \varnothing= D\oplus \varnothing = D$. 
Let now $s^{[n]}$ be a drag reduced to a sprout which is not a vertex of $D$, and $\wir{s}{v}$ be a wire for $(D,s)$. By assumption, $s$ does not belong to the image of $\xi_D$.
Then, a straightforward calculation shows that \smash{$D \otimes_{\wir{s}{v}}s^{[n]}=D$}.

We are left with associativity. Consider the drag
$(C\otimes_\xi D)\otimes_\zeta E$. 
We prove first that $\xi\cup\zeta$ is a well-behaved set of wires for $C\oplus D\oplus E$. By the definition of a switchboard,
$\xi$ and $\zeta$ are well-behaved sets of wires for $C\oplus D$ and $(C\otimes_\xi D)\oplus E$, respectively. Since $\zeta$ maps remaining sprouts of $C\oplus D$ after wiring with $\xi$ to roots of $E$, and sprouts of $E$ to remaining roots of 
$C\oplus D$ after wiring with $\xi$, $\xi\oplus\zeta$ is well-defined and satisfies functionality, coherence and injectivity. It is also well-founded, since chains of sprouts for $\xi\cup\zeta$ are either chains of sprouts for $\xi$ or for $\zeta$ which both satisfy well-foundedness.

We  construct now two new sets of wires, $\zeta'$ for $D\oplus E$, and $\xi'$ for $C\oplus(D\otimes_{\zeta'} E)$
such that $\xi'\cup\zeta'=\xi\cup\zeta$, showing that
$\xi'\cup\zeta'$ is a well-behaved set of wires for $C\oplus D\oplus E$, which implies that $\xi'$ and $\zeta'$ are well-behaved sets of wires for $C\oplus(D\otimes_{\zeta'} E)$ and $D\oplus E$, respectively. We now classify each wire $\wir{s}{r}\in\xi\cup\zeta$ as a wire of $\xi'$ or $\zeta'$:
\begin{enumerate}
    \item $s\in\spr{C} \st  \wir{s}{r}\in\xi'$;
    \item $r\in\roots{C} \st  \wir{s}{r}\in\xi'$;
    \item $s\in\spr{D}\mbox{ and } r\in\roots{E} \st  \wir{s}{r}\in\zeta'$;
    \item $s\in\spr{E}\mbox{ and } r\in\roots{D} \st  \wir{s}{r}\in\zeta'$.
\end{enumerate}
The equality between both obtained drags now follows from routine calculations.
\end{proof}

Finally, we consider the distributivity law, which will be used later when investigating the categorical structure of drags.
Omitting switchboards,   this is an identity of the form $C\otimes (D\oplus E)=(C\otimes D)\oplus (C\otimes E)$. 
There are two obstacles: 
The first is that  the sums $D\oplus E$ and $(C\otimes D) \oplus (C\otimes E)$ must make sense, which requires that compatibility of drags $D$ and $E$ is preserved by their product with $C$. This is not true in general, but will require that all wires whose source sprout is in $C$ satisfy some \emph{safety} condition.
The second obstacle is that wirings between $D$ and $E$ going through $C$ in $C\otimes (D\oplus E)$ cannot be reproduced, in general, in $(C\otimes D)\oplus (C\otimes E)$, unless wiring again the result, which is not expected from a distributivity law whose r\^ole is to transform a product into a sum. 
We will therefore need to strengthen the safety condition for that purpose. 

\begin{defi}
\label{d:safe}
Let $D,E$ be compatible drags with shared subdrag $G$, $C$ a drag disjoint from $D\oplus E$, $\xi$ a switchboard for $(D\oplus E,C)$, $H$ the subgraph of $C$ generated by all vertices $v\in\Vertex{C}$ such that $\wir{s}{v}\in\xi_{G}$, $C'$ the context drag of $H$ in $C$, and $D', E'$ the context drags of $G$ in $D,E$, respectively. We say that switchboard $\xi$ is \emph{safe} if 
\begin{enumerate}
\item
$\forall \wir{t}{v}\in \xi_H \st v\in\Vertex{G}$;
\item
$\forall\wir{s}{u}\in \xi_D \,\forall\wir{t}{u}\in \xi_E \mbox{ such that } s,t\not\in G \st v\in\Vertex{H}$
\item 
$\forall\wir{s}{t}, \wir{t}{v}\in\xi \mbox{ such that } s\in \spr{D'}\mbox{($s\in\spr{E'}$, respectively)} \;\st v\notin\Vertex{E'} \mbox{($v\notin\Vertex{D'}$, respectively)}$.
\end{enumerate}
\end{defi}

We can now show distributivity under the safety assumption:
\begin{lem}[Distributivity]
\label{l:distr}
Let $D,E$ be compatible drags with shared subdrag $G$, $C$ a drag disjoint from $D\oplus E$, and $\xi$ a safe switchboard for $(D\oplus E,C)$. Then,
drags $D\otimes_\xi C$ and $E\otimes_\xi C$ are compatible with shared subdrag  $G\otimes_\xi C$, and
\[(D\oplus E) \otimes_\xi C = (D\otimes_\xi C)\oplus (E\otimes_\xi C).\]
\end{lem}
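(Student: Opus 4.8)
The plan is to reduce every product to a wiring of the common sum $D \oplus E \oplus C$ and then read off each composite through the explicit description of Lemma~\ref{l:prod}. First I would check that the three sub-products are well defined: the restrictions of $\xi$ to the wires between $D$ and $C$, between $E$ and $C$, and between $G$ and $C$ are well-behaved, since every subset of a well-behaved set of wires is well-behaved (Lemma~\ref{l:wrec}), and they remain coherent because coherence of $\xi$ is inherited by these restrictions. This makes $D \otimes_\xi C$, $E \otimes_\xi C$, and $G \otimes_\xi C$ meaningful drags to compare.

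Next I would settle the compatibility and shared-subdrag claims together. Using Lemma~\ref{l:prod}(1), the vertex set of $(D \oplus E) \otimes_\xi C$ is the union of those of $D \otimes_\xi C$ and $E \otimes_\xi C$, a sprout being deleted exactly when it lies in the relevant switchboard domain; since $C$ occurs in both composites, their common vertices are precisely $\Vertex{G \otimes_\xi C}$, with safety conditions~(1) and~(2) guaranteeing that the overlap on the $C$-side is confined to the subdrag $H$ and hence stays inside $G \otimes_\xi C$. For the three compatibility requirements: labels coincide because shared vertices come from the same source drags; indegrees agree at every shared vertex because wiring preserves indegrees (Lemma~\ref{l:idpw}) while $D$ and $E$ are already compatible and $C$ is common to both; and closure of the shared vertices under both successor relations follows once one knows the overlap is exactly the accessibility-closed drag $G \otimes_\xi C$. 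This simultaneously yields compatibility and identifies the shared subdrag as $G \otimes_\xi C$.

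I would then prove the equality $(D \oplus E)\otimes_\xi C = (D \otimes_\xi C)\oplus(E \otimes_\xi C)$ by comparing the two drags vertex-by-vertex, edge-by-edge, and root-by-root via Lemma~\ref{l:prod}. Vertices and labels are handled by the previous paragraph. Each edge of either side is the image of an edge of $D$, $E$, or $C$ whose sprout head has been redirected to its resolution $\target{\cdot}$, so the whole equality comes down to showing that the resolution of a sprout computed globally in $D \oplus E \oplus C$ coincides with the one computed locally in $D \oplus C$ or in $E \oplus C$. The root multisets then match automatically, since by the remark following Lemma~\ref{l:prod} the number of roots at each surviving vertex is forced by indegree preservation.

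The main obstacle is exactly this agreement of resolutions. A priori a wiring chain of $\xi$ could begin at a sprout of the context $\cont{D}{G}$, pass through $C$, and terminate at a root of $\cont{E}{G}$ (or symmetrically), in which case the global resolution would be reproducible by neither local composition and the sum would genuinely differ from the product. This is precisely what safety condition~(3) rules out, while conditions~(1) and~(2) force every point at which a chain enters and leaves $C$ to lie in $H$, hence inside the shared $G \otimes_\xi C$. The heart of the proof is therefore a "no-crossing" property, established by induction on chain length using well-foundedness of $>_\xi$: every maximal wiring chain of $\xi$ stays within $D \oplus C$, within $E \oplus C$, or within the shared $G \oplus C$. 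Granting this, local and global resolutions coincide and all remaining verifications reduce to routine bookkeeping through Lemma~\ref{l:prod}.
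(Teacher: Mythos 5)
Your proposal is correct and follows essentially the same route as the paper's own proof: safety conditions (1)--(2) are used to identify the shared subdrag and obtain compatibility of $D\otimes_\xi C$ and $E\otimes_\xi C$, while safety condition (3) is exactly what prevents wiring chains from crossing between the contexts $\cont{D}{G}$ and $\cont{E}{G}$ through $C$, so that the edges of the global product are precisely those of the sum. Your explicit rendering of that last step---agreement of locally and globally computed resolutions, established by a no-crossing induction on wiring chains via Lemma~\ref{l:prod}---is a faithful, more detailed expansion of the paper's terse claim that no edge is created between the two contexts except inside $G$.
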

\begin{proof}
The definition of $H$ and safety condition (1) ensure that $G\otimes_\xi H$ is shared by drags $D\otimes_\xi C$ and $E\otimes_\xi C$, while safety condition (2) ensures that no other vertex is shared, hence $G\otimes_\xi H$ is exactly their shared subgraph, implying compatibility.

Safety condition (3) ensures that no edge is created between $C'$ and $D'$ 
by the product $(D\oplus E)\otimes_\xi C$ unless it is an edge in $G$ created by the product $G\otimes_\xi H$. Therefore, all edges of $(D\oplus E)\otimes_\xi C$ will be edges of $(D\otimes_\xi C)\oplus (E\otimes_\xi C)$, the converse being trivial.
\end{proof}
Since $\xi$ must map sprouts of $H$ to vertices of $G$ by condition (1) of Definition~\ref{d:safe}, then condition (3) is trivially true in case $C=H$. This remark will later allow us to use distributivity in order to characterize pushout objects by means of product and sum of drags in the proof of the important Lemma~\ref{l:pushout}.


\section{Sharing Equivalence}
\label{ss:se}

Next, we define and study the equivalence on drags defined by sharing subdrags, which will play an important r\^ole for defining rewriting. We will share subdrags that are equimorphic up to their roots, which do not play a r\^ole here.

\begin{defi}[Sharing]
\label{d:eqshare}
A drag is \emph{maximally shared} if no two distinct bare subdrags are equimorphic.
\end{defi}

Note that a maximally shared drag must be linear.

First, we define the \emph{maximally shared form} $\nf{D}$ of a drag $D$ by iterating the following \emph{sharing} transformation as long as necessary:
\begin{enumerate}
\item 
Assume $E_1, \ldots, E_n, F$ are all pairwise distinct maximally shared subdrags of $D$ whose bare versions are equimorphic to $\bare{F}$, called the \emph{class} of $F$ in $D$, and let $C_F$ be the \emph{context} of $\overline{F}=E_1\oplus\cdots\oplus E_n\oplus F$.  By Lemma~\ref{l:rec}, $D=C_F\otimes_\xi \overline{F}$ for some $\xi$. The class of $F$ is said to be \emph{trivial} if it consists of the single drag $F$ only ($n=0$), and \emph{nontrivial} otherwise ($n>0$).
\end{enumerate}

\begin{claim}
Given a drag $D$, assume that the bare version of some drag in a class $\overline{E}$ is equimorphic to the bare version of some strict subdrag of a drag in a class $\overline{F}$. Then, all drags in the class $\overline{F}$ contain at least one subdrag belonging to the class $\overline{E}$. This shows that the well-founded subdrag order lifts to classes of drags. As a consequence, some classes are minimal in this order.
\end{claim}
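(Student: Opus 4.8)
The plan is to leverage the two facts that drive the definition of a class: every drag in a class has, by construction, a bare version equimorphic to that of the representative, and a bare equimorphism carries subdrags to subdrags. So I would first isolate the following structural observation and then transport the given strict subdrag across the class.

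The structural observation I would prove is: if $\bare{G}\equiv_{\omicron}\bare{G'}$ is a bare equimorphism, then $\omicron$ restricts to a bare equimorphism between any bare subdrag of $G$ and a corresponding bare subdrag of $G'$. This is because a subdrag $\subd{G}{W}=\rest{G}{X^*(W)}$ is determined by the accessibility-closed vertex set $X^*(W)$; since $\omicron$ is a bijection preserving the successor function, it sends $X^*(W)$ to $(X')^*(\omicron(W))$, which is again accessibility-closed, hence the vertex set of a subdrag $H'$ of $G'$. As accessibility-closed sets require no new sprouts (Definition~\ref{d:subd}) and bare drags carry no roots, the restriction of $\omicron$ to these sets preserves every label and the entire successor structure, so it is a bare equimorphism $\bare{H}\equiv\bare{H'}$; and since $\omicron$ is a bijection it preserves vertex counts, so $H'$ is strict in $G'$ exactly when $H$ is strict in $G$. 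I expect this observation to be the main obstacle, as one must check carefully that the image set is accessibility-closed and that the restricted map genuinely remains an equimorphism of the bare subdrags.

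With this in hand the first assertion is immediate. Let $G_F\in\overline{F}$ carry a strict subdrag $H$ with $\bare{H}\equiv\bare{G_E}$ for some $G_E\in\overline{E}$, and let $G'_F$ be an arbitrary member of $\overline{F}$. Since both $\bare{G_F}$ and $\bare{G'_F}$ are equimorphic to $\bare{F}$, they are equimorphic to each other, so the observation yields a strict subdrag $H'$ of $G'_F$ with $\bare{H'}\equiv\bare{H}$. Because bare equimorphism is an equivalence relation (bijective label- and successor-preserving maps compose and invert), $\bare{H'}\equiv\bare{H}\equiv\bare{G_E}\equiv\bare{E}$. As $G'_F$ is accessibility-closed in $D$, the subdrag $H'$ of $G'_F$ is also a subdrag of $D$, and being a strict subdrag of the maximally shared drag $G'_F$ it is itself maximally shared; hence $H'$ is one of the drags enumerated in the class $\overline{E}$. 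Thus every drag in $\overline{F}$ contains a subdrag belonging to $\overline{E}$.

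Finally, for the lifting and well-foundedness, I would define $\overline{E}\prec\overline{F}$ to mean that the drags of $\overline{F}$ contain a strict subdrag from $\overline{E}$; the uniformity just established makes this independent of the chosen representative, so $\prec$ is a well-defined relation on classes. Assigning to each class the number of vertices of its common bare representative (well defined, since equimorphic drags have equally many vertices) gives a measure into $\Nat$ that strictly decreases along $\prec$, because a strict subdrag always has strictly fewer vertices. Hence $\prec$ is well-founded, and, as every well-founded relation admits minimal elements, some classes are minimal.
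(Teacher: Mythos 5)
The paper asserts this Claim without giving any proof, so there is no official argument to compare against; your proposal is correct and supplies what is evidently the intended argument. The key steps all check out: a bare equimorphism $\bare{G}\equiv_\omicron\bare{G'}$ maps the accessibility-closed set $X^*(W)$ bijectively onto $(X')^*(\omicron(W))$ (the equimorphism condition on edges is an ``iff'', so closure transfers in both directions), hence restricts to a bare equimorphism of subdrags preserving strictness; a subdrag of a member of $\overline{F}$ is genuinely a subdrag of $D$ (successor-closure is transitive, and the roots added by restriction even coincide, though they are irrelevant after baring); and maximal sharedness passes to subdrags, since subdrag generation inside $H'$ agrees with subdrag generation inside $G'_F$, so $H'$ meets every condition for membership in the class $\overline{E}$. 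Your vertex-count measure for well-foundedness of the lifted relation matches the paper's own one-line proof that the strict subdrag relation is well-founded, and representative-independence of the lifted relation is exactly what the first part establishes.
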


\begin{enumerate}
\setcounter{enumi}{1}
\item
Assuming $D$ is not maximally shared, there exists at least one minimal nontrivial class $\overline{F}$ in $D$. 
Let, therefore, $\omicron_i:E_i\hookrightarrow F$ be the equimorphism from $E_i$ to $F$ and $\omicron=\bigcup_i\omicron_i$. We define $\xi'= \omicron\circ \xi$ and
$D'=C_F\otimes_{\xi'} F$. In words, $D'$ is obtained from $D$ by
replacing any edge of $D$ from an internal vertex $u$ of $C$ to a vertex $v$ of some $E_i$ by an edge from $u$ to $\omicron_i(v)$, and transfer any root of a vertex $v$ of some $E_j$ to $\omicron_j(v)$, resulting in the drag $D'$, which contains a unique element of the class of $F$, $F$ itself. Note that this step does not create any new class of drags whose bare versions are equimorphic.
The choice of $F$ in its class implies that the maximally shared form of $D$ will be defined up to equimorphism.
\end{enumerate}


\begin{lem}
\label{l:msf}
The maximally shared form $\nf{D}$ of a drag $D$ exists and is unique up to equimorphism.
\end{lem}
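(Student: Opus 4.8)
The plan is to split the statement into its two halves, \textbf{existence} and \textbf{uniqueness up to equimorphism}, and to treat the construction preceding the lemma as a rewrite relation $\shr$ on drags: $D\shr D'$ whenever $D'$ is obtained from $D$ by applying the sharing step of Step~2 to some minimal nontrivial class $\overline{F}$. Existence is then a termination argument, and uniqueness a confluence-modulo-equimorphism argument resting on the Claim about minimal classes.

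For \textbf{existence}, I would use the number of vertices as a termination measure. A sharing step replaces the $n+1$ pairwise distinct members $E_1,\dots,E_n,F$ of a nontrivial class ($n\geq 1$) by the single representative $F$, producing $D'=C_F\otimes_{\xi'}F$ whose vertex set is $\Vertex{C_F}\cup\Vertex{F}$. Since each $E_i$ is bare-equimorphic to $F$, it has the same number of vertices as $F$; being distinct from $F$ it cannot be a strict subdrag of $F$ (equimorphic subdrags have equal size), so $E_i$ contributes at least one vertex lying outside $F$ that is deleted. Hence $|\Vertex{D'}|<|\Vertex{D}|$, and $\shr$ strictly decreases a natural number, so it terminates. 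It then suffices to note that a drag is a $\shr$-normal form exactly when it carries no nontrivial class, which by Definition~\ref{d:eqshare} is precisely maximal sharing; iterating $\shr$ from $D$ therefore yields a maximally shared drag $\nf{D}$.

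For \textbf{uniqueness up to equimorphism}, I would prove that $\shr$ is confluent modulo $\equiv$ and combine this with termination. The core is local confluence: given two one-step reductions $D\shr D_1$ and $D\shr D_2$ merging minimal nontrivial classes $\overline{F}$ and $\overline{G}$, I would join $D_1$ and $D_2$ up to $\equiv$. If $\overline{F}=\overline{G}$, the two results differ only in the choice of representative; since any two representatives are related by an equimorphism $\omicron$, and the reconnecting switchboards are $\omicron\circ\xi$ versus $\omicron'\circ\xi$, transporting along $\omicron'\circ\omicron^{-1}$ gives $D_1\equiv D_2$. If $\overline{F}\neq\overline{G}$, the Claim shows the two classes are incomparable in the well-founded order lifted to classes, so no member of one is a strict subdrag of a member of the other; merging $\overline{F}$ thus leaves the members of $\overline{G}$ intact as a class of $D_1$, and symmetrically, whence the two merges commute and $D_1\shr D'$, $D_2\shr D''$ with $D'\equiv D''$. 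The decisive point, flagged already in Step~2, is that a merge \emph{creates no new class}: a context subdrag redirected from some $E_i$ to $F$ was, before redirection, bare-equimorphic to the corresponding subdrag redirected to another $E_j$ (the $E_i$ being pairwise bare-equimorphic), so it already belonged to an existing class; hence no spurious joins appear and the set of nontrivial classes only loses the processed one. With local confluence modulo $\equiv$ and termination established, the standard Newman-style lemma for confluence modulo an equivalence yields that all $\shr$-normal forms of $D$ are equimorphic, i.e.\@ $\nf{D}$ is unique up to $\equiv$.

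The \textbf{main obstacle} I expect is the commutation step in local confluence when two distinct minimal classes are not vertex-disjoint: their members may share strictly smaller subdrags, and the two merges redirect overlapping families of context edges through the switchboards $\xi$. Making rigorous that these redirections commute---and that merging $\overline{F}$ genuinely preserves $\overline{G}$ as a class of $D_1$ without disturbing the equimorphisms $\omicron_i:E_i\hookrightarrow F$ witnessing $\overline{G}$---is the delicate part, and it is exactly where the Claim (incomparability of minimal classes) and the ``no new class'' observation do the real work. Representative-independence within a single class, by contrast, is routine once one observes that composing a switchboard with an equimorphism of its target drag preserves the product up to equimorphism.
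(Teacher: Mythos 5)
Your overall architecture---termination, commutation of sharing steps, a diamond-style conclusion---is exactly the paper's, and your existence half is sound: since subdrags of $D$ are determined by their vertex sets, each discarded $E_i$, being distinct from $F$ but of equal size, contributes at least one deleted vertex, so your vertex-count measure works (the paper instead counts nontrivial classes, which your ``no new class'' remark equally supports). The genuine gap is in the uniqueness half, at precisely the step you yourself flag as ``the delicate part'': you assert that merging two distinct minimal classes $\overline{F}$ and $\overline{G}$ commutes, but supply no mechanism. The Claim only gives incomparability of minimal classes in the lifted subdrag order; it does not by itself show that the two families of edge redirections commute when members of $\overline{F}$ and $\overline{G}$ overlap through shared subdrags, nor that the equimorphisms witnessing $\overline{G}$ survive the $\overline{F}$-merge intact. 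The missing idea, which is the actual content of the paper's proof, is algebraic: minimality of \emph{both} classes lets one factor $D = C\otimes_\xi(\overline{F}\oplus\overline{G})$ via reconstruction (Lemma~\ref{l:rec}), after which associativity and commutativity of product (Lemma~\ref{l:prodac}) allow the two merges to be performed in either order, landing on the \emph{identical} drag after one further step on each side---an exact one-step diamond, to which the Diamond Lemma applies directly.

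A second, related weakness is your appeal to ``the standard Newman-style lemma for confluence modulo an equivalence.'' Termination plus local confluence modulo $\equiv$ is not sufficient in that setting: in Huet's formulation one also needs local coherence of $\shr$ with the equivalence (if $D_1\equiv D_2$ and $D_1\shr D_1'$, then $D_2$ reduces to something equimorphic to $D_1'$), and you never verify this for equimorphism. The paper sidesteps the modulo framework entirely by obtaining syntactic equality of the two joined drags, confining the equimorphism slack to the single, harmless choice of representative inside a class (which is where the ``up to equimorphism'' in the statement of Lemma~\ref{l:msf} comes from, as noted already in step~2 of the construction). So: right skeleton and a correct, even pleasantly independent, termination argument, but the commutation step is asserted rather than proved, and the rewriting lemma you invoke to finish carries an unchecked hypothesis.
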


\begin{proof}
A sharing step strictly decreases  the number of nontrivial classes of the drag $D$. It follows that it terminates, and therefore maximally shared forms exist. We prove uniqueness by 
showing that any two different sharing steps commute, and then conclude by the Diamond Lemma~\cite{DBLP:journals/jsyml/Hindley69}.

Let $\overline{G},\overline{H}$ be two different minimal classes 
of equimorphic drags of $D$ that can be shared each in turn. 
By minimality of both classes, we can consider the context $C$ of $\overline{G}\oplus \overline{H}$ such that
    $D= C\otimes_\xi (\overline{G}\oplus \overline{H})= C\otimes_\xi (\overline{G}\otimes_\varnothing \overline{H})$. 
    Using associativity and commutativity of product, we get
    $D=(C\otimes_{\xi_{C,\overline{H}}}\overline{H})\otimes_{\xi_{C,\overline{G}}}\overline{G}= (C\otimes_{\xi_{C,\overline{G}}}\overline{G})\otimes_{\xi_{C,\overline{H}}}\overline{H}$, showing that  $(C\otimes_{\xi_{C,\overline{H}}}\overline{H})$ and 
    $(C\otimes_{\xi_{C,\overline{G}}}\overline{G})$ are the contexts in $D$ of $\overline{G}$ and $\overline{H}$, respectively.
    Let now \smash{$E= C\otimes_{\xi_{C,\overline{H}}}\overline{H})\otimes_{\xi_{C,G}} G$} and $F= (C\otimes_{\xi_{C,\overline{G}}}\overline{G})\otimes_{\xi_{C,H}} H$, obtained from $D$ by sharing classes $\overline{G}$ and $\overline{H}$, respectively. Using associativity and commutativity again, we can now share the classes \smash{$\overline{G}$} and \smash{$\overline{H}$} in $E$ and $F$, respectively. We get $E'= (C\otimes_{\xi_{E,H}} H)\otimes_{\xi_{C,G}} G$ and $F'= (C\otimes_{\xi_{C,G}} G)\otimes_{\xi_{C,H}} H$. Using associativity and commutativity again, we get $E'=(C\otimes_{\xi} (G\otimes_\varnothing H)= F'$.
\end{proof}

\begin{exa}
Figure~\ref{f:muld} shows two examples of drags that have the same maximally shared form. For the left drag, the two subdrags reduced to a vertex labeled $a$ are equimorphic. The maximally shared form is obtained in one step. For the right drag, two steps will be needed, as shown on the figure.
\qed\end{exa}

\begin{figure}[!t]
\setlength{\unitlength}{0.7cm} 
\hspace*{8mm}
\begin{picture}(17,4)(0,0)   
\thicklines
\put(-1,0){}
\put(-1,3){%
\begin{picture}(4,3.25)(0,0)
\put(-0.5,-0.4){$\Bigg($}
\put(0.75,0.5){\textcolor{black}{$f$}}
\put(0.95,0.){\textcolor{black}{$\searrow$}}
\put(0.45,0.){\textcolor{black}{$\swarrow$}}
\put(0.05,0.0){$\da$}
\put(0.05,-0.4){\textcolor{black}{$a$}}
\put(1.45,-0.4){\textcolor{black}{$k$}}
\put(1.4,-0.95){$\da$}
\put(1.6,-0.95){$\da$}
\put(1.55,-1.4){$a$}
\put(1.85,-0,4){$\Bigg)\quad\;=$}
\put(2.35,0.3){\vector(0,-1){1}}  
\end{picture}%
}

\put(1.75,3){
\begin{picture}(4,3.25)(0,0)
\put(1.45,0.5){\textcolor{black}{$f$}}
\put(1.45,0.){\textcolor{black}{$\da$}}
\put(1.45,-0.5){\textcolor{black}{$k$}}
\put(1.4,-0.95){$\da$}
\put(1.6,-0.95){$\da$}
\put(1.5,-1.4){$a$}
\put(1,-1.4){$\ra$}
\qbezier(1.4,0.35)(1,-0.4)(1.4,-1)
\put(1.4,-1){\vector(1,-3){0}}
\end{picture}
}

\put(7.5,3){
\begin{picture}(4,3.25)(0,0)
\put(-0.5,-0.4){$\Bigg($}
\put(0.75,0.5){\textcolor{black}{$f$}}
\put(0.95,0.){\textcolor{black}{$\searrow$}}
\put(0.45,0.){\textcolor{black}{$\swarrow$}}
\put(0.05,-0.4){\textcolor{black}{$a$}}
\put(0.5,-0.4){$\longleftarrow$}
\put(1.45,-0.4){\textcolor{black}{$k$}}
\put(1.5,-1){$\da$}
\put(1.45,-1.45){$a$}
\put(1.5,0.95){$\da$}
\put(1.45,0.5){$a$}
\put(1.85,-0,4){$\Bigg)\quad\;=$}
\put(2.35,0.3){\vector(0,-1){1}}  
\end{picture}
}

\put(11.75,3){
\begin{picture}(4,3.25)(0,0)
\put(-0.5,-0.4){$\Bigg($}
\put(0.75,0.5){\textcolor{black}{$f$}}
\put(0.95,0.){\textcolor{black}{$\searrow$}}
\put(0.45,0.){\textcolor{black}{$\swarrow$}}
\put(0.05,-0.4){\textcolor{black}{$a$}}
\put(0.5,-0.4){$\longleftarrow$}
\put(1.45,-0.4){\textcolor{black}{$k$}}
\put(1.4,-1){$\da$}
\put(1.45,-1.45){$a$}
\put(1,-1.4){$\ra$}
\put(1.85,-0,4){$\Bigg)\quad\;=$}
\put(2.35,0.3){\vector(0,-1){1}}  
\end{picture}
}

\put(14.5,3){
\begin{picture}(4,3.25)(0,0)
\put(1.45,0.5){\textcolor{black}{$f$}}
\put(1.45,0.){\textcolor{black}{$\da$}}
\put(1.45,-0.5){\textcolor{black}{$k$}}
\put(1.4,-0.95){$\da$}
\put(1.6,-0.95){$\da$}
\put(1.5,-1.4){$a$}
\put(1,-1.4){$\ra$}
\qbezier(1.4,0.35)(1,-0.4)(1.4,-1)
\put(1.4,-1){\vector(1,-3){0}}
\end{picture}
}
\end{picture}

\caption{Maximally shared form of a drag.}\label{f:muld}
\end{figure}
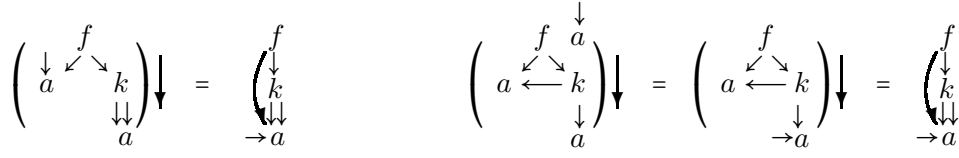

\begin{defi}[Sharing-equivalence]
Two drags $D,D'$ are \emph{sharing-equivalent}, written $D\she D'$, if they have equimorphic maximally shared forms.
\end{defi}

Simple consequences are that equimorphic drags are sharing-equivalent; and that drags that are sharing-equivalent have the same sets of variables and multiset of roots, since both are preserved by computing maximally shared forms.

\begin{lem}
\label{l:mnf}
Given a drag $D$, there exists a morphism $\omicron: D\to \nf{D}$, such that all vertices of $D$ sent to the same vertex by $\omicron$ generate subdrags of $D$ that are sharing-equivalent.
\end{lem}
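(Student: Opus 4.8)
The plan is to proceed by induction on the number of sharing steps needed to transform $D$ into its maximally shared form $\nf{D}$, following the construction that precedes Lemma~\ref{l:msf}. Each sharing step will contribute a morphism, and the desired $\omicron$ will be their composite, legitimized by closure of morphisms under composition (Lemma~\ref{l:closureprop}). In the base case $D$ is already maximally shared, so $\nf{D}=D$, the identity serves as $\omicron$, and the stated property holds vacuously because only equal vertices are identified.

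For the inductive step, consider one sharing step collapsing a minimal nontrivial class $\overline{F}=E_1\oplus\cdots\oplus E_n\oplus F$ onto $F$, where $\omicron_i:E_i\hookrightarrow F$ are the equimorphisms and $D=C_F\otimes_\xi\overline{F}$, $D_1=C_F\otimes_{\xi'}F$ with $\xi'=\omicron\circ\xi$. First I would define the sharing map $\psi:D\to D_1$ to be the identity on $C_F$ and on $F$, and to send each vertex of $E_i$ to its image under $\omicron_i$ (treating $F$ as $E_0$ with $\omicron_0$ the identity). I then check that $\psi$ is a morphism: internal vertices go to internal vertices with labels preserved since each $\omicron_i$ is an equimorphism, and for the same reason sprouts are handled correctly; edges within $C_F$ or within $E_i$ are preserved, while edges created through $\xi$ into some $E_i$ are, by the very definition $\xi'=\omicron\circ\xi$, redirected to the corresponding vertex of $F$; and preservation of equimorphic subdrags holds since $\psi$ only identifies already-equimorphic subdrags. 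The one computational point is indegree preservation (condition (3) of Definition~\ref{d:pmo}): for $v\in F$ the vertices mapped to $v$ are $v$ itself and the $\omicron_i^{-1}(v)$, and the required equality $\sum_{\psi(u)=v}\npred(u,D)=\npred(v,D_1)$ follows from indegree preservation by product (Lemma~\ref{l:targetinv}(2)) together with the fact that equimorphisms preserve root quantity.

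Next I would establish the within-step property: if $\psi(u)=\psi(w)$ then $\subd{D}{u}\she\subd{D}{w}$. The only nontrivial case is $u\in E_i$, $w\in E_j$ with $\omicron_i(u)=\omicron_j(w)=v\in F$, since $C_F$ and $F$ are vertex-disjoint and so collapsing involving $C_F$ identifies only equal vertices. Because $\omicron_i,\omicron_j$ are equimorphisms they carry $\subd{E_i}{u}$ and $\subd{E_j}{w}$ equimorphically onto $\subd{F}{v}$, so these subdrags have equal bare versions. Moreover, since each $E_i$ is a subdrag of $D$ and hence closed under accessibility, the vertices and edges of $\subd{D}{u}$ coincide with those of $\subd{E_i}{u}$, giving $\bare{\subd{D}{u}}=\bare{\subd{E_i}{u}}$, and likewise for $w$. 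Chaining these equimorphisms yields $\bare{\subd{D}{u}}\equiv\bare{\subd{D}{w}}$, whence $\subd{D}{u}\she\subd{D}{w}$.

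Finally I would compose. By the induction hypothesis there is a morphism $\omicron':D_1\to\nf{D_1}$ with the stated property, and since $\nf{D_1}$ and $\nf{D}$ agree up to equimorphism (Lemma~\ref{l:msf}), composing with that equimorphism and with $\psi$ gives a morphism $\omicron:D\to\nf{D}$; the final equimorphism being bijective, it creates no new identifications. Suppose $\omicron(u)=\omicron(w)$. If already $\psi(u)=\psi(w)$, the within-step property gives the conclusion. Otherwise $\omicron'(\psi(u))=\omicron'(\psi(w))$ with $\psi(u)\neq\psi(w)$, and the induction hypothesis yields $\subd{D_1}{\psi(u)}\she\subd{D_1}{\psi(w)}$. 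The main obstacle is then to bridge between $D$ and $D_1$: I must show that a sharing step commutes with subdrag formation, namely $\subd{D}{u}\she\subd{D_1}{\psi(u)}$ for every vertex $u$. I expect to obtain this by arguing that $\psi$ restricts to a morphism $\subd{D}{u}\to\subd{D_1}{\psi(u)}$ that is again a sharing map, so the two subdrags share a maximally shared form. Granting this, transitivity of $\she$—which holds because it reduces to equimorphism of maximally shared forms—gives $\subd{D}{u}\she\subd{D_1}{\psi(u)}\she\subd{D_1}{\psi(w)}\she\subd{D}{w}$, completing the argument. The delicate point, where I would spend the most care, is verifying that restricting a sharing morphism to a generated subdrag again yields a sharing morphism, since this is exactly what makes the cross-step transitivity legitimate.
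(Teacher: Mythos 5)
Your proposal is correct and takes essentially the same approach as the paper: the paper's entire proof is the one-line remark that each class of equimorphic subdrags keeps its chosen representative (the one selected by sharing) in the normal form, and your induction on sharing steps, composing the per-step collapse maps via Lemma~\ref{l:closureprop}, is a careful unpacking of exactly that remark. The one caveat---your inference from bare equimorphism to sharing-equivalence quietly ignores that merged copies may carry different root multisets---is a looseness already present in the paper itself, which announces at the start of Section~\ref{ss:se} that it shares subdrags ``equimorphic up to their roots, which do not play a r\^ole here,'' so it does not count against you.
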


\begin{proof}
Straightforward, noticing that each class of equimorphic drags in a drag has a representative (the one chosen by sharing) in normal form.
\end{proof}

We now show that sharing-equivalence is closed by the operations on drags that we have defined. First, obviously,

\begin{lem}
\label{l:closuresum}
Sharing-equivalence is closed under parallel composition.
\end{lem}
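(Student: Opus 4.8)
The plan is to reduce the two-sided statement to a one-sided congruence law and then to lift the sharing transformation through a sum. Since $\she$ is an equivalence relation (it is equimorphism of maximally shared forms, which exist and are unique up to equimorphism by Lemma~\ref{l:msf}) and sum is commutative (Lemma~\ref{l:sumac}), it suffices to prove the congruence law
\[ D\she D' \;\Longrightarrow\; D\oplus C\she D'\oplus C \]
for any $C$ making both sums defined: applying it twice gives $D_1\oplus D_2\she D_1'\oplus D_2\she D_1'\oplus D_2'$, and transitivity closes the argument. Throughout I assume the relevant pairs are compatible, so that the sums exist.

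To prove the congruence law I would establish the key identity
\[ \nf{D\oplus C}\;\equiv\;\nf{\,\nf{D}\oplus C\,}. \]
Granting it, the congruence law follows by the short chain $\nf{D\oplus C}\equiv\nf{\nf{D}\oplus C}\equiv\nf{\nf{D'}\oplus C}\equiv\nf{D'\oplus C}$, hence $D\oplus C\she D'\oplus C$. The middle equimorphism uses two facts: first, $D\she D'$ gives $\nf{D}\equiv\nf{D'}$; second, equimorphism is preserved both by $\oplus$ and by passage to the maximally shared form. Preservation by the maximally shared form is exactly Lemma~\ref{l:msf} (uniqueness up to equimorphism), while preservation by $\oplus$ is immediate for disjoint summands and, for shared vertices, follows because an equimorphism fixes precisely the data (labels, root multiplicities, successors, hence indegrees) on which compatibility and the root count of the sum depend—so one may rename apart and reduce to the disjoint case.

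The heart of the matter is the key identity, which I would prove by lifting the reduction $D\shr^{*}\nf{D}$ one sharing step at a time: I show that a single step $E\shr E'$ inside the summand satisfies $\nf{E\oplus C}\equiv\nf{E'\oplus C}$, and chain these. For one step, let $\mathcal{C}$ be the class of equimorphic bare subdrags of $E$ that the step merges. In $E\oplus C$ this class extends to a (possibly larger) class $\mathcal{C}^{+}$ also collecting the equimorphic bare subdrags contributed by $C$ and by shared vertices. Sharing $\mathcal{C}^{+}$ in $E\oplus C$, and sharing the corresponding extended class in $E'\oplus C$, both produce a single representative of $\overline{\mathcal{C}}$ with all incident edges and roots redirected onto it; since $E'$ has already collapsed the $E$-internal copies, the two outcomes redirect the very same edges and roots and are therefore equimorphic. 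Thus $E\oplus C$ and $E'\oplus C$ have a common reduct, and by confluence of the sharing transformation (the Diamond Lemma, as in the proof of Lemma~\ref{l:msf}~\cite{DBLP:journals/jsyml/Hindley69}) they have equimorphic normal forms.

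The main obstacle I expect is precisely this common-reduct verification: one must check that collapsing the full class $\mathcal{C}^{+}$ in a single sharing step yields, up to equimorphism, the same drag as first collapsing the $E$-part (the step $E\shr E'$) and only afterwards the remaining $C$-part—that is, that the edge- and root-redirections aggregate associatively irrespective of the grouping. This is intuitively clear from the reconstruction view $D=C_F\otimes_\xi\overline{F}$ underlying the definition of sharing, but it requires careful bookkeeping of the switchboards, together with attention to shared vertices between $E$ and $C$ so that they do not manufacture spurious cross-class equimorphisms. Discharging that bookkeeping uniformly is where the full proof has to be written out in detail.
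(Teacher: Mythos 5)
The paper gives this lemma no proof at all---it is introduced with ``First, obviously,''---so your proposal cannot be compared to an official argument; what you have actually done is transplant to $\oplus$ the exact proof pattern the paper deploys later for the genuinely harder \emph{wiring} case. Your key identity $\nf{D\oplus C}\equiv\nf{\nf{D}\oplus C}$ is the sum analogue of Lemma~\ref{l:nfcom}, your closing chain $\nf{D\oplus C}\equiv\nf{\nf{D}\oplus C}\equiv\nf{\nf{D'}\oplus C}\equiv\nf{D'\oplus C}$ mirrors the chain in the proof of Lemma~\ref{l:closurewire}, and your one-step-lifting-plus-Diamond-Lemma argument is how the paper proves Lemma~\ref{l:msf}. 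That route is sound, but it is heavier than needed, and seeing why explains the paper's ``obviously'': unlike wiring, the sum does not alter the bare subdrag generated by any vertex. Compatibility forces shared vertices to carry identical labels and successor structure (``compatible drags that share a vertex must also share the whole subdrag generated by that vertex''), and the sum adds no edges, so the equimorphism classes of bare subdrags of $D\oplus C$ are exactly those of $D$ and of $C$, merged when equimorphic across the two summands; the root multiset of the sum is then fixed by indegree preservation, and sharing-equivalent drags have equal root multisets. Hence $\nf{D\oplus C}$ is determined up to equimorphism by $\nf{D}$ and $\nf{C}$ alone, and closure follows with no step-by-step commutation argument at all.

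There is also one concrete soft spot in your reduction. The interpolation $D_1\oplus D_2\she D_1'\oplus D_2\she D_1'\oplus D_2'$ presupposes that $D_1'$ and $D_2$ are compatible, which does \emph{not} follow from the hypotheses: $D_1'$ is an arbitrary drag sharing-equivalent to $D_1$ and may contain a vertex of $D_2$ with a clashing label or indegree, so the intermediate sum need not exist---``I assume the relevant pairs are compatible'' quietly assumes away part of what must be shown. The repair is cheap: replace $D_1'$ by a vertex-renamed copy (equimorphic, hence sharing-equivalent, and disjoint from everything in sight), prove the congruence law for the copy, and use that equimorphic summands yield equimorphic sums; or bypass interpolation entirely via the direct class-wise description above. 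The same caution applies inside your one-step lifting: when a merged copy in $E$ shares vertices with $C$, the step transfers roots to the representative and $E'\oplus C$ can fail the indegree condition at those shared vertices, so $E'\oplus C$ may not even be a legal sum. You flag this as ``bookkeeping,'' but it is the one place the single-step argument can actually break rather than merely get tedious; it, too, evaporates under the renaming-apart reduction or the direct argument.
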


\begin{defi}[Sharing-equivalence of wires]
\label{d:eqshareext}
 Let $D,D'$ be two drags that are sharing-equivalent. Two well-behaved set of wires $W=\{\wir{s_i}{r_i}\}_i$ of $D$ and $W'=\{\wir{s'_j}{r'_j}\}_j$ of $D'$
are \emph{sharing-equivalent} if 
\begin{itemize}
    \item the sets of variables labeling the sprouts $\{s_i\}_i$ and $\{s'_j\}_j$ are identical;
    \item
    for all sprouts $s_i,s'_j$ sharing the same label, $\subd{D}{r_i}$ and $\subd{D'}{r'_j}$ are sharing-equivalent.
    \end{itemize}
 Two rewriting extensions $\ext(E,\xi)$ of $D$ and $\ext(E',\xi')$ of $D'$ are \emph{sharing-equivalent} if so are $E$ and $E'$, and $\xi$ and $\xi'$.
\end{defi}

The definition of sharing-equivalence for sets of wires makes sense since the same variables label the respective sprouts of sharing-equivalent drags. It makes sense for extensions by Lemma~\ref{l:closuresum}, since $\xi$ and $\xi'$ are well-behaved sets of wires of $D\oplus E$ and $D'\oplus E'$ by definition.
Sharing-equivalence is thus an equivalence on drags, sets of wires, and extensions.

\begin{lem}
\label{l:eqsharewires}
Given two sharing-equivalent well-behaved sets of wires $\xi,\zeta$ of a drag $D$, $\wiring{D}{\xi}$ and $\wiring{D}{\zeta}$ are sharing-equivalent.
\end{lem}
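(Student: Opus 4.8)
The plan is to prove the statement by reducing it to a change of targets for a single variable and then comparing maximally shared forms directly. Throughout I use fullness of $\xi$ and $\zeta$ (part of coherence, which the wire-sets relevant to rewriting satisfy, and which is needed for the statement to hold at all). Let $U$ be the common set of variables labelling the origins of $\xi$ and of $\zeta$, which agree by Definition~\ref{d:eqshareext}. By fullness $\Dom{\xi}$ and $\Dom{\zeta}$ are both exactly the set of all sprouts of $D$ whose label lies in $U$; hence $\Dom{\xi}=\Dom{\zeta}$, and by Lemma~\ref{l:idpw}(1) the drags $\wiring{D}{\xi}$ and $\wiring{D}{\zeta}$ have the \emph{very same} vertices (the internal vertices of $D$ together with the sprouts whose label is outside $U$) and differ only in their successor functions. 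Moreover, since sharing-equivalence of wires is an equivalence relation, reflexivity applied to $\xi$ and to $\zeta$, combined with the cross-condition of Definition~\ref{d:eqshareext} and transitivity of $\she$, shows that for each $x\in U$ every target of an $x$-labelled origin---be it a wire of $\xi$ or of $\zeta$---generates a subdrag lying in one single sharing-equivalence class, which I call $T_x$.

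First I would reduce to the case where $\xi$ and $\zeta$ differ in the targets of a single variable. By fullness the origins of the two wire-sets are literally the same sprouts of $D$, so $\xi$ and $\zeta$ differ only in where these origins are sent. I would interpolate a chain $\xi=\xi_0,\xi_1,\dots,\xi_k=\zeta$ in which $\xi_i$ and $\xi_{i+1}$ agree except on the targets of one variable, each $\xi_i$ choosing its targets inside the relevant classes $T_x$. Each $\xi_i$ is again well-behaved and coherent---injectivity survives because $\she$-equivalent drags carry the same multiset of roots, while fullness and the within-class condition are preserved by construction---and consecutive sets are sharing-equivalent. By transitivity of $\she$ it then suffices to treat one such elementary step.

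So assume $\xi$ and $\zeta$ coincide outside the $x$-wires, sending the common origins $s_1,\dots,s_m$ to targets $r_1,\dots,r_m$ and $r'_1,\dots,r'_m$ respectively, with $\subd{D}{r_i}\she\subd{D}{r'_i}$ and all these subdrags in $T_x$. I would then compute $\nf{\wiring{D}{\xi}}$ and $\nf{\wiring{D}{\zeta}}$ and exhibit an equimorphism between them. The key observation is that maximal sharing merges all of the $\she$-equivalent target subdrags into a single representative $R_x$, the maximally shared form common to the members of $T_x$, which is unique up to equimorphism by Lemma~\ref{l:msf}; the edges formerly entering $s_1,\dots,s_m$, redirected by wiring to $r_1,\dots,r_m$ (resp.\ $r'_1,\dots,r'_m$), therefore all land on $R_x$ in either case, while the surrounding context---built from the vertices shared by both drags and untouched by the $x$-step---is shared identically on the two sides. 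Indegrees, hence the root multisets accumulated at $R_x$ and everywhere else, agree because wiring preserves indegree (Lemma~\ref{l:idpw}(2)) and both drags are wirings of the very same $D$. Matching $R_x$ on the two sides together with the identity on the common context yields the required equimorphism of maximally shared forms, that is, $\wiring{D}{\xi}\she\wiring{D}{\zeta}$.

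The main obstacle is that maximal sharing is a global fixpoint: merging the target subdrags may trigger further merges with the context and with other subdrags, and one must check that this cascade produces equimorphic quotients on the two sides. I would handle this by well-founded induction on the strict subdrag order, arguing that the relation ``generate $\she$-equivalent bare subdrags'' restricts to the same equivalence on the common vertices of $\wiring{D}{\xi}$ and of $\wiring{D}{\zeta}$: since these two drags differ only by redirection of edges into $\she$-equivalent targets, an operation that cannot distinguish bare subdrags up to $\she$, the two quotients---the maximally shared forms---coincide up to equimorphism. The delicate point is keeping the root and indegree bookkeeping synchronized through the induction, which is exactly where Lemma~\ref{l:idpw}(2) is used.
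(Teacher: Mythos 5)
Your reduction to elementary steps has a genuine gap: the intermediate hybrid wire-sets $\xi_1,\dots,\xi_{k-1}$, which mix targets chosen by $\xi$ for some variables with targets chosen by $\zeta$ for others, need not be well-behaved, and wiring is only defined for well-behaved sets. The culprit is the injectivity condition $\Sigma_{s >_W r}\,\pred(s)\leq R(r)$, which is a \emph{global} root budget across all variables and is not preserved when you swap the targets of one variable in isolation. Concretely, let $D$ have unary internal vertices $u_1,u_2$ with edges $\edge(u_1,s_1)$ and $\edge(u_2,s_2)$ to sprouts $s_1:x$ and $s_2:y$, plus two vertices $a,b$ labeled by the same constant, each carrying exactly one root, so that $\subd{D}{a}\she\subd{D}{b}$. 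Then $\xi=\{\wir{s_1}{a},\wir{s_2}{b}\}$ and $\zeta=\{\wir{s_1}{b},\wir{s_2}{a}\}$ are both well-behaved, coherent, and sharing-equivalent, but the intermediate obtained by rewiring the $x$-origin first, $\{\wir{s_1}{b},\wir{s_2}{b}\}$, sends two predecessors to the single root of $b$ and violates injectivity; rewiring $y$ first fails symmetrically at $a$. Since the relevant classes $T_x,T_y$ offer no other targets, no chain of the kind you describe exists; your assertion that "injectivity survives because $\she$-equivalent drags carry the same multiset of roots" is false, because equality of the root multisets of the target subdrags does not control the combined load that several variables place on one vertex.

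Even granting the reduction, your elementary step asserts rather than proves the crux: that redirecting edges into $\she$-equivalent targets "cannot distinguish bare subdrags up to $\she$", so that the two maximally shared forms agree. That claim is essentially the lemma itself in disguise---the merge cascade can propagate through predecessors of the rewired sprouts, through common wires whose origins sit inside the $\she$-equivalent but non-identical target subdrags, and through newly created cycles---and your well-founded induction on the subdrag order is announced but its inductive step is never carried out. Note also that the hypothesis of Definition~\ref{d:eqshareext} compares $\subd{D}{r_i}$ and $\subd{D}{r'_j}$ in $D$, \emph{before} wiring, whereas your merges take place in $\wiring{D}{\xi}$, after wiring has altered those subdrags; bridging the two is precisely the missing work. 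The paper sidesteps both problems by inducting on the number of wires: it peels off a maximal wire $\wir{s}{r}\in\xi$, uses coherence to pick a mate $\wir{s}{r'}\in\zeta$ with the \emph{same} origin, observes that $\gwiring{D}{\wir{s}{r}}$ and $\gwiring{D}{\wir{s}{r'}}$ are sharing-equivalent, and applies the induction hypothesis to the residual sets, which are genuine subsets of $\xi$ and $\zeta$ and hence automatically well-behaved---no hybrid sets mixing the two switchboards are ever formed. If you want to salvage your normal-form strategy, you must compare $\nf{\wiring{D}{\xi}}$ and $\nf{\wiring{D}{\zeta}}$ in one shot, without interpolation, and actually construct the equimorphism by induction; as it stands, both halves of your argument fail.
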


\begin{proof}
By induction on the number of wires in $\xi$. 
If $\xi$ is empty, then $\zeta$ must be empty too since they are sharing-equivalent, and the result holds in that case. 
Otherwise, neither $\xi$ nor $\zeta$ can be empty. Let $\wir{s}{r}\in\xi$ be a maximal wire. 
By the definition of sharing-equivalence for sets of wires, there must exist a wire $\wir{s'}{r'}\in\zeta$ such that $\subd{D}{r}$ and $\subd{D}{r'}$ are sharing-equivalent, implying that neither are sprouts or both are in which case their label is the same. By coherence of a well-behaved set of wires, we can always choose $s=s'$. It follows that $\wir{s}{r'}$ must be maximal in $\zeta$.
The drags $\gwiring{D}{\wir{s}{r}}$ and $\gwiring{D}{\wir{s}{r'}}$ are clearly sharing-equivalent.
Since $\xi\setminus\wir{s}{r}$ and $\zeta\setminus\wir{s}{r'}$ are sharing-equivalent, well-behaved sets of wires, we conclude by the induction hypothesis.
\end{proof}

\begin{lem}
\label{l:nfcom}
Sharing commutes with wiring: $\nf{\wiring{D}{\xi}}=\nf{\wiring{(\nf{D}}{\xi})}$.
\end{lem}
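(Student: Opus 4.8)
The plan is to collapse the claimed identity into a single sharing-equivalence and then to settle that by a mild strengthening of Lemma~\ref{l:eqsharewires}. First, observe that both sides are maximally shared forms, hence determined only up to equimorphism (Lemma~\ref{l:msf}), so the asserted ``$=$'' is equimorphism. By the very definition of $\she$, the identity $\nf{\wiring{D}{\xi}}=\nf{\wiring{(\nf{D})}{\xi}}$ is therefore \emph{equivalent} to the single statement
\[
\wiring{D}{\xi}\she\wiring{(\nf{D})}{\xi}.
\]
One must first make sense of wiring $\nf{D}$ with $\xi$: using the morphism $\omicron:D\to\nf{D}$ of Lemma~\ref{l:mnf}, I read $\xi=\{\wir{s_i}{r_i}\}_i$ on $\nf{D}$ as its image $\omicron(\xi)=\{\wir{\omicron(s_i)}{\omicron(r_i)}\}_i$. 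Coherence of $\xi$ guarantees that wires issuing from equally labelled sprouts point to targets generating equimorphic subdrags, which $\omicron$ collapses to one vertex; hence $\omicron(\xi)$ is functional, and injectivity and well-foundedness transfer directly, so $\omicron(\xi)$ is again a well-behaved, coherent set of wires of $\nf{D}$.

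Next I would record the two ingredients feeding the reduction. On the one hand $D\she\nf{D}$, since $\nf{D}$ is maximally shared and therefore equal to its own maximally shared form up to equimorphism. On the other hand $\xi$ (as wires of $D$) and $\omicron(\xi)$ (as wires of $\nf{D}$) are sharing-equivalent in the sense of Definition~\ref{d:eqshareext}: they carry exactly the same variable labels, and for sprouts $s_i,\,\omicron(s_j)$ of a common label the targets $\subd{D}{r_i}$ and $\subd{(\nf{D})}{\omicron(r_j)}$ are sharing-equivalent, which is immediate from $D\she\nf{D}$ together with coherence.

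The heart of the argument is then a generalization of Lemma~\ref{l:eqsharewires} from a single drag to two sharing-equivalent drags: \emph{if $D\she D'$ and $\xi,\xi'$ are sharing-equivalent well-behaved sets of wires of $D,D'$, then $\wiring{D}{\xi}\she\wiring{D'}{\xi'}$.} I would prove this by induction on $|\xi|$, mirroring the proof of Lemma~\ref{l:eqsharewires}. The base case $\xi=\varnothing$ forces $\xi'=\varnothing$ and reduces to $D\she D'$. For the step, choose a wire $\wir{s}{r}$ maximal in $>_\xi$; sharing-equivalence of the wire sets provides a wire $\wir{s'}{r'}$ of $\xi'$ whose origin has the same label and with $\subd{D}{r}\she\subd{D'}{r'}$, and (as in the proof of Lemma~\ref{l:eqsharewires}) the origins can be matched so that $\wir{s'}{r'}$ is maximal in $\xi'$. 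One elementary wiring step on each side produces sharing-equivalent drags $\gwiring{D}{\wir{s}{r}}\she\gwiring{D'}{\wir{s'}{r'}}$ carrying the sharing-equivalent residual sets $\xi\setminus\wir{s}{r}$ and $\xi'\setminus\wir{s'}{r'}$ (well-behaved by Lemma~\ref{l:wrec}), whence the induction hypothesis applies. Specializing to $D'=\nf{D}$ and $\xi'=\omicron(\xi)$ yields exactly $\wiring{D}{\xi}\she\wiring{(\nf{D})}{\xi}$, and the theorem follows.

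The main obstacle I expect is the elementary-wiring step across two \emph{different} drags, i.e.\ verifying $\gwiring{D}{\wir{s}{r}}\she\gwiring{D'}{\wir{s'}{r'}}$ and that the residuals stay sharing-equivalent. The delicate point is sprout-merging under sharing: a single sprout of $\nf{D}$ may be the $\omicron$-image of several equally labelled sprouts of $D$, so one elementary wiring on $\nf{D}$ corresponds to several on $D$. Coherence is precisely the hypothesis that reconciles this---all those parallel wirings aim at equimorphic subdrags---so that after redirecting edges and recomputing root multiplicities (Lemma~\ref{l:pres}) the bare subdrags on the two sides remain equimorphic up to roots, preserving $\she$. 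A further subtlety, already flagged in the discussion of coherence, is that coherence (equimorphism) need not be preserved by an individual wiring step but is restored once the whole set is applied; the induction therefore has to extract maximal wires and draw the equivalence conclusion only for the \emph{final} wired drags, never relying on an intermediate drag being maximally shared. Checking that $\omicron(\xi)$ is genuinely well-behaved (functionality from coherence, injectivity from the root-count inequality, well-foundedness by restriction) is routine but is the other place where coherence, rather than mere isomorphism, is essential.
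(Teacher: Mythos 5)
Your reduction of the lemma to the single statement $\wiring{D}{\xi}\she\wiring{(\nf{D})}{\xi}$ is sound, and your inductive engine (peel a maximal wire, transport it through the morphism $\omicron:D\to\nf{D}$ of Lemma~\ref{l:mnf}, take one elementary step on each side, recurse) is in spirit the same induction the paper runs directly on maximally shared forms. But there is a concrete gap at the point where you transport the whole set $\xi$ to $\nf{D}$ as its image $\omicron(\xi)$. You justify functionality of $\omicron(\xi)$ by arguing that coherence makes wires issuing from equally labelled sprouts point at targets generating equimorphic subdrags, which $\omicron$ then collapses to one vertex. That invokes \emph{strong} coherence: plain coherence only guarantees equimorphism of the targets' subdrags in $\wiring{D}{\xi}$, not in $D$, and the paper's own example exhibits the failure. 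Take $D=f(x,x,y,a)$ and $\xi=\{\wir{x_1}{y},\wir{x_2}{a},\wir{y}{a}\}$: the two $x$-sprouts are equimorphic bare subdrags and hence are merged in $\nf{D}$ (a maximally shared drag is linear), whereas $y$ and $a$ generate non-equimorphic subdrags of $D$ and are \emph{not} merged; so $\omicron(\xi)$ assigns the single surviving $x$-sprout two wires with distinct targets $\omicron(y)\neq\omicron(a)$, violating functionality, so $\omicron(\xi)$ is not well-behaved and your specialization $\xi'=\omicron(\xi)$ is not available. The paper sidesteps this with a different convention, stated immediately after the lemma: $\xi$ is read on $\nf{D}$ as its \emph{restriction} to the sprouts surviving sharing (one representative wire per merged class), and only the single maximal wire chosen at each induction step is re-targeted through $\omicron$, as $\wir{s}{\omicron(r)}$.

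A second problem sits in your workhorse, the two-drag generalization of Lemma~\ref{l:eqsharewires}. This is in substance the paper's Lemma~\ref{l:closurewire}, which the paper \emph{derives from} the present lemma together with Lemma~\ref{l:eqsharewires}; proving it first is not circular, but it means you carry the entire inductive burden yourself, and your induction invariant does not hold. You need the residual wire sets to remain sharing-equivalent in the sense of Definition~\ref{d:eqshareext}, which quantifies over \emph{all} cross-drag pairs of equally labelled sprouts, and a single elementary step can destroy that. Take $D\she D'$ two copies of the drag of Section~\ref{s:cohere} made of three copies of $f(x)$ with the cyclic wiring $\{\wir{x_3}{f_1},\wir{x_1}{f_2},\wir{x_2}{f_3}\}$ on each side, and perform the matched first step $\wir{x_3}{f_1}$ on both. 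In the intermediate drags the residual targets generate $f_2(x_2)$ and $f_3(f_1(x_1))$, so the cross-pair $(x_1,x'_2)$, both labelled $x$, has non-sharing-equivalent target subdrags, and the induction hypothesis is inapplicable. You flag exactly this obstacle (coherence is restored only at the end) but do not repair it; the claim that one matched step ``produces sharing-equivalent drags carrying the sharing-equivalent residual sets'' is precisely what fails. Fixing it requires either weakening the invariant to a per-wire correspondence rather than the all-pairs condition of Definition~\ref{d:eqshareext}, or tracking the equivalence only of the final wired drags --- which is essentially what the paper's direct induction on $\nf{\cdot}$, combined with its restriction convention, accomplishes.
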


The set of wires $\xi$ for $\nf{D}$ should of course be understood as the restriction of $\xi$ to the sprouts of $D$ which are still vertices of $\nf{D}$.

\begin{proof}
By induction on the size of $\xi$. If $\xi$ is empty, the result is clear. Otherwise, let $\xi  = \zeta\cup\wir{s}{r}$, where $\wir{s}{r}$ is maximal.
By coherence of $\xi$, we can choose $\wir{s}{r}$ such that $s$ is still  a sprout  of $\nf{D}$. By Lemma~\ref{l:mnf}, $r$ is  mapped to a vertex $\omicron(r)$ of $D$ such that $r$ and $\omicron(r)$ generate sharing-equivalent subdrags.
Now, $\wiring{D}{\xi}=\wiring{(\gwiring{D}{\wir{s}{r}})}{\zeta}$, and 
$\wiring{\nf{D}}{\xi}=\wiring{(\gwiring{\nf{D}}{\wir{s}{\omicron(r)}})}{\zeta}$, and by the previous remark, 
$\gwiring{D}{\wir{s}{r}}$ and $\gwiring{\nf{D}}{\wir{s}{r}}$ are sharing-equivalent.
We then conclude by the induction hypothesis. 
\end{proof}

\begin{lem}
\label{l:closurewire}
Sharing-equivalence is closed under wiring.
\end{lem}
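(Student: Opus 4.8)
The plan is to reduce the claim to the two preceding lemmas by first normalizing both drags. Recall that $D\she D'$ means, by definition, that $\nf{D}$ and $\nf{D'}$ are equimorphic, and that the statement to be proved reads: whenever $D\she D'$ and $W,W'$ are sharing-equivalent well-behaved sets of wires of $D$ and $D'$ respectively (Definition~\ref{d:eqshareext}), then $\wiring{D}{W}\she\wiring{D'}{W'}$. Lemma~\ref{l:eqsharewires} already settles the special case $D=D'$, and Lemma~\ref{l:nfcom} lets us move freely between a drag and its maximally shared form; the work is to glue these together across the equimorphism relating $\nf{D}$ and $\nf{D'}$.

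First I would invoke Lemma~\ref{l:nfcom} twice: since $\nf{\wiring{D}{W}}=\nf{\wiring{(\nf{D})}{W}}$, the drags $\wiring{D}{W}$ and $\wiring{(\nf{D})}{W}$ have the same maximally shared form and are therefore sharing-equivalent; likewise $\wiring{D'}{W'}\she\wiring{(\nf{D'})}{W'}$ (reading $W,W'$ as restricted to the sprouts surviving in the respective normal forms, as the remark following Lemma~\ref{l:nfcom} prescribes). By transitivity of $\she$ it then suffices to prove $\wiring{(\nf{D})}{W}\she\wiring{(\nf{D'})}{W'}$.

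Next, let $\phi:\nf{D}\to\nf{D'}$ be the equimorphism provided by $D\she D'$, and transport $W'$ back along $\phi$ to obtain a set of wires $\zeta$ of $\nf{D}$, replacing each $\wir{s}{r}\in W'$ by $\wir{\phi^{-1}(s)}{\phi^{-1}(r)}$. Because equimorphisms are bijections preserving labels, root quantity, and successor structure (Definition~\ref{d:emo}), wiring commutes with them, so $\wiring{(\nf{D})}{\zeta}$ is equimorphic, hence sharing-equivalent, to $\wiring{(\nf{D'})}{W'}$. Now $W$ and $\zeta$ are both well-behaved sets of wires of the single drag $\nf{D}$, and they are sharing-equivalent: $W$ is sharing-equivalent to $W'$ by hypothesis, $W'$ is sharing-equivalent to $\zeta$ through $\phi$ (equimorphic data being sharing-equivalent), and sharing-equivalence of wires is transitive. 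Lemma~\ref{l:eqsharewires} then gives $\wiring{(\nf{D})}{W}\she\wiring{(\nf{D})}{\zeta}$, and chaining with the equimorphism above yields $\wiring{(\nf{D})}{W}\she\wiring{(\nf{D'})}{W'}$, completing the reduction.

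I expect the main obstacle to be the bookkeeping around restricting the wire sets to the normal forms: when several sprouts sharing a variable are merged by maximal sharing, one must check that coherence makes the surviving wire well-defined up to sharing-equivalence of its target's generated subdrag (exactly the argument used in the proof of Lemma~\ref{l:nfcom}), and that this restriction does not disturb sharing-equivalence of $W$ and $W'$. The commutation of wiring with the equimorphism $\phi$ is intuitively immediate but formally rests on the rigidity of equimorphisms, and I would state it explicitly rather than merely assert it.
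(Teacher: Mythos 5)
Your proof is correct and takes essentially the same route as the paper's: both reduce the claim to Lemmas~\ref{l:nfcom} and~\ref{l:eqsharewires} by passing to maximally shared forms and then comparing two wire sets on a single drag. The only difference is that you make explicit the transport of $W'$ along the equimorphism between $\nf{D}$ and $\nf{D'}$ (and the commutation of wiring with equimorphisms), a step the paper's chain of equalities silently absorbs by writing $\nf{D}$ and $\nf{D'}$ as if identical, so your version is, if anything, the more scrupulous rendering of the same argument.
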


\begin{proof}
We are now given two sharing-equivalent drags $D,D'$ and two sharing equivalent, well-behaved sets of wires $\xi,\zeta$ for $D,D'$, respectively. Now,
\begin{align*}
\nf{\wiring{D}{\xi}} &= \nf{(\wiring{\nf{D}}{\xi})} & \mbox{(by Lemma~\ref{l:nfcom})}\\ &=\nf{(\wiring{\nf{D'}}{\xi})}
& \mbox{(because $D$ and $D'$ are sharing-equivalent)}\\
&=\nf{(\wiring{\nf{D'}}{\zeta})} & \mbox{(by Lemma~\ref{l:eqsharewires})}\\ &=\nf{(\wiring{D'}{\zeta})},
\end{align*}
showing that
$\wiring{D}{\xi}$ and $\wiring{D'}{\zeta}$ are sharing-equivalent.
\end{proof}

Using now Lemmas~\ref{l:closuresum} and~\ref{l:closurewire}, we get:

\begin{lem}
\label{l:closureprod}
Given two sharing-equivalent drags $D,D'$, let $\ext(E,\xi)$ and $\ext(E',\xi')$ be two sharing equivalent extensions of $D$ and $D'$, respectively. Then,
$D\otimes_\xi E$ and $D'\otimes_{\xi'} E'$ are sharing-equivalent.
\end{lem}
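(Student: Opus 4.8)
The plan is to reduce the statement entirely to the two closure results already in hand, exploiting the definition of product as the wiring of a sum. First I would recall that, by Definition~\ref{d:composition}, $D\otimes_\xi E=\wiring{(D\oplus E)}{\xi}$ and $D'\otimes_{\xi'} E'=\wiring{(D'\oplus E')}{\xi'}$. Hence it suffices to prove that these two wired sums are sharing-equivalent, which splits naturally into a sum step and a wiring step.

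For the sum step, I would unfold the hypothesis that the extensions $\ext(E,\xi)$ and $\ext(E',\xi')$ are sharing-equivalent. By Definition~\ref{d:eqshareext} this means precisely that $E\she E'$ and that $\xi$ and $\xi'$ are sharing-equivalent well-behaved sets of wires. Combining $E\she E'$ with the given hypothesis $D\she D'$, Lemma~\ref{l:closuresum} yields $D\oplus E\she D'\oplus E'$. For the wiring step, I would then apply Lemma~\ref{l:closurewire} to the sharing-equivalent drags $D\oplus E$ and $D'\oplus E'$ together with the sharing-equivalent sets of wires $\xi$ and $\xi'$, obtaining $\wiring{(D\oplus E)}{\xi}\she\wiring{(D'\oplus E')}{\xi'}$. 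By the identification of product with wiring of a sum, this is exactly $D\otimes_\xi E\she D'\otimes_{\xi'} E'$, as required.

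The one point worth checking explicitly is that the sharing-equivalence of the extensions genuinely supplies wire sets \emph{over} sharing-equivalent drags, so that Lemma~\ref{l:closurewire} is applicable: $\xi$ and $\xi'$ must be compared as wire sets of $D\oplus E$ and $D'\oplus E'$, not merely of $D$ and $D'$. This is precisely the reading built into Definition~\ref{d:eqshareext}, where the switchboards $\xi,\xi'$ are by construction well-behaved sets of wires of the respective sums. I do not expect any genuine obstacle here: all the substantive work—commuting maximally shared forms with elementary wiring (Lemma~\ref{l:nfcom}) and matching maximal wires across equivalent drags (Lemma~\ref{l:eqsharewires}, Lemma~\ref{l:closurewire})—has already been discharged, so the present statement is a direct corollary of Lemmas~\ref{l:closuresum} and~\ref{l:closurewire}.
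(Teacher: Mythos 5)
Your proof is correct and matches the paper's own argument exactly: the paper derives this lemma directly from Lemmas~\ref{l:closuresum} and~\ref{l:closurewire} by viewing the product $D\otimes_\xi E$ as $\wiring{(D\oplus E)}{\xi}$, which is precisely your two-step decomposition. Your explicit check that Definition~\ref{d:eqshareext} makes $\xi,\xi'$ wire sets over the respective sums is a sound (and welcome) spelling-out of what the paper leaves implicit.
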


The following important result summarizes the closure properties of sharing-equivalence:

\begin{thm}
\label{t:sec}
Sharing-equivalence is closed under parallel and cyclic composition.
\end{thm}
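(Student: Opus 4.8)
The plan is to recognize that this theorem is nothing more than the conjunction of the two closure results just established, read off for the two operations it names. \emph{Parallel composition} is exactly the sum $\oplus$, and \emph{cyclic composition} is exactly the product $\otimes_\xi$ of Definition~\ref{d:composition}; hence there is no new construction to carry out, and the task is only to assemble the pieces while stating the product case with the correct hypotheses on switchboards.

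For parallel composition I would simply invoke Lemma~\ref{l:closuresum}: if $D\she D'$ and $E\she E'$, then $D\oplus E\she D'\oplus E'$, so sums of sharing-equivalent drags are sharing-equivalent. For cyclic composition I would appeal to Lemma~\ref{l:closureprod}, reading the claim as a statement about sharing-equivalent \emph{extensions} rather than about bare drags: given $D\she D'$ together with extensions $\ext(E,\xi)$ of $D$ and $\ext(E',\xi')$ of $D'$ that are sharing-equivalent in the sense of Definition~\ref{d:eqshareext}, one concludes $D\otimes_\xi E\she D'\otimes_{\xi'} E'$, which is exactly the content of Lemma~\ref{l:closureprod}.

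The reason nothing substantial remains to be proved is that Lemma~\ref{l:closureprod} has already factored the argument through the identity $D\otimes_\xi E=\wiring{(D\oplus E)}{\xi}$: closure under product is the composite of closure under sum (Lemma~\ref{l:closuresum}) with closure under wiring (Lemma~\ref{l:closurewire}), the switchboards being sharing-equivalent sets of wires for the respective sums. The genuinely delicate work — the part that justifies calling this a theorem rather than a remark — was carried out earlier, in establishing that sharing commutes with wiring (Lemma~\ref{l:nfcom}) and hence that wiring preserves sharing-equivalence (Lemma~\ref{l:closurewire}); with those in hand I expect the present statement to be immediate, and I anticipate no remaining obstacle.
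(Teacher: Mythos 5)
Your proposal is correct and matches the paper exactly: the theorem is stated there as a summary of Lemma~\ref{l:closuresum} (for sums) and Lemma~\ref{l:closureprod} (for products), the latter having already been derived from closure under sum and under wiring via $D\otimes_\xi E=\wiring{(D\oplus E)}{\xi}$. Your reading of the product case as a statement about sharing-equivalent extensions in the sense of Definition~\ref{d:eqshareext} is also the intended one.
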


\section{Rewriting}
\label{s:rew}
Rewriting is often used as a method to decide congruences, or to describe syntactic transformations, the underlying congruence being implicit. The idea is that a congruence is an equivalence that is closed under composition with respect to extensions. This is the case for drags just like it is for terms, 
composition taking here the place of both context application and substitution.

As usual, rewriting is a precongruence.
Symmetry is eschewed, so as to allow the unidirectional use of rewriting to decide whether two given drags are equivalent in the congruence generated by a given set of drag equations, thereby potentially reducing the nondeterminism involved in proof search.

\subsection{Congruences}
Term congruences are defined as equivalences closed under context application and substitution. Drag congruences should generalize term congruences by being closed under sum and product. The latter will require a precise correspondence between the roots of congruent terms.  Besides, they should also allow for sharing, hence contain sharing equivalence. 

\begin{defi}[Root-map]
\label{d:rm}
 Given drags $D,D'$, a \emph{root-map} from $D$ to $D'$
 is a multi-injective multi-map $\eta :\,\roots{D} \ra \roots{D'}$.
 \end{defi}
As noted in Section \ref{s:drags}, $\eta$ is multi-equijective from its domain multiset to its image multiset.
  When needed, root-maps can be written as lists with repetitions $\{u_1\mapsto v_1, \ldots, u_n\mapsto v_n\}$.

\begin{defi}[Rewriting extension]
\label{d:re}
Given a drag $D$ and a linear drag $C$ disjoint from $D$, $\ext(C,\xi)$ is a \emph{rewriting extension} of $D$ if $\xi_D$ is total and $\xi_C$ surjective.
\end{defi}

Imposing that $\xi_C$ is surjective is not a restriction, since it is always possible to extend $C$ by isolated sprouts mapped by $\xi_C$ to the remaining roots of $D$. Imposing totality is no restriction either since we are indeed interested in congruences on closed drags.

\begin{defi}[Compatible extensions]
\label{d:te}
 Given compatible drags $D,D'$, rewriting extensions $\ext(C,\xi)$ of $D$ and $\ext(C',\xi')$ of $D'$ are \emph{compatible} with  root-map $\eta : D\ra D'$, if
 \begin{enumerate}
 \item $C$ and $C'$ are compatible, equimorphic drags: $C\equiv_{\omicron} C'$;
 \item for all sprouts $t:x$ of $D$ and $t':x$ of $D'$, the subdrags
        $\rest{C}{\xi(t)}$ and $\rest{C'}{\xi'(t')}$ are equimorphic;
  \item \smash{$\forall s\in\Dom{\xi_C} \st\, \xi'_C(\omicron(s))= \eta(\xi_C(s))$}.
\end{enumerate}
 \end{defi}

 Note that any two sprouts of $D\oplus D'$ sharing their variable label must be mapped by $\xi_D\cup\xi'_{D'}$ to vertices generating equimorphic subdrags. Totality is the key to ban dangling edges when rewriting.  It is also important to notice that $\xi_D$ need not being surjective; hence the context $C$ may contain rooted subdrags needed for $\xi'_{D'}$ that remain useless for $\xi_D$. This is in particular the case if there are more sprouts labeled by a given variable $x$ in $D'$ than in $D$.

\begin{defi}
A congruence over the set of drags is an equivalence over compatible drags $U,V$ equipped with a \emph{root-map} $\eta$, written $U\equiv^\eta V$ or simply $U\equiv V$ when the root-map is not needed, satisfying the following closure properties:
\begin{enumerate}
\item 
Under sharing equivalence:
Let $U\equiv^\eta V$ and $V\she V'$, that is $V\equiv_\omicron V'$.
Then, $U\equiv^{\omicron\circ \eta} V'$.
\item 
Under sum:
Let $U\equiv^\eta V$, $W$ compatible with both $U$ and $V$, and $\mu$ a root-map from $W$ to $W$ whose domain and image are disjoint from $\roots{U,V}$, then $U\oplus W\equiv^{\eta\cup\mu} V\oplus W$.
\item 
Under product:
Let $U\equiv^\eta V$, 
$\ext(C,\xi)$ and $\ext(C',\xi')$ extensions of $U,V$, respectively, compatible with the root-map $\eta$, and $\mu$ a restriction of the equimorphism from $C$ to $C'$ whose domain and image are disjoint from $\Ima{\xi_U}$ and $\Ima{\xi'_{U'}}$, respectively. 
Then  $U\otimes_\xi C \equiv^\mu V\otimes_{\xi'} C'$.
\end{enumerate}
\end{defi}

Note that we use the same symbol $\equiv$ for both equimorphisms and congruences, the former with a lower index and the latter with an upper index.

By wiring different sprouts, possibly but not necessarily labeled the same, to the same vertex, case (3) incorporates some limited amount of sharing. Only limited, since internal vertices generating equimorphic subgraph cannot become shared when computing products.

\begin{exa}
Let drags $D=g^{[1]}(f_1^{[1]}(x_1,x_2))\equiv^{\{g\mapsto k, f_1\mapsto f_2\}} k^{[1]}(f_2^{[1]}(x_3,x_3))=D'$, and equimorphic context drags $C=h_1^{[2]}(z_1)\oplus z_1'^{[1]}$ and $C'=h_2^{[2]}(z_2)\oplus z_2'^{[1]}$, where $x_1, x_2, x_3$ are different vertices labeled $x$, while $z_1, z_2$ are labeled $z$, and $z_1', z_2'$ are labeled $z'$.
Then,
\begin{align*}
&[y=f(y',y'), y'=h(y)]\,g(y) = 
g^{[1]}(f_1^{[1]}(x_1,x_2))\otimes_{\{\wir{x_1}{h_1},\wir{x_2}{h_1},\wir{z_1}{f_1},\wir{z_2}{k}\}} C\\
&\equiv^\varnothing\\
&\smash{[y=f(y',y'), y'=h(y)]\,k(y) =
k^{[1]}(f_2^{[1]}(x_3,x_3)) \otimes_{\{\wir{x_3}{h_2},\wir{z_3}{f_2},\wir{z_4}{g}\}} C'}.
\end{align*}
Note that $x_1$ and $x_2$ are both mapped to $h_1$, while $x_3$ is mapped to $h_2$, $h_1$ and $h_2$ generating equimorphic subdrags of the result. 
Note also that $z_1$ is mapped to $f_1$ while $z_2$ is mapped to $f_2$ as dictated by the root-map. Similarly, $z'_1$ and $z'_2$ are mapped to  $g$ and $k$, respectively.
\end{exa}

We are of course interested in congruences generated by a set of equations:

\begin{defi}
\label{l:eq}
An equation is a pair of compatible drags $U,V$ together with a root-map $\eta$, denoted $U=^\eta V$. The equality over drags generated by a set $\cE$ of equations, denoted $=_\cE$, is the least congruence that contains all pairs in $\cE$. 
\end{defi}

\begin{exa}
Assuming $f$ is binary, we take here for $\cE$ the associativity equation with its trivial root-map: $f_1\mapsto f_3: f_1^{[1]}(f_2(x,y),z)=f_3^{[1]}(x, f_4(y,z))$. Then, closure under product with the compatible extensions $\ext(x'^{[1]}\oplus b^{[1]} \oplus c^{[1]}, \{\wir{x}{x'}, \wir{y}{b},\wir{z}{c},\wir{x'}{f_1}\})$ and $\ext(x''^{[1]}\oplus b^{[1]} \oplus c^{[1]}, \{\wir{x}{x''}, \wir{y}{b},\wir{z}{c},\wir{x''}{f_3}\})$ renders equal the two ground drags
$[z=f(f(z,b),c)]z$ and  $[z'=f(z',f(b,c))]z'$.
\end{exa}

\subsection{Rewrite rules}
A rewrite rule serves to replace some drag pattern $L$ by some other drag pattern $R$ in a given drag $D$ that contains $L$ in a context defined by an extension $\ext(E,\xi)$ of $L$. That is, $D=E\otimes_\xi L$.

First, it is important to ensure that all roots and sprouts of $L$ disappear in this composition, yielding the previously encountered notion of a rewriting extension.
Next, it is equally important to ensure that replacing $L$ by $R$ is possible; this is what root-maps are for. Finally, the extensions used for the left-hand and right-hand sides of a rewrite rule must be compatible, so that sprouts with identical labels are mapped to vertices that generate equimorphic subdrags.

\begin{defi}[Patterns]
A drag all of whose vertices are accessible is called a  \emph{right-pattern}. 
It is called a \emph{left-pattern}, or simply \emph{pattern}, if also all of its sprouts have predecessors.
\end{defi}

\begin{defi}[Rewrite rules]\label{d:rules}
A \emph{drag rewrite rule} is a triple written $\eta: L\ra R$ (alternatively, $L\ra_\eta R$) made of two compatible drags $L$ and $R$, such that $L$ is a pattern, $R$ is a right pattern, and $\eta$ is a multi-equijective root-map from $\roots{L}$ to $\roots{R}$ (hence $|\roots{R}|=|\roots{L}|$).

A rule is \emph{stringent} if $\Var{L}\subseteq \Var{R}$.

A \emph{renaming} of a rule $\eta:L\ra R$ is a rule $\eta':L'\ra R'$ such that (i) $L'\oplus R'$ is a renaming of $L\oplus R$, that is,
$L\oplus R \simeq_\omicron^\sigma L'\oplus R'$,  and (ii)
$\forall r\in\roots{L} \st \eta'(\sigma(r)) = \sigma(\eta(r))$.
\end{defi}

Since $L$ and $R$ are compatible drags, their sum $L\oplus R$ is defined and can indeed be seen as the rule whose multiset of roots is partitioned into a left multiset $\roots{L}$ and a right multiset $\roots{R}$, so that $\eta$ is a multi-equijective map from the first to the second.

The case of term rewriting rules is then quite simple: since $L$ and $R$ have a single root each, there is a unique possible map $\eta$. In~\cite{DBLP:journals/tcs/DershowitzJ19}, $L$ and $R$ have ordered lists of roots of the same length, making $\eta$ unique again. In both these cases, there is no need for $\eta$ to be explicit. Having multisets of roots forces us to specify $\eta$.

\subsection{Rewriting relations}

We first consider "relational" rewriting. 
Given drags $D,D'$, rewriting $D$ to $D'$ using rule $\eta:L\ra R$ sharing no variable with $D,D'$ involves the following steps:
\begin{enumerate}
    \item 
    \textit{Match}: Find a variable-preserving renaming $\eta':L'\ra R$ of $\eta:L\ra R$, and rewriting extensions $\ext(E,\xi)$ of $L'$ 
    and $\ext(E',\xi')$ of $R'$, such that $D=E\otimes_\xi L'$ and $D'=E'\otimes_{\xi'} R'$.
    \item    
    \textit{Verify}: Establish that these two extensions are \emph{compatible}.
    \end{enumerate}

In the following, we will usually confuse the rule $\eta:L\ra R$ with its variable preserving renaming $\eta':L'\ra R'$.

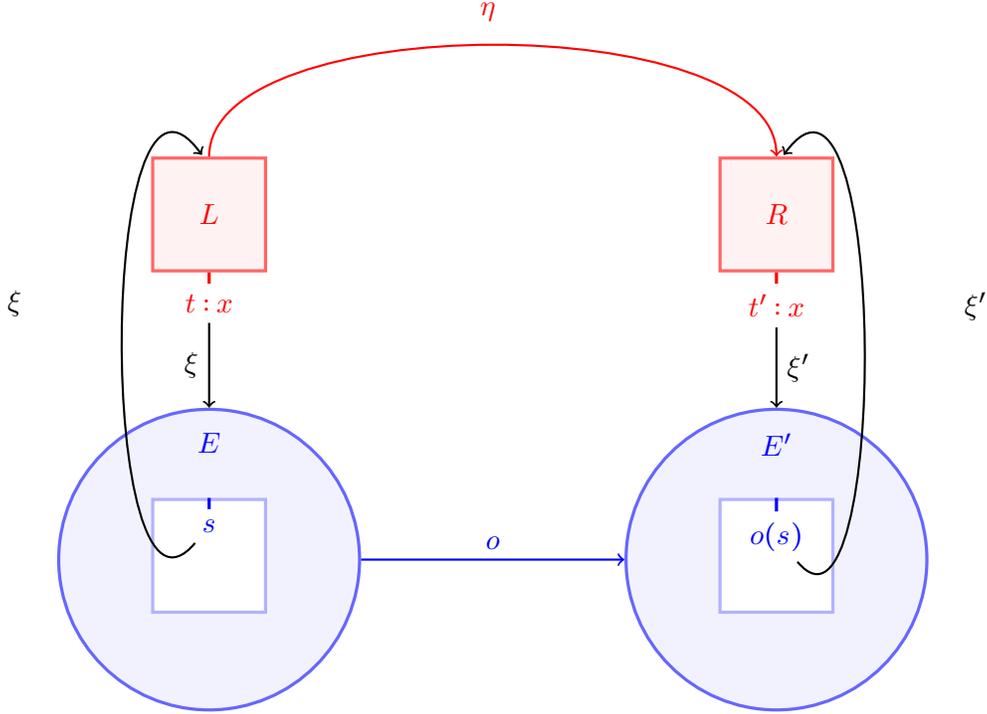
\begin{figure}
\hspace*{-6mm}
\begin{tikzpicture}[
squarenode/.style={rectangle, draw=red!60, fill=red!5, very thick, minimum size=15mm},
squarenodewhite/.style={rectangle, draw=blue!30, fill=white, very thick, minimum size=15mm},
roundnode/.style={circle, draw=blue!60, fill=blue!5, very thick, minimum size=40mm},
namenode/.style={},
x=0.5cm,y=0.6cm  
]

\node[squarenode]    (vertexL)                             {\red{$L$}};
\node[squarenode]    (vertexR)       [right=6cm of vertexL] {\red{$R$}};
\node[roundnode]     (vertexE)       [below=18mm of vertexL]{};
\node[namenode]      (vertexEname)   [below=20mm of vertexL]{\blue{$E$}};
\node[squarenodewhite]   (vertexLP)      [below=3cm of vertexL]{};
\node[roundnode]     (vertexEP)      [below=18mm of vertexR]{};
\node[namenode]      (vertexEPname)  [below=20mm of vertexR]{\blue{$E'$}};
\node[squarenodewhite]   (vertexRP)      [below=3cm of vertexR]{};

\node (s)          [above=-6mm of vertexLP, thick, blue] {$s$};
\node (os)         [above=-8.5mm of vertexRP, thick, blue] {$\omicron(s)$};
\node (tx)          [below=1.5mm of vertexL, thick, red] {$t:x$};
\node (tpx)         [below=1.5mm of vertexR, thick, red] {$t':x$};
\node (Ltop)          [above=-2.5mm of vertexL] {};
\node (Rtop)          [above=-2.5mm of vertexR] {};

\node (ca)     [left=10mm of s]{};
\node (cb)     [left=10mm of vertexE]{};
\node (cc)     [left=10mm of vertexL]{};

\node (middle)     [right=28mm of vertexL]{};
\node (eta)        [above=23mm of middle]  {\red{$\eta$}};
\node (xiE)        [left=19mm of tx]  {$\xi$};
\node (xiEp)       [right=18.5mm of tpx]  {$\xi'$};
\draw[->, thick, red] (vertexL) .. controls + (up:2.75cm) and + (up:2.75cm) ..  (vertexR);

\draw[->, thick, blue] (vertexE) to node [above] {\blue{$\omicron$}} (vertexEP);

\draw[-,very thick, red] (tx) to (vertexL);
\draw[-,very thick, red] (tpx) to (vertexR);
\draw[-,very thick, blue] (s) to (vertexLP);
\draw[-,very thick, blue] (os) to (vertexRP);

\draw[->, thick] (tx) to node [left] {$\xi$}  (vertexE);
\draw[->, thick] (tpx) to node [right] {$\xi'$}  (vertexEP);
\draw[->, thick] (s) .. controls + (-3,-3) and + (-3,+3.5) .. (Ltop);
\draw[->, thick] (os) .. controls + (+3,-3) and + (+3,+3.5) ..  (Rtop);

\end{tikzpicture}
\caption{Compatible rewrite extension of a rewrite rule.}\label{f:rr}
\end{figure}

The rewriting switchboards map sprouts of $E,E'$ to roots of $L',R'$, respectively, and these mappings must fit with $\eta'$, hence condition (2). Note that taking $E$ and $E'$ isomorphic would suffice: imposing that their sprouts are labeled by the same variables is possible because we distinguish both switchboards $\xi$ for the left-hand side and $\xi'$ for the right-hand side. Condition (3) expresses the property that the restrictions of $\xi$ and $\xi'$, to the sprouts of $L'$ and $R'$, respectively, could be made into a single well-behaved set of wires, which will become important later.

One may worry about the complexity of deciding a rewrite step. The matching step 1 in the "list of roots" model has a worst case complexity which is bounded by the product of the sizes of $D\oplus D'$ and $L'\oplus R'$ \cite{JO22a}. It should not be very different for the "multiset of roots" model since the main drag invariant checked for that purpose is indegree preservation, for which only the number of roots matters. Checking equimorphism at step (3) has indeed a worst case complexity bounded by the sum of the sizes of $D$ and $D'$, since the two vertices $\xi(t)$ and $\xi'(t')$ are given, both drags are directed, and the successors of both vertices are lists of the same length. So, the overall complexity is determined by that of the matching process, which has a quadratic upper bound in the "list model".

We can now define \emph{relational rewriting}:

\begin{defi}[Rewriting]
\label{d:rewrel}
A drag $D$ is in a \emph{rewriting relation} with drag $D'$---using the (renamed) rewrite rule $\eta:L\ra R$ sharing no variable with $D$,
if there exist two compatible rewriting extensions $\ext(E,\xi)$ of $L$ and $\ext(E',\xi')$ of $R$ such that $D=E\otimes_\xi L$ and $D'=E'\otimes_{\xi'} R$.
\end{defi}

Since the extension drags $E$ and $E'$ must be equimorphic, it is tempting to take them to be identical. This is of course impossible in case $D$ and $D'$ do not share vertices, in which case only the sprouts of $E$ and $E'$ can be shared, which simplifies condition (2) already to $\xi'(s)=\eta(\xi(s))$. Usually, however, only $D$ is given, and $D'$ is defined by the rewriting process, in which case the same extension $E$ can do for both, and rewriting can be defined as a computation mechanism.

\begin{defi}[Strong compatibility]
Given a rule $\eta:L\ra R$, two rewriting extensions $\ext(E,\xi)$ and $\ext(E',\xi')$ of $L$ and $R$, respectively, are \emph{strongly compatible} if
\begin{enumerate}
        \item $E=E'$;
        \item for all sprout $s$ of $E$, $\xi'(s)=\eta(\xi(s))$;
    \item for all sprouts $s:x$ of $L$ and $t:x$ of $R$, the subdrags
    $\rest{E}{\xi(s)}$ and $\rest{E}{\xi'(t)}$ are equimorphic.
 \end{enumerate}
\end{defi}

Using the same drag $E$ for both extensions does not mean that all of $E$ is used for $L$ and all of $E$ is used for $E'$.
It may be that there are more sprouts labeled $x$ (for some $x$) in $R$ than in $L$; mapping these sprouts to different subdrags of $E$ requires more of them for $R$ than for $L$. 
So, the same rewrite could be achieved with two more economic drags $E$ and $E'$, that would not be equimorphic anymore, but which enjoy a more complex relationship.

\begin{defi}[Functional rewriting]
\label{d:rew}
A drag $D$ \emph{rewrites} to a drag $D'$ using the stringent rewrite rule $\eta:L\ra R$ sharing no variable with $D$,
if there exist two strongly compatible rewriting extensions $\ext(E,\xi)$ of $L$ and $\ext(E',\xi')$ of $R$ such that
$D=E\otimes_\xi L$ and $D'=E'\otimes_{\xi'} R$.

We say that drag $D$ \emph{rewrites} to drag $D'$ with rewrite system $\cR$, all of whose rules are stringent, denoted  $D\dlrps{}{\cR} D'$, if $D$ rewrites to $D'$ using some rule $\eta\in\cR$.
\end{defi}

\begin{exa}
\label{ex:rew}
Consider the two rewriting examples of Figure~\ref{f:rew}.
The input drags to be rewritten and the resulting output drags are in black. 
The rule $h(x,x)\ra k(x,x)$ is in red. 
Its various sprouts, all labeled $x$, are not shared; hence, we can use our naming conventions.
The extension drags are in blue, and the switchboards in black.

In the upper example, the right-hand side switchboard $\xi'$ coincides roughly with the left-hand side switchboard $\xi$. Note that the left- and right-hand sides of the rule have a single root, the vertices $h$ and $k$, respectively. Note also that these switchboards do satisfy strong compatibility.

In the lower example, the switchboard $\xi'$ differs from $\xi$ in that both left-hand side $x$'s are mapped to $a_2$ (there is no other choice) while the two right-hand sides $x$'s are mapped to $a_1$ and $a_2$, respectively, which yields a quite different result. Note that $a_1$ and $a_2$ generate equimorphic subdrags reduced to a single vertex labeled $a$ having two roots. Here, the right-hand side switchboard does not force sharing, that's why we can map $x_3$ and $x_4$ to different vertices.

Using the rule $h(x,x)\ra k(x,x)$ in which $x$ is shared in the left-hand side would yield exactly the same result with the same switchboards. In this example, the switchboard forces sharing on the left-hand side, even if the rule does not.
\qed\end{exa}

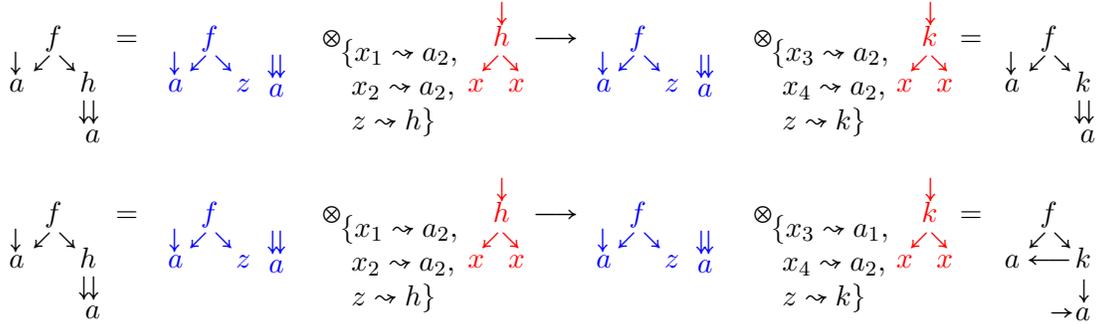
\begin{figure}[!t]
\setlength{\unitlength}{0.67cm} 
\hspace*{5mm}
\begin{picture}(17,9)(0,0)   
\thicklines

\put(0,2){
\put(-2.9,5){
\begin{picture}(4,3.25)(0,0)
\put(0.75,0.5){\textcolor{black}{$f$}}
\put(0.95,0.){\textcolor{black}{$\searrow$}}
\put(0.45,0.){\textcolor{black}{$\swarrow$}}
\put(0.05,0.0){$\da$}
\put(0.05,-0.4){\textcolor{black}{$a$}}
\put(1.45,-0.4){\textcolor{black}{$h$}}
\put(1.4,-0.95){$\da$}
\put(1.6,-0.95){$\da$}
\put(1.55,-1.4){$a$}
\end{picture}
}

\put(-0.55,5.5){=}

\put(0.2,5){
\begin{picture}(4,3.5)(0,0)
\put(0.75,0.5){\textcolor{blue}{$f$}}
\put(0.95,0.){\textcolor{blue}{$\searrow$}}
\put(0.45,0.){\textcolor{blue}{$\swarrow$}}
\put(0.1,0.0){\blue{$\da$}}
\put(0.1,-0.4){\textcolor{blue}{$a$}}
\put(1.45,-0.4){\textcolor{blue}{$z$}}

\put(2.05,-0.05){\blue{$\da$}}
\put(2.2,-0.05){\blue{$\da$}}
\put(2.1,-0.5){\blue{$a$}}
\end{picture}
}

\put(3.35,5.5){
$\otimes_{\hspace{-2mm}
\begin{array}{l}
\{\wir{x_1}{a_2},\\
\;\,\wir{x_2}{a_2},\\
\;\,\wir{z}{h}\}
\end{array}}$
}

\put(1.45,5){
\begin{picture}(4,3.5)(0,0)
\put(5.35,1){\textcolor{red}{$\da$}}
\put(5.3,0.5){\textcolor{red}{$h$}}
\put(5.,0.){\textcolor{red}{$\swarrow$}}
\put(4.8,-0.4){\textcolor{red}{$x$}}
\put(5.4,0.){\textcolor{red}{$\searrow$}}
\put(5.6,-0.4){\textcolor{red}{$x$}}
\end{picture}
}

\put(7.75,5.5){$\dlrps{}{}$}

\put(8.7,5){
\begin{picture}(4,3.5)(0,0)
\put(0.75,0.5){\textcolor{blue}{$f$}}
\put(0.95,0.){\textcolor{blue}{$\searrow$}}
\put(0.45,0.){\textcolor{blue}{$\swarrow$}}
\put(0.1,0.0){\blue{$\da$}}
\put(0.1,-0.4){\textcolor{blue}{$a$}}
\put(1.45,-0.4){\textcolor{blue}{$z$}}

\put(2.05,-0.05){\blue{$\da$}}
\put(2.2,-0.05){\blue{$\da$}}
\put(2.1,-0.5){\blue{$a$}}
\end{picture}
}

\put(11.9,5.5){
$\otimes_{\hspace{-2mm}
\begin{array}{l}
\{\wir{x_3}{a_2},\\
\;\,\wir{x_4}{a_2},\\
\;\,\wir{z}{k}\}
\end{array}}$
}

\put(9.95,5){
\begin{picture}(4,3.5)(0,0)
\put(5.35,1){\textcolor{red}{$\da$}}
\put(5.3,0.5){\textcolor{red}{$k$}}
\put(5.,0.){\textcolor{red}{$\swarrow$}}
\put(4.8,-0.4){\textcolor{red}{$x$}}
\put(5.4,0.){\textcolor{red}{$\searrow$}}
\put(5.6,-0.4){\textcolor{red}{$x$}}
\end{picture}
}

\put(16.2,5.5){=}

\put(16.85,5){
\begin{picture}(4,3.5)(0,0)
\put(0.75,0.5){\textcolor{black}{$f$}}
\put(0.95,0.){\textcolor{black}{$\searrow$}}
\put(0.45,0.){\textcolor{black}{$\swarrow$}}
\put(0.05,0.0){$\da$}
\put(0.05,-0.4){\textcolor{black}{$a$}}
\put(1.45,-0.4){\textcolor{black}{$k$}}
\put(1.4,-0.95){$\da$}
\put(1.6,-0.95){$\da$}
\put(1.55,-1.4){$a$}
\end{picture}
}
}

\put(0,-1.5){
\put(-2.9,5){
\begin{picture}(4,3.25)(0,0)
\put(0.75,0.5){\textcolor{black}{$f$}}
\put(0.95,0.){\textcolor{black}{$\searrow$}}
\put(0.45,0.){\textcolor{black}{$\swarrow$}}
\put(0.05,0.0){$\da$}
\put(0.05,-0.4){\textcolor{black}{$a$}}
\put(1.45,-0.4){\textcolor{black}{$h$}}
\put(1.4,-0.95){$\da$}
\put(1.6,-0.95){$\da$}
\put(1.55,-1.4){$a$}
\end{picture}
}

\put(-0.55,5.5){=}

\put(0.2,5){
\begin{picture}(4,3.5)(0,0)
\put(0.75,0.5){\textcolor{blue}{$f$}}
\put(0.95,0.){\textcolor{blue}{$\searrow$}}
\put(0.45,0.){\textcolor{blue}{$\swarrow$}}
\put(0.1,0.0){\blue{$\da$}}
\put(0.1,-0.4){\textcolor{blue}{$a$}}
\put(1.45,-0.4){\textcolor{blue}{$z$}}

\put(2.05,-0.05){\blue{$\da$}}
\put(2.2,-0.05){\blue{$\da$}}
\put(2.1,-0.5){\blue{$a$}}
\end{picture}
}

\put(3.35,5.5){
$\otimes_{\hspace{-2mm}
\begin{array}{l}
\{\wir{x_1}{a_2},\\
\;\,\wir{x_2}{a_2},\\
\;\,\wir{z}{h}\}
\end{array}}$
}

\put(1.45,5){
\begin{picture}(4,3.5)(0,0)
\put(5.35,1){\textcolor{red}{$\da$}}
\put(5.3,0.5){\textcolor{red}{$h$}}
\put(5.,0.){\textcolor{red}{$\swarrow$}}
\put(4.8,-0.4){\textcolor{red}{$x$}}
\put(5.4,0.){\textcolor{red}{$\searrow$}}
\put(5.6,-0.4){\textcolor{red}{$x$}}
\end{picture}
}

\put(7.75,5.5){$\dlrps{}{}$}

\put(8.7,5){
\begin{picture}(4,3.5)(0,0)
\put(0.75,0.5){\textcolor{blue}{$f$}}
\put(0.95,0.){\textcolor{blue}{$\searrow$}}
\put(0.45,0.){\textcolor{blue}{$\swarrow$}}
\put(0.1,0.0){\blue{$\da$}}
\put(0.1,-0.4){\textcolor{blue}{$a$}}
\put(1.45,-0.4){\textcolor{blue}{$z$}}

\put(2.05,-0.05){\blue{$\da$}}
\put(2.2,-0.05){\blue{$\da$}}
\put(2.1,-0.5){\blue{$a$}}
\end{picture}
}

\put(11.9,5.5){
$\otimes_{\hspace{-2mm}
\begin{array}{l}
\{\wir{x_3}{a_1},\\
\;\,\wir{x_4}{a_2},\\
\;\,\wir{z}{k}\}
\end{array}}$
}

\put(9.95,5){
\begin{picture}(4,3.5)(0,0)
\put(5.35,1){\textcolor{red}{$\da$}}
\put(5.3,0.5){\textcolor{red}{$k$}}
\put(5.,0.){\textcolor{red}{$\swarrow$}}
\put(4.8,-0.4){\textcolor{red}{$x$}}
\put(5.4,0.){\textcolor{red}{$\searrow$}}
\put(5.6,-0.4){\textcolor{red}{$x$}}
\end{picture}
}

\put(16.2,5.5){=}

\put(16.85,5){
\begin{picture}(4,3.5)(0,0)
\put(0.75,0.5){\textcolor{black}{$f$}}
\put(0.95,0.){\textcolor{black}{$\searrow$}}
\put(0.45,0.){\textcolor{black}{$\swarrow$}}
\put(0.05,-0.4){\textcolor{black}{$a$}}
\put(0.5,-0.4){$\longleftarrow$}
\put(1.45,-0.4){\textcolor{black}{$k$}}
\put(1.5,-1){$\da$}
\put(1.45,-1.45){$a$}
\put(0.95,-1.45){$\ra$}

\end{picture}
}
}

\end{picture}

\vspace{-1cm}
\caption{Rewriting example.}\label{f:rew}
\end{figure}

So, the determination of a switchboard obeys precise rules, but leaves also some room for choosing the right-hand side switchboard among sometimes several possibilities. As a consequence, the result of a rewrite step is not entirely determined by matching, as it seems to be the case for terms.  
The very same definition is indeed used for rewriting a term with a term rewrite rule $L \ra R$: the two contexts $E$ and $E'$ are identical, condition (2) is implicit, and condition (3) is usually ensured by taking for each occurrence of $x$ in $R$ a disjoint copy of $\rest{E}{\xi(s)}$ for $\rest{E}{\xi'(t)}$, at least when $R$ is non-linear (otherwise it is possible to use $\rest{E}{\xi(s)}$ for $\rest{E}{\xi'(t)}$). So, uniqueness of the result of a term rewriting step is not true in general, the result being unique up to equimorphism only. This is hidden by the assumption that terms are defined up to equimorphism, an assumption which pops up when terms are considered as particular drags.

In the general case of drag rewriting, however, the two resulting drags obtained in Example~\ref{ex:rew} are not even equimorphic: they are sharing-equivalent.

Given now a rule $L\ra R$ such that $\ext(E,\xi)$ and $\ext(E,\zeta)$ are two equimorphic extensions for $R$, the result of rewriting with that rule and these extensions will not depend upon which extension is used, up to sharing-equivalence:

\begin{thm}
\label{t:unirew}
Let $D\dlrps{}{L\ra R} G$ and $D\dlrps{}{L\ra R} G'$, using compatible rewriting extensions. Then, $G$ and $G'$ are sharing-equivalent.
\end{thm}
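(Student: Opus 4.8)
The plan is to reduce everything to the closure of sharing-equivalence under product, Lemma~\ref{l:closureprod}, applied with the common right-hand side $R$ as the base drag on both sides. Both rewrites issue from the same $D$ at the same redex, so by Definition~\ref{d:rewrel} we may fix a single left rewriting extension $\ext(E,\xi)$ of $L$ with $D=E\otimes_\xi L$, while the two results arise as $G=E'\otimes_{\xi'}R$ and $G'=F'\otimes_{\zeta'}R$ from two right rewriting extensions $\ext(E',\xi')$ and $\ext(F',\zeta')$ of $R$, each compatible with $\ext(E,\xi)$ in the sense of Definition~\ref{d:te} through the rule's root-map $\eta$. Since product is commutative (Lemma~\ref{l:prodac}), $G=R\otimes_{\xi'}E'$ and $G'=R\otimes_{\zeta'}F'$, so by Lemma~\ref{l:closureprod} (with $R\she R$ trivially) it suffices to prove that $\ext(E',\xi')$ and $\ext(F',\zeta')$ are \emph{sharing-equivalent} extensions of $R$ per Definition~\ref{d:eqshareext}; that is, that $E'\she F'$ and $\xi'\she\zeta'$.

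I would dispatch the context drags first. Compatibility condition~(1) supplies equimorphisms $\omicron:E\equiv E'$ and $\omicron':E\equiv F'$; since equimorphisms compose and invert (Lemma~\ref{l:closureprop}), $\psi=\omicron'\circ\omicron^{-1}:E'\equiv F'$ is an equimorphism, and equimorphic drags are sharing-equivalent, so $E'\she F'$. The bijection $\psi$ is also what identifies the sprouts of $E'$ with those of $F'$, preserving variable labels, when comparing the switchboards. For $\xi'\she\zeta'$ I would check the two clauses of Definition~\ref{d:eqshareext} after splitting each switchboard into its \emph{context part} (sprouts of $E'$, resp.\ $F'$, wired to roots of $R$) and its \emph{rule part} (sprouts of $R$ wired to roots of $E'$, resp.\ $F'$). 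The context part is pinned down entirely by compatibility condition~(3): for every sprout $s$ of $E$ one has $\xi'(\omicron(s))=\eta(\xi(s))=\zeta'(\omicron'(s))$, so under $\psi$ the corresponding sprouts of $E'$ and $F'$ carry equal labels and are wired to the very same root of $R$, whose generated subdrag in $R$ is literally identical on both sides. For the rule part the origins are the sprouts of the one and same $R$, giving the variable-label clause at once; and for a sprout $t':x$ of $R$ with $x\in\Var{L}$, choosing any sprout $t:x$ of $L$ and applying condition~(2) to both rewrites (which share the left extension $\ext(E,\xi)$) yields $\subd{E'}{\xi'(t')}\equiv\subd{E}{\xi(t)}\equiv\subd{F'}{\zeta'(t')}$, so the two targets generate equimorphic, hence sharing-equivalent, subdrags.

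The main obstacle is exactly the target clause of the rule part for sprouts of $R$ whose variable does not occur in $L$: there condition~(2) is vacuous, and the wire-target is constrained only by coherence of each individual switchboard. Here I would use coherence to force, inside each of $\xi'$ and $\zeta'$ separately, all sprouts of a given label to resolve to equimorphic subdrags, and then transport these classes across the equimorphism $\psi:E'\equiv F'$, so that $\psi$ sends $\xi'$-targets to $\zeta'$-targets generating sharing-equivalent subdrags of $E'\equiv F'$; this is the step where the latitude left by compatibility is shown to stay within a single sharing class (and is precisely the phenomenon illustrated by Example~\ref{ex:rew}, where the results differ yet remain sharing-equivalent). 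Once $\xi'\she\zeta'$ is secured, Definition~\ref{d:eqshareext} makes $\ext(E',\xi')$ and $\ext(F',\zeta')$ sharing-equivalent extensions of $R$, and Lemma~\ref{l:closureprod} delivers $R\otimes_{\xi'}E'\she R\otimes_{\zeta'}F'$, i.e.\ $G\she G'$, completing the argument.
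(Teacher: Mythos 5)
Your overall plan is the same as the paper's: the entire published proof consists of the remark that compatible rewriting extensions are sharing-equivalent extensions, followed by an appeal to Theorem~\ref{t:sec} (equivalently Lemma~\ref{l:closureprod}, applied with the same $R$ on both sides). Your first two paragraphs supply the verification the paper leaves implicit, and they are correct as far as they go: the contexts are sharing-equivalent because equimorphisms compose and invert, the context-side wires match up via conditions (1) and (3) of Definition~\ref{d:te}, and the sprouts of $R$ labeled by variables of $L$ are handled by condition (2) through the common left extension.

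The gap is the very case you flagged yourself, and your proposed repair does not work. Coherence is a property of a \emph{single} well-behaved set of wires: it relates wires within $\xi'$ (or within $\zeta'$) whose origins share a label, and if a fresh variable $y\in\Var{R}\setminus\Var{L}$ labels just one sprout of $R$ it imposes no constraint whatsoever; in no case does it relate a $\xi'$-target to a $\zeta'$-target. Nor can the equimorphism $\psi:E'\equiv F'$ transport anything: $\psi$ is pinned down only on context sprouts by condition (3), and nothing forces $\psi(\xi'(t'))$, for a sprout $t':y$ of $R$, to generate a subdrag sharing-equivalent to the one generated by $\zeta'(t')$. Indeed, under your reading of the hypothesis (each right extension compatible only with the shared left extension) the statement is false: for the rule $a^{[1]}\ra f^{[1]}(y)$ with root-map $a\mapsto f$ and a context containing two distinct rooted vertices $b^{[1]}$ and $c^{[1]}$, wiring $y$ to $b$ in one step and to $c$ in the other satisfies every condition you invoke, yet yields $f(b)\oplus c^{[1]}$ and $f(c)\oplus b^{[1]}$, which are not sharing-equivalent. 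The intended reading of ``using compatible rewriting extensions'' is that the two extensions of $R$ are compatible \emph{with each other}, i.e., Definition~\ref{d:te} applied with $D=D'=R$ and the identity root-map; its condition (2) then quantifies over \emph{all} pairs of same-labeled sprouts of $R$, including those labeled by variables absent from $L$, and delivers the target clause of Definition~\ref{d:eqshareext} uniformly. With that reading your problematic third paragraph becomes unnecessary, and the rest of your argument closes exactly as in the paper.
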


\begin{proof}
Since compatible rewriting extensions are sharing-equivalent, $R$ and $R'$ are sharing-equivalent by Theorem \ref{t:sec}.
\end{proof}

This result applies to rewriting as a computing device, but also to rewriting as a relation. Note that we have not required that variables of the right-hand side $R $ of the rule $L\ra R$ all occur in $L$. The choice of $\zeta(y)$, if $y$ is such a variable, is given by matching $G$ with respect to $R$, when using rewriting as a relation. When using rewriting as a computing device, every choice of $\zeta(y)$ will give a specific $G$, but any two strongly compatible choices of $\zeta(y)$ will give sharing-equivalent $G$'s.

In practice, the choice of a particular extension for the right-hand side of a rewrite rule is important. What should be pointed out here is that relational rewriting, and to some extent functional rewriting as well, leave us entire liberty of making the choice one likes. In practice, this choice can be expressed via a strategy being a parameter of the rewriting step. For example, one could choose a maximal sharing strategy, as is often the case with term rewriting implementations (which become then dag implementations.)

\begin{exa}
\label{ex:rone}
Figure~\ref{f:gcrew} illustrates rewriting with a rule whose left-hand side originates from the example of wiring presented in Figure~\ref{f:gc}, and right-hand side is just a variable with 3 roots. 
Leftmost is the input drag, and rightmost is the result. 
Both are in black.
The rule is written in red, the context in blue, the switchboard in black.
The reader is invited to verify the result of the rewrite step by actually doing the composition calculation.
\qed\end{exa}

\begin{figure}[!t]
\setlength{\unitlength}{0.7cm} 
\hspace*{12mm}
\begin{picture}(17,7)(0,0)   
\thicklines

\put(-2,5){
\begin{picture}(4,3.5)(0,0)
\put(0.1,1){\textcolor{black}{$\da$}}
\put(0.05,0.5){\textcolor{black}{$f$}}
\put(0.2,0){\textcolor{black}{$\searrow$}}

\put(1.5,1.){\textcolor{black}{$\da$}}
\put(1.5,0.55){\textcolor{black}{$g$}}
\put(1.1,0){\textcolor{black}{$\swarrow$}}

\put(0.75,-0.4){\textcolor{black}{$h$}}
\put(0.65,-0.85){\textcolor{black}{$\circlearrowleft$}}
\put(0.65,-0.85){\textcolor{black}{$\circlearrowleft$}}
\put(0.2,-0.4){\textcolor{black}{$\ra$}}
\end{picture}
}

\put(0.6,5.5){=}

\put(1.3,5){
\begin{picture}(4,3.5)(0,0)
\put(0.05,1){\textcolor{blue}{$\da$}}
\put(0.,0.5){\textcolor{blue}{$f$}}
\put(0.15,0){\textcolor{blue}{$\searrow$}}

\put(1.75,1.4){\textcolor{blue}{\scriptsize 4}}
\put(1.75,1){\textcolor{blue}{$\da$}}
\put(1.7,0.5){\textcolor{blue}{$h$}}
\put(1.35,0){\textcolor{blue}{$\swarrow$}}

\put(0.8,-0.3){\textcolor{blue}{$x$}}
\put(0.8,0.1){\blue{$\da$}}
\end{picture}
}

\put(3.75,5.5){
$\otimes_{\hspace{-2mm}
\begin{array}{l}
\{\wir{x}{y},\\
\;\,\wir{y}{h}\}
\end{array}}$
}

\put(1.6,5){
\begin{picture}(4,3.5)(0,0)
\put(5.35,1){\textcolor{red}{$\da$}}
\put(5.3,0.5){\textcolor{red}{$g$}}
\put(5.3,0.){\textcolor{red}{$\da$}}
\put(5.3,-0.4){\textcolor{red}{$y$}}
\put(4.7,0.25){\textcolor{red}{\scriptsize 2}}
\put(4.7,-0.1){\textcolor{red}{$\searrow$}}
\end{picture}
}

\put(7.85,5.5){$\dlrps{}{}$}

\put(9,5){
\begin{picture}(4,3.5)(0,0)
\put(0.05,1){\textcolor{blue}{$\da$}}
\put(0.,0.5){\textcolor{blue}{$f$}}
\put(0.15,0){\textcolor{blue}{$\searrow$}}

\put(1.75,1.4){\textcolor{blue}{\scriptsize 4}}
\put(1.75,1){\textcolor{blue}{$\da$}}
\put(1.7,0.5){\textcolor{blue}{$h$}}
\put(1.35,0){\textcolor{blue}{$\swarrow$}}

\put(0.8,-0.3){\textcolor{blue}{$x$}}
\put(0.8,0.1){\blue{$\da$}}
\end{picture}
}

\put(11.4,5.5){
$\otimes_{\hspace{-2mm}
\begin{array}{l}
\{\wir{x}{y},\\
\;\,\wir{y}{h}\}
\end{array}}$
}

\put(8.5,5){
\begin{picture}(4,3.5)(0,0)
\put(5.3,1.3){\textcolor{black}{\scriptsize 3}}
\put(5.3,0.9){\textcolor{red}{$\da$}}
\put(5.3,0.5){\textcolor{red}{$y$}}
\end{picture}
}

\put(14.9,5.5){=}

\put(15.5,5){
\begin{picture}(4,3.5)(0,0)
\put(0.8,1){\textcolor{black}{$\da$}}
\put(0.75,0.5){\textcolor{black}{$f$}}
\put(0.75,0){\textcolor{black}{$\da$}}

\put(-0.1,-0.55){\textcolor{red}{\scriptsize 2}}
\put(0.2,-0.55){\textcolor{black}{$\ra$}}
\put(0.75,-0.55){\textcolor{black}{$h$}}
\put(0.65,-1){\textcolor{black}{$\circlearrowleft$}}
\put(0.65,-1){\textcolor{black}{$\circlearrowleft$}}
\end{picture}
}

\end{picture}

\vspace{-2.5cm}
\caption{Rewriting with the red rule.}\label{f:gcrew}
\end{figure}
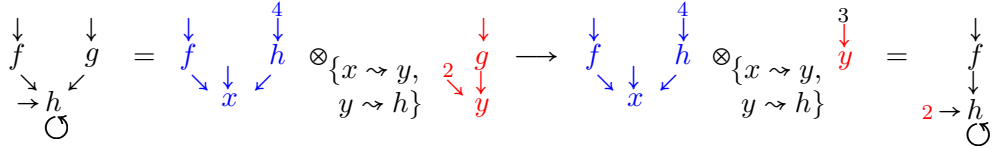

We now come back to the motivating example given in Section \ref{s:example}.

\begin{exa}
\label{ex:share}
In the fully shared subdrag scenario, the left- and right-hand sides are \emph{both} parts of the same drag, their sum. For the first rule of the example of Section~\ref{s:example},
we now have the following combined drag:
\[
\psset{unit=4mm}
\begin{tabular}{c@{\qquad}c}
\begin{pspicture}(-0.5,-2.5)(7,6)
\cnodeput*[fillstyle=solid,fillcolor=cyan]{0}(3,2.4){Aa}{$a$}
\cnodeput[]{0}(4.5,4.5){BFx}{$x$}
\cnodeput*{0}(7,3.4){CF_}{L}
\ncline[linecolor=black,linestyle=dotted]{|->}{CF_}{Aa}
\cnodeput[fillstyle=solid,linestyle=dashed,fillcolor=red]{0}(-2,-1){Ac}{$c$}
\cnodeput[fillstyle=solid,fillcolor=lightgray,linestyle=dashed,linecolor=black,dash=6pt 2pt]{0}(1.,0.5){Ao}{$o$}
\cnodeput*{0}(6,0.){CF_}{R}
\ncline[linecolor=black,linestyle=dashed]{|->}{CF_}{Ao}
\nccurve[linecolor=cadmiumgreen,linestyle=dashed,angleA=90,angleB=180]{->}{Ao}{BFx}
\ncline[linecolor=cadmiumgreen,linestyle=dashed]{<->}{Ac}{Ao}
\ncline[linecolor=red,linestyle=dotted]{->}{Aa}{BFx}
\end{pspicture}
\end{tabular}
\]
We are using dotted arrows for edges that are in the left side of the rule only, and dashed arrows when it's only on the right;
were an edge in both, we would leave it solid (like some of the roots in the second rule below).
Left roots are noted by an L; right ones by R.
Internal vertices are only on one side are likewise dotted or dashed.
So, this rule adds gray $o$ and red $c$ for each blue $a$, erasing the latter. 

The second rule looks now like this:
\[
\psset{unit=4mm}
\begin{tabular}{c@{\qquad}c}
\begin{pspicture}(2,-1)(5,9)
\cnodeput*{0}(2.4,5.9){L1}{L$_1$}
\cnodeput*{0}(4.5,2.9){L2}{L$_2$}
\cnodeput*{0}(8,7.7){LR}{L,R}
\cnodeput*[]{0}(8.5,0.5){R2}{R$_2$}
\cnodeput*[]{0}(3.5,0.5){R1}{R$_1$}
\cnodeput{0}(4.5,6.0){x}{$x$}
\cnodeput{0}(0,0){y}{$y$}
\cnodeput[fillstyle=solid,fillcolor=red]{0}(4.5,8){c}{$c$}
\cnodeput[linestyle=dotted,fillstyle=solid,fillcolor=lightgray]{0}(1.5,3.5){o}{$o$}
\cnodeput[linestyle=dashed,fillstyle=solid,fillcolor=yellow]{0}(6.5,1.5){e}{$e$}
\ncline[linecolor=black,linestyle=dotted]{|->}{L1}{o}
\ncline[linecolor=black,linestyle=dotted]{|->}{L2}{o}
\nccurve[linecolor=black,linestyle=dashed,angleA=220,angleB=270]{|->}{R2}{e}
\nccurve[linecolor=black,linestyle=dashed,angleA=40,angleB=200]{|->}{R1}{e}
\ncline[linecolor=black,linestyle=solid]{|->}{LR}{c}
\ncline[linecolor=cadmiumgreen,linestyle=dotted]{->}{o}{x}
\ncline[linecolor=cadmiumgreen,linestyle=dotted]{->}{o}{y}
\nccurve[linecolor=blue,linestyle=dashed,linewidth=1pt,angleA=180,angleB=90]{->}{c}{y}
\end{pspicture}
\end{tabular}
\]
The red $c$ vertex is shared; a yellow $e$ is introduced in place of the deleted gray $o$, with incoming edges L$_1$ and L$_2$ redirected to roots R$_1$ and R$_2$, respectively.
\end{exa}

We show with a last example that the drag format allows more sharing that available in usual term rewriting implementations.

\begin{exa}
\label{ex:rr}
Consider the rule described by the drag $f_1^{[1]}(f_2^{[1]}(x))$ with two roots, the left-hand side root pointing at the upper occurrence of $f$ (L${}=f_1$), and the right-hand side root pointing at the inner occurrence of $f$ (R${}=f_2$).
Rewriting with this rule amounts to eliminating an $f$, the outermost one, from any drag $D$ having two consecutive symbols $f$, for example $D=f(f(a))$. 
The entire drag $f(a)$ which results from the computation is therefore the very subterm $f(a)$ of $f(f(a))$, which goes beyond what is usually done in term rewriting implementations, where only $a$ would derive from the original term $f(f(a))$.

Consider now the rule given by the drag $f_1^{[1]}(f_2(x))\oplus f_3^{[1]}(x)$ made of the two terms $f(f(x))$ and $f(x)$ with no common subexpression, and two roots, $L=f_1$ and $R=f_3$. 
Rewriting the term $f(f(a))$ with that rule has a completely different effect: 
It still eliminates the topmost $f$, but it will now generate a new vertex labeled $f$, and possibly (but not necessarily, depending whether the two sprouts labeled $x$ are shared or not) a new vertex labeled $a$, resulting in a term $f(a)$ that may be an entire, or only partial, copy of the subterm $f(a)$ of the term $f(f(a))$.
\qed\end{exa}

We continue our investigation of drag rewriting with an important property to be used in the coming subsection.

\begin{lem}[Monotonicity]
\label{l:monr}
Given a drag rewrite rule $\eta:L\ra R$, let $\ext(C,\xi)$ and $\ext(C',\xi')$ be rewrite extensions of $L,R$, respectively, compatible with $\eta$, so that $U=C\otimes_\xi L \dlrps{}{} C'\otimes_{\xi'} R=V$. 
\begin{itemize}
\item 
Let $W$ be a drag compatible with both $U$ and $V$. 
Then,
$U \oplus W \dlrps{}{} V \oplus W$.
\item
Let $\ext(E,\zeta)$ and $\ext(E',\zeta)$ be two extensions of $U$ and $V$ compatible with a root-map from $U$ to $V$ induced by the equimorphism from $C$ to $C'$.
Then, 
$U'=E\otimes_\zeta U \dlrps{}{} E'\otimes_{\zeta'} V=V'$.
\end{itemize}
\end{lem}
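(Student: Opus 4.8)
The plan is to establish both closure properties by exhibiting, in each case, a pair of compatible rewriting extensions that witness the desired rewrite step, built from the given data by the algebraic laws of Section~\ref{s:alg}. For the first item (closure under sum), the natural candidates are $\ext(C\oplus W,\xi)$ of $L$ and $\ext(C'\oplus W,\xi')$ of $R$. The heart of the matter is the identity $U\oplus W=(C\oplus W)\otimes_\xi L$ (and symmetrically for $V$). I would prove it by unfolding the product as a wiring, $(C\oplus W)\otimes_\xi L=\wiring{((C\oplus W)\oplus L)}{\xi}$, rewriting the summand as $(C\oplus L)\oplus W$ via associativity, commutativity, and idempotence of sum (Lemma~\ref{l:sumac}), and then observing that every wire of $\xi$ has its origin and target among the sprouts and roots of $C\oplus L$. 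Since $W$ shares with $U=C\otimes_\xi L$ only surviving internal vertices (wiring preserves internal vertices by Lemma~\ref{l:edges}(1)) and none of the sprouts consumed by $\xi$, wiring acts locally and leaves the $W$-summand untouched, giving $\wiring{(C\oplus L)}{\xi}\oplus W=U\oplus W$.

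It then remains to check that these extensions are rewriting extensions and that they are compatible. Totality of $\xi_L$ and surjectivity of $\xi_{C\oplus W}$ onto $\roots{L}$ are inherited verbatim from $\ext(C,\xi)$, since $W$ contributes no wires, and $C\oplus W$ stays linear and disjoint from $L$ after renaming apart if necessary. For compatibility (Definition~\ref{d:te}), I extend the equimorphism $\omicron\colon C\to C'$ by the identity on $W$, yielding $C\oplus W\equiv_\omicron C'\oplus W$; conditions (2) and (3) involve only sprouts of $L,R$ and the wires of $\xi,\xi'$, hence are unaffected by the common summand $W$. This settles the first item.

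For the second item (closure under product), the workhorse is associativity and commutativity of product (Lemma~\ref{l:prodac}). Since that lemma already shows the combined switchboard $\zeta\cup\xi$ is well-behaved for $E\oplus C\oplus L$ and that the ternary product is independent of association, I would regroup $U'=E\otimes_\zeta(C\otimes_\xi L)=(E\otimes_{\zeta_1}C)\otimes_{\xi\cup\zeta_2}L$, where $\zeta$ splits—following the wire classification in the proof of Lemma~\ref{l:prodac}—into the part $\zeta_1$ connecting $E$ with the interface of $C$ and the part $\zeta_2$ connecting $E$ with those roots of $L$ that survive into $U$. Note that all sprouts of $L$ disappear in $U$ because $\ext(C,\xi)$ is a rewriting extension (totality of $\xi_L$), so no wire of $\zeta$ has a sprout of $L$ as origin and the split is clean. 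Setting $\hat C=E\otimes_{\zeta_1}C$ and symmetrically $\hat C'=E'\otimes_{\zeta'_1}C'$, the candidate witnesses are $\ext(\hat C,\xi\cup\zeta_2)$ of $L$ and $\ext(\hat C',\xi'\cup\zeta'_2)$ of $R$, giving $U'=\hat C\otimes_{\xi\cup\zeta_2}L$ and $V'=\hat C'\otimes_{\xi'\cup\zeta'_2}R$.

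Finally I would verify these are compatible rewriting extensions. The rewriting-extension conditions hold as before: totality on $L$ (resp.\ $R$) is kept, surjectivity onto $\roots{L}$ (resp.\ $\roots{R}$) is only reinforced by the extra wires of $\zeta_2$ (resp.\ $\zeta'_2$), and $\hat C$ is linear and disjoint from $L$. The crucial requirement is $\hat C\equiv\hat C'$, which I would derive from $E\equiv E'$ (condition~(1) of compatibility of $\ext(E,\zeta),\ext(E',\zeta')$) together with $C\equiv_\omicron C'$ and the hypothesis that the root-map relating $\zeta$ and $\zeta'$ is the one induced by $\omicron$, so $E\otimes_{\zeta_1}C$ and $E'\otimes_{\zeta'_1}C'$ are built from equimorphic pieces wired through corresponding switchboards; conditions (2)–(3) for the regrouped extensions follow by combining the analogous conditions for $\ext(E,\zeta),\ext(E',\zeta')$ on the $C/C'$ interface with those for $\ext(C,\xi),\ext(C',\xi')$ on the $L/R$ interface, the match across the join being guaranteed by $\eta$ and $\omicron$. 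The main obstacle I anticipate is precisely this bookkeeping in the second item: justifying the clean split of $\zeta$ through the associativity regrouping and, above all, establishing $\hat C\equiv\hat C'$—that composing equimorphic contexts through switchboards related by the induced root-map yields equimorphic drags—while checking that the surviving-root wires $\zeta_2,\zeta'_2$ are matched by $\eta$ so that the sprout conditions of Definition~\ref{d:te} survive. The first item, by contrast, is essentially a locality property of wiring plus a routine extension of the equimorphism by the identity on $W$.
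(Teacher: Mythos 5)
Your proof takes essentially the same route as the paper's: for the second item the paper likewise uses associativity of product (Lemma~\ref{l:prodac}) to regroup $U'=E\otimes_\zeta(C\otimes_\xi L)$ as $(E\otimes_\theta C)\otimes_\delta L$ (your $\theta=\zeta_1$, $\delta=\xi\cup\zeta_2$), concludes these are extensions by closure of equimorphism under product, and obtains compatibility from the root-map induced by the equimorphism $C\equiv C'$ --- including your closing observation, which is exactly the paper's remark that an arbitrary root-map would only make the products, not the subdrags of $C$ and $C'$ themselves, equimorphic. The paper dismisses the first item as ``straightforward,'' so your locality-of-wiring argument there merely fills in detail the paper omits (note that only associativity and commutativity of sum are needed for the regrouping, not idempotence).
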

\begin{proof}
The first statement is straightforward.
The second uses associativity of product (Lemma~\ref{l:prodac}) to define the extensions of $L$ and $R$: $U'= (E\otimes_\theta C) \otimes_\delta L$ for some switchboards $\theta$ and $\delta$, and $V'= (E'\otimes_{\theta'} C') \otimes_{\delta'} R$. 
It follows from the closure property of equimorphisms by product that $\ext(E\otimes_\theta C, \delta)$ and $\ext(E\otimes_\theta C, \delta)$ are indeed extensions of $L$ and $R$. 
These extensions are compatible because the root-map from $U$ to $V$, induced by the equimorphism from $C$ to $C'$, ensures that sprouts $s$ of $L$ and $t$ of $R$ sharing the same variable label $x$ are mapped by $\theta$ and $\delta$ to products of equimorphic subdrags $A,B$ of $E,C$, respectively, for $s$, and equimorphic subdrags $A',B'$ of $E',C'$, respectively, for $t$. 
An arbitrary root-map would not ensure that the subdrags $B$ and $B'$ of $C$ and $C'$ are equimorphic, but that their products $B\otimes_\zeta L$ and $B'\otimes_{\zeta'} R$ happen to be equimorphic, which would not imply that $B$ and $B'$ are themselves equimorphic.
\end{proof}

A final question arises: 
Given a drag $D$, a rule $L\ra R$, and a rewriting extension $\ext(E,\xi)$ for $L$, does there exist a rewriting extension $\ext(E,\xi')$ for $R$? 
In general, yes, but there is a particular case for which this is not the case. 
It may indeed be that the switchboard $\xi$, which is well-behaved for $L$, is not well-behaved for $R$. 
This happens in the following situation: $u$ is a rooted internal vertex of $L$, $s:x$ is a sprout of $L$ accessible from $u$, $s':x$ is a rooted sprout of $R$ such that $\eta(u)=s'$, $t:y$ is a sprout of $E$, $\xi(s)=t$, and $\xi(t)=u$. 
Switchboard $\xi$ is well-behaved with respect to $L$ because $u$ is an internal vertex. 
Now, $\xi'(s')= t$ and $\xi'(t)=\eta(u)=s'$; hence, $\xi'$ is not well-behaved. 
This is the only situation where this may arise, but this is why we always assumed the existence of one rewriting extension for $L$ and one for $R$.

Note finally that our definition of rewriting implies that a rewrite succeeds as soon as compatible rewrite extensions exist: dangling edges can't be created in our model. The situation would be different without indegree preservation of matching.  
On the other hand, indegree preservation implies that rules must be duplicated with various numbers of roots at all vertices so as to fit with all possible rewriting extensions. 
We shall come back to these questions when investigating the categorical properties of drags in Section \ref{s:matching}.

\subsection{Congruences joinable by rewriting}
Our goal now is to prove that congruent pairs are joinable when the congruence $\cE$ is generated by a confluent set of drag rewrite rules $\cR$, yet again generalizing  the  situation for terms. 
Furthermore, since a congruence contains sharing equivalence, rewriting should actually operate uniformly on sharing-equivalent drags. 
We will therefore assume two different properties of a rewrite system:

\begin{defi}[Church-Rosser]
A drag rewrite system $\cR$ is 
\begin{enumerate}
\item
\emph{confluent modulo sharing } if for all drags $U,V,W$ such that 
$V \drlpstr{\cR} U \dlrpstr{\cR} V$, there exist drags $W, W'$ such that
$V\dlrpstr{\cR} W \she \drlpstr{\cR} W'$;
\item
\emph{coherent modulo sharing} if, for all drags $U,V,W$ such that 
$V \she U \dlrpstr{\cR} V$, there exist $W$ such that
$V\dlrpstr{\cR} W \she V$.
\item \emph{Church-Rosser (modulo sharing)} if it is both confluent and coherent modulo sharing.
\end{enumerate}
\end{defi}

\begin{thm}
\label{t:rewcong}
Let $\cE$ be a set of equations and $\cR$ a Church-Rosser set of rules that satisfy the following two assumptions: 
\begin{enumerate}
\item $\forall U=^\eta V\in \cE\;\st
U \dlrpstr{\cR} \she \drlpstr{\cR} V$ (root-map $\eta$);
\item $\forall \eta: L\ra R\in \cR\;\st L=_\cE^\eta R$.
\end{enumerate}
Then, $U\;=_\cE\;V$ iff $U \dlrpstr{\cR} \she \drlpstr{\cR} V$.
\end{thm}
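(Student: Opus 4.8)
The plan is to prove the two implications separately; throughout, write $\bowtie$ for the joinability-modulo-sharing relation $\dlrpstr{\cR}\,\she\,\drlpstr{\cR}$, which inherits a root-map from its constituent rewrite and sharing steps.

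\emph{Soundness} ($\bowtie\,\subseteq\,{=_\cE}$, the ``if'' direction). I would first show that a single rewrite step is an $\cE$-equality. If $D\dlrps{}{\cR}D'$ with a rule $\eta:L\ra R$, then $D=E\otimes_\xi L$ and $D'=E'\otimes_{\xi'}R$ for strongly compatible rewriting extensions $\ext(E,\xi)$ and $\ext(E',\xi')$; these are in particular compatible with $\eta$, so, since $L=_\cE^\eta R$ by assumption~(2), closure of $=_\cE$ under product (property~(3) of a congruence, together with commutativity of product, Lemma~\ref{l:prodac}) yields $D=_\cE D'$. Hence $\dlrpstr{\cR}\,\subseteq\,{=_\cE}$ as $=_\cE$ is reflexive and transitive, and $\she\,\subseteq\,{=_\cE}$ because a congruence is closed under sharing equivalence (property~(1), applied to a reflexive pair). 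Composing and using transitivity of $=_\cE$ gives $\bowtie\,\subseteq\,{=_\cE}$.

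\emph{Completeness} (${=_\cE}\,\subseteq\,\bowtie$, the ``only if'' direction). As $=_\cE$ is by definition the least congruence containing $\cE$ (Definition~\ref{l:eq}), it suffices to verify that $\bowtie$ is itself a congruence containing $\cE$. Containment of $\cE$ is precisely assumption~(1), and reflexivity and symmetry of $\bowtie$ are immediate. Closure under sum follows by lifting the first clause of the Monotonicity Lemma~\ref{l:monr} from one step to $\dlrpstr{\cR}$ and combining it with closure of $\she$ under parallel composition (Lemma~\ref{l:closuresum}); closure under product follows in the same way from the second clause of Lemma~\ref{l:monr} and closure of $\she$ under cyclic composition (Theorem~\ref{t:sec}). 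Closure under sharing equivalence uses coherence modulo sharing: given $U\bowtie V$ and $V\she V'$, I would unfold to $U\dlrpstr{\cR}A\she B\drlpstr{\cR}V$ and push the sharing step $V'\she V\dlrpstr{\cR}B$ through the reduction to get $B'$ with $V'\dlrpstr{\cR}B'\she B$; transitivity of $\she$ then gives $A\she B'$, so $U\bowtie V'$.

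The crux, and the step I expect to be the main obstacle, is \emph{transitivity} of $\bowtie$, which is exactly the abstract Church--Rosser-modulo argument. Given $U\bowtie V$ and $V\bowtie V''$, I would unfold them to $U\dlrpstr{\cR}A\she B\drlpstr{\cR}V$ and $V\dlrpstr{\cR}C\she D\drlpstr{\cR}V''$, close the peak $B\drlpstr{\cR}V\dlrpstr{\cR}C$ by confluence modulo sharing into $B\dlrpstr{\cR}P\she Q\drlpstr{\cR}C$, and then apply coherence modulo sharing twice: to $A\she B\dlrpstr{\cR}P$ to obtain $A\dlrpstr{\cR}A'\she P$, and to $D\she C\dlrpstr{\cR}Q$ to obtain $D\dlrpstr{\cR}D'\she Q$. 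Since $\she$ is an equivalence (its classes being the maximally shared forms of Lemma~\ref{l:msf}), $A'\she P\she Q\she D'$ gives $A'\she D'$, whence $U\dlrpstr{\cR}A'\she D'\drlpstr{\cR}V''$, i.e.\ $U\bowtie V''$. The remaining care---here and throughout completeness---is the bookkeeping of root-maps along the composite relation (each rewrite and each sharing step carries one, and they must compose so that $\bowtie$ is a root-map-equipped congruence in the sense of Definition~\ref{l:eq}), together with checking that compatibility of the auxiliary drag or extension survives a multi-step reduction when lifting Lemma~\ref{l:monr}; these are routine but must be tracked to remain within the congruence framework.
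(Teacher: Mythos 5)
Your proposal is correct and takes essentially the same route as the paper: soundness is reduced to a single rewrite step via assumption~(2) and closure of $=_\cE$ under product, and your verification that the joinability-modulo-sharing relation is a congruence containing $\cE$ is precisely the paper's induction on the derivation of $=_\cE$, using Lemma~\ref{l:monr} for the sum and product cases, coherence for the sharing case, and the identical confluence-plus-two-coherence-steps diagram for transitivity. The only differences are presentational: you spell out reflexivity, symmetry, and the root-map bookkeeping that the paper leaves implicit.
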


\begin{proof}
This will follow from the soundness and completeness lemmas (\ref{l:sound} and \ref{l:complete}) below.
\end{proof}

Hence, it is sound and complete to check a congruence by rewriting---possibly until a normal form is obtained if rewriting is confluent and coherent (with respect to sharing equivalence). 
The resulting terms---possibly normal forms---should be checked for equality under sharing equivalence, which is decidable. 

Note that equations need not be pairs of patterns; they can be built from arbitrary drags having the same number of roots, provided rules (built from patterns) exist that satisfy (1). 
When equations are built from patterns, assumption (1) is usually obtained by orienting the equations in $E$ into rules, and then completing the obtained set of rules until confluence and coherence are satisfied, providing  assumption (2).

\begin{lem}[Soundness]
\label{l:sound}
Assuming (2), $\dlrpstr{\cR} \;\subseteq\; =_\cE$.
\end{lem}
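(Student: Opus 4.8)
The plan is to reduce the statement to a single rewrite step and then read that step off as an instance of the congruence's closure under product. First I would observe that it suffices to prove the one-step inclusion $\dlrps{}{\cR} \subseteq {=_\cE}$: since $=_\cE$ is by definition a congruence, it is in particular an equivalence relation, hence reflexive and transitive, so it automatically contains the reflexive--transitive closure $\dlrpstr{\cR}$ of any relation it contains. Thus everything comes down to showing that $D \dlrps{}{\cR} D'$ implies $D =_\cE D'$.

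Next I would unfold a single step. By Definition~\ref{d:rew}, $D \dlrps{}{\cR} D'$ via a rule $\eta: L \ra R \in \cR$ means there are strongly compatible rewriting extensions $\ext(E,\xi)$ of $L$ and $\ext(E',\xi')$ of $R$ with $D = E \otimes_\xi L$ and $D' = E' \otimes_{\xi'} R$. I would check that strong compatibility is a special case of the compatibility of Definition~\ref{d:te} (it forces $E = E'$ and takes the connecting equimorphism to be the identity, which trivially verifies conditions (1)--(3) there), so these extensions are indeed compatible with the root-map $\eta$.

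The heart of the argument is then to invoke assumption~(2), which gives $L =_\cE^\eta R$, together with the closure of the congruence $=_\cE$ under product. Applying that closure property to the equated pair $L =_\cE^\eta R$ and the compatible extensions $\ext(E,\xi)$, $\ext(E',\xi')$ yields $L \otimes_\xi E =_\cE^\mu R \otimes_{\xi'} E'$ for the induced root-map $\mu$. Finally, commutativity of product (Lemma~\ref{l:prodac}) rewrites $L \otimes_\xi E = E \otimes_\xi L = D$ and $R \otimes_{\xi'} E' = E' \otimes_{\xi'} R = D'$, giving $D =_\cE D'$ and closing the one-step case.

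The main obstacle is not any computation but the precise alignment of hypotheses: one must verify that the compatibility conditions baked into the definition of a rewrite step are exactly the conditions (the compatible root-map, the equimorphic context drags, and the equimorphic subdrags at shared sprouts) demanded by the product-closure clause of a congruence. This match is by design, but it is the step that has to be justified carefully; the reduction to a single step and the commutativity flip of $E \otimes_\xi L$ into $L \otimes_\xi E$ are then routine.
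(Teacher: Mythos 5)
Your proof is correct and follows essentially the same route as the paper's: the paper also reduces to a single rewrite step (by induction on derivation length, which your reflexivity--transitivity argument matches) and then disposes of that step as "a consequence of assumption (2)," which implicitly means exactly what you spell out --- that strong compatibility instantiates the compatibility hypotheses of the congruence's product-closure clause, applied to $L =_\cE^\eta R$. Your version merely makes explicit the details (the verification that strong compatibility is a special case of Definition~\ref{d:te}, and the commutativity flip via Lemma~\ref{l:prodac}) that the paper's one-line proof leaves to the reader.
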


\begin{proof}
By induction on the length of the derivation $\dlrpstr{\cR}$. Since equimorphism is included in any congruence, it is enough to carry out the proof for a single step $U \dlrps{p}{\cR} V$, which is a consequence of assumption (2).
\end{proof}

\begin{lem}[Completeness]
\label{l:complete}
Assuming (1), let $U=_\cE^\eta V$. Then, $U\dlrpstr{\cR} \she \drlpstr{\cR} V$.
\end{lem}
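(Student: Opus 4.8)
The plan is to exhibit the relation of \emph{joinability modulo sharing}---the relation holding between $U$ and $V$ exactly when $U\dlrpstr{\cR}\she\drlpstr{\cR}V$---as a congruence that contains every equation of $\cE$. Since $=_\cE$ is by definition the least congruence containing $\cE$, the inclusion of $=_\cE$ in joinability modulo sharing is then immediate, and that inclusion is precisely the statement. The root-maps attached to each pair are threaded along by composing the root-maps of the individual steps; I treat this as routine bookkeeping and concentrate on the underlying drags. Containment of $\cE$ is exactly assumption~(1). Reflexivity is trivial, and symmetry is immediate because $\she$ is symmetric.

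For the closure conditions of a congruence I would argue as follows. Closure under sum: from $U\dlrpstr{\cR}P\she Q\drlpstr{\cR}V$ and a drag $W$ compatible with the relevant drags, repeated application of Monotonicity (Lemma~\ref{l:monr}, first item) lifts each rewrite step to a step on the sum, giving $U\oplus W\dlrpstr{\cR}P\oplus W$ and $V\oplus W\dlrpstr{\cR}Q\oplus W$, while Lemma~\ref{l:closuresum} gives $P\oplus W\she Q\oplus W$; the three combine to the required joinability. Closure under product is analogous, lifting a derivation through a compatible extension by the second item of Lemma~\ref{l:monr} (after aligning the two shapes of product via commutativity, Lemma~\ref{l:prodac}) together with closure of sharing-equivalence under product (Lemma~\ref{l:closureprod}). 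Closure under sharing-equivalence---if $U\dlrpstr{\cR}P\she Q\drlpstr{\cR}V$ and $V\she V'$---follows from coherence modulo sharing applied to $V'\she V\dlrpstr{\cR}Q$, which yields $V'\dlrpstr{\cR}Q'$ with $Q'\she Q$; transitivity of $\she$ then closes the diagram.

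The main obstacle is transitivity, and this is exactly where the Church-Rosser hypothesis on $\cR$ (a standing assumption of the enclosing Theorem~\ref{t:rewcong}) is indispensable. Suppose $U\dlrpstr{\cR}P\she Q\drlpstr{\cR}V$ and $V\dlrpstr{\cR}P'\she Q'\drlpstr{\cR}V'$. The two derivations out of $V$ form a peak $Q\drlpstr{\cR}V\dlrpstr{\cR}P'$; confluence modulo sharing resolves it into $Q\dlrpstr{\cR}R\she R'\drlpstr{\cR}P'$. On the left I would then push $P\she Q\dlrpstr{\cR}R$ through coherence modulo sharing to $P\dlrpstr{\cR}S\she R$, and symmetrically on the right push $Q'\she P'\dlrpstr{\cR}R'$ to $Q'\dlrpstr{\cR}S'\she R'$. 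Chaining the sharing-equivalences $S\she R\she R'\she S'$ and using transitivity of $\she$ gives $U\dlrpstr{\cR}S\she S'\drlpstr{\cR}V'$, that is, $U$ and $V'$ are joinable modulo sharing.

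Having checked that joinability modulo sharing is a congruence containing $\cE$, minimality of $=_\cE$ finishes the argument. The delicate points to watch are that confluence alone does not suffice---the interleaving of $\she$ and $\dlrpstr{\cR}$ forces the use of coherence at each of the three junctions in the transitivity chase---and that the compatibility and root-map side-conditions demanded by the sum and product clauses are genuinely met by the drags produced along the derivations, which is precisely where Lemmas~\ref{l:monr}, \ref{l:closuresum}, and~\ref{l:closureprod} do their work.
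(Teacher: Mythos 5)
Your proposal is correct and is essentially the paper's proof in different packaging: showing that joinability modulo sharing is a congruence containing $\cE$ and invoking minimality of $=_\cE$ is precisely what the paper's induction on the derivation of $=_\cE$ accomplishes, case by case. Your closure verifications coincide with the paper's---coherence for the sharing-equivalence case, Lemma~\ref{l:monr} together with Lemmas~\ref{l:closuresum} and~\ref{l:closureprod} for sum and product, and one application of confluence followed by two applications of coherence for transitivity (the paper's case $V=_\cE^\mu W$).
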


\begin{proof}
By induction on the definition of the congruence $=_\cE$. If $U=^\eta V\in\cE$, the result follows by Assumption (1). Otherwise, $U\dlrpstr{\cR} \she \drlpstr{\cR} V$  by induction hypothesis, and let
\begin{enumerate}
\item 
$V \she W$. We then conclude by applying the coherence assumption..
\item 
$W$ compatible with both $U$ and $V$. Then $U\oplus W \dlrpstr{\cR} \she \drlpstr{\cR} V$, since rewriting is closed under sum with a compatible drag, as well as sharing-equivalence.
\item 
$\ext(C,\xi)$ and $\ext(C',\xi')$ rewriting extensions of $U$ and $V$ respectively, compatible with the root-map $\eta$. By repeated applications of Lemma~\ref{l:monr}, 
$C\otimes_\xi U \dlrpstr{} D\otimes_\zeta W$, where $ext(D,\zeta)$ is an extension of $W$ compatible with the extension $\ext(C,\xi)$ of $U$.
Likewise, $C'\otimes_{\xi'} V \dlrpstr{} D'\otimes_{\zeta'} W'$, where $ext(D',\zeta')$ is an extension of $W'$ compatible with the extension $\ext(C',\xi')$ of $V$.
We are therefore left showing that these extensions are compatible, which results from the transitivity property of compatibility. 
\item $V=_\cE^\mu W$. Then, $U\dlrpstr{\cR} \she V'\drlpstr{\cR} V \dlrpstr{\cR} V''\she \drlpstr{\cR} W$. By confluence assumption, $V'\dlrpstr{\cR} T\she T'\drlpstr{\cR} V''$ for some sharing-equivalent drags $T$ and $T'$.
We then conclude by using the coherence assumption twice. 
\qedhere
\end{enumerate}

\end{proof}

Ensuring coherence of $\cR$ with respect to sharing equivalence can be achieved by completing the set of rules until sharing equivalence is obtained. 
The problem indeed is that two distinct subdrags of a left-hand side $L$ of rule may become equimorphic when computing the product of $L$ with respect to some rewrite extension. Then, these two subdrags may become identical by applying a sharing equivalence step, and the rule won't be applicable anymore. For example, assume $L=f(g(x),g(y))\ra f(a)$, and $D=[z=g(a)]f(z)$. Then, rewriting $D$ would require using the rule
$L'=[z=g(x)]f(z)\ra f(a)$. Computing these extensions requires unifying the subdrags $g(x)$ and $g(y)$, a process very similar to Knuth and Bendix completion \cite{knuth70,DBLP:books/el/leeuwen90/DershowitzJ90}. We know that drag unification is solvable for drags whose roots are organized as lists with repetitions, and that any solvable unification problem has a most general solution, which is unique up to isomorphism. We assume here (there is little doubt about that conjecture) that this is the same for our current model of drags with multisets of roots. So, making a set of rules coherent with respect to sharing-equivalence should not be difficult. Whether it could be terminating is another question.

Theorem \ref{t:rewcong} does not state that these equalities, generated by the equations for the first and by the rules for the second, coincide. Indeed, they do not:

\begin{exa}
\label{ex:rewcong}
Let $\cE=\{f^{[1]}(x)=g^{[1]}(x), f^{[1]}(x)=h^{[1]}(x,x)\}$, and $\cR=\{f^{[1]}(x)\ra h^{[1]}(x,x), g^{[1]}(x)\ra h^{[1]}(x,x)\}$. Using the extension $\ext(a^{[1]},\{\wir{x}{a}\})$ for both left- and right-hand sides of the equation, we get
$f(a)=_\cE g(a)$. Using now the extensions
$\ext(a^{[1]}\oplus a^{[1]}, \{\wir{x_1}{a_1}\})$ and
$\ext(a^{[1]}\oplus a^{[1]}, \{\wir{x_2}{a_1}, \wir{x_3}{a_2}\})$ for the first rule, we get
$f(a) \dlrps{}{} h(a,a)$. Finally, using the extensions 
$\ext(a^{[1]}, \{\wir{x_1}{a}\})$ and 
$\ext(a^{[1]}, \{\wir{x_2}{a}, \wir{x_3}{a}\})$ for the second rule, we get
$g(a) \dlrps{}{} [z=a]h(z,z)$. Therefore:
$$f(a) \dlrps{}{} h(a,a) \;\she\; [x=a]h(x,x) \drlps{} g(a).$$
\end{exa}

Could we define congruences and rewriting so that they would coincide? The answer is yes, we believe. 
For the former, we would simply need to include sharing-equivalence in the definition of a congruence. 
For the latter, we would need for rewriting to be modulo sharing, a more complex task indeed, which is yet another way to achieve coherence. 
Since distinct subdrags of a left-hand side of rule $L$ could become equimorphic, hence shared, by computing the product $D$ of $L$ with an appropriate rewrite extension, the injection of $L$ into $D$ would no longer be be injective. 
This question would be worth exploring, but is beyond the scope of the present paper.
 
\section{Drag Rewriting versus Term Rewriting}
\label{ss:drvtr}
In this section, we consider term rewrite rules having possibly a root at their head, and (try to) apply them to terms, including terms that possibly involve sharing.

Consider a rule $f(x,x)\ra g(x,x)$ with no roots on either side. The sprouts are $\red{x_1},\red{x_2}$ in the left-hand side and $\red{x_3}, \red{x_4}$ in the right-hand side.

Let $t_1=f(a,a)$ with vertices
    $f$ (on top) and $a_1$ (shared).
    Let $t_2=f(a,a)$, with vertices $f, a_1, a_2$ 
    And let $t'_1=g(a,a)$, with the shared vertex $a_1$ being the same as above,
    and $t'_2=g(a,a)$  with vertices $a_2$ on the left and $a_1$ on the right.
    \begin{itemize}
    \item
    $t_1 \dlrps{}{} t'_1$ with
    extension drag $\blue{E}$ being the two-rooted vertex \blue{$a_1^{[2]}$} and switchboards being $\xi=\{\red{x_1}\mapsto \blue{a_1}, \red{x_2}\mapsto \blue{a_1}\}$
    and $\xi'=\{\red{x_3}\mapsto \blue{a_1}, \red{x_4}\mapsto \blue{a_1}\}$.
    \item
    $t_2 \dlrps{}{} t'_2$ with
    extension drag \blue{$E=a_1^{[1]}\oplus a_2^{[1]}$}, and switchboards $\xi=\{\red{x_1}\mapsto \blue{a_1}, \red{x_2}\mapsto \blue{a_2}\}$ and $\xi'=\{\red{x_3}\mapsto \blue{a_2}, \red{x_4}\mapsto \blue{a_1}\}$.
    \item 
    $t_1 \dlrps{}{} t'_2$ with extension drags (two are needed here) \blue{$E=a_1^{[2]}$} and \blue{$E'=a_1^{[1]}\oplus a_2^{[1]}$} and switchboards
    $\xi=\{\red{x_1}\mapsto \blue{a_1}, \red{x_2}\mapsto \blue{a_1}\}$ and $\xi'=\{\red{x_3}\mapsto \blue{a_2}, \red{x_4}\mapsto \blue{a_1}\}$.
    Note that the vertex $a_1$ does not have the same number of roots in $E$ and $E'$. Note also that $E'$ cannot be identical to $E$ since $t'_2$ has additional vertices that do not originate from the rewrite rule, they must therefore come from the extension drag. This rewrite is therefore relational, using the isomorphic copy (a "clone") $a_2$ of $a_1$.
      \item 
    $t_2 \dlrps{}{} t'_1$ . Take $E=a_1^{[1]}\oplus a_2^{[1]}$, $E'=a_1^{[2]}$, 
    $\xi=\{\red{x_1}\mapsto \blue{a_1}, \red{x_2}\mapsto \blue{a_2}\}$ and $\xi'=\{\red{x_3}\mapsto \blue{a_1}, \red{x_4}\mapsto \blue{a_1}\}$. Note that the choice of $E'$ allowed us to "garbage collect" the vertex $a_2^{[1]}$ of $E$ implicitly.     
    The other choice $E'=a_1^{[2]}\oplus a_2^{[1]}$ would yield as a result the expression $t'_1 \oplus a_2^{[1]}$, hence disabling "garbage collection".
    \item
    $t_1\dlrps{}{} a_3^{[1]}\oplus t'_1$, where $a_3$ is a fresh copy of $a$, that is, a \emph{clone} of $a$. Take $E=a_1^{[1]}\oplus a_2^{[1]}$, $E'=a_1^{[2]}\oplus a_3^{[1]}$, with $\xi$ and $\xi'$ as above. Here, we have achieved "cloning" and "garbage collection" at the same time.
\end{itemize}

Terminating computations---to which we are partial---%
are incompatible with "cloning". 
In general,
functional rewriting restricts context extensions so as to avoid "cloning" and allow one to attain termination.
Relational rewriting, on the other hand, is more lax regarding "cloning" and nontermination.

\newcommand{\head}[1]{\overline{#1}}
We  provide now a formal statement comparing term rewriting to drag rewriting. Remember that terms have positions identifying their various subterms, and that rewriting a term $s$ to a term $t$ with the rule $l\ra r$ is defined as $s|_p=l\sigma$ for some position $p$ in $s$ and substitution $\sigma$, and $t=s[r\sigma]_p$, denoting by $s|_p$ the subterm of $s$ at position $p$, and by $s[u]$ the term obtained by replacing $s|_p$ by $u$ at position $p$. 

Term rewriting does not really apply to terms, that is, drags whose all vertices have a single predecessor but one which has none, but to equimorphisms classes of terms (and not isomorphisms classes, variables can't be renamed in terms to be rewritten). This is so because term rewriting does not distinguish between two copies of the same term $t$, for example, two disjoint drags $f(a)$ and $f(a)$. So, term rewriting \emph{looks} functional, but it is not, the result of a rewrite step is determined up to equimorphism only. The situation is indeed the same with functional drag rewriting, hence we can compare both.

 Any strict subdrag of a term has a root at its head. We therefore denote by $\head{t}$, for a term $t$, the vertex at the head of the ground drag $t$, and by $\drag{t}$ the drag obtained by adding a root at vertex $\head{t}$ of $t$. This encoding has three simple properties: first,  a term and its encoding have exactly the same vertices, hence, in particular, the same labeled sprouts; second, positions in terms make therefore sense for their encoding; and third, taking subterms in terms commutes with their encoding. A substitution $\sigma=\{x\mapsto x\sigma : x\in\Var{l}\}$ is encoded as the drag $\Sigma_{x\in\Var{l}} \drag{x\sigma}$, but also as the switchboard \smash{$\xi_\sigma=\{\wir{v}{\head{\drag{x\sigma}}} : (v:x)\in\spr{\drag{l}}=\spr{l}\}$}.
 A term rewrite rule $l\ra r$ is encoded as the drag rewrite rule $\drag{l}\ra\drag{r}$. A set of term rewrite rules $R$ is denoted by $\drag{R}$ when encoded as drag rewrite rules.

\begin{lem}
\label{l:subtermdrag}
Let $t$ and $l$ be terms and $p$ a position of $t$. Then,
$t=t[l\sigma]_p$ for some substitution $\sigma$ of domain $\Var{l}$ iff 
$\drag{t}=(\drag{t[z]_p}\oplus \drag{\sigma})\otimes_{\xi} \drag{l}$ for some fresh sprout $s:z$ and switchboard \smash{$\xi=\{\wir{s}{\head{\drag{l}}}\}\cup \xi_\sigma$}.
\end{lem}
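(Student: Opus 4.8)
The plan is to prove both directions by unfolding the product on the right with the explicit formula of Lemma~\ref{l:prod} and comparing the outcome, vertex by vertex and edge by edge, against $\drag{t}$, using throughout the three recorded properties of the encoding (a term and $\drag{\cdot}$ share vertices, positions carry over, and subterms commute with $\drag{\cdot}$). First I would fix notation: write $C=\drag{t[z]_p}\oplus\drag{\sigma}$ and $D=\drag{l}$, so that the claimed right-hand side is $C\otimes_\xi D=\wiring{(C\oplus D)}{\xi}$ with $\xi=\{\wir{s}{\head{\drag{l}}}\}\cup\xi_\sigma$. The single wire $\wir{s}{\head{\drag{l}}}$ plugs $D$ into the hole left at position $p$, while $\xi_\sigma$ feeds the substitution images into the variable sprouts of $l$.

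Before computing I would discharge the hypotheses needed for the product to be defined, namely that $\xi$ is a well-behaved, coherent switchboard for $(C,D)$. Functionality is immediate, since $\sigma$ is a function on $\Var{l}$ and $s$ has a single target; well-foundedness holds because every chain has length one (each origin is a sprout whose target is an internal head); injectivity is where root multiplicities matter, and I would note that $\head{\drag{l}}$ carries one root against $\pred(s)=1$, while each $\head{\drag{x\sigma}}$ must carry as many roots as there are sprouts $v:x$ in $\spr{l}$. Coherence is automatic: all sprouts $v:x$ are origins (fullness) and they share the single target $\head{\drag{x\sigma}}$, so they trivially generate equimorphic subdrags.

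For the forward direction, assuming $t=t[l\sigma]_p$, I would read off $C\otimes_\xi D$ from Lemma~\ref{l:prod}: the surviving vertices are the internal vertices of the context, the internal vertices of $l$, and all of $\drag{\sigma}$, i.e.\ exactly $\Dom{\xi}=\{s\}\cup\spr{l}$ is removed; the edge entering $s$ is redirected to $\target{s}=\head{\drag{l}}$, each edge entering a sprout $v:x$ is redirected to $\target{v}=\head{\drag{x\sigma}}$, and all remaining edges are untouched; indegree preservation leaves a single root, at $\head{t}$. Comparing with $\drag{t}$ through the three encoding properties and the subterm identity $t|_p=l\sigma$ then yields equality. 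The converse runs the same computation backwards: since the encoding is a vertex-preserving bijection, the structure of $\wiring{(C\oplus D)}{\xi}$ exhibits $\head{\drag{l}}$ plugged in at $p$ with the variable sprouts resolved to the $\drag{x\sigma}$, which is precisely the encoding of $t[l\sigma]_p$, and equality of drags forces $t=t[l\sigma]_p$ as terms.

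The main obstacle is a nonlinear $l$. When a variable $x$ occurs $k>1$ times, the $k$ sprouts $v:x$ are all wired to the single target $\head{\drag{x\sigma}}$, so the product shares one copy of $\drag{x\sigma}$, whereas the tree $t[l\sigma]_p$ carries $k$ disjoint copies; hence $\drag{t}=C\otimes_\xi D$ holds on the nose only when $l$ is linear, and in general the right-hand side computes the form in which the substitution instances are shared. Recovering the literal tree would require replacing $\drag{\sigma}$ by disjoint clones of each $\drag{x\sigma}$ and turning $\xi_\sigma$ into a cloning (inequality) switchboard; otherwise the equality is to be read up to sharing-equivalence ($\she$), under the sharing-insensitive identification of terms on which term rewriting operates. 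Fixing the root multiplicities of $\drag{\sigma}$ so that injectivity holds, and tracking these repeated-variable redirections through the product formula, is the only genuinely delicate part of the argument.
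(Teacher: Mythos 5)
Your proposal is correct in substance but takes a genuinely different route from the paper. You argue extensionally: first discharge well-behavedness and coherence of $\xi$, then unfold $(\drag{t[z]_p}\oplus\drag{\sigma})\otimes_\xi\drag{l}$ through the explicit composition formula of Lemma~\ref{l:prod} and compare vertices, edges, and roots against $\drag{t}$, invoking the three encoding properties. The paper instead argues algebraically: it applies the Reconstruction Lemma~\ref{l:rec} twice---once to split $t$ at position $p$ into $\drag{t[z]_p}\otimes_{\wir{s}{\head{t|_p}}}\drag{t|_p}$, and once to split $\drag{t|_p}=\drag{l\sigma}$ into $\drag{l}\otimes_{\xi_\sigma}\drag{\sigma}$---and then reassociates the nested products using associativity (Lemma~\ref{l:prodac}) to reach the stated single product over $\xi=\{\wir{s}{\head{\drag{l}}}\}\cup\xi_\sigma$; the converse is obtained by firing the wires of $\xi_\sigma$ first and unencoding the remaining one-wire (hence directed, substitution-like) composition. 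Your route makes the switchboard sanity checks explicit, which the paper leaves implicit; the paper's route is shorter and renders the converse nearly symmetric.

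Your closing paragraph on nonlinear $l$ is exactly where the two proofs part ways, and your scepticism is well placed. The paper confronts the same point by linearization: it writes $l\sigma=\textit{lin}(l)\sigma'$, applies Lemma~\ref{l:rec} to get a product of $\drag{\textit{lin}(l)}$ with one disjoint copy of $\drag{x\sigma}$ per occurrence, and then asserts that this equals $\drag{l}\otimes_{\xi_\sigma}\drag{\sigma}$ with a single copy per variable. That asserted equality is precisely the step you identify as delicate: with $\drag{x\sigma}$ carrying the single root the encoding provides, injectivity of the wires fails as soon as $x$ occurs twice in $l$; and even with the root multiplicities repaired, the product is a shared dag that agrees with the tree encoding only up to sharing equivalence $\she$---consistent with the paper's own later observation that rewriting with a nonlinear $\drag{r}$ introduces sharing and is only simulated by term rewriting up to $\she$. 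So your restriction (equality on the nose for linear $l$; otherwise equality up to $\she$, or per-occurrence clones with an inequality switchboard) is not a defect of your argument but a correct diagnosis of the one step the paper's own proof leaves informal.
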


 Note first that the statement makes sense: $t$ being a term, $t[z]_p$ and $\sigma$ do not share any vertex, hence the sum of their encodings is defined, and since $\wir{z}{\head{l}}$ and $\sigma$ don't interfer, the union $\wir{s}{\head{l}}\cup \xi_\sigma$ is a (non-directed) switchboard.
 
\begin{proof}
First, the only if direction.
Since $t=t[l\sigma]_p$, we have $\drag{t}=\drag{t[z]_p}\otimes_{\wir{s}{\head{t|_p}}} \drag{t|_p}$ by Lemma~\ref{l:rec}. We now apply Lemma~\ref{l:rec} to $l\sigma$, which requires to linearize $l$ so as to become the context in $l\sigma$ of the subterms associated with the substitution $\sigma$. Denoting by $\textit{lin}(l)$ the linearized version of $l$, by $\sigma'$ the substitution satisfying $l\sigma=\textit{lin}(l)\sigma'$, we have
$\drag{t|_p}=\drag{l\sigma}=\drag{\textit{lin}(l)\sigma'}= \drag{\textit{lin}(l)}\otimes_{\sigma'} \drag{\sigma}= \drag{l}\otimes_{\sigma} \drag{\sigma}$, where $\sigma$ and $\sigma'$ are now considered as switchboards by replacing an elementary substitution
$x\mapsto u$ by the wire $\wir{s:x}{\head{u}}$.
It follows that $\drag{t}=\drag{t[z]_p}\otimes_{\wir{z}{\head{t|_p}}} \drag{t|_p}=\drag{t[z]_p}\otimes_{\wir{z}{\head{t|_p}}}(\drag{l}\otimes_{\xi_\sigma} \drag{\sigma})=\drag{t[z]_p}\otimes_{\wir{z}{\head{\drag{l}}}}(\drag{l}\otimes_{\xi_\sigma} \drag{\sigma})=(\drag{t[z]_p} \oplus \drag{\sigma}) \otimes_{{\wir{z}{\head{\drag{l}}}, \xi_\sigma}}\drag{l}$.

The converse amounts to apply the wires in $\xi_\sigma$ (which do not depend on $\wir{s}{\head{\drag{l}}}$, yielding $\drag{t}=\drag{t[z]_p}\otimes_{\wir{s}{\head{\drag{l}}}} \drag{l}$, hence the expected result by unencoding the terms on both sides of the equality, since the switchboard reduced to the single wire \smash{$\wir{s}{\head{\drag{l}}}$} is now directed, hence acts as a substitution.
\end{proof}

Assuming now that left-hand sides of term rewriting rules are not variables (hence are patterns), we have:

\begin{thm}
Let $R$ be a set of term rewriting rules and $s,t$ be two terms such that. 
$s\dlrps{p}{R}\, t$. Then, $\drag{s}\dlrps{}{\drag{R}} \drag{t}$.
\end{thm}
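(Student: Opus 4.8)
The plan is to reduce the statement to the bookkeeping lemma on the encoding, Lemma~\ref{l:subtermdrag}, applied twice---once to the redex and once to the contractum---and then to check that the two decompositions so obtained are the left- and right-hand extensions of a single drag rewrite step. Unfolding term rewriting, $s\dlrps{p}{R} t$ means there are a rule $l\ra r\in R$, a position $p$ of $s$, and a substitution $\sigma$ of domain $\Var{l}$ with $s=s[l\sigma]_p$ and $t=s[r\sigma]_p$. Encoding turns $l\ra r$ into the drag rewrite rule $\drag{l}\ra\drag{r}$, whose root-map $\eta$ is the unique map sending the single root $\head{\drag{l}}$ to the single root $\head{\drag{r}}$; since $l$ is a pattern, $\drag{l}$ is a pattern and $\drag{r}$ a right-pattern, so this is a legitimate rule in the sense of Definition~\ref{d:rules}.

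First I would apply Lemma~\ref{l:subtermdrag} to $s$ with term $l$. From $s=s[l\sigma]_p$ it yields $\drag{s}=(\drag{s[z]_p}\oplus\drag{\sigma})\otimes_{\xi}\drag{l}$, with a fresh sprout $w:z$ and switchboard $\xi=\{\wir{w}{\head{\drag{l}}}\}\cup\xi_\sigma$; this is a rewriting extension $\ext(E,\xi)$ of $\drag{l}$ with context $E=\drag{s[z]_p}\oplus\drag{\sigma}$. Next I would apply the same lemma to $t$ with term $r$, using $t=t[r\sigma]_p$. The crucial observation is that replacing the subterm at $p$ leaves the surrounding context untouched, so $t[z]_p=s[z]_p$ and hence $\drag{t[z]_p}=\drag{s[z]_p}$. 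This gives $\drag{t}=(\drag{s[z]_p}\oplus\drag{\sigma})\otimes_{\xi'}\drag{r}$ with $\xi'=\{\wir{w}{\head{\drag{r}}}\}\cup\xi_\sigma$, exhibiting $\drag{t}$ as a product with the very same context $\drag{s[z]_p}$ and the same substitution encoding $\drag{\sigma}$.

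It then remains to verify that $\ext(E,\xi)$ and $\ext(E',\xi')$ are compatible rewriting extensions of $\drag{l}$ and $\drag{r}$ (Definition~\ref{d:te}), so that the step of Definition~\ref{d:rewrel} (equivalently Definition~\ref{d:rew}) applies. The context sprout $w$ is routed by $\xi$ to $\head{\drag{l}}$ and by $\xi'$ to $\head{\drag{r}}=\eta(\head{\drag{l}})$, matching the root-map. For every variable $x$, each sprout of $\drag{l}$ and each sprout of $\drag{r}$ labelled $x$ is wired by $\xi_\sigma$ to the same root $\head{\drag{x\sigma}}$, so both sides generate one and the same subdrag $\drag{x\sigma}$; these are in particular equimorphic, which is exactly the clause of compatibility required for sprouts sharing a label. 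Totality of $\xi$ on the sprouts of $\drag{l}$ holds because $\sigma$ has domain $\Var{l}$, and the single root of $\drag{l}$ is consumed by $w$; so both are genuine rewriting extensions and the step $\drag{s}\dlrps{}{\drag{R}}\drag{t}$ is valid.

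The delicate point---and the main obstacle---is reconciling the two contexts when $l$ and $r$ do not use variables identically, which is where the equimorphism clause of compatibility for $E$ and $E'$ is tested. Right-duplication (a variable occurring more often in $r$ than in $l$) is benign, and is in fact the reason we wire through $\xi_\sigma$ rather than cloning: several sprouts of $\drag{r}$ may be sent to the same $\head{\drag{x\sigma}}$, and the equimorphism clause is precisely what licenses this sharing. Erasure (a variable of $l$ absent from $r$) is the real difficulty: the component $\drag{x\sigma}$ then sits in the context but is untouched by $\xi'$, so keeping a single context $E=E'$ produces $\drag{t}$ together with a disconnected rooted copy of $\drag{x\sigma}$ rather than $\drag{t}$ itself. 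Recovering $\drag{t}$ exactly forces one either to restrict to non-erasing rules---in which case $E=E'$ and the step is functional---or to drop the erased components from the right extension, turning the step into a relational, garbage-collecting rewrite with non-equimorphic extensions of the kind discussed just before the theorem. Pinning down which variant Lemma~\ref{l:subtermdrag} delivers, and confirming that its switchboard meets the compatibility conditions in the erasing case, is where the care is needed.
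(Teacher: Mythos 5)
Your skeleton is exactly the paper's: the published proof consists of the single remark that the statement is "a direct application of the rewriting definitions and of Lemma~\ref{l:subtermdrag} to $s$ and $t$," plus "checking functionality is straightforward," and your first two paragraphs carry that out faithfully, including the key observation $t[z]_p=s[z]_p$ that makes the two contexts coincide. The genuine flaw is in your treatment of duplicating rules. You declare right-duplication "benign" because several sprouts of $\drag{r}$ labelled $x$ may all be wired to the same root $\head{\drag{x\sigma}}$, and you claim the equimorphism clause "licenses this sharing." It does the opposite. Concretely, take $l=f(x)\ra r=g(x,x)$ and $\sigma=\{x\mapsto a\}$: the encoding $\drag{x\sigma}$ carries a \emph{single} root, while the two sprouts $x_1,x_2$ of $\drag{r}$ each have one predecessor, so the wires $\{\wir{x_1}{\head{\drag{a}}},\wir{x_2}{\head{\drag{a}}}\}$ violate the injectivity condition of well-behavedness ($\pred(x_1)+\pred(x_2)=2>1=$ number of roots); and even if you granted the head two roots, the product would be the shared drag $[y=a]\,g(y,y)$, which is only sharing-equivalent to the tree encoding $\drag{t}=\drag{g(a,a)}$, not equal to it as the theorem asserts. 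To land on $\drag{t}$ exactly you must route distinct same-labelled sprouts of $\drag{r}$ to \emph{distinct equimorphic copies} of $\drag{x\sigma}$ in the context---cloning, exactly as term substitution does---and the equimorphism clause of (strong) compatibility is precisely what legitimizes wiring two sprouts with the same label to \emph{different} vertices generating equimorphic subdrags; see the paper's remark following the definition of strong compatibility ("mapping these sprouts to different subdrags of $E$ requires more of them for $R$ than for $L$"). The same per-occurrence reading is already needed on the left when $l$ is non-left-linear, which is why the proof of Lemma~\ref{l:subtermdrag} linearizes $l$.

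Your closing hedge about erasing rules, which you leave as an unresolved case split, is settled by the paper's conventions rather than by a genuine dichotomy: the relation $\dlrps{}{\drag{R}}$ in the statement is the functional rewriting of Definition~\ref{d:rew}, which is only defined for stringent rules ($\Var{L}\subseteq\Var{R}$), so the erasing scenario you worry about is excluded by the notation of the theorem itself. Since a term rewrite step $t=s[r\sigma]_p$ presupposes $\Var{r}\subseteq\Var{l}$, stringency of the encoded rule pins down $\Var{l}=\Var{r}$, and then a single context $E=E'$ (with one copy of $\drag{x\sigma}$ per \emph{occurrence}, per the correction above) serves both sides, which is what "checking functionality is straightforward" amounts to. So: same route as the paper and mostly correct execution, but the duplication step as written would fail both well-behavedness and exactness, and the erasure discussion should be replaced by an appeal to stringency.
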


\begin{proof}
The only if part is a direct application of the rewriting definitions and of Lemma~\ref{l:subtermdrag} to $s$ and $t$. Checking functionality is straightforward. 
\end{proof}

The converse is not true in general, since drag rewriting is defined up to sharing equivalence instead of equimorphism: rewriting with rule $\drag{l}\ra \drag{r}$ may introduce sharing when $r$ is non-linear, hence cannot be mimicked by term rewriting with the rule $l\ra r$. Here is the precise statement:

\begin{lem}
Let $R$ be a set of term rewriting rules and $s,t$ be two terms such that. 
$\drag{s}\dlrps{}{\drag{R}} \drag{t}$. 
Then, $s\dlrps{p}{R}\, t'$ for some $t'$ sharing equivalent to $t$.
\end{lem}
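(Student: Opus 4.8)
The plan is to run the direction opposite to the preceding theorem: from a drag step $\drag{s}\dlrps{}{\drag{R}}\drag{t}$ I will first recover a genuine term redex in $s$, then produce the term reduct $t'$, and finally argue $\drag{t}\she\drag{t'}$ via the uniqueness theorem for drag rewriting. Write the given step as $\drag{s}=E\otimes_\xi\drag{l}$ and $\drag{t}=E'\otimes_{\xi'}\drag{r}$, using the encoded rule $\drag{l}\ra\drag{r}$ of a term rule $l\ra r\in R$ and compatible rewriting extensions $\ext(E,\xi)$ of $\drag{l}$ and $\ext(E',\xi')$ of $\drag{r}$.

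The crucial first step is to read off a position $p$ of $s$ and a substitution $\sigma$ with $s|_p=l\sigma$. Since composition yields a monomorphism, the natural injection embeds $\drag{l}$ into $\drag{s}$, injectively on internal vertices and preserving edges and indegrees. Now $\drag{s}$ is a term-drag, hence a tree whose unique root sits at $\head{s}$ and all of whose internal vertices have indegree $1$; because $\drag{l}$ is itself a tree rooted at $\head{\drag{l}}$ and its edge structure is preserved, its image is a subtree of $\drag{s}$ whose top vertex occupies a single position $p$. The sprouts of $\drag{l}$, which are totally wired by $\xi$, resolve to the heads of the subterms of $s$ hanging below $p$, and these resolutions define $\sigma$ on $\Var{l}$. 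For a non-linear $l$, coherence of $\xi$ forces two sprouts sharing a variable to resolve to vertices generating equimorphic subdrags of $\drag{s}$; in a tree these are equal subterms, so $\sigma$ is a well-defined term substitution. Reading $E$ as an upper context $\drag{s[z]_p}$ (the part of $s$ above $p$, with a fresh sprout $z$ at $p$) summed with the encoded substitution $\drag{\sigma}$, and $\xi$ as $\{\wir{s}{\head{\drag{l}}}\}\cup\xi_\sigma$, puts the decomposition into the canonical shape of Lemma~\ref{l:subtermdrag}, which then yields $s=s[l\sigma]_p$, i.e.\ $s|_p=l\sigma$.

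With the redex in hand, set $t'=s[r\sigma]_p$, so that $s\dlrps{p}{R}t'$ is a bona fide term rewrite step. Applying the preceding theorem to this step gives $\drag{s}\dlrps{}{\drag{R}}\drag{t'}$, realised by the canonical, term-mimicking rewriting extension of $\drag{l}\ra\drag{r}$. Thus $\drag{t}$ and $\drag{t'}$ are both reducts of $\drag{s}$ by the same rule $\drag{l}\ra\drag{r}$ using compatible rewriting extensions, so Theorem~\ref{t:unirew} gives $\drag{t}\she\drag{t'}$; since sharing-equivalence of terms is by definition $\she$ on their encodings, $t$ and $t'$ are sharing-equivalent, which is the claim.

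The hard part is the first step: justifying that an arbitrary compatible rewriting extension witnessing $\drag{s}=E\otimes_\xi\drag{l}$ necessarily corresponds to a term match. Everything rests on the fact that $\drag{s}$ is a tree --- single root, acyclic, every internal vertex of indegree one --- so that the indegree-preserving, edge-preserving monomorphism out of $\drag{l}$ cannot identify vertices nor fold the pattern onto non-contiguous parts of $s$, while coherence pins down the non-linear case. One must also check that any slack in $E$ (for instance superfluous sprouts demanded only by surjectivity of $\xi_E$) does not affect the recovered $p$ and $\sigma$; invoking Lemma~\ref{l:rec} to reconstruct $\drag{s}$ from the subdrag generated by the image of $\drag{l}$ and its context makes the extraction canonical and independent of the particular extension chosen, so that the form required by Lemma~\ref{l:subtermdrag} can always be attained.
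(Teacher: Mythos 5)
Your proof is correct and follows essentially the same route as the paper's: the term match is extracted via the if direction of Lemma~\ref{l:subtermdrag}, and the discrepancy between $t$ and $t'$ is attributed to the freedom in choosing the right-hand-side switchboard, which you formalize through Theorem~\ref{t:unirew}. The normalization argument you supply (tree structure, indegree preservation, and coherence forcing an arbitrary rewriting extension of $\drag{l}$ into the canonical shape of Lemma~\ref{l:subtermdrag}) is left implicit in the paper's one-line proof, so your write-up is a faithful, more detailed rendering of the same idea.
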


\begin{proof}
Using this time the if direction of Lemma~\ref{l:subtermdrag}.
The reason for obtaining a term $t'$ sharing-equivalent to $t$ rather than $t$ itself, is the use of a right-hand side switchboard possibly different from the left-hand side one.
\end{proof}

Encoding term rewriting as drag rewriting appears to be slightly different from the encoding of term rewriting by jungle rewriting, as described in \cite{PlumpJungle}. The reason is that drag rewriting does not  integrate the use of sharing equivalence (actually, sharing normal forms) to the encoding process as they do, since we only consider here what could be called "plain dag rewriting". Encoding sharing equivalence would require the use of a slightly different notion of drag rewriting, modulo sharing equivalence, which is not within the scope of this paper.

We could of course now repeat a similar development for dags as for terms, defining a dag rewrite rule as a pair of dags $L\ra R$, whose encoding is obtained by having an arbitrary number of roots (possibly none) at the heads of both $L$ and $R$. Of course, there will be infinitely many rules, but using the rule schema \smash{$L^{[n]} \ra R^{[n]}$} instead will do, $n$ being calculated on the fly thanks to indegree preservation. Then, a dag $D$ to be rewritten can be encoded by the identity: there is no need to add roots at the head of $D$ since dag rewrite rules can have no root at their head.

\section{Categorical Interpretation of Drag Rewriting}
\label{s:matching}

Our definition of drag rewriting is concrete: 
Drags are specific concrete (multi-) graphs, and drag rewrite rules are pairs of drags whose roots are bijectively related, used to rewrite other drags. 
We found that drags and their morphisms form a category. 
So a natural question to ask is whether we could define drag rewriting in terms of the general DPO framework for graph transformation. 
We will see that this is in general impossible,  there   being several reasons for that failure.

\subsection{Rules}
In DPO graph transformation, rules are spans $L \stackrel{l}{\leftarrow} K \stackrel{r}{\to} R$, where $l,r$ are monomorphisms (sometimes a specific kind of monomorphisms), and $K$ is assumed to be the shared subgraph of $L$ and $R$.

Considering this definition, it seems reasonable to represent a drag rewriting rule $\eta: L\ra R$ as the DPO rule \smash{$L \stackrel{l}{\leftarrow} K \stackrel{r}{\to} R$}, where $K$ is the subgraph shared by $L$ and $R$, and $l,r$ are the natural injections $l: K \hookrightarrow L$ and $r: K \hookrightarrow R$. 
Unfortunately, this representation may be incorrect, as the following example shows. 
Suppose that $L$ is the drag $f \Longrightarrow a$ 
($a$ is a double successor of $f$) and $R$ is the drag 
$g_1 \Longrightarrow a \longleftarrow g_2$ (where vertex $a$ is shared with $L$). 
Then, the DPO representation of $K$ would be the drag $a$ with some roots, but at least one of the injections $l,r$ would fail to be a monomorphism, since they could not both preserve the indegree of $a$, which is 2 on one side and 3 on the other. This may restrict the choice
of a common subdrag by DPO to a non-maximal one---the empty one in this example.


\subsection{Matching}
In the DPO approach, the operation of matching a drag $D$ against a drag $L$ is defined by means of a (mono)morphism, while in drags this done by the definition of an extension $\ext(C,\xi)$ such that $D=L\otimes_\xi C$. We will see that both forms of matching are equivalent.

\begin{lem}
\label{l:ipm}
Given two disjoint drags $L,C$ and a switchboard $\xi$ for $L,C$, the natural injection from $L$ to $C\otimes_\xi L$ is a monomorphism.
\end{lem}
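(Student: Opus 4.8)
The plan is to realize the natural injection as a composite of two maps that are already known to be monomorphisms, and then invoke closure under composition. By Definition~\ref{d:composition}, $C \otimes_\xi L = \wiring{(C \oplus L)}{\xi}$, so the natural injection $\iota$ from $L$ to $C \otimes_\xi L$ factors as $\iota = \iota_2 \circ \iota_1$, where $\iota_1 : L \hookrightarrow C \oplus L$ is the inclusion of the summand $L$ and $\iota_2 : C \oplus L \to \wiring{(C \oplus L)}{\xi}$ is the natural injection of the wiring (Definition~\ref{d:wiring}). This factorization holds vertex by vertex: $\iota_1$ is the identity on the vertices of $L$, while $\iota_2$ sends each internal vertex to itself and each sprout to its resolution, so $\iota_2 \circ \iota_1$ coincides with the natural injection determined by $L$ and the wiring.

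First I would dispatch $\iota_1$: Lemma~\ref{l:ism}(2) states precisely that the injection of a summand into a sum is a monomorphism, so $\iota_1$ is one. Next, for $\iota_2$, I would observe that $\xi = \xi_L \cup \xi_C$ is, by Definition~\ref{d:switchboard}, a coherent, well-behaved set of wires for $C \oplus L$; Lemma~\ref{l:injw} then gives that its natural injection $\iota_2$ is a monomorphism. Finally, monomorphisms are closed under composition by Lemma~\ref{l:closureprop}, so $\iota = \iota_2 \circ \iota_1$ is a monomorphism, as required. No fresh calculation is needed: injectivity and label preservation on internal vertices (the identity), indegree preservation (Lemma~\ref{l:idpw}(2) together with Lemma~\ref{l:ism}(1)), the edge map, and preservation of equimorphic subdrags through coherence of $\xi$ are all already packaged inside the cited lemmas.

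The hard part is the hypothesis ``no isolated sprout'' carried by Lemma~\ref{l:injw}, which must hold for $\iota_2$ to qualify. This is a genuine constraint, not a mere technicality: an isolated sprout $s$ of $L$ lying in $\Dom{\xi}$ contributes $\pred(s)=0$ redirected edges, so wiring simply deletes it and sends it to its resolution $r\in\roots{C}$; if that target is internal (or otherwise non-isolated), then $\iota(s)=r$ violates clause~(2) of Definition~\ref{d:pmo}, which forces isolated sprouts to map to isolated sprouts. I would therefore make explicit that in the matching setting of this section $L$ is a (left-)pattern, so every sprout of $L$ has a predecessor and $L$ has no isolated sprout at all; clause~(2) is then vacuous on the part of $\iota_2$ reached through $\iota_1$, and the isolated-sprout requirement is met for the injection from $L$. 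One subtlety worth recording is that Lemma~\ref{l:injw} literally demands that the \emph{whole} sum $C\oplus L$ be isolated-sprout-free, whereas the injection from $L$ only sees the isolated sprouts of $L$; for the fully general statement one would either add the hypothesis that $L$ has no isolated sprout, or argue directly that the restriction of the wiring's natural injection to $L$ satisfies clause~(2) as soon as $L$ itself is isolated-sprout-free, leaving $C$ unconstrained.
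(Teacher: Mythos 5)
Your proposal follows essentially the same route as the paper's own three-line proof: factor the natural injection as $L \hookrightarrow L\oplus C$ (Lemma~\ref{l:ism}(2)) followed by the wiring injection into $\wiring{(L\oplus C)}{\xi}$ (Lemma~\ref{l:injw}), and conclude by closure under composition (Lemma~\ref{l:closureprop}). Your closing discussion of the isolated-sprout hypothesis is in fact \emph{more} careful than the paper, which invokes Lemma~\ref{l:injw} without verifying that hypothesis, so your observation identifies a genuine (if minor) gap in the published argument rather than a flaw in your own.
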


\begin{proof}
By Lemma~\ref{l:ism}, the natural injection from $L$ to $L\oplus C$ is a monomorphism. By Lemma~\ref{l:injw}, the natural injection from $L\oplus C$ to $L\otimes_\xi C$ is a monomorphism. We conclude by Lemma~\ref{l:closureprop}.
\end{proof}

Next, we consider the converse, namely, the existence of an extension when given a monomorphism from $L$ to $D$. First, we define the categorization of edges corresponding to the colours used in the examples of monomorphisms given earlier:

\begin{defi}[Inside/created/entering/outside edges]
\label{d:catmonos}
Given drags $L=\langle V, R, L, X, S\rangle$ and $D=\langle V', R', L', X', \varnothing\rangle$ with no vertex in common and a monomorphism $\omicron : V\to V'$, we characterize an edge $\edgen(u',i,v')$ of $D$ as follows:
\begin{enumerate}
    \item
    It is an \emph{inside} edge at $v'$ if $u'=\omicron(u)$ and $v'=\omicron(v)$ for some internal vertices $u,v$ of $L$ such that $\edgen(u,i,v)\in X$;
   \item 
    it is \emph{created} by the \emph{creating} edge $\edgen(u,i,s)\in X$ if $u'=\omicron(u)$ for some internal vertex $u$ of $L$, and $s$ is a sprout such that $\omicron(s)=v'$;
    \item
    it is an \emph{entering} edge at $v'$ if $v'=\omicron(v)$ for some internal vertex $v$ of $L$, and $\edgen(u',i,v')$ is not the image of an edge in $L$  by $\omicron_X$;
    \item
    it is an \emph{outside} edge at $v'$ if $v'$ is not the image of an internal vertex of $L$ by $\omicron$, and $\edgen(u',i,v')$ is not the image of an edge of $L$ by $\omicron_X$.
\end{enumerate}
\end{defi}

Inside and created edges of $D$ originate from some (unique) edge in $L$, while entering and outside edges don't originate from edges in $L$. Entering edges must have their head mapped from some internal vertex of $L$, while outside edges do not. This is therefore a categorization, which is illustrated in Example \ref{f:catmonos}.

\begin{lem}
\label{l:catmonos}
Let $L,D$ be drags satisfying the assumptions of Definition~\ref{d:catmonos}, and $\omicron$ be a monomorphism from $L$ to $D$. Then each edge of $D$ is precisely one of the above four kinds.
\end{lem}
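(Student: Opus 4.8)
The plan is to show that the four conditions of Definition~\ref{d:catmonos} determine a \emph{partition} of the edge set of $D$: every edge falls into at least one class (exhaustiveness) and into at most one (disjointness). The whole argument is a case analysis organised around two binary questions about a given edge $e = \edgen(u',i,v')$ of $D$: first, whether $e$ lies in the image of the edge map $\omicron_X$; and second, a subsidiary question whose form depends on the answer to the first.

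The key preliminary observation is that $\omicron_X$ is injective on edges, so that an edge of $D$ that is an $\omicron_X$-image has a \emph{unique} source edge in $L$. Indeed, by Definition~\ref{d:drag} sprouts have no successors, so the tail of every edge of $L$ is an internal vertex; since a monomorphism is injective on internal vertices (Definition~\ref{d:mmo}), the tail $u'$ together with the index $i$ pins down the source vertex $u$, and hence the entire source edge $\edgen(u,i,v)$, where $v$ is the $i$th entry of the successor list $X(u)$. This is precisely the injectivity of $\omicron_X$ already noted before the categoricity lemma.

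First I would split on whether $e$ lies in $\mathrm{Im}(\omicron_X)$. If it does, let $\edgen(u,i,v)$ be its unique source edge; its head $v$ is either an internal vertex or a sprout, and these alternatives are exclusive and exhaustive because $I = V\setminus S$. In the first case $e$ is an inside edge, in the second a created edge, so within the image class the two subclasses are disjoint and cover it. If instead $e\notin\mathrm{Im}(\omicron_X)$, I would split on whether $v'$ is the image of some internal vertex of $L$: if yes, $e$ is entering; if no, $e$ is outside. This yes/no dichotomy again yields disjointness and exhaustiveness within the non-image class. Finally, the image class and the non-image class are themselves disjoint by construction, since the entering and outside clauses of Definition~\ref{d:catmonos} explicitly require $e$ not to be an $\omicron_X$-image. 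Assembling these three dichotomies gives ``precisely one of the four kinds.''

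The only genuinely delicate point, and the step I would treat most carefully, is the uniqueness of the source edge, i.e.\@ the injectivity of $\omicron_X$; without it an image edge might a priori be read both as inside and as created. It is worth flagging a second, harmless subtlety: because $\omicron$ need not be injective across the internal/sprout boundary, the head $v'$ of a created edge may simultaneously be the image of an internal vertex and of one or more sprouts. This does not threaten disjointness, because the inside/created classification is governed by the (unique) source edge rather than by $v'$ itself, so a created edge never satisfies the inside clause. With these two observations in place, the remaining verifications are immediate.
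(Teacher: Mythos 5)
Your proof is correct and takes essentially the same route as the paper's: you establish disjointness and exhaustiveness via the split into $\omicron_X$-image edges versus non-image edges, subdividing the former by whether the unique source edge's head is internal or a sprout and the latter by whether $v'$ is the image of an internal vertex, exactly as the paper does. Your preliminary appeal to injectivity of $\omicron_X$ (which the paper records just after the definition of monomorphism) is the same mechanism the paper's proof invokes in compressed form when it observes that two edges $\edgen(u,i,v)$ and $\edgen(u,j,s)$ of $L$ with $v\neq s$ must have $i\neq j$, and your remark that a created edge's head may also be the image of an internal vertex is a correct (and slightly more explicit) treatment of a point the paper passes over quickly.
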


\begin{proof}
We show first that these four kinds of edges are disjoint as a consequence of injectivity of $\omicron$. 
The first two kinds are disjoint because two edges $\edgen(u,i,v)$ and $\edgen(u,j,s)$ of $L$ such that $v\neq s$ are different, hence $i\neq j$.
The last two kinds are disjoint from the first two since edges in $D$ do not originate from edges in $L$. The fourth is disjoint from the third since they discriminate on $v'$ being the image of an internal vertex in $L$.

We show now that all edges in $D$ at considered. All edges in $D$ obtained from an edge $\edgen(u,i,v)$ in $L$ by $\omicron_X$ must have an internal vertex as tail vertex $u$, hence are either inside or created. If $\edgen(u',i,v')$ is not obtained from an edge in $L$, there are two cases: either $v'=\omicron(v)$ for some internal vertex $v$ of $L$ and it is an entering edge; or there is no such $v$ and it is an outside edge.
\end{proof}

\begin{figure}[t]

\begin{tikzpicture}[
Droundnode/.style={circle, draw=blue!50, fill=blue!5, very thick, minimum size=4mm},
namenode/.style={},
Dproundnode/.style={circle, fill=red!5, very thick, minimum size=4mm},
namenode/.style={}
]
%
\node[Dproundnode]    (f)    []  {$f$};
\node (g) [left=1cm of f] {$g$};
\node (x) [below=12mm of g, very thick] {$x$};
\node (a) [below=10.5mm of f, very thick] {$a$};
\draw[->, very thick, red] (g) to node [right] {} (x);
\draw[->, very thick] (f) to node [right] {} (a);

\node (femb) [right=15mm of f] {};
\node (emb) [below=1cm of femb] {${\huge \hookrightarrow}$};

\node[Dproundnode]    (f1)    [right=60mm of f, very thick]    {$f$}; 
\node  (g1)    [left=10mm of f1, very thick]    {$g$}; 
\node  (b)     [below=11mm of g1, very thick]    {$b$}; 
\node  (a1)    [below=10mm of f1, very thick]    {$a$}; 

\draw[->, very thick, blue] (b) to node [] {} (a1);
\draw[->, very thick, red] (g1) to node [left] {} (b);
\draw[->, very thick, black] (f1) to node [left] {} (a1);
\draw[->, very thick, green] (b) .. controls + (-0.8,-0.8) and + (1,-0.8) .. (b);

\node (m) [below=0mm of a] {};
\node (lm) [right=12mm of m] {$\omicron(x)=b$};

\end{tikzpicture}
\caption{An injection with all four kinds of edges:
black for inside, red for created, blue for entering and green for outside.}\label{f:catmonos}
\end{figure}
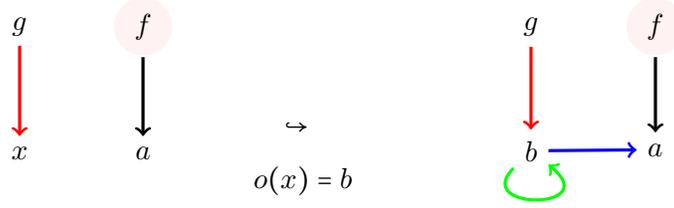

\begin{lem}
\label{l:ipmconverse}
Given a drag $D$, a drag $L$ with no rootless isolated sprout and no sprout in common with $D$, and an injection $\omicron: L \hookrightarrow D$, there exists a rewriting extension $\ext(C,\xi)$ of $L$ such that $D =C\otimes_\xi L$.
\end{lem}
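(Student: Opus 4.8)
The plan is to read off the context $C$ and the switchboard $\xi$ directly from the injection $\omicron$, using the edge categorization of Lemma~\ref{l:catmonos}, and then to verify the identity $D=C\otimes_\xi L$ through the explicit composition formula of Lemma~\ref{l:prod} together with indegree preservation (Lemma~\ref{l:idpw}). Two structural facts drive the construction. First, since the morphism conditions send isolated sprouts to isolated sprouts and $D$ is closed, $L$ has no isolated sprout, so (with the hypothesis on rootless ones) every sprout of $L$ has a predecessor, i.e.\ is the head of some edge. Second, $\omicron$ is the identity on internal vertices, so $\IVertex{L}\subseteq\Vertex{D}$; as composition preserves internal vertices (Lemma~\ref{l:edges}(1)) and every vertex of the closed drag $D$ is internal, the internal vertices of $C$ are forced to be $\Vertex{D}\setminus\IVertex{L}$.

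First I would establish the auxiliary observation that \emph{every edge of $D$ issuing from a vertex of $\IVertex{L}$ is inside or created}: because the successor list of an internal vertex $u$ of $L$ has length equal to its arity and $\omicron$ acts position by position, the edge of $D$ at position $(u,i)$ is exactly $\edgen(u,i,v)$ with $v=\omicron(X(u)_i)$, and therefore comes from $L$. Consequently the tails of all \emph{entering} and \emph{outside} edges of $D$ lie in $\Vertex{D}\setminus\IVertex{L}$, so they may legitimately be turned into edges of $C$.

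I would then build $C$ on the internal vertex set $\Vertex{D}\setminus\IVertex{L}$: keep every outside edge unchanged; replace the head of every entering edge $\edgen(u',i,v')$ (with $v'\in\IVertex{L}$) by a fresh sprout $t_e$; and, for each sprout $s$ of $L$ whose image $\omicron(s)$ lies in $\IVertex{L}$, adjoin to $C$ a fresh \emph{rooted} sprout $p_s$. The switchboard is then $\xi_L(s)=\omicron(s)$ when $\omicron(s)\in\Vertex{D}\setminus\IVertex{L}$ (a root of $C$), and $\xi_L(s)=p_s$ together with $\xi_C(p_s)=\omicron(s)\in\roots{L}$ otherwise, so that in both cases the resolution of $s$ is $\omicron(s)$; for each entering edge $e$ with head $v'$ we put $\xi_C(t_e)=v'\in\roots{L}$. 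The root multiplicities of $C$ at every vertex are fixed by demanding indegree preservation, which by Lemma~\ref{l:idpw}(2) is precisely the condition making the composition consistent. One checks that $\xi$ is functional and well-founded (its only chains are the length-two chains $s\leadsto p_s\leadsto\omicron(s)$), that injectivity follows from $\omicron$ preserving indegrees, and that coherence follows from $\omicron$ preserving equimorphic subdrags (Definition~\ref{d:pmo}(5)); the fresh variables make $C$ linear and disjoint from $L$.

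Finally I would verify $D=C\otimes_\xi L$ by matching the two drags componentwise through Lemma~\ref{l:prod}: the internal vertices coincide ($\Vertex{D}$), the four categories of edges of $D$ are reproduced one for one (inside and created edges by $L$ after wiring, entering edges by the wires $\xi_C(t_e)$, outside edges by $C$), and the roots agree because indegrees are preserved on both sides. It then remains to see that $\ext(C,\xi)$ is a \emph{rewriting} extension in the sense of Definition~\ref{d:re}: $\xi_L$ is total by construction, and $\xi_C$ can be made surjective by padding $C$ with isolated sprouts mapped to any residual roots of $L$, as allowed by the remark after Definition~\ref{d:re}. The main obstacle is the case $\omicron(s)\in\IVertex{L}$---the creation of edges internal to $L$, loops in particular (cf.\ Lemma~\ref{l:decompl})---which cannot arise from a direct sprout-to-$C$ wire and forces the auxiliary rooted sprouts $p_s$ that chain back into $L$; pinning down their root multiplicities, and those of $C$ in general, so that all indegrees match while $\xi$ stays well-behaved and coherent is the delicate bookkeeping of the argument.
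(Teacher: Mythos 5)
Your construction is essentially the paper's own proof: the same context on the internal vertices $\Vertex{D}\setminus\IVertex{L}$, the same handling of the four edge categories from Lemma~\ref{l:catmonos} (outside edges kept, entering edges capped by fresh sprouts wired back to $L$, created edges realized by bouncing through fresh rooted isolated sprouts of $C$), the same fixing of root multiplicities via indegree preservation, and the same padding of $C$ by isolated sprouts to make $\xi_C$ surjective. The only deviations are cosmetic---you index the bounce sprouts $p_s$ by the sprout $s$ of $L$ where the paper indexes its $t_{u,i}$ by the creating edge $(u,i)$, and your assumption that $D$ is closed corresponds to the paper's opening ``without loss of generality'' step treating any sprouts of $D$ as internal vertices.
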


\begin{proof}
Without loss of generality, sprouts of $D$, if any, will be considered internal vertices of $D$, since they will not belong to the domain of $\xi$.
In what follows, we successively {(a)} construct the context extension $C$, {(b)} then the switchboard $\xi$, and finally {(c)} verify that putting them together we have $D = C \otimes_\xi L$. These constructions rely on
Lemma~\ref{l:catmonos}, ensuring that each edge of $D$ belongs to exactly one kind.

\begin{trivlist}\itemindent 1ex
\item[(a)]
Construction of context $C$:
\begin{itemize}
\item Labeled internal vertices. Let $V$ be the vertices of $D$ and $I$ the internal vertices of $L$ (which are also internal vertices of $D$ by the assumption that $\omicron$ is an injection). 
Then, the set of internal vertices of $C$ will be  $W=V\setminus I$, each one equipped with its label in $D$. 
\item
Labeled sprouts. The set $S$ of sprouts of $C$ consists of  fresh sprouts $t_{u,i}:x_{u,i}$, bijectively associated with the pair $(u,i)$
for each  entering edge $\edgen(u,i,v)$ in $D$---with $u\in W$ and $v\in I$,
plus sprouts $t_{u,i}:y_{u,i}$ for each creating edge $\edgen(u,i,v)$ in $D$, such that $u,v\in I$, but $\edgen(u,i,v)$ is not an edge of $L$,  $\edgen(u,i,s)$ is an edge in $L$ and $\omicron(s) = v$. 
Notice that the edge  $\edgen(u,i,s)$, with $\omicron(s) = v$, must exist by the definition of morphism.
Notice also that, by definition, the variables of any two sprouts in $C$ must be different, implying that the context $C$ will be linear.
\item 
Edges. The set of edges of $C$ consists  of all outside edges $\edgen(u,i,v)$ in $D$, such that $u,v\in W$, plus edges  $\edgen(u,i,t_{u,i})$, for each  entering edge $\edgen(u,i,v)$ in $D$ with $u\in W$ and $v\in I$.
 These sprouts are not isolated. 
 \item Roots. Finally, each vertex $v$ in $W$ is equipped with roots so that $v$ has the same indegree in $C$ and $D$. 
 Sprouts $t_{u,i}$, $t_{u,i}:y_{u,i} \in C$, associated to the creating edge $\edgen(u,i,v)$ in $D$, are equipped with a single root, 
hence will be rooted, isolated sprouts.
\end{itemize}

\item[(b)]
Construction of switchboard $\xi$:
\begin{itemize}
\item 
For each creating edge $\edgen(u,i,s)$ of $L$, 
$\xi_L(s) = t_{u,i}:y_{u,i}$. 
\item
For any other  sprout $s \in L$, $\xi_L(s) = \omicron(s)$.
\item  For each sprout $t_{u,i}:x_{u,i} \in C$, associated to  entering edge $\edgen(u,i,v)$ in $D$, $\xi_C(t_{u,i}:x_{u,i}) = v$.
\item
For each sprout $t_{u,i}:y_{u,i} \in C$, associated to creating edge $\edgen(u,i,v)$ in $D$, $\xi_C(t_{u,i}:x_{u,i}) = v$.
\end{itemize}
We are left with showing that $\xi$ is a switchboard. 
The union $\xi_L \cup \xi_C$ is coherent and well-behaved: 
It is coherent since $\omicron$ is a morphism and $C$ is linear; by definition $\xi_L \cup \xi_C$ is functional and well-founded. 
Finally,  $\xi_L \cup \xi_C$ is injective, since $\omicron$ is a morphism and, hence, root preserving. 
Therefore, $\ext(C,\xi)$ is a rewriting extension.

\item[(c)]
Verification that $D =C\otimes_\xi L$:\\
Internal vertices of $D$ and $C\otimes_\xi L$ coincide, and so too their labeling, hence implying that they have the same number of edges. To show that their edges coincide, it is therefore enough to show that every edge of $D$ is an edge of $C\otimes_\xi L$, which we do by inspecting the four categories of an edge $\edgen(u,i,v)$ of $D$:
\begin{itemize}
\item Outside edges: $u,v \in W$. By construction of $C$, $\edgen(u,i,v)$ is an edge of $C$ and therefore of $C\otimes_\xi L$.
\item 
Inside edges: $u,v\in I$ and $\edgen(u,i,v)$ is an edge of $L$. Then $\edgen(\omicron(u),i,\omicron(v))$ is an edge in $C\otimes_\xi L$.
\item 
Created edges: $u,v\in I$, but $\edgen(u,i,v)$ is not an edge of $L$. Then, there must exist a creating edge $\edgen(u,i,s)$ in $L$.
By construction, $\xi_L(s)=t_{u,i}$ and $\xi_C(t_{u,i})=v$; hence $\edgen(u,i,v)$ is an edge in $C\otimes_\xi L$.
This case shows the need for bouncing from $L$ to $C$ and back from $C$ to $L$.
\item
Entering edge: $u\in W$, $v\in I$. By construction,
$C$ includes a sprout $t_{u,i}:x_{u,i}$ and  an edge $\edgen(u,i,t_{u,i})$, with $\xi_C(t_{u,i}) = v$; hence $\edgen(u,i,v)$ is an edge in $C\otimes_\xi L$.
\end{itemize}
It follows that all vertices have the same incoming edges in $D$ and $C\otimes_\xi L$, hence the same number of them. Since they have identical indegree by construction, they must have the same number of roots, which terminates verification and proof.
\qedhere
\end{trivlist}
\end{proof}

In case $L$ is a rooted isolated sprout, the constructed context $C$ is identical to $D$, and the switchboard $\xi$ contains the single wire $\wir{s}{\omicron(s)}$.
Then, all edges of $D$ are outside edges. Note that verification is straightforward in that case since isolated sprouts are then identities for product by Lemma~\ref{l:prodac}.

\begin{exa}
Let $D$ be a drag with two internal vertices labeled $h^{[1]}$ and $f^{[1]}$, both of arity 3, and edges $\edgen(h,1,f)$, $\edgen(h,2,f)$, $\edgen(h,3,h)$, and $\edgen(f,1,h)$, $\edgen(f,2,f)$, $\edgen(f,3,h)$. 
Now consider the drag
 $\smash{L=f^{[4]}(x_1^{[1]},x_2^{[1]},x_3)}$ with the injection $\omicron$ 
mapping its four vertices $f, x_1, x_2, x_3$ to $f,h,f,h$, respectively,
and \smash{$\omicron_R(x_1^{[1]})= \edgen(h,3,h)$}, \smash{$\omicron_R(f^{[3]})= \{\edgen(h,1,f),\edgen(h,2,f),\edgen(f,3,f)\}$}.

Let now $C$ be the drag $h^{[4]}(y_1,\textsc{self},y_3)\oplus z^{[2]}$. We construct the expected extension by processing all missing edges in turn:
\begin{enumerate}
    \item $\edgen(f,2,f)$: add $z^{[2]}$ to $C$, define $\xi_L(x_2)=z$ and $\xi_C(z)=f$; remove a root from $f$ and $\edgen(f,2,f)$ from $\omicron_R(f)$;
    \item $\edgen(f,1,h)$: define $\xi_L(x_1)=h$ and remove $\edgen(f,1,h)$ from $\omicron_R(h)$;
    \item $\edgen(f,3,h)$: define $\xi_L(x_3)=h$ and remove $\edgen(f,3,h)$ from $\omicron_R(h)$;
    \item $\edgen(h,1,f)$: define $\xi_C(y_1)=f$ and remove $\edgen(h,1,f)$ from $\omicron_R(f)$;
    \item $\edgen(h,2,f)$: define $\xi_C(y_2)=f$ and remove $\edgen(h,2,f)$ from $\omicron_R(f)$.
\end{enumerate}
We therefore obtain the extension:
$$(h^{[3]}(y_1,y_2,\textsc{self})\oplus z^{[2]},\;
\{\wir{x_1}{h},\wir{x_2}{z},\wir{z}{f}, \wir{x_3}{h},\wir{y_1}{f},\wir{y_2}{f},\wir{y_3}{x_1},\wir{z}{h}\}).$$
{We observe that mapping $x_2$ to $z$ and then $z$ to $f$ produces the edge $\edge(f,f)$, while other edges are produced more directly, as pointed out above, not being created edges of $L$ in $D$.}
\qed
\end{exa}

The previous lemmas imply that matching based on composition and matching based on the existence of an injection coincide:

\begin{thm}
\label{t:matchingsame}
Given a drag $D$, a drag $L$ with no rootless isolated sprout and no sprout in common with $D$, there exists an injection $\omicron: L \hookrightarrow D$ iff there exists a rewriting extension $\ext(C,\xi)$ of $L$ such that $D =C\otimes_\xi L$.
\end{thm}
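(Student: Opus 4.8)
The plan is to obtain the theorem as the conjunction of the two preceding lemmas, one supplying each direction of the biconditional, and then to check the only nontrivial point: that the ``monomorphism'' produced by Lemma~\ref{l:ipm} is in fact an \emph{injection} in the sense of Definition~\ref{d:mmo}.

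First I would treat the direction from an injection to an extension. Given an injection $\omicron: L\hookrightarrow D$, I would simply invoke Lemma~\ref{l:ipmconverse}, whose hypotheses---$L$ has no rootless isolated sprout and shares no sprout with $D$---are exactly those of the theorem; it returns a rewriting extension $\ext(C,\xi)$ of $L$ with $D=C\otimes_\xi L$. There is no further work here, since that lemma already carries out the construction of $C$ and $\xi$ and the verification that $D=C\otimes_\xi L$.

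For the converse direction, suppose a rewriting extension $\ext(C,\xi)$ of $L$ with $D=C\otimes_\xi L$ is given. By Lemma~\ref{l:ipm} the natural injection $\iota$ from $L$ into $C\otimes_\xi L=D$ is a monomorphism. It then remains to see that $\iota$ meets the stricter requirements of an injection, namely that its vertex map is the identity on internal vertices and isolated sprouts. The former holds because wiring leaves internal vertices unchanged (Lemma~\ref{l:idpw}(1)) and the summands of $L\oplus C$ inject identically (Lemma~\ref{l:ism}(2)), so $\iota$ restricts to the identity on $\IVertex{L}$. For the latter I would use that $\ext(C,\xi)$ is a \emph{rewriting} extension, so $\xi_L$ is total (Definition~\ref{d:re}); combined with the hypothesis that $L$ has no rootless isolated sprout, this leaves no isolated sprout of $L$ on which $\iota$ could deviate from the identity. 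Hence $\iota: L\hookrightarrow D$ is the required injection.

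The main---and essentially only---obstacle is this reconciliation of vocabulary: Lemma~\ref{l:ipm} is stated for an arbitrary switchboard and concludes merely ``monomorphism,'' whereas the theorem restricts to rewriting extensions and demands an ``injection.'' I would therefore lean carefully on the totality of $\xi_L$ and on the absence of rootless isolated sprouts to rule out any isolated sprout of $L$ that would spoil the identity condition, so that the monomorphism delivered by Lemma~\ref{l:ipm} genuinely satisfies Definition~\ref{d:mmo}. Once this check is made, the biconditional is immediate and the proof reduces to citing Lemmas~\ref{l:ipmconverse} and~\ref{l:ipm}.
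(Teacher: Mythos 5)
Your overall route is exactly the paper's: the paper offers no argument for Theorem~\ref{t:matchingsame} beyond the observation that Lemma~\ref{l:ipmconverse} gives the direction from an injection to a rewriting extension and Lemma~\ref{l:ipm} gives the converse, and you cite precisely these two lemmas for precisely these two directions. Your instinct that the converse direction needs a reconciliation step---Lemma~\ref{l:ipm} concludes only ``monomorphism'' while the theorem asserts an ``injection'' in the sense of Definition~\ref{d:mmo}---is a real improvement over the paper, which silently elides this.

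However, the patch itself has a slip. You argue that totality of $\xi_L$ together with the hypothesis that $L$ has no \emph{rootless} isolated sprout ``leaves no isolated sprout of $L$ on which $\iota$ could deviate from the identity.'' But the hypothesis does not exclude \emph{rooted} isolated sprouts, and totality of $\xi_L$ makes matters worse rather than better for them: a rooted isolated sprout $s$ lies in $\Dom{\xi}$, is consumed by the wiring, and the natural injection sends it to its resolution $\target{s}\in\Vertex{C}$, which is in general neither $s$ itself nor even an isolated sprout of $D$. So on such a sprout $\iota$ fails the identity condition of Definition~\ref{d:mmo}, and may even fail clause (2) of Definition~\ref{d:pmo}. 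This is not an accident of your write-up: Lemma~\ref{l:injw}, on which Lemma~\ref{l:ipm} rests, explicitly assumes the drag has \emph{no} isolated sprout whatsoever, a stronger hypothesis than the theorem's, and the paper's remark after Lemma~\ref{l:ipmconverse} (treating the case where $L$ \emph{is} a rooted isolated sprout) shows the authors intend such sprouts to be mapped non-identically. So the gap you correctly identified is genuinely present, but your closing of it only works if $L$ has no isolated sprouts at all (rooted or not); in the remaining case the converse direction needs a separate argument (or the theorem's hypothesis strengthened), which neither you nor the paper supplies.
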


We have not claimed uniqueness of the rewriting extension $\ext(C,\xi)$ when given $D$, $L$, and $\iota$, and there are indeed many rewriting extensions satisfying Lemma~\ref{l:ipmconverse}, hence Theorem \ref{t:matchingsame}. 
The relevance of uniqueness lies in the fact that given $D,L,\omicron$, the result of rewriting $D$ with $L\ra R$ at $\omicron$ would then be deterministic, as is usually expected from a functional rewriting mechanism.
Uniqueness could indeed be achieved by introducing the definition of a \emph{reduced extension}, imposing two requirements: 
The switchboard $\xi$ should not bounce between $L$ and $C$ more than necessary.
And isolated sprouts of the extension context $C$ should have a single root, hence cannot be targets for two different wires of the switchboard.
This question merits further investigation, and it would be interesting to have both a matching and an unification algorithm for this version of drags, in the style of \cite{JO22a}. 

\subsection{Rewriting}
\label{ss:catrew}
We now consider whether drag rewriting can be defined algebraically, in terms of a double pushout in \textbf{Drags}, the category of drags. A standard approach would be to show that the category of drags has some form of adhesivity. 
Unfortunately, drags are not even $\mathcal{M,N}$-adhesive \cite{DBLP:conf/gg/HabelP12}, as shown below, because we cannot always ensure the existence of pushouts when they are needed.
This is the second reason for failure of DPO.

We can see the problem with the following counterexample. Suppose that we have drags $D_0 = [v^{[1]}:a$], $D_1 = [v_1:f \longrightarrow v:a]$, and $D_2=[v_2:g \longrightarrow v:a$]. Let now $\omicron_1$ and $\omicron_2$ be the natural injections, mapping $D_0$ to $D_1$ and $D_2$, respectively. In particular $\omicron_1$ and $\omicron_2$ would map the root at $v$ to the edges $v_1 \longrightarrow v$ and $v_2 \longrightarrow v$, respectively.
Then, it is easy to see that there is no pushout of $\omicron_1$ and $\omicron_2$ in the category of drags. The obvious candidate would be the drag $D_3 = [v_1:f \longrightarrow v:a \longleftarrow v_2: g]$. The problem is that the natural injection $\omicron_1': D_1 \to D_3$, mapping vertices $v,v_1$ and edge $v_1 \longrightarrow v$ of $D_1$ to vertices $v,v_1$ and edge $v_1 \longrightarrow v$ in $D_3$, is not even a morphism, since the indegree of $v$ in $D_3$ is 2, while it is 1 in $D_1$, and the same happens with $\omicron_2': D_2 \to D_3$.

This counterexample shows a form of dangling edge problem in drags: 
If $\omicron_1$ maps $v:a$ in $D_0$ to $v:a$ in $D_1$, and the root on $v:a$ in $D_0$ to the edge $v_1 \longrightarrow v$ in $D_1$, there would not exist a pushout complement to $(\omicron_1,\omicron_1')$ since, after deleting $v_1:f$ from $D_3$, the edge $v_1 \longrightarrow v$ would be dangling. However, we may notice that, while in the context of DPO transformations,  dangling conditions ensure the existence of pushout complements but do not interfere with the existence of pushouts, this is no longer the case for drags.

Pushouts in \textbf{Drags}, when they exist, have the form of the diagram in Lemma~\ref{l:pushout}.
We use here the fact that if there exist a monomorphism
$\omicron_1: D_0 \ra D_1$, then $D_1$ is the product of $D_0$ by some rewriting extension $\ext(C_1,\xi_1)$ by Theorem \ref{s:matching}. 
Assume now that we are given two injections $\omicron_1: D_0 \ra D_1$ and $\omicron_1: D_0 \ra D_1$, so that $D_1$ and $D_2$ are compatible, $D_0$ being their shared subdrag. 
Then, the context subgraphs $C_1$ and $C_2$ must be disjoint, and disjoint from $D_0$. 
This explains the assumptions given below, which are exactly the assumptions needed for applying a rewrite step in the DPO model.

\begin{lem}
\label{l:pushout}
Given monomorphisms $\omicron_1: D_0 \to D_0 \otimes_{\xi_1} C_1$ and $\omicron_2: D_0 \to D_0 \otimes_{\xi_2} C_2$, such that $D_0$, $C_1$, and $C_2$ are pairwise disjoint, $\xi_1\cup\xi_2$ is a switchboard for $(D_0, C_1\oplus C_2)$,
and the pushout of $\omicron_1$ and $\omicron_2$ exists, then
the pushout object must be the drag $D_0\otimes_{\xi_1\cup\xi_2} (C_1\oplus C_2)$
and the monomorphisms from  $D_1$ and $D_2$ to the pushout object are the natural injections $\omicron'_1$ and $\omicron'_2$,
as depicted here:
\begin{figure}[h!]
$
    \xymatrix{
        D_0 \ar@{}[dr]_{}|{po} \ar@{^{(}->}[d]_{\omicron_2} \ar@{^{(}->}[r]^{\omicron_1} & D_1 = D_0\otimes_{\xi_1} C_1 \ar@{^{(}->}[d]_{\omicron'_1}  
        \\
        D_2 = D_0\otimes_{\xi_2} C_2 \ar@{^{(}->}[r]_{\omicron'_2} & D_3=D_0\otimes_{\xi_1\cup\xi_2} (C_1\oplus C_2) 
       }
$
\end{figure}
\end{lem}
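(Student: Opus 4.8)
The plan is to identify the candidate $D_3 := D_0\otimes_{\xi_1\cup\xi_2}(C_1\oplus C_2)$ as the amalgamated sum of $D_1=D_0\otimes_{\xi_1}C_1$ and $D_2=D_0\otimes_{\xi_2}C_2$ glued along $D_0$, to check that this sum carries the universal property of the pushout of $\omicron_1,\omicron_2$, and then to invoke uniqueness of pushouts to force the assumed pushout object to coincide with $D_3$. First I would record that $D_3$ is well defined: by hypothesis $\xi_1\cup\xi_2$ is a switchboard for $(D_0,C_1\oplus C_2)$, and since $\xi_1$ wires $D_0$ only with $C_1$ and $\xi_2$ wires $D_0$ only with $C_2$, associativity and commutativity of product (Lemma~\ref{l:prodac}) give $D_3 = D_1\otimes_{\xi_2}C_2 = D_2\otimes_{\xi_1}C_1$. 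Consequently the natural injections $\omicron'_1\colon D_1\to D_3$ and $\omicron'_2\colon D_2\to D_3$ are monomorphisms by Lemma~\ref{l:ipm}, and the square commutes because $\omicron'_1\circ\omicron_1$ and $\omicron'_2\circ\omicron_2$ are both the natural injection of $D_0$ into $D_3$, order-independence of wiring being Lemma~\ref{l:wd}.

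The second step is the structural identity $D_3 = D_1\oplus D_2$, which exhibits the pushout object as a sum. The drags $D_1$ and $D_2$ share exactly the (image of the) subdrag generated by $\V{D_0}$: at each such vertex they carry the same label and, by indegree preservation under sum and wiring (Lemmas~\ref{l:ism} and~\ref{l:idpw}), the same indegree, so they are compatible; moreover the injectivity clause in the hypothesis that $\xi_1\cup\xi_2$ is a switchboard is precisely what guarantees that each shared vertex of $D_0$ keeps enough roots to absorb the incoming edges contributed by both $C_1$ and $C_2$, whence their intersection in the sense of Definition~\ref{d:intersection} is exactly that shared image. That $D_3=D_1\oplus D_2$ can then be read off from the explicit description of product in Lemma~\ref{l:prod}: both drags have the same internal vertices, labels and edges, and their roots coincide because both are pinned down by indegree preservation. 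Conceptually this is the distributivity of product over sum flagged in the remark following Lemma~\ref{l:distr}, applied with the common factor playing the role of $H$.

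The third step verifies the universal property of $D_3=D_1\oplus D_2$. Given any cocone $(Q,\,q_1\colon D_1\to Q,\,q_2\colon D_2\to Q)$ with $q_1\circ\omicron_1=q_2\circ\omicron_2$, the two morphisms agree on the shared image of $D_0$, so I can define a map $u\colon D_3\to Q$ by setting $u=q_1$ on the vertices coming from $D_1$ and $u=q_2$ on those coming from $C_2$. Because sum creates no new edges---every edge of $D_1\oplus D_2$ already belongs to $D_1$ or to $D_2$---the map $u$ sends edges to edges; label preservation is inherited from $q_1,q_2$; and the indegree- and equimorphic-subdrag-preservation clauses of Definition~\ref{d:pmo} follow from the corresponding properties of $q_1$ and $q_2$ together with the consistent counting of indegrees of the shared vertices across $D_1$, $D_2$, and $D_3$. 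Uniqueness of $u$ holds because $\omicron'_1$ and $\omicron'_2$ are jointly surjective on internal vertices, so $u$ is forced there, and edge preservation then determines $u$ on the non-isolated sprouts.

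Finally, having shown that $(D_3,\omicron'_1,\omicron'_2)$ enjoys the universal property, I would invoke uniqueness of pushouts up to isomorphism: since the pushout is assumed to exist, its object must coincide with $D_3$ and its legs with the natural injections $\omicron'_1,\omicron'_2$, as claimed. I expect the main obstacle to be twofold. The delicate point in the universal property is checking that the glued map $u$ genuinely satisfies the indegree-preservation and equimorphic-subdrag-preservation conditions of a drag morphism, where the bookkeeping of roots at the shared $D_0$-vertices must be handled carefully. Equally delicate is pinning down the identity $D_3=D_1\oplus D_2$ with the exact right root multiplicities rather than merely up to equimorphism, and correctly describing which part of $D_0$ is actually shared once its sprouts have been resolved differently by $\xi_1$ and by $\xi_2$; this is exactly the phenomenon that made the pushout fail in the counterexample preceding the lemma, and the switchboard hypothesis is what rules it out here.
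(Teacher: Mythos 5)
There is a genuine gap, and it sits exactly at your third step. You set out to verify the universal property of $D_3=D_0\otimes_{\xi_1\cup\xi_2}(C_1\oplus C_2)$ directly, which would make the hypothesis ``the pushout of $\omicron_1$ and $\omicron_2$ exists'' redundant---your argument never uses it, and that should have been a warning sign, since the whole point of Section~\ref{s:matching} is that such pushouts often fail to exist. The glued map $u$ need not be a drag morphism. Condition (3) of Definition~\ref{d:pmo} requires the \emph{sum} of indegrees over the entire preimage of a vertex to equal its indegree in $Q$; when $q_1$ and $q_2$ send \emph{unshared} internal vertices of $D_1$ and $D_2$ to a common vertex $v$ of $Q$, each satisfies this equation separately, so the glued sum comes to $2\,\In(v,Q)$ minus the shared contribution, which is generally not $\In(v,Q)$. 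A minimal instance fits the lemma's hypotheses: take $D_0=\varnothing$ with empty switchboards, $C_1=a_1^{[1]}$, $C_2=a_2^{[1]}$ (two rooted vertices labeled $a$), and the cocone $Q=b^{[1]}$ labeled $a$ with $q_i(a_i)=b$. Both $q_i$ are morphisms, the commuting triangles force $u(a_1)=u(a_2)=b$, yet $\In(a_1,D_3)+\In(a_2,D_3)=2\neq 1=\In(b,Q)$, so no mediating morphism exists and $D_3=D_1\oplus D_2$ is \emph{not} a pushout (here the pushout simply does not exist). Condition (5) can fail analogously for an equimorphic pair of subdrags straddling the two summands of $D_1\oplus D_2$, a configuration that neither $q_1$ nor $q_2$ alone controls, so your appeal to ``the corresponding properties of $q_1$ and $q_2$'' cannot close either verification.

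Your first two steps do agree with the paper: safety of $\xi_1\cup\xi_2$ and distributivity (Lemma~\ref{l:distr}) give $D_0\otimes_{\xi_1\cup\xi_2}(C_1\oplus C_2)=(D_0\otimes_{\xi_1}C_1)\oplus(D_0\otimes_{\xi_2}C_2)$, whence the natural injections are monomorphisms by Lemma~\ref{l:ism}(2) and the square commutes. But from there the paper runs the argument in the opposite direction to yours: it treats $(D'_3,\omicron''_1,\omicron''_2)$ only as a \emph{competing cocone}, invokes the universal property of the \emph{assumed} pushout $(D_3,\omicron'_1,\omicron'_2)$ to obtain the unique mediating morphism $\omicron_3:D_3\to D'_3$, and then observes that on underlying vertex sets the construction of $D_1\oplus D_2$ makes $\mathcal{V}(D'_3)$ a pushout in $\mathbf{Set}$, so the vertex map of $\omicron_3$ is bijective; a drag morphism with bijective vertex map is an isomorphism (Definition~\ref{d:mmo}). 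In short, the existence hypothesis is consumed, never re-derived, and the universal property of $D'_3$ is obtained only \emph{a posteriori} by transport along $\omicron_3$; your proposal, which tries to establish it from scratch, would prove the false stronger claim that the switchboard hypothesis alone guarantees existence of the pushout.
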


\begin{proof}
Let $D'_3= D_0\otimes_{\xi_1\cup\xi_2} (C_1\oplus C_2)$, and $\omicron''_1, \omicron''_2$ be the natural injections of the vertices in $D_1, D_2$ to the vertices in $D'_3$. Under our assumptions,  $\xi_1\cup\xi_2$ is safe, hence 
$D'_3=(D_0\otimes_{\xi_1} C_1) \oplus (D_0\otimes_{\xi_2} C_2)$ by distributivity (Lemma~\ref{l:distr}).
Therefore, $\omicron''_1$ and $\omicron''_2$ are monomorphisms of $D_1, D_2$ to $D'_3$ by Lemma~\ref{l:ism}(2). 

Using our assumption of existence of a pushout, let $(D_3, \omicron'_1, \omicron'_2)$ be a pushout of $D_1$ and $D_2$ in \textbf{Drags}. 
By the universal property of pushouts, there exists a unique morphism $\omicron_3$ from $D_3$ to $D'_3$, such that $\omicron''_1 = \omicron'_1 \circ \omicron_3$ and $\omicron''_2 = \omicron'_2 \circ \omicron_3$, whose vertex map $\omicron_3:\mathcal{V}(D_3)\ra\mathcal{V}(D'_3)$ is the corresponding universal morphism in \textbf{Set}. 
By definition of $D'_3$, we know that $\mathcal{V}(D'_3)$ is also a pushout object of the diagram depicted in Figure~\ref{fig:pos}. 
Hence, $\omicron_3$ is an isomorphism in \textbf{Set} and, since it is a morphism in \textbf{Drags}, it is an isomorphism in \textbf{Drags}.
\end{proof}

\begin{figure}[t]
\begin{minipage}{0.48\linewidth}\hspace*{8mm}
$
    \xymatrix{
        \mathcal{V}(D_0) \ar@{}[dr]_{}|{po} \ar@{->}[d]_{\omicron_2} \ar@{->}[r]^{\omicron_1} & \mathcal{V}(D_1) \ar@{->}[d]_{\omicron'_1} \ar@/^/[ddr]^{\omicron''_1} 
        \\
        \mathcal{V}(D_2) \ar@{->}[r]^{\omicron_2'} \ar@/_/[drr]_{\omicron''_2}& \mathcal{V}(D_3) \ar@{.>}[dr]|{\omicron_3}&&\\
       && \mathcal{V}(D'_3) \\
       }
$
    \caption{Pushout for sets of vertices.}
    \label{fig:pos}
\end{minipage}
\begin{minipage}{0.48\linewidth}\hspace*{12mm}
$
    \xymatrix{
        D_0 \ar@{}[dr]_{}|{po} \ar@{^{(}->}[d]_{\omicron_2} \ar@{^{(}->}[r]^{\omicron_1} & D_1 \ar@{^{(}->}[d]_{\omicron'_1} \ar@/^/[ddr]^{\omicron''_1} 
        \\
        D_2 \ar@{^{(}->}[r]^{\omicron'_2} \ar@/_/[drr]_{\omicron''_2}& D_3 \ar@{.>}[dr]|{\omicron_3}&&\\
       && D'_3 \\
       }
$
    \caption{Pushout for drags.}
    \label{fig:podc}
\end{minipage}
\end{figure}

We now describe a necessary condition for the existence of pushouts of monomorphisms in \textbf{Drags}:

\begin{lem}
\label{l:pushcond}
Given monomorphisms $\omicron_1: D_0 \to D_1 =(D_0 \otimes_{\xi_1} C_1)$ and $\omicron_2: D_0 \to D_2 =(D_0 \otimes_{\xi_2} C_2)$, such that $C_1$ and $C_2$ are disjoint, then $\omicron_1$ and $\omicron_2$ have a pushout only if for every vertex $v \in V_0$:
$$R_0(v) \ge (pred(v,D_1) - pred(v,D_0)) + (pred(v,D_2) - pred(v,D_0)).$$
\end{lem}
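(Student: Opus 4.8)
The plan is to derive the inequality from a two-sided accounting of the indegree of the shared vertex in the (hypothetical) pushout object. Assume the pushout $(D_3,\omicron'_1,\omicron'_2)$ of $\omicron_1,\omicron_2$ exists, fix a vertex $v\in V_0$, and write $v_3$ for its image in $D_3$; the two images coincide because the pushout square commutes, $\omicron'_1\circ\omicron_1=\omicron'_2\circ\omicron_2$. Set $k_1=\pred(v,D_1)-\pred(v,D_0)$ and $k_2=\pred(v,D_2)-\pred(v,D_0)$, the numbers of extra incoming edges that the switchboards $\xi_1,\xi_2$ produce at $v$. These extra edges are exactly the edges redirected to $v$ by wiring, so their tails lie in $C_1$ and $C_2$, respectively. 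The target inequality is precisely $R_0(v)\ge k_1+k_2$.

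First I would count the incoming edges of $v_3$ in $D_3$. In each $D_i=D_0\otimes_{\xi_i}C_i$ the incoming edges of $v$ consist of the $\pred(v,D_0)$ edges internal to $D_0$ — shared between $D_1$ and $D_2$ and hence identified in $D_3$ by commutativity — together with the $k_i$ edges created by $\xi_i$, whose tails are vertices of $C_i$. Since $C_1$ and $C_2$ are disjoint (and disjoint from $D_0$) and the cocone maps are injective on internal vertices, the images of the $C_1$-tailed and the $C_2$-tailed extra edges are pairwise distinct and distinct from the images of the $D_0$-internal edges. Consequently $\pred(v_3,D_3)\ge \pred(v,D_0)+k_1+k_2$.

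Next I would pin down the indegree of $v_3$ exactly. The natural injection $\omicron_1\colon D_0\to D_1$ preserves indegrees at all vertices (Lemmas~\ref{l:ism} and~\ref{l:idpw}), so $\npred(v,D_1)=\npred(v,D_0)$; the cocone monomorphism $\omicron'_1$ then gives $\npred(v_3,D_3)=\npred(v,D_1)=\npred(v,D_0)$. Combining this with the edge count and with $\npred(v_3,D_3)=\pred(v_3,D_3)+R_3(v_3)$, nonnegativity of $R_3(v_3)$ yields
\[
\pred(v,D_0)+R_0(v)=\npred(v,D_0)=\npred(v_3,D_3)\ge \pred(v,D_0)+k_1+k_2 .
\]
Cancelling $\pred(v,D_0)$ gives $R_0(v)\ge k_1+k_2$, which is the asserted bound; since $v\in V_0$ was arbitrary, this finishes the argument.

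The hard part is justifying that the pushout cocone maps $\omicron'_1,\omicron'_2$ are genuine indegree-preserving (edge-injective) morphisms, i.e.\ that the exact indegree balance and the disjointness of the edge contributions are legitimate. Because \textbf{Drags} is not adhesive, one cannot simply invoke stability of monomorphisms under pushout; instead I would appeal to the structural description of pushouts established in Lemma~\ref{l:pushout}, where the cocone maps are exhibited as natural injections, or argue directly that any collapse of two distinct $C_i$-tailed edges at $v_3$ would violate the universal property of the pushout. Everything else is the elementary indegree bookkeeping above — the same bookkeeping that makes $R_0(v)\ge k_1+k_2$ the injectivity (well-behavedness) condition for $\xi_1\cup\xi_2$ at $v$.
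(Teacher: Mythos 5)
Your proof is correct in substance, and it reaches the bound by a somewhat different route than the paper. The paper argues by contradiction: it invokes Lemma~\ref{l:pushout} to identify the would-be pushout object with the composition $D_0\otimes_{\xi_1\cup\xi_2}(C_1\oplus C_2)$, and then observes that forming this composition would have to redirect $k_1+k_2$ edges onto $v$, each consuming a root of $v$ (the injectivity clause of well-behavedness), which is impossible when $R_0(v)<k_1+k_2$. You instead do the arithmetic inside the abstract pushout object $D_3$: distinctness of the images of the redirected edges gives $\pred(v_3,D_3)\ge \pred(v,D_0)+k_1+k_2$, indegree invariance along $\omicron_1$ (Lemmas~\ref{l:ism} and~\ref{l:idpw}) and along the cocone leg gives $\npred(v_3,D_3)=\npred(v,D_0)$, and nonnegativity of $R_3(v_3)$ closes the argument. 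What your route buys: it never needs the combined composition to actually exist, whereas the paper's appeal to Lemma~\ref{l:pushout} is slightly loose --- that lemma presupposes that $\xi_1\cup\xi_2$ is a switchboard, which is essentially the inequality being proved --- so your direct accounting is, if anything, logically cleaner. What it costs is exactly the step you flag as the hard part: that the cocone legs preserve indegrees and do not merge the relevant edges at $v_3$. Be aware that your first fallback, deferring to Lemma~\ref{l:pushout}, reintroduces the same circularity; the viable option is your second one, arguing from the universal property together with condition (3) of Definition~\ref{d:pmo}, which forces every \textbf{Drags}-morphism to sum indegrees over internal preimages, so any collapse onto $v_3$ only inflates $\npred(v_3,D_3)$ in a controlled way. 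Two small inaccuracies to fix: the tails of the extra edges need not lie in $C_1$ and $C_2$, since a wiring chain may bounce through a rooted sprout of $C_i$ back into $D_0$ and thereby redirect a $D_0$-tailed edge onto $v$; the pairwise distinctness of the $k_1$- and $k_2$-families should instead be derived from the fact that they arise from distinct creating edges (an assumption the paper's proof makes implicitly as well, and which requires that no sprout of $D_0$ is wired by both $\xi_1$ and $\xi_2$).
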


\begin{proof}
First, $(pred(v,D_1) - pred(v,D_0))$ ($(pred(v,D_2) - pred(v,D_0))$, respectively) is the number of edges of head $v$ added to $D_0$ by the composition $(D_0 \otimes_{\xi_1} C_1)$ ($(D_0 \otimes_{\xi_1} C_2)$, respectively) to obtain $D_1$ ($D_2$, respectively).

By Lemma~\ref{l:pushout}, the pushout object of $\omicron_1$ and $\omicron_2$ is $D_0\otimes_{\xi_1+\xi_2} (C_1\oplus C_2)$. 
Suppose, contrary to the claim, that $R_0(v) < (pred(v,D_1) - pred(v,D_0)) + (pred(v,D_2) - pred(v,D_0))$, for some $v \in V_0$. 
The composition $D_0\otimes_{\xi_1+\xi_2} (C_1\oplus C_2)$ would need to add to $D_0$ $pred(v,D_1) - pred(v,D_0)) + (pred(v,D_2) - pred(v,D_0))$ edges whose target is $v$. 
But $v$ is lacking roots for that purpose, so this is impossible.
\end{proof}

This implies that \textbf{Drags} is not $\mathcal{M,N}$-adhesive, and hence is not $\mathcal{M}$-adhesive either. For being $\mathcal{M,N}$-adhesive, the category would need to have two designated classes of monomorphisms, $\mathcal{M,N}$, that, among other properties, should satisfy that \textbf{Drags} has all pushouts along 
$\mathcal{M}$-morphisms and also along $\mathcal{N}$-morphisms, respectively. 
However, the above lemma shows that there is no class of monomorphisms $\mathcal{M}$ such that, if $\omicron_1 \in \mathcal{M}$, then for any monomorphism $\omicron_2$ with the same domain as $\omicron_1$, $\omicron_1$ and $\omicron_2$ have a pushout, since the existence of this pushout depends on $\omicron_2$. And the same happens for monomorphisms in $\mathcal{N}$. Therefore, we conclude:

\begin{thm}
The category \textbf{Drags} is neither $\mathcal{M,N}$-adhesive nor $\mathcal{M}$-adhesive.
\end{thm}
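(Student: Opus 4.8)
The plan is to derive the theorem directly from Lemma~\ref{l:pushcond}, which already isolates the obstruction: the existence of a pushout of a span of monomorphisms out of a common drag $D_0$ is governed by a numerical root-supply inequality at the shared vertices, and this inequality can always be made to fail. First I would recall the relevant part of the definitions of \cite{DBLP:conf/gg/HabelP12}: to be $\mathcal{M}$-adhesive (resp.\ $\mathcal{M,N}$-adhesive) the category must carry a class $\mathcal{M}$ (resp.\ a pair $\mathcal{M}\subseteq\mathcal{N}$) of monomorphisms, closed under composition and containing all isomorphisms, along which the requisite pushouts exist --- for $\mathcal{M}$-adhesivity, pushouts of every span one of whose legs lies in $\mathcal{M}$; for $\mathcal{M,N}$-adhesivity, pushouts of every span with one leg in $\mathcal{M}$ and the other in $\mathcal{N}$. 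The argument then shows that no such class can satisfy this pushout axiom.

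Next I would revisit the concrete span introduced just before Lemma~\ref{l:pushout}: take $D_0 = v^{[1]}\!:\!a$, let $D_1$ add a single predecessor $f$ to $v$ and $D_2$ add a single predecessor $g$ to $v$, and let $\omicron_1:D_0\to D_1$ and $\omicron_2:D_0\to D_2$ be the natural injections, which are monomorphisms by Lemma~\ref{l:ipm}. Computing predecessor counts gives $pred(v,D_1)-pred(v,D_0) = pred(v,D_2)-pred(v,D_0) = 1$ while $R_0(v)=1$, so the inequality of Lemma~\ref{l:pushcond} is violated and $\omicron_1,\omicron_2$ admit no pushout. To forestall any attempt to evade the obstruction by shrinking $\mathcal{M}$ or $\mathcal{N}$, I would note that this failure scales: for each $k$ one can take $R_0(v)=k$ and distribute $k+1$ new predecessors of $v$ between $D_1$ and $D_2$, again forcing the inequality to fail, so the obstruction is confined to no bounded family of morphisms.

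Finally I would assemble the contradiction. The injections $\omicron_1,\omicron_2$ are exactly the natural injections arising from matching (Lemma~\ref{l:ipm}, Theorem~\ref{t:matchingsame}), and any class of monomorphisms meant to carry the DPO semantics of drag rewriting must therefore contain them. Hence if $\mathbf{Drags}$ were $\mathcal{M}$-adhesive, $\omicron_1\in\mathcal{M}$ would force the pushout of $\omicron_1$ with the arbitrary monomorphism $\omicron_2$ to exist, contradicting Lemma~\ref{l:pushcond}; and since $\mathcal{M}\subseteq\mathcal{N}$ with both classes containing these injections, the same span defeats the $\mathcal{M,N}$-adhesive pushout axiom. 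Therefore no admissible classes exist, and $\mathbf{Drags}$ is neither $\mathcal{M,N}$-adhesive nor $\mathcal{M}$-adhesive.

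I expect the main obstacle to be the universal quantification over $\mathcal{M}$ and $\mathcal{N}$: Lemma~\ref{l:pushcond} rules out the pushout for one specific span, but this yields non-adhesivity only if that span is forced to lie in the designated classes. The delicate step is thus to argue that any class usable for DPO rewriting of drags must contain the matching injections of Theorem~\ref{t:matchingsame}, thereby excluding degenerate choices such as $\mathcal{M}=\mathcal{N}=\{\text{isomorphisms}\}$, for which the pushout axiom would hold vacuously. I would make this precise by tying the required class to the monomorphisms produced by matching, which are precisely the morphisms the adhesive framework is intended to model.
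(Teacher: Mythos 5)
Your proof is correct and takes essentially the same route as the paper: it derives the theorem from Lemma~\ref{l:pushcond}, instantiated on the same span $v_1\!:\!f \longrightarrow v\!:\!a \longleftarrow v_2\!:\!g$ with $R_0(v)=1$, and concludes that no designated class of monomorphisms can guarantee the pushouts required for $\mathcal{M}$- or $\mathcal{M,N}$-adhesivity, since the root-supply inequality can always be violated by the second leg. Your explicit treatment of the quantification over classes---ruling out degenerate choices such as $\mathcal{M}=\{\text{isomorphisms}\}$ by tying the classes to the matching injections of Theorem~\ref{t:matchingsame}---is in fact slightly more careful than the paper, which leaves this point implicit in the phrase ``among other properties.''
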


\section{Discussion}
\label{s:discussion}
In the course of this study of graph rewriting, we have made a number of choices among alternatives, motivated by what seemed to us to be either more general and useful, or simpler and more convenient.

In Remark~\ref{rem:roots}, we explained why we have chosen to consider multisets of roots, rather than lists as in \cite{DBLP:journals/tcs/DershowitzJ19} or sets. This design choice impacted our definition of composition, for which we decided to maintain a strong invariant: the indegree of each individual vertex. We indeed tried an alternative, using root transfer as in \cite{DBLP:journals/tcs/DershowitzJ19}, which resulted in more complex technicalities that we were not able to resolve in a satisfactory manner. Root transfer considers roots as plugs waiting for a connection \emph{there}, while indegree preservation considers roots as wires waiting for a connection \emph{at the other end}.

Another issue is how to understand multiple instances of variables.
We have already explained in Section~\ref{s:cohere} why we require
equimorphism of the subdrags connected to different sprouts with the same label, rather that the weaker isomorphism suggested in \cite{DBLP:journals/tcs/DershowitzJ19} or the stronger identity relation used in \cite{DBLP:journals/tcs/DershowitzJ19}. We have also seen that this
gives us the very helpful Lemmas~\ref{l:closureprop} and \ref{l:injw}. 
The latter is extremely important in that it allowed us to relate, in Section \ref{s:matching}, two completely different notions of matching: the traditional one, matching as a monomorphism, and the new one, matching as a drag extension. 
This relationship is a major justification for the drag model. But is equimorphism the best possible answer?

In the remainder of this section, we hint at variations that may extend the capabilities of the drag model. 
First, we briefly discuss a more general definition of coherence of a set of wires. 
Second, and significantly, we consider how sharing can be improved by a definition of rules allowing their left- and right-hand sides to share subdrags. 
Next, we suggest dropping the fixed arity of labeled vertices, and consequently hint at a more flexible graph-rewriting model for which sprouts and roots are particular cases of a more general notion of connector.

\subsection{Coherent sets of wires}
There is a slight potential for generalization here,  replacing \emph{equimorphism} by \emph{sharing equivalence} in the definition of a coherent set of wires. 
This variant would require  changes in the definition of monomorphisms so as to preserve Lemma~\ref{l:injw}, a change that should not impact Theorem \ref{t:cat} since isomorphisms preserve sharing equivalence.
The computation of a product $L\otimes_\xi C$ can render two subterms of $L$ equimorphic, or even sharing equivalent, and they might then be shared in the resulting drag. 
As a consequence, matching would no longer be injective  on internal vertices. 
Note that part of the \dpo community uses  non-injective matching, although in  \cite{HabelMP98} it is shown that injective matching is more powerful.
We won't explore that path here, but it is worth mentioning as a potential area of future investigation.

\subsection{Varyadic labels}\label{sec:sym}
\label{ss:vl}
Drags were designed for generalizing terms and term rewriting. Accordingly, a vertex of a drag comes with a label equipped with a fixed arity that governs the number of successors of that vertex, a constraint that has not, however, been central to the theory of drags developed here.
Extending the model to deal with arbitrary graphs is possible by relaxing the fixed arity constraint, allowing for bounded or even unbounded arities. 

In the fixed-arity model, it is crucial that wiring does not change the number of outgoing edges at a vertex. 
It follows that only sprouts can be mapped to other vertices, implying that decomposing a drag into smaller pieces can only be done by cutting its edges. 

This limitation disappears with varyadic arities.
Thus an alternative would be to decompose drags by "slicing" apart vertices instead of cutting edges.
The incident edges (regardless of direction) of the vertex are then split between the slices.
Dissecting an internal vertex creates a new "snap",
along with the leftover "base" vertex.
Composition (or wiring) connects then snaps with vertices, which may also be snaps.
Decomposition of a drag into two is straightforward in this alternate model, as would be decomposition into single-edge atoms.

\section{Related Work}

There are several competing approaches to graph rewriting, as already sketched in the introduction. Here, we list and discuss some of the more closely related proposals.

\subsection{DPO}
Introduced in the early seventies,
the double-pushout (\dpo) approach~\cite{DBLP:conf/focs/EhrigPS73} is the best studied and most popular approach to graph transformation. 
There are many varieties of graphs that may be of interest in different contexts. 
For example, one may work with directed or undirected graphs; they may be typed or untyped; 
they may be labeled, unlabeled, or include attributes that represent values stored in vertices 
or edges; they may be graphical structures like Petri nets or state-transition diagrams, or they 
even may be drags. 
It should be clear that studying graph rewriting separately for each kind of graph is a waste of time since there is almost no difference between rewriting a directed or an undirected graph or any other kind of graphical structure. 
Using categorical constructions allows the \dpo approach to 
describe and study at one and the same time rewriting for all manner of structures that satisfy some given properties. 
The \dpo approach is currently defined for any category of objects that is \emph{adhesive} \cite{libro,LS06} (or \emph{$\mathcal{M}$-adhesive}~\cite{EhrigGHLO14,EhrigGHLO12}, or even \emph{$\mathcal{M}-\mathcal{N}$-adhesive}~\cite{DBLP:conf/gg/HabelP12}). 
This includes most graph categories as well as other graphical structures, and other categories of objects, like sets, bags, or algebraic specifications.

In Section \ref{s:matching}, we  showed that the \dpo construction applies to drags as well, and can be extended so as to be relational rather than functional.
The definition of a rule $L\ra R$ as a single drag with left- and right-hand sides roots L and R as introduced at Definition~\ref{d:rules} looks very much like a \dpo rule, the subgraphs accessible from both L and R serving as the interface. 
The only difference is that we do not force the rewriting extensions of $L$ and $R$ to be identical, but rather to be compatible, which makes sense for concrete graphs. 
A direct benefit of compatibility combined with a relational definition of rewriting is that this extended version of \dpo has a built-in ability for erasing and cloning subdrags. 
Furthermore, drag rules can be nonlinear, on the left or right, while DPO rules are essentially linear. 
We can therefore claim to have solved the problem of allowing nonlinear variables in drag rewrite rules, a problem the importance of which was stressed already in \cite{DBLP:conf/gg/Parisi-PresicceEM86}, and which has remained unsolved since then, despite numerous attempts. 
Non-linearity is permitted here by an appropriate definition of morphisms for drags with variables. 
We believe that similar ideas should scale to graphs whose vertices can have varyadic labels, along the lines suggested in Section~\ref{ss:vl}. 
Making sense of nonlinear rules (and morphisms) in the case of abstract graph categories might require additional nontrivial properties of morphisms yet to be elaborated.
This is likely to be nontrivial since we have seen that the category of drags does not have any of the adhesivity properties that have been elaborated along the years to make DPO work.

\subsection{Algebraic approaches}

Several other algebraic approaches, like Agree \cite{DBLP:conf/tagt/HabelMP98}, PBPO \cite{PBPO}, and PBPO$^+$ \cite{PBPO+}, have been defined to overcome some limitations of the DPO approach, such as the ability to erase or clone nodes. 
For us, cloning and erasing can be implemented by drag rewriting.

\subsection{Term graphs}
The literature on term graphs (graphs that represent terms having shared subterms) is extremely rich, as surveyed in \cite{Plump}.
Since the presence of cycles is a distinctive feature of drags---as studied in this paper---we compare them here with term graphs admitting cycles.

The work of Barendregt et al.~\cite{Barendregtetal} is essentially interested in showing that graph rewriting faithfully implements  term rewriting on finite and infinite trees. 
This turns out not always to be the case, in particular because of sharing that inhibits some term derivations, and because of infinite trees (obtained by unravelling graphs with a distinct root vertex). 
The first reason holds likewise for the drags of \cite{DBLP:journals/tcs/DershowitzJ19}, as pointed out there, due to forced sharing. 
It is shown in~\cite{Barendregtetal}, however, that faithfulness  obtains for "weakly regular" (left-linear, non-overlapping) rewrite systems. 
The proof uses complete developments, and that's where weak-regularity comes in by ensuring the absence of non-trivial critical pairs. 
Left-linearity is crucial. 
The authors argue that it could be adapted to non-left-linear rules, but would then require forced sharing to reflect non-left-linear rules, so, the problem would remain for non-linear systems. 
They also assert that a relaxed equality (instead of identity implied by forced sharing) should also be possible, but that this would require changing their definition of a graph with variables (which is inherently linear).

Example 3.8(v) in~\cite{Barendregtetal} demonstrates that  in their formalism rewriting the cyclic graph $G=v:I(v)$ with the rule $I(x) \ra x$ yields $G$ again.
In our setup, this is the case when there is no right-hand side extension (the switchboard is not well-behaved), so $G$ does not rewrite. 
The phenomenon is indeed the very same, since the computation of a product with a non-well-behaved switchboard would cycle if allowed; hence the computation of $G$ is non-terminating in both cases.

In the work of \cite{CorradiniGadducci99},
graphs are arbitrary, with a list of variables and a list without repetition of rooted vertices. Variables are ignored for defining morphisms. Composition is defined by gluing sprouts and roots (in one direction: switchboards are one-way only) of the same index, provided they are in equal number. This is a very restricted mechanism compared to ours. Feedback is the gluing of the last variable with the last root of a graph, provided they are in equal number. Rules are acyclic graphs with two roots, $l$ and $r$. The rule format is therefore more general than Barendregt's, but still quite limited with respect to drags. 

So, our model is much more expressive than other term graph formalisms: non-linear possibly cyclic left-hand sides, powerful composition operator, morphisms taking care of variables, whether linear or non-linear, absence of dangling edges, and adequacy of drag rewriting for non-linear term rewriting in this strictly more powerful formalism.

\subsection{Patches}

A framework similar to ours has been recently developed by Overbeek and Endrullis \cite{Endrullis}, but which is designed for graphs whose vertices have variable arity.
For composition, they employ "patches"---a device similar to our switchboard, which adds connecting edges between the two components.
Likewise, they have an analogue to roots cum sprouts (rather like the snaps suggested in Section~\ref{sec:sym}), which
allows one to constrain the permitted shapes of subgraphs around a match for a left-hand side, and also to specify how the subgraphs should be transformed.
Transformations include rearrangement, deletion, and duplication of edges.
PBPO$^+$ \cite{PBPO+}, which was developed in a categorical framework, can be seen as a conceptual successor to patches and has been proposed as a unifying notion. 

\subsection{Graphs with interfaces}

The idea of building graphs using some kind of composition operation, by gluing some selected nodes of the graphs involved, which are considered interfaces, is already quite old, going back to the work of Bauderon and Courcelle \cite{BC87}. 
In the context of DPO, graphs with interfaces and their transformation have been studied by Bonchi, Corradini, Gadducci, and their colleagues; see, among others, \cite{CG99,Gadducci07,DBLP:conf/esop/BonchiGKSZ17}. 
The main difference is that they only consider sequential composition;
they don't consider the possibility that sprouts of one drag are connected to roots of another \emph{and} vice versa. In our terms, this means that in this case composition is limited to \emph{one-way switchboards} ~\cite{DBLP:journals/tcs/DershowitzJ19}.

\subsection{String diagrams} 
String diagrams are a restricted graphical syntax for representing computational models used in various fields, including programming language semantics, circuit theory, and control theory. 
Mathematically, string diagrams are the terms of symmetric monoidal theories, which generalize algebraic theories in a way that  makes them suitable for expressing resource-sensitive systems in which variables cannot be copied or discarded at will. String diagrams enjoy a restricted composition operator in the sense that it is based on one-way switchboards (as in \cite{CorradiniGadducci99}).
Rewriting of string diagrams is defined as a specific instance of \dpo rewriting with interfaces (DPOI), called convex rewriting, for a category of labeled hypergraphs that correspond to string diagrams \cite{DBLP:journals/jacm/BonchiGKSZ22,DBLP:journals/corr/abs-2104-14686}. 

\section{Conclusion}

In this work, we have completely revamped the preliminary drag model of~\cite{DBLP:journals/tcs/DershowitzJ19}.
In particular, the arrangement of roots in this work is much more useful, variables provide much more flexible sharing, the notion of rewriting rule is much more general, and we end up with a much better algebra. 
Repeated (nonlinear) variables are used to restrict matches to equimorphic subgraphs.
Distinct drags may now share vertices, which is economical when rewriting.
The proposed model encompasses term rewriting as a special case, in stark contrast to the prior work.
Composition is facilitated by a new notion of step-by-step wiring of connections from sprouts to roots.
We have also developed a pleasing algebra of drags with sum and product.
These advances are supported by new, original notions of morphisms for drags, which allow us to precisely relate  drag rewriting based on composition with drag rewriting based on DPO, and even to slightly generalize DPO when applied to drags.
We observe that seemingly minor changes in the details of the formalism have had far-flung effects and have required significant effort to put all the pieces together in place. In this respect, indegree preservation happened to be the key property that made it possible.

The drag framework was conceived so as to apply to a specific category of graphs, namely drags, and to generalize the standard term rewriting and dag models to drags. As a consequence, drags are graphs equipped with specific vertices, called sprouts, labeled with variables, while the other, internal vertices are labeled by function symbols equipped with an arity that specifies the number of their outgoing edges. In addition, vertices are equipped with roots that provide them with the potential for creating new edges.

The major originality of the drag model is to base the matching of a given drag $D$ with respect to a left-hand side of rule $L$ on the existence of a pair made of a context drag $C$ and a switchboard $\xi$ such that $D$ is the product of $L$ and $C$ with respect to $\xi$. In this view, the switchboard $\xi$ maps a sprout $s$ of each drag to a rooted vertex $r$ of the other drag, provided $r$ has at least as many roots as the number of incoming edges and roots of $s$.
Computing the product amounts to redirecting to $r$ all edges incoming to $s$ and removing from $r$ an equal number of roots, an operation that leaves the indegree of $r$ unchanged. Rewriting amounts then to replacing $L$ by $R$, that is, to computing the new drag resulting from the product of the context $C$ with the right-hand side $R$ with respect to the switchboard $\xi$. This assumes the existence of an injective mapping from the roots of $L$ to the roots of $R$.

We have indeed succeeded, inasmuch as our new drag model appears to generalize the term and dag rewriting models very smoothly, something that  our former drag model~\cite{DBLP:journals/tcs/DershowitzJ19} could not do. Furthermore, it even generalizes the term and dag rewriting models when applied to terms and dags by having two new built-in capabilities: sharing and cloning. By this we mean that we are able to specify \emph{formally} at each rewrite step which subdrags should be shared and which should be duplicated.

The most widely accepted and used graph-rewriting model is \dpo. While \dpo was conceived so as to apply to various categories of graph structures, namely the adhesive categories, its expressivity is limited by the absence of variables, one consequence of which is the infeasibility of cloning. 
Furthermore, we have seen that drags lack some categorical properties that are essential for applying DPO, properties whose absence follows from the preservation of indegrees by monomorphisms. 
We could of course weaken that property so that indegrees could grow along monomorphisms, but the price to pay would be heavy: 
The two different views of matching, by the existence of a monomorphism as for DPO and by the existence of a rewriting extension of the left-hand side of rule, as for drags, would no longer coincide.

A natural question then follows: Can graphs be equipped with variables, and can these variables be used within the \dpo model?
This question was actually raised long ago~\cite{DBLP:conf/gg/Parisi-PresicceEM86}, but to date no satisfactory answer has been proffered~\cite{DBLP:conf/birthday/Hoffmann05}, despite several attempts, most notably that of~\cite{DBLP:journals/mscs/HabelP96}.
We have given here a general answer to that question for the category of drags, thanks to the drag's notion of a variable being a one-way channel, to the notion of switchboard---which allows one to compose graphs in a very general way, and to a notion of morphism (and monomorphism) for non-ground drags. 
Matching a left-hand side of rule can then be defined either via composition or via the existence of a monomorphism in the obtained category, while rewriting can be defined by replacement or by a double pushout. 
Moreover, a slight generalization of DPO is suggested that inherits the cloning and sharing capabilities of composition based drag rewriting.

Even more interesting are the following related questions: Can composition be defined for arbitrary graphical structures, or---more precisely---for arbitrary objects belonging to some adhesive category? 
Is adhesivity required for that purpose? 
Are variables needed for that purpose?


\bibliographystyle{alphaurl} 
\bibliography{main}

\end{document}